\documentclass[11pt]{article}

\usepackage[margin=1in]{geometry}
\usepackage{setspace}
\usepackage{amsmath, amssymb, amsthm, bm}
\usepackage{mathtools}

\usepackage{graphicx}
\usepackage{booktabs}
\usepackage{subcaption}
\usepackage{float}
\usepackage{tikz}
\usetikzlibrary{shapes.geometric, arrows.meta, positioning}

\usepackage{algorithm}
\usepackage{algorithmic}

\usepackage[sort, round]{natbib}

\usepackage{xcolor}

\usepackage[colorlinks=true, linkcolor=blue, citecolor=blue, urlcolor=blue]{hyperref}
\usepackage{enumitem}
\usepackage{etoc}
\usepackage{dsfont}

\theoremstyle{plain}
\newtheorem{theorem}{Theorem}
\newtheorem{lemma}{Lemma}
\newtheorem{corollary}{Corollary}
\newtheorem{proposition}{Proposition}
\newtheorem{assumption}{Assumption}

\theoremstyle{definition}
\newtheorem{definition}{Definition}

\newtheorem{example}{Example}

\newcommand{\E}{\mathbb{E}}
\newcommand{\Var}{\operatorname{Var}}
\newcommand{\Cov}{\operatorname{Cov}}
\newcommand{\indep}{\perp \!\!\! \perp}
\newcommand{\sumi}{\sum_{i=1}^n}
\newcommand{\tauxhat}{\hat{\boldsymbol\tau}_x}

\title{\textbf{Rerandomization under Interference}}
\author{Ziang Yuan \and Xinran Li \and Shuangning Li\thanks{We thank Wenxuan Guo, as well as Dennis Shen, Colin Fogarty, and other participants of the ACM SIGMETRICS Causal Inference Workshop 2026 and the Causal Inference in Econometrics and Statistics Workshop in UChicago, for helpful comments. We used OpenAI's ChatGPT and Anthropic's Claude to assist with proofreading and editing during the preparation of this manuscript. All errors are our own.}}
\date{University of Chicago}

\begin{document}
\etocdepthtag.toc{main}

\maketitle

\begin{abstract}
Covariates are widely used in randomized experiments to improve precision. However, in the presence of interference, where outcomes may depend on the treatment assignments of other units, standard covariate adjustment methods may fail to preserve desirable properties such as the no-harm property, meaning that incorporating covariates does not worsen estimator performance. Existing approaches that incorporate covariates under interference typically rely on specifying a particular interference structure, and their guarantees can be sensitive to misspecification. In this paper, we study how to incorporate covariate information in a way that preserves the no-harm property while remaining largely agnostic to the underlying interference structure. We focus on the estimation of the expected average treatment effect (EATE) using the Hájek estimator and study rerandomization under interference, a design-stage procedure that restricts the assignment space to allocations with sufficiently small covariate imbalance. Our framework also allows the covariates used for rerandomization to depend on the treatment assignment itself, such as the proportion of treated neighbors, which naturally arises in settings with interference. We show that, even under interference, rerandomization can improve estimation precision asymptotically relative to unrestricted Bernoulli randomization, relying only on mild conditions on the dependence structure across units. When a conservative dependence graph is available, we further develop an optimization-based conservative variance estimator for inference under rerandomization.
\end{abstract}

\section{Introduction}

Causal inference is central to many scientific, economic, and policy questions. Researchers are often interested in understanding the effect of an intervention on outcomes of interest, such as the impact of a new pricing policy, a change in platform design, or a medical treatment. Randomized experiments are widely regarded as the gold standard for answering such questions, as random assignment removes confounding and enables transparent identification of causal effects.

In many experimental settings, researchers observe not only treatment assignments and outcomes, but also rich covariate information describing the experimental units. Even though randomization ensures (sometimes asymptotically) unbiased estimation of treatment effects, covariates can be leveraged to improve precision. Broadly speaking, there are two main approaches to incorporating covariates into experimental analysis. First, covariates can be used at the design stage. Methods such as stratification, blocking, and rerandomization aim to ensure that treatment groups are balanced with respect to observed covariates \citep{morgan2012rerandomization, li2018asymptotic, morgan2015rerandomization, kernan1999stratified, imai2009essential, imbens2011experimental}. These approaches modify the randomization procedure itself so that the realized treatment assignment exhibits desirable balance properties. Second, covariates can be incorporated at the analysis stage. These approaches adjust for covariates after the experiment has been conducted, including methods such as regression adjustment and post-stratification \citep{lin2013agnostic, negi2021revisiting, rosenbaum2002covariance, fogarty2018regression, su2021model, zhao2022reconciling, wang2023model, chang2024exact, wang2024model, zhao2024covariate, miratrix2013adjusting, bloniarz2016lasso, guo2023generalized, cohen2024no, qu2025randomization}.

A key property that has received significant attention in this literature is the so-called \emph{no-harm} property: incorporating covariates should not lead to worse performance than ignoring them. Under standard assumptions, many commonly used procedures enjoy this guarantee, providing a strong justification for adjusting for covariates in practice \citep{lin2013agnostic}.

The above developments largely rely on the Stable Unit Treatment Value Assumption (SUTVA) \citep{rubindonald1980comment}, which states that each unit's outcome depends only on its own treatment assignment. Yet in many real-world applications, from economics to public health to education, this assumption is untenable: the treatment of one unit can influence the outcomes of others \citep{cai2015social, hudgens2008toward, sacerdote2001peer, sobel2006randomized}. For instance, the vaccination status of one individual may affect the health outcomes of their friends. This phenomenon, known as interference, poses a central challenge to the standard causal inference paradigm. Estimators that are valid under SUTVA may become biased or lose their usual interpretation. Even when unbiasedness can be maintained, variance estimation becomes significantly more complex due to the dependence induced by interference. A growing literature addresses these challenges by developing estimators and corresponding variance estimators under different forms of interference \citep{sobel2006randomized, hudgens2008toward, tchetgen2012causal, toulis2013estimation, Samii2012EstimatingAC, leung2020treatment, li2022random, leung2022causal, cortez2023exploiting}.

At the same time, the role of covariates in the presence of interference is less well understood. While covariate adjustment is known to improve precision under SUTVA, its behavior can differ under interference. In particular, \citet{gao2025causal} show that regression adjustment methods that satisfy the no-harm property under SUTVA may fail to do so when interference is present. The key challenge is that, even under Bernoulli assignment where treatments are independent across units, realized outcomes are generally dependent due to interference. As a result, variance reduction through covariate adjustment must account for cross-unit covariances. These covariance terms are difficult to handle without imposing additional structure on the form of interference.

Existing approaches that incorporate covariates under interference and retain no-harm guarantees often rely on specifying a particular interference structure. For example, \citet{lu2024adjusting} study covariate adjustment under approximate neighborhood interference \citep{leung2022causal}, \citet{fan2025causal} consider the random graph interference framework of \citet{li2022random}, and \citet{wang2025covariate} focus on low-order interaction outcome models \citep{cortez2023exploiting}. As a result, their guarantees can be sensitive to misspecification of the underlying interference mechanism.

This leads to a natural question:
\begin{quote}
\textit{ Can we incorporate covariate information in a way that preserves the no-harm property, while remaining largely agnostic to the underlying interference structure? }
\end{quote}

This paper provides an affirmative answer to this question. Our goal is to develop a design-stage approach that leverages covariates to improve precision without requiring detailed knowledge of the interference structure.

Specifically, we focus on the estimation of the expected average treatment effect (EATE), a commonly used estimand under interference that averages unit-level effects over the distribution of treatment assignments \citep{savje2021average}. To incorporate covariate information, we consider rerandomization, a design procedure that repeatedly samples treatment assignments and only accepts those assignments for which a prespecified covariate imbalance measure is sufficiently small. In practice, this is typically implemented by computing a balance statistic (such as a Mahalanobis distance) and restricting attention to assignments that satisfy a threshold constraint. Rerandomization has been shown to reduce variance under SUTVA without introducing asymptotic bias, and it provides a natural way to incorporate covariates without relying on outcome modeling \citep{morgan2012rerandomization, li2018asymptotic}.

In this paper, we study rerandomization in settings with interference. Our analysis shows that rerandomization satisfies an asymptotic no-harm property under interference: relative to Bernoulli randomization, it can improve the precision of the Hájek estimator for the EATE. Importantly, our results do not require correct specification of the interference mechanism. The procedure remains largely model agnostic, with guarantees relying only on mild restrictions on the dependence structure, such as limited or localized dependence. Technically, we establish these results by proving a joint central limit theorem for the treatment effect estimator and the covariate imbalance vector, using tools from the literature on limit theorems for network-dependent random variables.

This perspective also highlights an important contrast with the classical SUTVA setting. Under SUTVA, rerandomization is closely connected to regression adjustment. In particular, \citet{liding2020rerandomization} show that rerandomization and regression adjustment can achieve similar precision improvements for treatment effect estimation: both adjust for covariate imbalance, but rerandomization does so at the design stage whereas regression adjustment does so at the analysis stage. Under interference, however, regression adjustment is more delicate. The optimal adjustment generally depends not only on marginal relationships between covariates and outcomes, but also on cross-unit covariance terms induced by interference. These covariance terms are difficult to characterize or estimate without imposing additional structure on the interference mechanism or the underlying network. In this sense, the benefits of rerandomization become more apparent in settings with interference, because it can improve precision through the design while remaining largely agnostic about the form of interference.

Our framework also allows the balancing criterion in the rerandomization procedure to incorporate both standard pre-treatment covariates and treatment-dependent exposure summaries that may be relevant for the outcome. For example, in the presence of network interference, the number or proportion of treated friends may be predictive of a unit's realized outcome. Such quantities are not fixed baseline covariates, but they can be recomputed for each candidate assignment and included in the rerandomization criterion.

In addition, we develop a conservative variance estimator for inference under rerandomization. Starting from any jointly conservative covariance estimator that is valid under the baseline assignment mechanism before rerandomization, we construct a variance estimator that accounts for the rerandomization constraint by solving a tractable optimization problem. The resulting estimator remains conservative for the variance under rerandomization, while often being substantially tighter than the baseline variance estimator.

\subsection{Related work}
\label{subsection:related_work}
\paragraph{Rerandomization and restricted randomization.}
Rerandomization is a design-stage approach for improving covariate balance in randomized experiments. Its origins are connected to Fisher's discussion of random assignment and balance \citep{fisher1992arrangement,fisher1935design}, and to the broader literature on restricted randomization in experimental design \citep{bailey1987restricted,simon1979restricted,cox1982randomization}. The modern formulation of rerandomization in treatment-control experiments was given by \citet{morgan2012rerandomization}, who proposed repeatedly drawing assignments and accepting only those satisfying a prespecified balance criterion, such as a Mahalanobis-distance criterion. Subsequent work developed finite-sample and asymptotic theory, including the truncated-normal limiting distribution and variance-reduction formula \citep{li2017general,li2018asymptotic}, power analysis \citep{branson2024power}, extensions to other covariate balance criteria \citep{morgan2015rerandomization, zhao2024no, cohen2022gaussian, liu2025bayesian, branson2021ridge, zhang2024pca} and experiments \citep{lidingrubin2020rerandomization, johansson2022rerandomization, wang2023rerandomization, lu2023design, yang2023rejective}, and high-dimensional or vanishing-acceptance-probability regimes \citep{wang2022rerandomization}; see also \citet{junior2025does} for a recent review. Related design-stage approaches choose treatment assignments by explicitly optimizing balance, precision, or robustness criteria \citep{kallus2018optimal,kapelner2021harmonizing}. We extend rerandomization to interference settings, incorporate treatment-dependent covariates in the balance, and show that the no-harm guarantee can still be achieved.

Among the literature on rerandomization and restricted randomization, two papers are especially close to ours. \citet{yang2026design} study rerandomization under interference and propose network interference balancing rerandomization and network maximized matching randomization for estimating direct treatment effects in network experiments. Our work differs from theirs in the modeling assumptions, estimand, and nature of the theoretical guarantee. \citet{yang2026design} work under a specific exposure structure in which interference is summarized by the number of treated neighbors, and target an estimand for which the standard estimator can be biased under Bernoulli randomization; their designs are constructed to reduce both bias and variance. In contrast, we allow general potential outcomes subject only to limited cross-unit dependence. The Hájek estimator for the EATE is asymptotically unbiased under Bernoulli randomization and remains asymptotically unbiased after rerandomization, and we establish an asymptotic, model-agnostic no-harm guarantee under which rerandomization can reduce variance.

\citet{basse2018model} also use restricted randomization to exploit network information at the design stage. Their focus, however, is on network-correlated outcomes in the absence of interference: they posit a working model for outcome correlation and use it to derive network-balance criteria. In contrast, our paper allows interference and studies rerandomization for the Hájek estimator of the EATE. Our guarantees do not rely on a correctly specified outcome-correlation model or interference model, but instead on weak dependence conditions that yield asymptotic normality and unbiasedness, together with a no-harm variance reduction result.

\paragraph{Interference.}
Causal inference under interference departs from the standard SUTVA framework by allowing a unit's potential outcome to depend on the treatment assignments of other units. This issue arises in many applications, including social interactions, peer effects, public health, and network experiments \citep{cai2015social, hudgens2008toward, sacerdote2001peer, sobel2006randomized}. Foundational work developed frameworks for defining causal effects when interference is present \citep{sobel2006randomized, hudgens2008toward}, and a large subsequent literature has proposed estimands, estimators, experimental designs, and inference procedures under various interference structures \citep[among many others]{sobel2006randomized, hudgens2008toward, tchetgen2012causal, toulis2013estimation, eckles2017design, athey2018exact, Samii2012EstimatingAC, leung2020treatment, savje2021average, li2022random, leung2022causal}. In particular, the expected average treatment effect (EATE) of \citet{savje2021average} provides an estimand that remains meaningful under interference by averaging direct effects over the randomization distribution. We build on this line of work by studying design-stage precision improvement for the H\'{a}jek estimator of the EATE.

\paragraph{Covariates and regression adjustment.}
A separate literature studies how covariates can improve the analysis of randomized experiments. Under SUTVA, regression adjustment and related analysis-stage procedures are well understood, and several results show that appropriate adjustment does not harm, and can often improve, asymptotic precision relative to the unadjusted estimator \citep{lin2013agnostic, negi2021revisiting, rosenbaum2002covariance, fogarty2018regression, su2021model, zhao2022reconciling, wang2023model, chang2024exact, wang2024model, zhao2024covariate}. Under interference, the role of covariates is more delicate. Because outcomes of different units can be statistically dependent through the treatment assignment and the network, the variance of an estimator contains cross-unit covariance terms in addition to the usual marginal variance terms. As a result, a regression adjustment that reduces individual-level residual variance need not reduce the overall variance of the treatment effect estimator. Indeed, \citet{gao2025causal} show that regression adjustment methods with no-harm guarantees under SUTVA can fail to retain such guarantees under interference. Recent work has therefore developed covariate-adjusted estimators under more specific interference models, including approximate neighborhood interference \citep{leung2022causal, lu2024adjusting}, random-graph interference \citep{li2022random, fan2025causal}, and low-order interaction outcome models \citep{cortez2023exploiting, wang2025covariate}. These approaches demonstrate the value of covariates under interference, but their guarantees typically depend on correctly specifying the relevant interference structure or outcome model. In contrast, our approach uses covariates at the design stage through rerandomization and establishes a no-harm-type variance reduction result under weak dependence conditions, without requiring a correctly specified interference structure or outcome model.

\paragraph{Network-dependent data and weak dependence.}
Our asymptotic analysis is related to the literature on dependent data and central limit theorems for network-indexed random variables. Classical asymptotic theory for dependent observations often relies on strong mixing conditions, such as Rosenblatt’s $\alpha$-mixing framework \citep{rosenblatt1956central}, but such conditions can be difficult to verify and may fail even in simple autoregressive settings \citep{Andrews1984NonstrongMA}. An alternative line of work measures weak dependence of random variables \citep{doukhan1999new}. Some related approaches have also been developed for dependent sequences and random fields \citep[e.g.,][]{dedecker1998central,dedecker2001exponential,dedecker2007weak,el2013central}, and more recently for observations indexed by networks. \citet{lee2019stable} developed conditional neighborhood dependence for networks, while \citet{kojevnikov2021limit} established LLNs, CLTs, and HAC inference for network-dependent random variables under graph-distance-based dependence decay. Related decay ideas also appear in causal inference under interference, for example in approximate neighborhood interference \citep{leung2022causal}. We adapt these ideas to prove joint asymptotic normality of the H\'{a}jek estimator and the covariate imbalance vector, which is the key step for analyzing rerandomization as conditioning on an acceptable-balance event.

\section{Problem Setup and Rerandomization under Interference}
\label{sec:setup}

We consider a population of $n$ units indexed by $i = 1, \dots, n$. Each unit receives a binary treatment, and we denote the treatment assignment vector by $\mathbf Z = (Z_1, \dots, Z_n) \in \{0,1\}^n$, where $Z_i$ indicates the treatment assigned to unit $i$.

We adopt the potential outcomes framework \citep{neyman1923application, rubin1974estimating, imbens2015causal}. In the presence of interference, a unit's outcome may depend on the entire assignment vector. Accordingly, each unit $i$ is associated with a potential outcome function $Y_i : \{0,1\}^n \to \mathbb{R}$, where $Y_i(\mathbf{z})$ denotes the outcome that would be realized for unit $i$ under assignment $\mathbf{z}$. Let $\mathbf{z}_{-i}$ denote the subvector of $\mathbf{z}$ excluding the $i$th component, so that we may equivalently write $Y_i(\mathbf{z}) = Y_i(z_i, \mathbf{z}_{-i})$. We work in a design-based framework, treating the potential outcomes $\{Y_i(\mathbf{\cdot})\}$ as fixed. The observed outcome for unit $i$ is $Y_i = Y_i(\mathbf Z) = Y_i(Z_i, \mathbf Z_{-i})$.

We are interested in estimating the Expected Average Treatment Effect (EATE) \citep{savje2021average}, defined as
\[
\tau = \frac{1}{n}\sum_{i=1}^n \E_{Z_j \stackrel{\operatorname{i.i.d.}}{\sim}
\operatorname{Bern}(\pi)} \big[ Y_i(1;\mathbf Z_{-i}) - Y_i(0;\mathbf Z_{-i}) \big].
\]
This estimand captures the average effect of changing unit $i$'s own treatment from $0$ to $1$, while averaging over the treatment assignments of all other units according to a specified distribution. In particular, it reflects the expected direct effect of treatment under a stochastic intervention on the rest of the population. The EATE is also referred to as the average direct effect (ADE) by \citet{hu2022average}. Under SUTVA, when there is no interference, the EATE reduces to the standard average treatment effect (ATE).

A key feature of the EATE is that it depends on the reference assignment distribution used to average over the potential outcomes. Throughout this paper, we define the EATE with respect to an i.i.d.\ Bernoulli$(\pi)$ assignment distribution. This choice is natural because it corresponds to the canonical randomized experiment and yields a particularly convenient interpretation. Specifically, under an i.i.d.\ Bernoulli$(\pi)$ reference assignment distribution, the EATE coincides with the average distributional shift effect (ADSE) \citep{savje2021average, hudgens2008toward}, $\tau_{\mathrm{ADSE}} = n^{-1} \sum_{i=1}^n \left( \E\left[Y_i \mid Z_i=1\right] - \E\left[Y_i \mid Z_i=0\right] \right). $ Thus, throughout this paper, the EATE can be interpreted as the average difference in expected outcomes between treated and control units under i.i.d.\ Bernoulli randomization.

To estimate the EATE defined above, a natural approach is to conduct the same Bernoulli experiment used to define the reference assignment distribution. Formally, we consider an assignment vector $\mathbf Z = (Z_1, \dots, Z_n)$ with $Z_i \stackrel{\text{i.i.d.}}{\sim} \mathrm{Bern}(\pi)$. Under this design, \citet{savje2021average} show that standard estimators developed under SUTVA for the ATE remain valid for estimating the EATE under limited interference. In particular, the Hájek estimator (also referred to as the difference-in-means estimator) is given by \footnote{Throughout the paper, we adopt the convention that whenever a denominator vanishes, the corresponding ratio is defined to be zero.

In addition, when no units are assigned to the treatment or control group, we set the estimator to zero. Note that this does not affect the estimator’s asymptotic properties, since the probability of a zero denominator vanishes in large samples.}

\[
\hat{\tau} = \frac{\sum_{i=1}^n Z_i Y_i}{\sum_{i=1}^n Z_i} - \frac{\sum_{i=1}^n (1 - Z_i)
Y_i}{\sum_{i=1}^n (1 - Z_i)}.
\]
They show that this estimator is consistent for $\tau$ under Bernoulli experiments, provided that interference is sufficiently limited, in the sense that the number of units dependent with any given unit grows at a controlled rate with $n$.

In this work, we focus on the Hájek estimator due to its simplicity and interpretability. We study whether additional covariate information can be used to design improved experiments, with the goal of reducing the variance of the Hájek estimator relative to the standard Bernoulli design. Importantly, our target estimand remains $\tau$ defined under Bern$(\pi)$ assignment. Thus, rather than changing the estimand, our goal is to design assignment mechanisms that yield more precise estimation of the same quantity.

\subsection{Rerandomization}
\label{sec:rerand}

Although randomized experiments guarantee asymptotically unbiased estimation of treatment effects, a given randomization may still yield substantial imbalance in observed covariates between treatment groups. Such imbalance can lead to imprecise estimates and undermine the credibility of experimental findings. A natural remedy, dating back to \citet{fisher1992arrangement} and \citet{cox1982randomization} and formalized by \citet{morgan2012rerandomization}, is \emph{rerandomization}: repeatedly drawing treatment assignments and retaining only those that achieve adequate covariate balance.

We consider a Bernoulli randomized experiment as the baseline assignment mechanism, under which each unit is independently assigned to treatment with probability $\pi$. Rerandomization modifies this design by conditioning on a covariate balance criterion. Suppose each unit $i$ is associated with a $p$-dimensional vector of pre-treatment covariates $\mathbf X_i = (X_{i1}, X_{i2}, \dots, X_{ip})^\top$. To implement rerandomization, one specifies a measure of covariate balance based on discrepancies between treatment and control groups. A natural choice is the covariate imbalance vector $\tauxhat = (\hat{\tau}_{x1}, \dots, \hat{\tau}_{xp})^\top$, where each component measures the difference in covariate means between the treated and control groups:
\begin{equation}
\label{eq:cov_adj}
\hat{\tau}_{xj} = \frac{\sumi Z_i X_{ij}}{\sumi Z_i} - \frac{\sumi (1 - Z_i) X_{ij}}{\sumi (1 -
Z_i)}.
\end{equation}

To quantify imbalance, we use the squared Mahalanobis distance
\begin{equation}
\label{eq:mahal_general}
M_n = (\tauxhat-\boldsymbol\tau_x)^\top \boldsymbol\Sigma_{xx}^{-1} (\tauxhat-\boldsymbol\tau_x),
\end{equation}
where $\boldsymbol\tau_x=\E[\tauxhat]$ and $\boldsymbol\Sigma_{xx}=\Var(\tauxhat)$ denote the expectation and covariance matrix of $\tauxhat$ under independent Bernoulli assignment. We can verify that $\boldsymbol\tau_x=\boldsymbol 0,$\footnote{This holds under the convention stated earlier for handling zero denominators: we set $\tauxhat = \boldsymbol 0$ when no units are assigned to the treatment or control group.} so that \eqref{eq:mahal_general} simplifies to
\begin{equation}
\label{eq:mahal}
M_n = \tauxhat^\top \boldsymbol\Sigma_{xx}^{-1} \tauxhat.
\end{equation}
Following \citet{morgan2012rerandomization}, we use this squared Mahalanobis distance as the balance criterion and accept only assignments with $M_n \le a$, for a prespecified threshold $a$.

The rerandomization criterion accepts a treatment assignment if and only if $M_n \le a$, where $a > 0$ is a prespecified tolerance threshold. Figure~\ref{fig:rerand_flow} illustrates this procedure. In practice, the threshold $a$ can be chosen to achieve a desired level of precision or variance reduction. In general, a larger $a$ leads to a higher acceptance probability but weaker covariate balance, whereas a smaller $a$ leads to stronger balance but requires more draws on average. At one extreme, when $a = \infty$, rerandomization reduces to unrestricted Bernoulli randomization. At the other extreme, choosing $a$ too small can overly restrict the set of acceptable assignments and make the design less robust to unobserved covariates or outcome components not included in the criterion, partially offsetting the benefits of rerandomization \citep{kapelner2021harmonizing}.

Under SUTVA, rerandomization has been studied extensively, and its theoretical properties are well established. We refer readers to Section \ref{subsection:related_work} for a detailed discussion of the literature.

\begin{figure}
\centering
{\begin{tikzpicture}[
    node distance=0.7cm and 0.7cm,
    box/.style={rectangle, rounded corners=2pt, minimum width=1.5cm,
        minimum height=0.7cm, text centered, draw=black, font=\small},
    decision/.style={diamond, minimum width=1.6cm, minimum height=0.6cm,
        text centered, draw=black, inner sep=1pt, aspect=2.2, font=\small},
    arrow/.style={-{Stealth[length=2mm]}, semithick},
]
\node (pre)     [box, text width=5cm, fill=blue!15]
    {Collect pre-treatment covariates};
\node (draw)    [box, right=of pre] {Draw $\mathbf Z$};
\node (compute) [box, right=of draw] {Compute $M_n$};
\node (decide)  [decision, right=of compute] {$M_n \le a$?};
\node (accept)  [box, right=of decide] {Accept $\mathbf Z$};

\draw [arrow] (pre) -- (draw);
\draw [arrow] (draw) -- (compute);
\draw [arrow] (compute) -- (decide);
\draw [arrow] (decide) -- node[above, font=\scriptsize] {Yes} (accept);
\draw [arrow] (decide.south) -- ++(0,-0.6)
    node[right, font=\scriptsize] {No}
    -| (draw.south);
\end{tikzpicture}}
\caption{Rerandomization procedure: treatment assignments are repeatedly drawn until the Mahalanobis distance $M_n$ falls below the threshold~$a$.}
\label{fig:rerand_flow}
\end{figure}

\subsection{Interference-aware covariates}
\label{sec:interference_covariates}

In the presence of interference, the treatment assignments of other units can affect the outcome of unit~$i$. Consequently, certain features of the assignment vector $\mathbf Z$ may themselves be informative about the outcome and can therefore play the role of covariates in rerandomization.

To build intuition, consider the effect of a vaccine on an individual's health status. Receiving the vaccine is the treatment, and health status is the outcome. The outcome depends not only on whether the individual is vaccinated (the effect of interest), but also on pre-treatment characteristics such as baseline health, as well as on the vaccination status of others, for example, the proportion of their friends who are vaccinated. To obtain accurate estimates, it is helpful to balance both types of factors. Notably, the latter quantity is not fixed prior to the experiment; it depends on the treatment assignment vector $\mathbf Z$. This feature is absent under SUTVA.

Motivated by this, we construct a $p$-dimensional covariate vector $\mathbf X_i = (X_{i1}, X_{i2}, \dots, X_{ip})^\top$ for each unit~$i$ that incorporates two types of information. The first type consists of \textit{pre-treatment covariates}, which are fixed prior to treatment assignment and do not depend on $\mathbf Z$, such as demographic characteristics or baseline measurements. The second type consists of \textit{treatment-dependent covariates}, which are summaries derived from the assignment vector $\mathbf Z$. A canonical example is the proportion of units connected to unit~$i$ (under a posited interaction network) that are assigned to treatment. Note that the posited interaction network need not coincide with the true interference network for the theoretical results in Section~\ref{sec:asymptotic} (e.g., the no-harm guarantee) to hold. Naturally, more accurate guesses of the relevant interaction structure may lead to greater variance reduction. Figure~\ref{fig:rerand_flow_dynamic} illustrates the rerandomization procedure with treatment-dependent covariates.

Including treatment-dependent covariates introduces two key distinctions from the classical setting. First, under rerandomization with interference, each draw of $\mathbf Z$ requires recomputing $\mathbf X$ before evaluating the balance criterion. Second, because the covariates depend on the treatment, the covariate imbalance vector $\tauxhat$ may no longer be centered, i.e., $\mathbb{E}[\tauxhat] \neq \boldsymbol 0$. As a result, the Mahalanobis distance $M_n$ no longer takes the standard form in \eqref{eq:mahal}; instead, we use the more general form in \eqref{eq:mahal_general}, where both the centering and covariance are defined through expectations over the assignment mechanism. In practice, since $\tauxhat$ is a known function of $\mathbf Z$, both its mean and covariance can be approximated via Monte Carlo prior to the experiment.

\begin{figure}
\centering
\begin{tikzpicture}[
    node distance=0.7cm and 0.45cm,
    box/.style={rectangle, rounded corners=2pt, minimum width=1.5cm,
        minimum height=0.7cm, text centered, draw=black,
        font=\small},
    decision/.style={diamond, minimum width=1.6cm, minimum height=0.6cm,
        text centered, draw=black, inner sep=1pt, aspect=2.2, font=\small},
    arrow/.style={-{Stealth[length=2mm]}, semithick},
]
\node (pre)      [box, fill=blue!15, text width=2cm]
                 {Collect \textbf{pre-treatment} covariates};
\node (draw)     [box, right=of pre] {Draw $\mathbf Z$};
\node (computex) [box, right=of draw, fill=red!20,text width=3.8cm]
                 {Compute \textbf{treatment-dependent} covariates};
\node (compute)  [box, right=of computex] {Compute $M_n$};
\node (decide)   [decision, right=of compute] {$M_n \le a$?};
\node (accept)   [box, right=of decide] {Accept $\mathbf Z$};

\draw [arrow] (pre)      -- (draw);
\draw [arrow] (draw)     -- (computex);
\draw [arrow] (computex) -- (compute);
\draw [arrow] (compute)  -- (decide);
\draw [arrow] (decide)   -- node[above, font=\scriptsize] {Yes$\quad$} (accept);
\draw [arrow] (decide.south) -- ++(0,-0.7)
    node[right, font=\scriptsize] {No}
    -| (draw.south);
\end{tikzpicture}
\caption{Rerandomization procedure with treatment-dependent covariates. Pre-treatment covariates are collected before assignment. Treatment assignments are then repeatedly drawn, and treatment-dependent covariates are computed accordingly, until the Mahalanobis distance ($M_n$) falls below the threshold $a$.}
\label{fig:rerand_flow_dynamic}
\end{figure}

\section{Asymptotic Variance Reduction under Rerandomization}
\label{sec:asymptotic}

In Section \ref{sec:setup}, we introduced a generalized rerandomization framework that incorporates both pre-treatment and treatment-dependent covariates to capture network interference. While the algorithmic procedure is straightforward, a fundamental theoretical question remains: does restricting the randomization space based on these interference-aware covariates actually improve the precision of the H\'ajek estimator?

In this section, we provide an affirmative answer by establishing the asymptotic variance reduction achieved by rerandomization under interference. The primary theoretical challenge in our setting lies in the complex cross-unit dependence induced by both the interference structure and the treatment-dependent covariates. Section~\ref{sec:regularities} introduces the regularity conditions required for our results. Section~\ref{subsec:network-locality} then imposes a simple sparse dependency-graph condition under which exact independence holds across sufficiently separated blocks of units. Finally, Section~\ref{sec:asymptotic_vr} establishes a joint central limit theorem for the treatment effect estimator and the covariate imbalance vector, culminating in Theorem~\ref{thm:rerand_dist} and Corollary~\ref{thm:variance_reduction}, which quantify the variance reduction achieved by rerandomization. Appendix~\ref{app:approximate-locality} gives a more general alternative that permits weak dependence between every pair of units, provided that the strength of the dependence decays sufficiently rapidly along a reference graph.

\subsection{Regularity conditions}
\label{sec:regularities}
Write $\boldsymbol{\phi}_i(\mathbf z) = \bigl(Y_i(\mathbf z),\, \mathbf X_i(\mathbf z)^\top\bigr)^\top \in \mathbb R^{d_\phi}$ for the outcome--covariate vector of unit $i$, where $d_\phi=p+1$ denotes its dimension.

\begin{assumption}[Variance stability]\label{ass:variance_stability}
There exist sequences $\{\sigma_{n,y}\}_n, \{\sigma_{n,x_1}\}_n, \dots, \{\sigma_{n,x_p}\}_n$ such that, as $n\to\infty$, the following conditions hold under the i.i.d. Bernoulli design $Z_{i} \stackrel{\operatorname{i.i.d.}}{\sim} \operatorname{Bern}(\pi)$.

\begin{enumerate}[label=(\roman*)]
\item Variance convergence.
\begin{align*}
\Var(\hat{\tau}-\tau)\,/\,\sigma_{n,y}^2 &\to 1, \\
\Var(\hat{\tau}_{x_j}-\tau_{x_j})\,/\,\sigma_{n,x_j}^2 &\to 1, \quad \forall\, j \in [p].
\end{align*}

\item Covariance convergence. There exist constants $c_j$ and $c_{jk}$ such that
\begin{align*}
\Cov(\hat{\tau}-\tau,\;\hat{\tau}_{x_j}-\tau_{x_j}) \,/\, (\sigma_{n,y}\,\sigma_{n,x_j}) &\to c_j,
\quad \forall\, j \in [p], \\
\Cov(\hat{\tau}_{x_j}-\tau_{x_j},\;\hat{\tau}_{x_k}-\tau_{x_k}) \,/\,
(\sigma_{n,x_j}\,\sigma_{n,x_k}) &\to c_{jk}, \quad \forall\, j,k \in [p].
\end{align*}

\item Non-degeneracy. Define the $d_\phi \times d_\phi$ limiting correlation matrix
\[
\mathbf V = \begin{pmatrix} 1 & c_1 & c_2 & \cdots & c_p \\
c_1 & 1 & c_{12} & \cdots & c_{1p} \\
c_2 & c_{12} & 1 & \cdots & c_{2p} \\
\vdots & \vdots & \vdots & \ddots & \vdots \\
c_p & c_{1p} & c_{2p} & \cdots & 1 \end{pmatrix}.
\]
The matrix $\mathbf V$ is positive definite.

\end{enumerate}
\end{assumption}

Conditions~(i) and~(ii) require that the second-moment structure of the estimators stabilizes after appropriate normalization. Condition~(iii) is a standard non-degeneracy requirement: it ensures that no coordinate of the covariate imbalance vector is a deterministic linear function of the remaining coordinates, and that the variation in the treatment effect estimator cannot be perfectly explained by the observed covariates.

\begin{assumption}[Uniform boundedness]
\label{ass:boundedness}
There exists a constant $C_{\phi}<\infty$ such that
\[
\sup_{n\ge 1}\, \max_{\mathbf z\in\{0,1\}^n}\, \max_{i\in[n]} \bigl\| \boldsymbol{\phi}_i(\mathbf z)
\bigr\|_{\infty} \le C_{\phi}.
\]
\end{assumption}

\begin{assumption}[Lower bound on variance scales]
\label{ass:variance_lower_bound}
The sequences in Assumption~\ref{ass:variance_stability} satisfy $\sigma_{n,y} \geq \frac{C}{\sqrt{n}}, \sigma_{n,x_j} \geq \frac{C}{\sqrt{n}}, \forall j \in [p]$ for some constant $C > 0$.
\end{assumption}

Assumption \ref{ass:boundedness} is natural in many applications, since outcomes and covariates often represent physical, biological, or socio-economic quantities, such as blood pressure or test scores, that are inherently bounded or at least uniformly controlled.

Assumption~\ref{ass:variance_lower_bound} requires that the standard deviations do not decay faster than the classical parametric rate. Under SUTVA, variances typically scale at the order $\mathcal O(1/n)$, so the corresponding standard deviations are of order $\mathcal O(n^{-1/2})$. Under network interference, the variance is generally expected to be larger, not smaller, than this benchmark. This lower bound therefore only rules out pathological cases in which strong negative dependence causes the variance to vanish at a super-parametric rate.

\subsection{Limited dependence through a dependency graph}
\label{subsec:network-locality}

We next formalize limited dependence among the unit-level random vectors. For each unit $i$, define the augmented random vector
\[
\mathbf W_{n,i}
=
\begin{pmatrix}
Z_i\\
\boldsymbol\phi_i(\mathbf Z)
\end{pmatrix}
\in\mathbb R^{d_\phi+1}.
\]

\begin{definition}[Dependency graph]
\label{def:dependency-graph}
An undirected graph $\mathcal G_n^{\mathrm{dep}}=([n],E_n^{\mathrm{dep}})$ with a self-loop at every vertex is a dependency graph for $\{\mathbf W_{n,i}:i\in[n]\}$ if, for every pair of disjoint subsets $\mathcal I,\mathcal J\subseteq[n]$, the absence of any edge joining a vertex in $\mathcal I$ to a vertex in $\mathcal J$ implies $\{\mathbf W_{n,i}:i\in\mathcal I\} \indep \{\mathbf W_{n,j}:j\in\mathcal J\}$.
\end{definition}
A dependency graph is not unique, because adding edges preserves the defining property. For notational convenience, let $\mathbf A_n^{\mathrm{dep}}=(A_{n,ij}^{\mathrm{dep}})_{i,j\in[n]}$ denote its binary adjacency matrix, so that $A_{n,ii}^{\mathrm{dep}}=1$ for every $i\in[n]$. Using the adjacency-row-sum convention, under which the self-loop contributes one, define the degree of unit $i$ by $d_{n,i}^{\mathrm{dep}}=\sum_{j=1}^n A_{n,ij}^{\mathrm{dep}}$.

\begin{assumption}[Sparse dependency graph]
\label{ass:network_sparsity}
Under the i.i.d. Bernoulli design $Z_i\stackrel{\operatorname{i.i.d.}}{\sim}\operatorname{Bern}(\pi)$, there exist a constant $\delta>0$ and a sequence of dependency graphs $\{\mathcal G_n^{\mathrm{dep}}\}_{n\ge1}$ for $\{\mathbf W_{n,i}:i\in[n]\}$ such that
\[
\max_{i\in[n]}d_{n,i}^{\mathrm{dep}}
=
\mathcal O\bigl(n^{1/2-\delta}\bigr).
\]
\end{assumption}

Assumption~\ref{ass:network_sparsity} requires the size of every dependency neighborhood to grow more slowly than $\sqrt n$ by a polynomial margin. The dependency graph is a theoretical description of the dependence induced by the Bernoulli assignment through the potential outcomes and the treatment-dependent covariates. It need not be known to the experimenter and need not coincide with the posited interaction network used to construct the covariates in the rerandomization criterion.

\begin{example}[Neighborhood interference]
Suppose there exists an interference graph $\mathcal H_n$ such that, for each unit $i$, the potential outcome $Y_i(\mathbf z)$ depends on the assignment $\mathbf z$ only through the treatments in its closed one-hop neighborhood $\mathcal N_{\mathcal H_n}[i]$, which includes unit $i$ itself. That is, for some fixed function $f_{n,i}$, $\boldsymbol\phi_i(\mathbf z) = f_{n,i}\!\left(\mathbf z_{\mathcal N_{\mathcal H_n}[i]}\right)$. Suppose the treatment-dependent covariates of unit $i$ have the same local dependence structure. Let $\mathcal H_n^2$ denote the self-loop-augmented square of $\mathcal H_n$, in which every vertex has a self-loop and two distinct vertices are adjacent if their distance in $\mathcal H_n$ is at most two. Under the i.i.d.\ Bernoulli design, $\mathcal H_n^2$ is a valid dependency graph for $\{\mathbf W_{n,i}:i\in[n]\}$.

Let $h_n=\max_{i\in[n]}\deg_{\mathcal H_n}(i)$ denote the maximum degree of $\mathcal H_n$, counting the self-loop at each vertex. Since the maximum degree of $\mathcal H_n^2$ is at most $1 + (h_n-1) + (h_n-1)(h_n-2) =  1+(h_n-1)^2$, Assumption~\ref{ass:network_sparsity} holds if $h_n=\mathcal O(n^{1/4-\epsilon})$ for some $\epsilon\in(0,1/4)$. More generally, it suffices that the largest closed two-hop neighborhood in $\mathcal H_n$ has size $\mathcal O(n^{1/2-\delta})$. In particular, the assumption holds when $\mathcal H_n$ has uniformly bounded maximum degree.
\end{example}

Assumption \ref{ass:network_sparsity} above is deliberately kept simple for practical considerations. One drawback is it does not apply to weak long-range interference under which every unit may be affected by every treatment assignment, but in which case, it turns out that our results below still hold. Appendix~\ref{app:approximate-locality} gives a more general graph-based weak-dependence condition, under which, assignments outside a growing graph neighborhood may still have nonzero effects, provided that their residual influence decays sufficiently rapidly relative to neighborhood growth. The same joint central limit theorem and the resulting rerandomization conclusions continue to hold under this alternative condition.

\subsection{Asymptotic variance reduction}
\label{sec:asymptotic_vr}

\begin{theorem}[Joint asymptotic normality]
\label{thm:joint-normality}
Suppose that $Z_i\overset{\mathrm{i.i.d.}}{\sim}\operatorname{Bernoulli}(\pi)$. Under Assumptions~\ref{ass:variance_stability}--\ref{ass:network_sparsity}, recall the variance scales $\sigma_{n,y},\sigma_{n,x_1},\ldots,\sigma_{n,x_p}$ from Assumption~\ref{ass:variance_stability}, and define the diagonal scaling matrix
\begin{equation}
\mathbf D_n
=
\operatorname{diag}\bigl(\sigma_{n,y}^{-1},\sigma_{n,x_1}^{-1},\ldots,\sigma_{n,x_p}^{-1}\bigr).
\label{eq:scaling-matrix}
\end{equation}
Then
\[
\mathbf D_n
\begin{pmatrix}
\hat\tau-\tau\\
\hat{\boldsymbol\tau}_x-\boldsymbol\tau_x
\end{pmatrix}
\overset{d}{\to} N(\boldsymbol 0,\mathbf V),
\]
where $\mathbf V$ is the $d_\phi\times d_\phi$ limiting correlation matrix defined in Assumption~\ref{ass:variance_stability}.
\end{theorem}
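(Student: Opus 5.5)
The plan is to linearize the ratio estimators, apply a dependency-graph CLT to the linearized sum, and then control the remainder. I would use the Cramér--Wold device together with Stein's method for sums with a dependency graph.

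\textbf{Step 1: linearization.} Write $\hat\tau=\hat\mu_1-\hat\mu_0$ with $\hat\mu_1=\bar{(ZY)}/\bar Z$, where bars denote averages over $i$. Let $\mu_1=n^{-1}\sumi\E[Y_i\mid Z_i=1]$ and $\mu_0=n^{-1}\sumi\E[Y_i\mid Z_i=0]$, so that $\tau=\mu_1-\mu_0$ by the ADSE identity. The standard expansion gives
\[
\hat\mu_1-\mu_1=\frac{1}{n\pi}\sumi Z_i\bigl(Y_i-\mu_1\bigr)+R_{n,1},\qquad R_{n,1}=\Bigl(\frac{1}{\bar Z}-\frac1\pi\Bigr)\frac1n\sumi Z_i(Y_i-\mu_1).
\]
There is an analogous expansion for $\hat\mu_0$, and for each covariate coordinate, with $\mu_{x,1},\mu_{x,0}$ defined the same way. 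Note that $\E[Z_i(Y_i-\mu_1)]$ summed over $i$ equals $0$. This gives $\mathbf D_n(\hat\tau-\tau,\hat{\boldsymbol\tau}_x-\boldsymbol\tau_x)^\top=\mathbf D_n n^{-1}\sumi\boldsymbol\xi_{n,i}+\mathbf D_n\mathbf R_n$, where each $\boldsymbol\xi_{n,i}$ is a fixed bounded function of $\mathbf W_{n,i}$ and the $\boldsymbol\xi_{n,i}$ sum to mean zero.

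\textbf{Step 2: the remainder is negligible.} Since the $Z_i$ are i.i.d., $\bar Z-\pi=O_p(n^{-1/2})$. For the second factor, $\Var\bigl(n^{-1}\sumi Z_i(Y_i-\mu_1)\bigr)\le n^{-2}\sum_{i,j}A^{\mathrm{dep}}_{n,ij}\,4C_\phi^2=O(n^{-1}\max_i d^{\mathrm{dep}}_{n,i})$, because non-adjacent pairs are independent. Hence $R_{n,1}=O_p(n^{-1/2})\,O_p(n^{-1/4-\delta/2})=o_p(n^{-1/2})$. Assumption~\ref{ass:variance_lower_bound} then gives $\mathbf D_n\mathbf R_n=o_p(1)$; the same argument covers the control-group terms and all covariate coordinates. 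The event of a zero denominator has exponentially small probability and can be ignored. Because the remainder is $o_p(1)$ after scaling, and the variances are bounded after scaling, the linear term has the same limiting covariance $\mathbf V$. I expect this to require checking uniform integrability of the scaled remainder: the estimators are bounded, and the event $|\bar Z-\pi|>\pi/2$ has exponentially small probability.

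\textbf{Step 3: CLT for the linear term.} Fix $\mathbf t\in\mathbb R^{d_\phi}$ and set $S_n=\sumi \mathbf t^\top\mathbf D_n\boldsymbol\xi_{n,i}/n=\sumi U_{n,i}$. Each $U_{n,i}$ depends only on $\mathbf W_{n,i}$, so $\mathcal G^{\mathrm{dep}}_n$ is a dependency graph for the $U_{n,i}$. Also $|U_{n,i}|\le B:=C\|\mathbf t\|\,n^{-1}\sqrt n=O(n^{-1/2})$, using Assumptions~\ref{ass:boundedness} and~\ref{ass:variance_lower_bound}. Writing $D=\max_i d^{\mathrm{dep}}_{n,i}$ and $s_n^2=\Var(S_n)\to\mathbf t^\top\mathbf V\mathbf t>0$ (by Assumption~\ref{ass:variance_stability} and Step 2), the Stein-method bound for dependency graphs (Baldi--Rinott / Ross) gives a Wasserstein distance to normality of at most
\[
\frac{D^2}{s_n^3}\sumi\E|U_{n,i}|^3+\frac{\sqrt{28}\,D^{3/2}}{\sqrt\pi\, s_n^2}\Bigl(\sumi\E U_{n,i}^4\Bigr)^{1/2}.
\]
The first term is $O(D^2 n\cdot n^{-3/2})=O(n^{-2\delta})$, and the second is $O(D^{3/2}n^{-1/2})=o(1)$. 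Thus $S_n\overset d\to N(0,\mathbf t^\top\mathbf V\mathbf t)$, and Cramér--Wold combined with Slutsky (adding back $\mathbf D_n\mathbf R_n$) gives the theorem.

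The main obstacle is Step 2, specifically the rate at which the ratio remainder vanishes relative to $\sigma_{n,\cdot}$. This is exactly where the $n^{1/2-\delta}$ sparsity rate and the $n^{-1/2}$ variance lower bound are both needed.
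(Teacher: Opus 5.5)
Your Step 3 does not go through under Assumption~\ref{ass:network_sparsity}, because the rate computation contains an arithmetic error. With $|U_{n,i}|\le B\asymp n^{-1/2}$ and $s_n\asymp 1$, the first Stein term is $D^2s_n^{-3}\sum_i\E|U_{n,i}|^3\lesssim D^2nB^3\asymp D^2n^{-1/2}$. In the worst case the assumption allows, $D\asymp n^{1/2-\delta}$, this is of order $n^{1/2-2\delta}$, not $n^{-2\delta}$. Likewise the second term is of order $D^{3/2}n^{-1/2}\asymp n^{1/4-3\delta/2}$.

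Both terms vanish only when $\delta>1/4$, that is, when $\max_i d^{\mathrm{dep}}_{n,i}=o(n^{1/4})$. The theorem, however, allows any $\delta>0$, so neighborhoods may be almost as large as $\sqrt n$. Tightening constants cannot fix this: a third/fourth-moment Stein bound carries a fixed power of $D$, and that caps the admissible degree at about $n^{1/4}$. You need a criterion that trades higher moments for higher powers of $D$.

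The paper uses Janson's (1988) cumulant criterion. For bounded summands whose dependency graph has maximum degree $D_n$, it suffices that $N_nD_n^{r-1}(L_n/s_n)^r\to0$ for some fixed $r\ge3$. With $L_n/s_n\asymp n^{-1/2}$ this quantity is $n^{1/2-(r-1)\delta}$, so choosing $r$ with $(r-1)\delta>1/2$ closes the argument for every $\delta>0$. You should also center each summand, replacing $U_{n,i}$ by $U_{n,i}-\E U_{n,i}$, since only the total mean vanishes; this leaves $S_n$ unchanged.

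Relatedly, your diagnosis of the main obstacle is reversed. Step 2 needs only $\max_i d^{\mathrm{dep}}_{n,i}/n\to0$, since the scaled remainder is $\mathcal O_p(\sqrt{D/n})$. The $n^{1/2-\delta}$ rate is needed precisely for the CLT.

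The uniform-integrability point you defer in Step 2 is needed for more than the variance equivalence. For treatment-dependent covariates the centering $\boldsymbol\tau_x=\E\hat{\boldsymbol\tau}_x$ generally differs from $\boldsymbol\mu_{x,1}-\boldsymbol\mu_{x,0}$, and the discrepancy equals the expectation of the remainder. You therefore need $\E\|\mathbf D_n\mathbf R_n\|_2\to0$, not merely $\mathbf D_n\mathbf R_n=o_p(1)$.

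The paper obtains $L^2$ convergence as follows. On $\{\hat\pi_n\ge\pi/2\}$ the scaled treated-arm remainder is bounded by $Cn(\hat\pi_n-\pi)^2$ times the squared norm of the bounded standardized arm average. Cauchy--Schwarz then combines $\E\{n^2(\hat\pi_n-\pi)^4\}=\mathcal O(1)$ with a vanishing fourth moment of that arm average. An exponential bound on the complement finishes the argument. Your sketch can be completed the same way.
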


Theorem~\ref{thm:joint-normality} characterizes the joint asymptotic distribution of the Hájek estimator and $\tauxhat$ under Bernoulli treatment assignment. In particular, it shows that these quantities are jointly asymptotically Gaussian. This result is the key input for the rerandomization analysis, since the distribution of the Hájek estimator under rerandomization is obtained by conditioning on the event $M_n \le a$, which is determined by the covariate imbalance vector.

The following theorem follows from Theorem~\ref{thm:joint-normality} and characterizes the asymptotic distribution of the H\'ajek estimator under rerandomization, that is, its limiting distribution conditional on the acceptance event $M_n \le a$.

\begin{theorem}[Asymptotic distribution under rerandomization]
\label{thm:rerand_dist}
Partition $\mathbf V$ conformably with $(\hat{\tau}, \tauxhat^\top)^\top$ as
\begin{equation}
\label{eqn:partition}
\mathbf V =
\begin{pmatrix}
V_{11} & \mathbf V_{12}\\
\mathbf V_{21} & \mathbf V_{22} \end{pmatrix}.
\end{equation}
Suppose the conditions of Theorem~\ref{thm:joint-normality} hold, and define the asymptotic squared multiple correlation
\begin{equation}
\label{eqn:R2_definition}
R^{2}\;=\;\frac{\mathbf V_{12}\,\mathbf V_{22}^{-1}\,\mathbf V_{21}}{V_{11}}.
\end{equation}
Let $\varepsilon_0\sim N(0,1)$, and let $L_{p,a}$ be independent of $\varepsilon_0$ with $L_{p,a}\overset{d}{=}J_1\mid\{\mathbf J^\top \mathbf J\le a\}$, where $\mathbf J=(J_1,\ldots,J_p)^\top\sim N(\boldsymbol 0,\mathbf I_p)$. Then
\begin{equation}
\label{eqn:limit_unscaled}
\sigma_{n,y}^{-1}\,(\hat\tau-\tau)\;\big|\;M_n\le a \;\xrightarrow{\;d\;}\;
\sqrt{1-R^{2}}\;\varepsilon_0 \;+\; \sqrt{R^{2}}\;L_{p,a}.
\end{equation}
\end{theorem}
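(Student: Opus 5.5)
We deduce the statement from Theorem~\ref{thm:joint-normality} by a conditioning argument. Write $U_n=\sigma_{n,y}^{-1}(\hat\tau-\tau)$ and $\mathbf S_n=\mathbf D_{n,x}(\tauxhat-\boldsymbol\tau_x)$, where $\mathbf D_{n,x}=\operatorname{diag}(\sigma_{n,x_1}^{-1},\ldots,\sigma_{n,x_p}^{-1})$. Theorem~\ref{thm:joint-normality} gives $(U_n,\mathbf S_n^\top)^\top\overset{d}{\to}(U,\mathbf S^\top)^\top\sim N(\boldsymbol 0,\mathbf V)$.

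\textbf{Step 1: $M_n$ as a continuous functional.} Since $\boldsymbol\Sigma_{xx}=\Var(\tauxhat)=\mathbf D_{n,x}^{-1}\mathbf C_n\mathbf D_{n,x}^{-1}$ with $\mathbf C_n=\Var(\mathbf S_n)$, we have
\[
M_n=\mathbf S_n^\top\mathbf C_n^{-1}\mathbf S_n .
\]
By Assumption~\ref{ass:variance_stability}(i)--(ii), $\mathbf C_n\to\mathbf V_{22}$, and $\mathbf V_{22}$ is positive definite by (iii). Hence $\mathbf C_n$ is invertible for large $n$, and $\mathbf C_n^{-1}\to\mathbf V_{22}^{-1}$. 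By Slutsky and the continuous mapping theorem,
\[
(U_n,M_n)\overset{d}{\to}(U,\ \mathbf S^\top\mathbf V_{22}^{-1}\mathbf S),
\]
and the limiting quadratic form is $\chi^2_p$.

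\textbf{Step 2: convergence of the conditional law.} Fix $t$. We have
\[
P(U_n\le t\mid M_n\le a)=\frac{P(U_n\le t,\ M_n\le a)}{P(M_n\le a)} .
\]
The set $\{(u,m):u\le t,\ m\le a\}$ has boundary contained in $\{u=t\}\cup\{m=a\}$. This boundary has probability zero under the limit, since $U$ and $\chi^2_p$ are continuous ($V_{11}=1$). The portmanteau theorem then yields convergence of both numerator and denominator. The denominator tends to $P(\chi^2_p\le a)>0$, which is where the positivity of $a$ is used. Therefore
\[
U_n\mid M_n\le a\ \overset{d}{\to}\ U\mid \mathbf S^\top\mathbf V_{22}^{-1}\mathbf S\le a .
\]
The main technical care in the whole argument lies in Steps 1 and 2. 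We must check that the normalization of $M_n$ by the exact finite-$n$ covariance, rather than the limit, does not matter, and that $M_n$ is well defined for large $n$. Both follow from $\mathbf C_n\to\mathbf V_{22}\succ0$.

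\textbf{Step 3: identifying the Gaussian conditional law.} Let $\mathbf J=\mathbf V_{22}^{-1/2}\mathbf S\sim N(\boldsymbol 0,\mathbf I_p)$, so the event becomes $\{\mathbf J^\top\mathbf J\le a\}$. Decompose
\[
U=\boldsymbol\beta^\top\mathbf J+\eta,\qquad \boldsymbol\beta=\mathbf V_{22}^{-1/2}\mathbf V_{21}.
\]
Here $\eta$ is independent of $\mathbf J$ and has variance $1-\mathbf V_{12}\mathbf V_{22}^{-1}\mathbf V_{21}=1-R^2$, using $V_{11}=1$. Also $\|\boldsymbol\beta\|^2=R^2$. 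Conditioning on an event determined by $\mathbf J$ leaves $\eta$ unaffected, so $\eta=\sqrt{1-R^2}\,\varepsilon_0$ with $\varepsilon_0$ independent of the conditioned $\mathbf J$.

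If $R^2>0$, choose an orthogonal matrix $\mathbf Q$ with first row $\boldsymbol\beta^\top/\|\boldsymbol\beta\|$. Then $\mathbf Q\mathbf J\sim N(\boldsymbol 0,\mathbf I_p)$ and $\|\mathbf Q\mathbf J\|=\|\mathbf J\|$. Hence $\boldsymbol\beta^\top\mathbf J=\sqrt{R^2}\,(\mathbf Q\mathbf J)_1$, and $(\mathbf Q\mathbf J)_1$ given $\|\mathbf Q\mathbf J\|^2\le a$ has the law $L_{p,a}$. The case $R^2=0$ is trivial. This gives \eqref{eqn:limit_unscaled}.
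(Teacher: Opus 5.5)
Your proof is correct and follows the paper's route. You write $M_n$ as a quadratic form in the standardized imbalance with matrix $\mathbf D_{n,x}\boldsymbol\Sigma_{xx}\mathbf D_{n,x}\to\mathbf V_{22}$, pass the joint limit of $(\sigma_{n,y}^{-1}(\hat\tau-\tau),M_n)$ through the ratio defining the conditional law using that $\{M^\ast=a\}$ is a null set, and identify the limit by Gaussian projection plus rotational invariance. The only difference is cosmetic: you use conditional CDFs on $\{u\le t,\,m\le a\}$, while the paper uses bounded continuous test functions $h(w)\mathbf 1\{m\le a\}$; both are the same portmanteau step.
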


As consequences of Theorem~\ref{thm:rerand_dist}, Corollaries~\ref{cor:unbiased} and~\ref{thm:variance_reduction} characterize the asymptotic mean and variance of the H\'ajek estimator under rerandomization. For this purpose, we introduce the following notation for asymptotic moments. Let $\mathbb{E}_{\operatorname{asymp}}(\cdot)$, $\Var_{\operatorname{asymp}}(\cdot)$, and $\Cov_{\operatorname{asymp}}(\cdot)$ denote the asymptotic expectation, variance, and covariance, respectively. Specifically, let $T_n$ be a sequence of estimators for a parameter $\theta$. Suppose there exists a deterministic sequence $c_n \to \infty$ such that $c_n (T_n - \theta) \xrightarrow{d} \mathcal{N}(0, V)$. We define $\mathbb{E}_{\operatorname{asymp}}(T_n) = \theta$, $\Var_{\operatorname{asymp}}(T_n) = c_n^{-2} V$. More generally, consider a pair $(T_n, A_n)$, where $A_n$ is an event. Suppose that the conditional distribution of $c_n (T_n - \theta)$ given $A_n$ converges weakly to a limiting distribution with mean 0 and finite variance $\Sigma$. We define the asymptotic conditional expectation and variance as $\mathbb{E}_{\operatorname{asymp}}(T_n \mid A_n) = \theta$, and $\Var_{\operatorname{asymp}}(T_n \mid A_n) = c_n^{-2}\Sigma.$

\begin{corollary}[Asymptotic unbiasedness under rerandomization]
\label{cor:unbiased}
Under the conditions of Theorem~\ref{thm:joint-normality},
\[
\mathbb{E}_{\operatorname{asymp}}(\hat{\tau} \mid M_n \le a) = \tau.
\]
\end{corollary}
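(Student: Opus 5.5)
The plan is to read the corollary directly off Theorem~\ref{thm:rerand_dist} together with the definition of asymptotic conditional expectation given just before the statement. That definition asks for a deterministic sequence $c_n\to\infty$ such that the conditional law of $c_n(\hat\tau-\tau)$ given $A_n=\{M_n\le a\}$ converges weakly to a limit with mean zero and finite variance. I will take $c_n=\sigma_{n,y}^{-1}$.

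\textbf{Step 1 ($c_n\to\infty$).} Under the Bernoulli design, Assumption~\ref{ass:boundedness} and Assumption~\ref{ass:network_sparsity} together bound $\Var(\hat\tau)$. Linearize $\hat\tau$ as $n^{-1}\sum_i$ of bounded terms that are dependent only along the dependency graph, with Hájek denominators concentrating near $n\pi$ and $n(1-\pi)$. This gives $\Var(\hat\tau-\tau)=\mathcal O(\max_i d^{\mathrm{dep}}_{n,i}/n)=\mathcal O(n^{-1/2-\delta})\to 0$. By Assumption~\ref{ass:variance_stability}(i), $\sigma_{n,y}^2\sim\Var(\hat\tau-\tau)\to 0$, so $c_n\to\infty$. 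Also, the acceptance event has asymptotically positive probability $P(\chi^2_p\le a)>0$ by Theorem~\ref{thm:joint-normality}, so conditioning on it is well defined for large $n$. This step needs a little care, but the bound is essentially the same variance computation that underlies Theorem~\ref{thm:joint-normality}.

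\textbf{Step 2 (weak limit).} By Theorem~\ref{thm:rerand_dist}, $c_n(\hat\tau-\tau)\mid M_n\le a$ converges in distribution to $\sqrt{1-R^2}\,\varepsilon_0+\sqrt{R^2}\,L_{p,a}$.

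\textbf{Step 3 (limit has mean zero and finite variance).} The first term has mean zero and variance $1-R^2$. For the second term, note that $\mathbf J\mapsto-\mathbf J$ preserves both $N(\mathbf 0,\mathbf I_p)$ and the event $\{\mathbf J^\top\mathbf J\le a\}$. Hence $L_{p,a}\overset{d}{=}-L_{p,a}$, so $\E L_{p,a}=0$. Moreover $|L_{p,a}|\le\sqrt a$, so its variance is finite. Independence of $\varepsilon_0$ and $L_{p,a}$ then gives the limit mean $0$ and finite variance $1-R^2+R^2\Var(L_{p,a})$. Here $R^2\in[0,1)$ by positive definiteness of $\mathbf V$.

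\textbf{Conclusion.} All requirements of the definition are met, so $\E_{\operatorname{asymp}}(\hat\tau\mid M_n\le a)=\tau$. The main obstacle is Step 1, verifying $\sigma_{n,y}\to0$, which the definition needs but the assumptions do not state explicitly. If the variance bound proves awkward to extract, I would fall back on the fact that weak convergence in Theorem~\ref{thm:joint-normality} with consistency of $\hat\tau$ (as in \citet{savje2021average}) forces $\sigma_{n,y}\to0$.
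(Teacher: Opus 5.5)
Your proposal is correct and takes the same route as the paper. The paper simply notes that the limit $\sqrt{1-R^2}\,\varepsilon_0+\sqrt{R^2}\,L_{p,a}$ from Theorem~\ref{thm:rerand_dist} is symmetric about zero with mean zero, and then invokes the definition of $\mathbb{E}_{\operatorname{asymp}}$. Your Steps~1 and~3 add details the paper leaves implicit: $\sigma_{n,y}\to 0$ via the $\mathcal O(\Delta_n/n)$ dependency-graph variance bound, and a finite limiting variance via $|L_{p,a}|\le\sqrt a$.
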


Corollary~\ref{cor:unbiased} shows that rerandomization preserves the asymptotic unbiasedness of the H\'ajek estimator. One difference from the SUTVA setting is the following: under SUTVA, $\boldsymbol\tau_x = \boldsymbol 0$, so using zero as the reference mean in $M_n$ is correct. With treatment-dependent covariates, however, $\boldsymbol\tau_x$ may be nonzero. If $M_n$ is constructed using an incorrect mean (e.g., zero when $\boldsymbol\tau_x \neq \boldsymbol 0$), then in general $\mathbb{E}_{\operatorname{asymp}}(\hat{\tau} \mid M_n \le a) \neq \tau$.

\begin{corollary}[Asymptotic variance reduction under rerandomization]
\label{thm:variance_reduction}
Let $\chi^2_p$ denote a chi-square random variable with $p$ degrees of freedom, and define
\[
v_p = \frac{\Pr(\chi^2_{p+2} \le a)}{\Pr(\chi^2_p \le a)}.
\]
Under the conditions of Theorem~\ref{thm:joint-normality}, the asymptotic variance of $\hat{\tau}$ under rerandomization satisfies
\[
\Var_{\operatorname{asymp}}(\hat{\tau} \mid M_n \le a) = \Var_{\operatorname{asymp}}(\hat{\tau})
\bigl[1 - (1 - v_p) R^2\bigr].
\]
\end{corollary}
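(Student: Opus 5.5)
The plan is to read off the asymptotic variance from the limiting distribution in Theorem~\ref{thm:rerand_dist}, combined with the definition of $\Var_{\operatorname{asymp}}$ given before Corollary~\ref{cor:unbiased}. Take $c_n=\sigma_{n,y}^{-1}$. Under Bernoulli randomization, Theorem~\ref{thm:joint-normality} gives $\sigma_{n,y}^{-1}(\hat\tau-\tau)\overset{d}{\to}N(0,1)$, since $V_{11}=1$. Hence $\Var_{\operatorname{asymp}}(\hat\tau)=\sigma_{n,y}^2$.

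Under rerandomization, Theorem~\ref{thm:rerand_dist} shows that $\sigma_{n,y}^{-1}(\hat\tau-\tau)$, conditional on $\{M_n\le a\}$, converges to $\sqrt{1-R^2}\,\varepsilon_0+\sqrt{R^2}\,L_{p,a}$. By the definition, the conditional asymptotic variance is therefore $\sigma_{n,y}^2$ times the variance of this limit, provided the limit has mean zero and finite variance.

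To verify this, note that $L_{p,a}$ is bounded, since $|J_1|\le\sqrt a$ on the conditioning event, so it has all moments. The conditioning set $\{\mathbf J^\top\mathbf J\le a\}$ is invariant under $\mathbf J\mapsto-\mathbf J$, so $L_{p,a}$ is symmetric and $\E L_{p,a}=0$; this is also the content of Corollary~\ref{cor:unbiased}. Because $\varepsilon_0$ and $L_{p,a}$ are independent and both have mean zero, the variance of the limit is $(1-R^2)+R^2\Var(L_{p,a})$.

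The main computation is $\Var(L_{p,a})=\E[J_1^2\mid \mathbf J^\top\mathbf J\le a]$. By exchangeability of the coordinates of $\mathbf J$ under the spherically symmetric conditioning, this equals $p^{-1}\E[\chi^2_p\mid\chi^2_p\le a]$. Let $f_p$ be the $\chi^2_p$ density. Using the identity $x f_p(x)=p f_{p+2}(x)$, which follows from $\Gamma(p/2+1)=(p/2)\Gamma(p/2)$ and the power of $2$ in the normalizing constant, we get
\[
\E[\chi^2_p\,\mathbf 1\{\chi^2_p\le a\}]=\int_0^a x f_p(x)\,dx=p\Pr(\chi^2_{p+2}\le a).
\]
Therefore $\Var(L_{p,a})=\Pr(\chi^2_{p+2}\le a)/\Pr(\chi^2_p\le a)=v_p$.

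Substituting gives a limiting variance of $1-R^2+R^2v_p=1-(1-v_p)R^2$. Multiplying by $\sigma_{n,y}^2=\Var_{\operatorname{asymp}}(\hat\tau)$ yields the claimed identity.

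The only step needing care is the chi-square density identity and the reduction of $\E[J_1^2\mid\cdot]$ to $\E[\chi^2_p\mid\cdot]/p$. Everything else is bookkeeping with the definition of asymptotic moments.
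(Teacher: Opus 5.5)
Your proposal is correct and follows the paper's proof. You read off the variance of $\sqrt{1-R^2}\,\varepsilon_0+\sqrt{R^2}\,L_{p,a}$ from Theorem~\ref{thm:rerand_dist}, use independence and $\Var(L_{p,a})=v_p$, and rescale by $\sigma_{n,y}^2=\Var_{\operatorname{asymp}}(\hat\tau)$, which relies on $V_{11}=1$.

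The differences are only in detail. The paper simply cites $\Var(L_{p,a})=v_p$ as a standard truncated-Gaussian identity, whereas you derive it via exchangeability and $x f_p(x)=p f_{p+2}(x)$. You also explicitly check the mean-zero and finite-variance requirements in the definition of the conditional asymptotic variance.
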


Corollary~\ref{thm:variance_reduction} formally quantifies the efficiency gain delivered by rerandomization. Specifically, the asymptotic variance of the Hájek estimator under rerandomization, $\Var_{\operatorname{asymp}}(\hat{\tau}\mid M_n\le a)$, is no larger than that under the original Bernoulli design $\Var_{\operatorname{asymp}}(\hat{\tau})$. Thus, rerandomization enjoys an asymptotic no-harm property even in the presence of interference. Moreover, this guarantee does not rely on correct specification of the interference mechanism. The procedure is largely model agnostic, requiring only mild assumptions on the dependence structure.

Corollary~\ref{thm:variance_reduction} also shows that the magnitude of the variance reduction depends on the strictness of the balance criterion, determined by the threshold $a$, and the predictive power of the covariates, summarized by $R^2$. Under Bernoulli assignment,
\begin{equation}
\label{eqn:R2_definition0}
R^2 = \frac{ \Cov_{\operatorname{asymp}}(\hat{\tau}, \tauxhat)^\top
\Var_{\operatorname{asymp}}(\tauxhat)^{-1} \Cov_{\operatorname{asymp}}(\hat{\tau}, \tauxhat) }{
\Var_{\operatorname{asymp}}(\hat{\tau}) }
\end{equation}
corresponds to the squared multiple correlation between $\hat{\tau}$ and $\tauxhat$, capturing the fraction of the asymptotic variance of $\hat{\tau}$ that can be linearly explained by $\tauxhat$.

To further interpret the value of $R^2$, we consider two simple examples. The derivations are given in Appendix~\ref{app:r2-examples}.

\begin{example}[SUTVA with one baseline covariate]\label{ex:r2-sutva}
Suppose treatments are assigned independently, $Z_i \stackrel{\mathrm{i.i.d.}}{\sim} \mathrm{Bernoulli}(\pi)$. Let $X_i \stackrel{\mathrm{i.i.d.}}{\sim} N(0,\sigma_X^2)$ and $\varepsilon_i \stackrel{\mathrm{i.i.d.}}{\sim} N(0,\sigma_\varepsilon^2)$. Assume that the collections $\{X_i\}_{i=1}^n$, $\{\varepsilon_i\}_{i=1}^n$, and $\mathbf Z$ are mutually independent. Consider the SUTVA model
\[
Y_i(\mathbf{z})=\alpha+\tau z+\beta X_i+\varepsilon_i, \qquad z\in\{0,1\}.
\]
The EATE is $\tau$. If rerandomization balances the baseline covariate $X_i$, then the $R^2$ in \eqref{eqn:R2_definition} is
\begin{equation}\label{eq:r2-sutva-main} R^2 = \frac{\beta^2\sigma_X^2}
{\beta^2\sigma_X^2+\sigma_\varepsilon^2}.
\end{equation}
Thus, in this SUTVA linear model, the $R^2$ governing the rerandomization variance coincides with the marginal unit-level predictive $R^2$ for the prognostic component of the outcome:
\[
\frac{\Var\{\E(Y_i(\mathbf{z})\mid X_i)\}}{\Var\{Y_i(\mathbf{z})\}} = \frac{\beta^2\sigma_X^2}
{\beta^2\sigma_X^2+\sigma_\varepsilon^2}.
\]
\end{example}

\begin{example}[Pair interference]\label{ex:r2-pair}
Suppose $n$ is even and units form $n/2$ disjoint pairs. Let $m(i)$ denote the pair-mate of unit $i$. Treatments are assigned independently, $Z_i \stackrel{\mathrm{i.i.d.}}{\sim} \mathrm{Bernoulli}(\pi)$. Let $X_i \stackrel{\mathrm{i.i.d.}}{\sim} N(0,\sigma_X^2)$ and $\varepsilon_i \stackrel{\mathrm{i.i.d.}}{\sim} N(0,\sigma_\varepsilon^2)$, and assume that the collections $\{Z_i\}_{i=1}^n$, $\{X_i\}_{i=1}^n$, and $\{\varepsilon_i\}_{i=1}^n$ are mutually independent. Define the pair exposure $H_i(\mathbf z)=z_{m(i)}$ and $H_i=H_i(\mathbf Z)=Z_{m(i)}$. Consider the potential-outcome model
\[
Y_i(\mathbf{z}) = \alpha+\tau z_i+\beta X_i+\lambda H_i(\mathbf z)+\varepsilon_i .
\]
The EATE is $\tau$. Let $\tauxhat=(\hat\tau_X,\hat\tau_H)^\top$ denote the H\'ajek imbalance vector for $X_i$ and $H_i$, centered in the Mahalanobis criterion at its Bernoulli expectation. If rerandomization balances this two-dimensional vector, then
\begin{equation}\label{eq:r2-pair-main}
R^2 = \frac{\beta^2\sigma_X^2+2\pi(1-\pi)\lambda^2}
{\beta^2\sigma_X^2+2\pi(1-\pi)\lambda^2+\sigma_\varepsilon^2}.
\end{equation}

For comparison, consider the marginal unit-level predictive $R^2$ obtained by focusing only on a single unit's prognostic component, $Y_i(\mathbf Z)-\tau Z_i = \alpha+\beta X_i+\lambda H_i+\varepsilon_i$, and asking how much of its marginal variance is explained by $(X_i,H_i)$. Since $X_i$ is independent of $H_i$ and $\Var(H_i)=\pi(1-\pi)$, this unit-level quantity is
\begin{equation}\label{eq:r2-pair-unit} R^2_{\operatorname{unit}} =
\frac{\beta^2\sigma_X^2+\pi(1-\pi)\lambda^2}
{\beta^2\sigma_X^2+\pi(1-\pi)\lambda^2+\sigma_\varepsilon^2}.
\end{equation}
The $R^2$ in \eqref{eq:r2-pair-main} is therefore not the same as the marginal unit-level predictive quantity in \eqref{eq:r2-pair-unit}: the exposure term is $2\pi(1-\pi)\lambda^2$ in \eqref{eq:r2-pair-main}, but only $\pi(1-\pi)\lambda^2$ in \eqref{eq:r2-pair-unit}. Thus, balancing the treatment-dependent exposure can reduce the variance of the treatment-effect estimator more than its marginal unit-level predictive power alone would suggest.
\end{example}

These examples illustrate that the $R^2$ in Corollary~\ref{thm:variance_reduction} is a design-based measure of the association between the estimator $\hat\tau$ and the imbalance vector $\tauxhat$, rather than merely a marginal measure of how well the covariates predict a single unit's outcome. Under SUTVA with a baseline covariate, the two notions coincide: balancing $X_i$ removes the same component of variation that $X_i$ explains in the unit-level outcome. Under pair interference, however, the exposure $H_i = Z_{m(i)}$ is itself a function of the treatment assignment. Although $H_i$ has marginal variance $\pi(1-\pi)$, the relevant quantity for rerandomization is the treated-minus-control imbalance in $H_i$. To first order, the two units in the same pair contribute the same centered term to this imbalance. This within-pair alignment doubles the asymptotic variance contribution of the exposure imbalance, relative to an independent baseline covariate with the same marginal variance. This is the source of the factor $2\pi(1-\pi)\lambda^2$ in \eqref{eq:r2-pair-main}.

Finally, we make a few remarks on Corollary~\ref{thm:variance_reduction}. First, the proportional variance reduction is $(1 - v_p) R^2$, so rerandomization is most effective when the observed covariates are highly predictive of the outcome. Second, choosing a smaller threshold $a$ improves covariate balance and reduces $v_p$, but it also lowers the acceptance probability, defined as $\mathbb{P}(M_n \le a)$, the probability that a random Bernoulli assignment satisfies the rerandomization criterion. When $a$ becomes too small, this probability can be very low, increasing the computational cost of obtaining an acceptable assignment. Moreover, if $a$ vanishes too quickly relative to $n$, the standard rerandomization asymptotics may no longer apply and a separate analysis is required \citep{wang2022rerandomization}. In practice, $a$ should balance statistical efficiency with computational feasibility rather than be chosen arbitrarily close to zero. Third, the variance reduction factor $v_p$ depends on the covariate dimension $p$. For a fixed acceptance probability and a fixed $R^2$, the efficiency gain becomes smaller as $p$ increases. This highlights the importance of selecting covariates that are genuinely predictive, rather than including all available pre-treatment variables indiscriminately, especially in higher-dimensional settings.

\section{Conservative Variance Estimator}
\label{sec:conservative}

In this section, we construct conservative variance estimators for the Hájek estimator under rerandomization in the presence of interference. Under SUTVA, variance estimation for rerandomized experiments is relatively well understood. A common approach is to estimate the variance under the baseline randomization mechanism, together with the squared multiple correlation $R^2$ between the treatment-effect estimator and the covariate imbalance, and then plug these quantities into the rerandomization variance formula \citep{morgan2012rerandomization, li2018asymptotic}. In the presence of interference, this strategy becomes more delicate. As discussed after Corollary~\ref{thm:variance_reduction}, the quantity $R^2$ no longer has the same simple interpretation as in the SUTVA setting: it depends on the joint asymptotic covariance between the outcome estimator and the covariate-imbalance estimator, which may involve cross-unit dependence and, in our setting, treatment-dependent covariates. Consequently, estimating $R^2$ can be substantially more challenging. This is closely related to the difficulty of constructing the optimal regression adjustment, since both problems rely on estimating the same covariance structure between the treatment-effect estimator and the covariate imbalance. Under interference, this covariance structure becomes considerably more complicated, making both optimal adjustment and consistent estimation of $R^2$ substantially more difficult.

This motivates a construction that avoids the need for an explicit plug-in estimator of $R^2$. Recall that the asymptotic variance under rerandomization can be viewed as the conditional variance of the estimator under the baseline Bernoulli assignment mechanism, given that the balance criterion $M_n \le a$ is satisfied. This perspective suggests a two-step procedure. We first study variance estimation under Bernoulli assignment in Section~\ref{sec:conservative_var}, where we propose an estimator that is asymptotically conservative and often substantially less conservative than existing alternatives. We then turn to Section~\ref{sec:opt}, where we show how to transform a conservative variance estimator under Bernoulli assignment into a valid variance estimator under rerandomization. This transformation is based on an optimization formulation whose solution yields the desired conservative bound. Importantly, the development in Section~\ref{sec:opt} is general: it does not rely on the particular estimator proposed in Section~\ref{sec:conservative_var}, and can be applied to any estimator satisfying a suitable joint conservativeness property.

We note that in the presence of interference, variance estimation is typically more challenging than estimating the target quantity itself. In this section, we assume that the experimenter has access to some additional information about the dependence structure of the potential outcomes, which allows for the construction of conservative variance estimators.

\subsection{Variance estimation under Bernoulli assignment}
\label{sec:conservative_var}
Let $\hat{\boldsymbol{\eta}} \in \mathbb{R}^{d_\phi}$ denote the centered vector of treatment-effect estimator and covariate imbalance
\[
\hat{\boldsymbol{\eta}} =
\begin{pmatrix}
\hat{\tau} - \tau \\
\hat{\tau}_{x_1} - \tau_{x_1} \\
\vdots \\
\hat{\tau}_{x_p} - \tau_{x_p} \end{pmatrix}.
\]
In this subsection, we construct a conservative covariance estimator for $\mathbf{U}_n=\Var(\hat{\boldsymbol{\eta}})$ (here the subscript $n$ comes from sample size $n$) under Bernoulli assignment.

Let
\[
\bar{\boldsymbol{\phi}}_1 = \frac{\sum_{i=1}^n Z_i \boldsymbol{\phi}_i}{\sum_{i=1}^n Z_i}, \qquad
\bar{\boldsymbol{\phi}}_0 = \frac{\sum_{i=1}^n (1-Z_i)\boldsymbol{\phi}_i}{\sum_{i=1}^n (1-Z_i)},
\]
and define the sample within-arm centered vector
\begin{equation}
\label{eq:centered_outcomes}
\hat{\boldsymbol{\phi}}_i = \boldsymbol{\phi}_i - Z_i \bar{\boldsymbol{\phi}}_1 -
(1-Z_i)\bar{\boldsymbol{\phi}}_0.
\end{equation}

To motivate our construction, consider first the classical setting with no interference and only pre-treatment covariates. Writing $n_1=\sum_{i=1}^n Z_i$, $n_0=\sum_{i=1}^n (1-Z_i)$, the usual Neyman covariance estimator is
\[
\hat{\Var}_{\mathrm{Neyman}}(\hat{\boldsymbol{\eta}}) = \frac{\hat{\boldsymbol{\Sigma}}_1}{n_1}
+ \frac{\hat{\boldsymbol{\Sigma}}_0}{n_0},
\]
where
\[
\hat{\boldsymbol{\Sigma}}_1 = \frac{1}{n_1-1}\sum_{i:Z_i=1}
\bigl(\boldsymbol{\phi}_i-\bar{\boldsymbol{\phi}}_1\bigr)
\bigl(\boldsymbol{\phi}_i-\bar{\boldsymbol{\phi}}_1\bigr)^\top, \qquad \hat{\boldsymbol{\Sigma}}_0 =
\frac{1}{n_0-1}\sum_{i:Z_i=0} \bigl(\boldsymbol{\phi}_i-\bar{\boldsymbol{\phi}}_0\bigr)
\bigl(\boldsymbol{\phi}_i-\bar{\boldsymbol{\phi}}_0\bigr)^\top.
\]
Under Bernoulli assignment, the asymptotic covariance of $\hat{\boldsymbol{\eta}}$ is
\[
\Var_{\mathrm{asymp}}(\hat{\boldsymbol{\eta}}) = \frac{1}{n} \left(
\frac{\boldsymbol{\Sigma}_1}{\pi} + \frac{\boldsymbol{\Sigma}_0}{1-\pi} -
\boldsymbol{\Sigma}_{\Delta} \right),
\]
where $\boldsymbol{\Sigma}_1$ and $\boldsymbol{\Sigma}_0$ are the finite-population covariance matrices of $\boldsymbol{\phi}_i(1)$ and $\boldsymbol{\phi}_i(0)$, and $\boldsymbol{\Sigma}_{\Delta}$ is the finite-population covariance matrix of the treatment-effect vector $\boldsymbol{\phi}_i(1)-\boldsymbol{\phi}_i(0)$. Thus, the Neyman estimator is conservative because it estimates the first two terms and omits the negative semidefinite correction $-\boldsymbol{\Sigma}_{\Delta}/n$. A straightforward algebraic manipulation shows that the Neyman estimator $\hat{\Var}_{\mathrm{Ney}}(\hat{\boldsymbol{\eta}})$ is asymptotically equivalent to
\[
\hat{\mathbf{U}}_n^{\operatorname{SUTVA}} = \frac{1}{n^2} \sum_{i=1}^n \left( \frac{Z_i}{\pi^2} +
\frac{1-Z_i}{(1-\pi)^2} \right) \hat{\boldsymbol{\phi}}_i\hat{\boldsymbol{\phi}}_i^\top,
\]
since $n_1/n \to \pi$ and $n_0/n \to 1-\pi$. It follows that $\hat{\mathbf{U}}_n^{\operatorname{SUTVA}}$ is also asymptotically conservative for $\Var_{\mathrm{asymp}}(\hat{\boldsymbol{\eta}})$ in the absence of interference.

However, in our setting with interference and treatment-dependent covariates, this construction no longer directly applies. In particular, both cross-unit dependence and treatment-induced variation in the covariates introduce additional covariance components that are not captured by $\hat{\boldsymbol{\phi}}_i \hat{\boldsymbol{\phi}}_i^\top$. This motivates the need for a more refined construction that accounts for these dependencies while retaining conservativeness.

To account for dependence across units, we assume that the experimenter has access to an undirected graph $\mathcal G_n=([n],E_n)$ with self-loops, which provides information of the dependence between units. Let $\mathbf A_n$ denote its binary adjacency matrix, so that $A_{n,ii}=1$ for every $i\in[n]$. This graph encodes the dependence structure among units, with $A_{n,ij}=1$ indicating that units $i$ and $j$ may be dependent. Below, we require $\mathcal G_n$ to be a dependency graph in the sense of Definition~\ref{def:dependency-graph}. Following the same adjacency-row-sum convention, let $d_{n,i}=\sum_{j=1}^n A_{n,ij}$ denote the degree of unit $i$ in $\mathcal G_n$. To incorporate this dependence, we inflate the variance estimator according to the degree of each unit.
\begin{definition}[Covariance estimator under Bernoulli assignment]
\label{def:Un_estimator}
Define the estimator $\hat{\mathbf{U}}_n$ as
\begin{equation} \label{eq:estimator_Un}
\hat{\mathbf{U}}_n = \frac{1}{n^2} \sum_{i=1}^n d_{n,i} \left( \frac{Z_i}{\pi^2} +
\frac{(1-Z_i)}{(1-\pi)^2} \right) \hat{\boldsymbol{\phi}}_i \hat{\boldsymbol{\phi}}_i^\top.
\end{equation}
\end{definition}

We now establish the asymptotic conservativeness of $\hat{\mathbf U}_n$.

\begin{theorem}[Asymptotic conservativeness of the covariance estimator]
\label{thm:conservative_matrix}
Under the i.i.d. Bernoulli design $Z_i\stackrel{\operatorname{i.i.d.}}{\sim}\operatorname{Bernoulli}(\pi)$, suppose that Assumptions~\ref{ass:variance_stability}--\ref{ass:variance_lower_bound} hold, $\mathcal G_n$ is a dependency graph in the sense of Definition~\ref{def:dependency-graph} and that, for some constant $\delta>0$,
\[
\max_{i\in[n]}d_{n,i}
=
\mathcal O\bigl(n^{1/2-\delta}\bigr).
\]
Then, for every $\varepsilon>0$,
\begin{equation}
\label{eq:psd_conservative}
\mathbb{P}\!\left\{ \mathbf
D_n\hat{\mathbf{U}}_n\mathbf D_n - \mathbf V + \varepsilon \mathbf I_{d_\phi} \succeq \mathbf{0}
\right\} \to 1.
\end{equation}
\end{theorem}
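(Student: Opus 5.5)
The plan has four steps: linearize the H\'ajek contrasts, bound the true covariance by a degree-weighted sum of unit-level second moments, show the estimator consistently estimates that sum, and pass from the matrix inequality to \eqref{eq:psd_conservative}. The key inequality $\mathbf a\mathbf b^\top+\mathbf b\mathbf a^\top\preceq \mathbf a\mathbf a^\top+\mathbf b\mathbf b^\top$ turns covariances along the edges of $\mathcal G_n$ into degree-weighted diagonal terms. This is what the factor $d_{n,i}$ in \eqref{eq:estimator_Un} is designed to absorb.

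\textbf{Step 1 (linearization).} Set $\bar{\boldsymbol\mu}_1=n^{-1}\sum_i\E[Z_i\boldsymbol\phi_i]/\pi$ and $\bar{\boldsymbol\mu}_0=n^{-1}\sum_i\E[(1-Z_i)\boldsymbol\phi_i]/(1-\pi)$. Then $\hat{\boldsymbol\eta}=\bar{\boldsymbol\phi}_1-\bar{\boldsymbol\phi}_0-(\bar{\boldsymbol\mu}_1-\bar{\boldsymbol\mu}_0)$, since $\tau$ is the difference of the population-average conditional means (the ADSE representation). Define
\[
\boldsymbol\psi_i=\frac{Z_i}{\pi}(\boldsymbol\phi_i-\bar{\boldsymbol\mu}_1)-\frac{1-Z_i}{1-\pi}(\boldsymbol\phi_i-\bar{\boldsymbol\mu}_0),
\]
so that $\E\sum_i\boldsymbol\psi_i=\mathbf 0$. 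A delta-method expansion of the ratios $\sum_i Z_i\boldsymbol\phi_i/\sum_i Z_i$ gives $\hat{\boldsymbol\eta}=n^{-1}\sum_i\boldsymbol\psi_i+r_n$. The remainder $r_n$ is a product of two centered averages, and each is $O_p(\sqrt{d_{\max}/n})$ by a dependency-graph variance bound and Assumption~\ref{ass:boundedness}. Hence $\mathbf D_n r_n=O_p(\sqrt n\,d_{\max}/n)=o_p(1)$ by Assumption~\ref{ass:variance_lower_bound} and the degree condition. The same bound controls the second moments, so $\mathbf D_n\Var(n^{-1}\sum_i\boldsymbol\psi_i)\mathbf D_n\to\mathbf V$ by Assumption~\ref{ass:variance_stability}.

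\textbf{Step 2 (population upper bound).} Because $\mathcal G_n$ is a dependency graph, $\Cov(\boldsymbol\psi_i,\boldsymbol\psi_j)=\mathbf 0$ whenever $A_{n,ij}=0$. Writing $\tilde{\boldsymbol\psi}_i=\boldsymbol\psi_i-\E\boldsymbol\psi_i$ and applying the key inequality edge by edge gives
\[
\Var\Bigl(\sum_i\boldsymbol\psi_i\Bigr)=\sum_{i,j}A_{n,ij}\E[\tilde{\boldsymbol\psi}_i\tilde{\boldsymbol\psi}_j^\top]\preceq\sum_i d_{n,i}\E[\tilde{\boldsymbol\psi}_i\tilde{\boldsymbol\psi}_i^\top]\preceq\sum_i d_{n,i}\E[\boldsymbol\psi_i\boldsymbol\psi_i^\top].
\]
Since $Z_i(1-Z_i)=0$, the last expectation equals $\E\bigl[\bigl(Z_i/\pi^2+(1-Z_i)/(1-\pi)^2\bigr)\boldsymbol\xi_i\boldsymbol\xi_i^\top\bigr]$ with $\boldsymbol\xi_i=\boldsymbol\phi_i-Z_i\bar{\boldsymbol\mu}_1-(1-Z_i)\bar{\boldsymbol\mu}_0$. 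This is exactly the population analogue $\mathbf U_n^\ast$ of $\hat{\mathbf U}_n$, with $\hat{\boldsymbol\phi}_i$ replaced by $\boldsymbol\xi_i$. Therefore $\mathbf D_n\mathbf U_n^\ast\mathbf D_n-\mathbf V\succeq -o(1)\,\mathbf I_{d_\phi}$.

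\textbf{Step 3 (estimation error).} It remains to show $\mathbf D_n(\hat{\mathbf U}_n-\mathbf U_n^\ast)\mathbf D_n=o_p(1)$, or at least that its negative part vanishes, since only the lower side of \eqref{eq:psd_conservative} is needed. There are two pieces.
\begin{itemize}
\item \emph{Centering.} Replacing $\hat{\boldsymbol\phi}_i$ by $\boldsymbol\xi_i$ costs terms linear and quadratic in $\bar{\boldsymbol\phi}_z-\bar{\boldsymbol\mu}_z=O_p(\sqrt{d_{\max}/n})$, weighted by $n^{-2}\sum_i d_{n,i}=O(d_{\max}/n)$. After scaling by at most $n/C^2$, this is $O_p(d_{\max}^{3/2}/\sqrt n)$ for the cross term. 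If that is too crude, I would use the fact that $\hat{\boldsymbol\phi}_i$ is the within-arm least-squares residual, so the cross term enters with a favorable sign.
\item \emph{Fluctuation.} The centered sum $n^{-2}\sum_i d_{n,i}\bigl(\mathbf M_i-\E\mathbf M_i\bigr)$, with $\mathbf M_i$ the weighted outer product of $\boldsymbol\xi_i$, needs a dependency-graph variance bound.
\end{itemize}

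The fluctuation term is the main obstacle. The naive bound $\Var\bigl(\sum_i d_{n,i}m_i\bigr)\le C\sum_{i\sim j}d_{n,i}d_{n,j}\le Cn\,d_{\max}^3$, after scaling by $n/C^2$, gives a standard deviation of order $d_{\max}^{3/2}/\sqrt n$. This vanishes only if $d_{\max}=o(n^{1/3})$, not under the stated $n^{1/2-\delta}$. I would try three remedies in order.
\begin{itemize}
\item First, scale relative to the actual size $\sigma_{n,y}^2$. Large degrees typically inflate the variance, and the bound $\sigma\ge C/\sqrt n$ is pessimistic in exactly that regime.
\item Second, exploit one-sidedness. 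The summands are PSD and nonnegative, so a lower-tail inequality (e.g.\ a Paley--Zygmund/Janson-type bound for sums of nonnegative variables over a dependency graph) needs only $\sum_i d_{n,i}\E\mathbf M_i$ to dominate. Such bounds degrade like $\exp(-\mu^2/(\Delta\cdot\text{var}))$ rather than through $\Var$ alone.
\item Third, if needed, work direction by direction with $\mathbf u^\top(\cdot)\mathbf u$ on a finite net of the unit sphere.
\end{itemize}

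\textbf{Step 4 (conclusion).} Combining Steps 2 and 3, $\mathbf D_n\hat{\mathbf U}_n\mathbf D_n\succeq\mathbf D_n\mathbf U_n^\ast\mathbf D_n-o_p(1)\mathbf I_{d_\phi}\succeq\mathbf V-o_p(1)\mathbf I_{d_\phi}$. This yields \eqref{eq:psd_conservative} for every fixed $\varepsilon>0$.
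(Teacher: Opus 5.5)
\paragraph{Where the proposal stands.} Your architecture matches the paper's: H\'ajek linearization, then edgewise domination of the true covariance by degree-weighted second moments, then control of the estimation error. Your Steps~1--2 are essentially its ``dominance in expectation'' step.

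\paragraph{The gap is Step~3.} This is exactly where $\Delta_n:=\max_i d_{n,i}=\mathcal O(n^{1/2-\delta})$ must be used, and neither piece closes as you set it up. By your own accounting, the centering cross term and the fluctuation are both only $O_p(\Delta_n^{3/2}/\sqrt n)$, which is not $o_p(1)$ unless $\delta>1/6$. None of the proposed remedies repairs this:
\begin{itemize}
\item The least-squares property gives no favorable sign. Even without weights, centering at the sample arm mean \emph{decreases} the within-arm sum of outer products, by $n_z(\bar{\boldsymbol\phi}_z-\bar{\boldsymbol\mu}_z)(\bar{\boldsymbol\phi}_z-\bar{\boldsymbol\mu}_z)^\top$. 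With the weights $d_{n,i}$ there is no cancellation at all.
\item Rescaling by the true $\sigma_{n,y}^2$ is not available. The hypotheses allow $\sigma_{n,y}\asymp n^{-1/2}$ together with $\Delta_n\asymp n^{1/2-\delta}$, for instance SUTVA with a deliberately over-connected $\mathcal G_n$.
\item A Janson-type lower-tail bound helps only if it is fed the mean-based variance proxy described below. With boundedness alone it hits the same $n^{1/3}$ barrier.
\end{itemize}
More fundamentally, your additive target, and hence the chain in Step~4, can be false. Take clusters of size $\Delta_n$ in which every outcome equals one fixed member's treatment, with $\pi\neq1/2$. Then the $(1,1)$ entry of $\mathbf D_n\mathbf U_n^\ast\mathbf D_n$ grows like $\Delta_n$, and its additive fluctuations are of exact order $\Delta_n^{3/2}/\sqrt n$. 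The theorem still holds there only because the bound is hugely conservative.

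\paragraph{What the paper does instead.} It proves concentration \emph{relative to the mean}, and it removes the centering cross term by an inequality rather than bounding it.

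For the fluctuation, let $w_i=Z_i/\pi^2+(1-Z_i)/(1-\pi)^2$ and $\widetilde{\mathbf U}_n=n^{-2}\sum_i d_{n,i}w_i\boldsymbol\xi_i\boldsymbol\xi_i^\top$, so that $\mathbf U_n^\ast=\E\widetilde{\mathbf U}_n$, and put $\boldsymbol\Lambda_n=\mathbf D_n\mathbf U_n^\ast\mathbf D_n$. Your Step~2 gives $\lambda_{\min}(\boldsymbol\Lambda_n)\ge c_0>0$ eventually. For a fixed unit vector $\mathbf u$, define $q_{n,i}=w_i\{\mathbf u^\top\boldsymbol\Lambda_n^{-1/2}\mathbf D_n\boldsymbol\xi_i/\sqrt n\}^2$. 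These variables are nonnegative and uniformly bounded, and they satisfy $\sum_i d_{n,i}\E q_{n,i}=n$. For bounded nonnegative variables, $|\Cov(q_{n,i},q_{n,j})|\le C(\E q_{n,i}+\E q_{n,j})$, so
\[
\Var\Bigl(\frac1n\sum_{i=1}^n d_{n,i}q_{n,i}\Bigr)\le\frac{2C}{n^2}\sum_{i=1}^n d_{n,i}\,\E q_{n,i}\sum_{j:A_{n,ij}=1}d_{n,j}\le\frac{2C\Delta_n^2}{n}\longrightarrow0 .
\]
The gain over your estimate is that the crude count $\sum_i d_{n,i}\le n\Delta_n$ is replaced by the normalization $\sum_i d_{n,i}\E q_{n,i}=n$. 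Polarizing over finitely many directions then gives
\[
\mathbf D_n\widetilde{\mathbf U}_n\mathbf D_n\succeq(1-\alpha)\boldsymbol\Lambda_n\succeq(1-\alpha)\{\mathbf V-o(1)\mathbf I_{d_\phi}\}
\]
with probability tending to one. This one-sided multiplicative statement is all the theorem needs.

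For the centering, apply $(\mathbf a+\mathbf b)(\mathbf a+\mathbf b)^\top\succeq(1-\rho)\mathbf a\mathbf a^\top-\frac{1-\rho}{\rho}\mathbf b\mathbf b^\top$ with $\mathbf a=\mathbf D_n\boldsymbol\xi_i$ and $\mathbf b=\mathbf D_n(\hat{\boldsymbol\phi}_i-\boldsymbol\xi_i)$. This leaves only a quadratic remainder of norm $O_p(\Delta_n^2/n)$. The multiplicative losses $\alpha$ and $\rho$ are absorbed because $\mathbf V$ is bounded.

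\paragraph{A minor point in Step~1.} The covariate coordinates are centered at $\boldsymbol\tau_x=\E\hat{\boldsymbol\tau}_x$. This differs from $\bar{\boldsymbol\mu}_{1,x}-\bar{\boldsymbol\mu}_{0,x}$ by the mean of the linearization remainder. The difference is negligible after scaling, but your identity for $\hat{\boldsymbol\eta}$ is not exact as written.
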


Theorem~\ref{thm:conservative_matrix} shows that, after the natural coordinate-wise scaling by $\mathbf D_n$, the estimator $\hat{\mathbf U}_n$ dominates the limiting correlation matrix $\mathbf V$ in the positive semidefinite order up to an arbitrarily small asymptotic slack. In particular, for any fixed vector $\mathbf a \in \mathbb{R}^{d_\phi}$, $\mathbf a^\top \hat{\mathbf{U}}_n\mathbf a$ is an asymptotically conservative estimator for the variance of $\mathbf a^\top \hat{\boldsymbol{\eta}}$. Thus, the theorem yields more than marginal variance bounds for $\hat{\tau}$ and the coordinates of $\hat{\boldsymbol{\tau}}_x$: it provides a joint covariance bound for all linear combinations of these quantities.

The graph $\mathcal G_n$ need not be the sparsest possible dependency graph. It may contain additional edges, provided that its maximum degree continues to satisfy the sparsity condition in Theorem~\ref{thm:conservative_matrix}.

Compared with existing conservative variance estimators \citep[e.g.,][]{Samii2012EstimatingAC, savje2021average}, our construction has two notable features. First, we use the group-centered vectors $\hat{\boldsymbol{\phi}}_i$ rather than the raw vectors $\boldsymbol{\phi}_i$. Using the raw vectors would introduce the magnitude of the raw vector itself and thus would typically lead to a more conservative estimator. Second, the dependence graph enters locally through the unit-specific degrees $d_{n,i}$ inside the summand. This differs from approaches that inflate a conventional estimator by a single worst-case graph summary, such as the maximum degree or the spectral radius. By allowing the inflation to vary across units, our estimator avoids applying the same worst-case penalty to every observation and can therefore be substantially less conservative when dependence is heterogeneous across units. The trade-off is that this refinement requires more detailed knowledge of the dependence graph. The simulations in Section~\ref{subsection:sim_var_comp} illustrate this gain in practice.

\subsection{Variance estimation under rerandomization}
\label{sec:opt}

We now adapt the above result to variance estimation under rerandomization. Relative to Section~\ref{sec:conservative_var}, two features change. First, the observed data, including outcomes, covariates, and treatment assignments, are generated under the rerandomization distribution, namely the Bernoulli design conditional on the acceptance event $\{M_n \le a\}$ but not the original Bernoulli distribution. Second, the target quantity changes: instead of the unconditional covariance of $\hat{\boldsymbol{\eta}}$, we seek to estimate the variance of $\hat{\tau}$ under rerandomization. We address these two issues in turn.

We begin with the first point. Although Theorem~\ref{thm:conservative_matrix} is stated for Bernoulli assignment, it continues to provide a valid covariance bound after conditioning on the rerandomization acceptance event.

\begin{corollary}[Conditional conservativeness under rerandomization]
\label{cor:conditional_psd_conservative}
Under the conditions of Theorem~\ref{thm:conservative_matrix}, for every $\varepsilon > 0$,
\[
\mathbb{P}\!\left( \mathbf D_n\hat{\mathbf{U}}_n\mathbf D_n - \mathbf V + \varepsilon \mathbf
I_{d_\phi} \succeq \mathbf{0} \,\middle|\, M_n \le a \right) \to 1,
\]
where $\mathbf V$ is defined in Assumption~\ref{ass:variance_stability}.
\end{corollary}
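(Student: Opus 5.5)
The plan is to reduce the conditional statement to the unconditional one via the elementary bound $\mathbb P(B^c\mid A)\le \mathbb P(B^c)/\mathbb P(A)$. Fix $\varepsilon>0$ and define the event
\[
B_n=\bigl\{\mathbf D_n\hat{\mathbf U}_n\mathbf D_n-\mathbf V+\varepsilon\mathbf I_{d_\phi}\succeq\mathbf 0\bigr\},
\qquad A_n=\{M_n\le a\}.
\]
Both probabilities are taken under the baseline i.i.d. Bernoulli design, since the rerandomization distribution is exactly the Bernoulli law conditioned on $A_n$. Then
\[
\mathbb P(B_n^c\mid A_n)=\frac{\mathbb P(B_n^c\cap A_n)}{\mathbb P(A_n)}\le\frac{\mathbb P(B_n^c)}{\mathbb P(A_n)}.
\]
By Theorem~\ref{thm:conservative_matrix}, $\mathbb P(B_n^c)\to0$. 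It therefore suffices to show that $\liminf_n \mathbb P(A_n)>0$.

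The key step is to show that the acceptance probability converges to a positive constant, namely $\mathbb P(A_n)\to\Pr(\chi^2_p\le a)>0$. I would derive this from Theorem~\ref{thm:joint-normality}. Let $\mathbf D_{n,x}=\operatorname{diag}(\sigma_{n,x_1}^{-1},\ldots,\sigma_{n,x_p}^{-1})$. The theorem gives $\mathbf D_{n,x}(\tauxhat-\boldsymbol\tau_x)\overset{d}{\to}N(\mathbf 0,\mathbf V_{22})$, where $\mathbf V_{22}$ is positive definite as a principal submatrix of $\mathbf V$.

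Next I rewrite the criterion as
\[
M_n=\bigl(\mathbf D_{n,x}(\tauxhat-\boldsymbol\tau_x)\bigr)^\top\bigl(\mathbf D_{n,x}\boldsymbol\Sigma_{xx}\mathbf D_{n,x}\bigr)^{-1}\bigl(\mathbf D_{n,x}(\tauxhat-\boldsymbol\tau_x)\bigr).
\]
Assumption~\ref{ass:variance_stability}(i)--(ii) gives the deterministic convergence $\mathbf D_{n,x}\boldsymbol\Sigma_{xx}\mathbf D_{n,x}\to\mathbf V_{22}$. Because $\mathbf V_{22}$ is invertible, matrix inversion is continuous there. The continuous mapping theorem combined with Slutsky then yields $M_n\overset{d}{\to}\chi^2_p$. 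The limit law is continuous at $a>0$, so the portmanteau theorem gives $\mathbb P(M_n\le a)\to\Pr(\chi^2_p\le a)>0$.

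Combining the two steps gives $\mathbb P(B_n^c\mid A_n)\to0$, which is the claim. The only delicate point is the rescaling of $M_n$. Invertibility of $\mathbf V_{22}$ is supplied by Assumption~\ref{ass:variance_stability}(iii), and $\boldsymbol\Sigma_{xx}$ is invertible for large $n$ because its normalized version converges to $\mathbf V_{22}$. With both in hand, I expect no real obstacle: the corollary follows from Theorem~\ref{thm:conservative_matrix} together with the non-vanishing acceptance probability.
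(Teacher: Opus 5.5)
Your proposal is correct and matches the paper's proof: you bound $\mathbb P(B_n^c\mid M_n\le a)$ by $\mathbb P(B_n^c)/\mathbb P(M_n\le a)$, send the numerator to zero with Theorem~\ref{thm:conservative_matrix}, and keep the denominator bounded away from zero via $M_n\xrightarrow{d}\chi^2_p$ from Theorem~\ref{thm:joint-normality}. Your explicit rescaling of $M_n$ by $\mathbf D_{n,x}$ is the same step the paper carries out in its proof of Theorem~\ref{thm:rerand_dist}. Theorem~\ref{thm:joint-normality} is available here because the graph $\mathcal G_n$ assumed in Theorem~\ref{thm:conservative_matrix} already satisfies Assumption~\ref{ass:network_sparsity}.
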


We now turn to the second point: how to use a joint covariance bound for $\hat{\boldsymbol{\eta}}$ to construct a variance estimator for $\hat{\tau}$ under rerandomization. Corollary~\ref{thm:variance_reduction} implies that
\[
\Var_{\operatorname{asymp}}(\hat{\tau} \mid M_n \le a) = \sigma_{n,y}^2 \left\{ V_{11} -
(1-v_p)\mathbf V_{12}\mathbf V_{22}^{-1}\mathbf V_{21} \right\},
\]
where the partition of $\mathbf V$ is defined in \eqref{eqn:partition}. Recall that $\mathbf{U}_n = \Var(\hat{\boldsymbol{\eta}})$. Partition $\mathbf{U}_n$ conformably with $(\hat{\tau}, \tauxhat^\top)^\top$ as
\[
\mathbf{U}_n =
\begin{pmatrix}
U_{11} & \mathbf{U}_{12} \\
\mathbf{U}_{21} & \mathbf{U}_{22} \end{pmatrix}.
\]
By Assumption~\ref{ass:variance_stability}, $\mathbf V$ is the limit of $\mathbf D_n\mathbf U_n\mathbf D_n$, where $\mathbf D_n =\operatorname{diag}(\sigma_{n,y}^{-1},\sigma_{n,x_1}^{-1},\dots,\sigma_{n,x_p}^{-1})$. The standardization cancels in the Schur-complement term. Therefore,
\[
\Var_{\operatorname{asymp}}(\hat{\tau} \mid M_n \le a)
= U_{11}-(1-v_p)\mathbf U_{12}\mathbf U_{22}^{-1}\mathbf U_{21}
+o(\sigma_{n,y}^2).
\]

We have already constructed a conservative estimator $\hat{\mathbf{U}}_n$ for $\mathbf{U}_n$. For $\mathbf S_{22}\succ0$, the map $\mathbf S\mapsto S_{11}-(1-v_p)\mathbf S_{12}\mathbf S_{22}^{-1}\mathbf S_{21}$ is monotone with respect to the positive semidefinite order: it is a nonnegative weighted sum of $S_{11}$ and the Schur complement $S_{11}-\mathbf S_{12}\mathbf S_{22}^{-1}\mathbf S_{21}$, with weights $v_p$ and $1-v_p$. Thus, directly plugging in $\hat{\mathbf U}_n$ yields an asymptotically conservative variance bound.

We can sharpen this bound by exploiting additional structure. The lower-right block of $\mathbf U_n$ is known exactly: $\mathbf{U}_{22} = \boldsymbol\Sigma_{xx},$ where $\boldsymbol\Sigma_{xx} = \Var(\tauxhat)$ is the covariance matrix appearing in the Mahalanobis balance criterion. This suggests we can construct a conservative estimator for the conditional variance by searching over all positive semidefinite matrices that are dominated by $\hat{\mathbf{U}}_n$ and whose lower-right block is equal to $\boldsymbol\Sigma_{xx}$, and then taking the largest rerandomization variance compatible with these constraints. This construction ensures that, if the true covariance matrix $\mathbf{U}_n$ is dominated by $\hat{\mathbf{U}}_n$, then the optimal value provides a conservative estimator of the variance under rerandomization.

Formally, below we present our proposed variance estimator for the Hájek estimator $\hat{\tau}$ under rerandomization.
\begin{definition}[Variance estimator under rerandomization]
\label{def:rerand-variance-estimator}
Partition $\hat{\mathbf U}_n$ conformably as
\[
\hat{\mathbf U}_n
=
\begin{pmatrix}
\hat U_{11}&\hat{\mathbf U}_{12}\\
\hat{\mathbf U}_{21}&\hat{\mathbf U}_{22}
\end{pmatrix}.
\]
Define the variance estimator $\hat v_n$ as the solution to the following optimization problem:
\begin{equation}
\label{eqn:optimization_problem}
\begin{aligned}
\hat v_n = \max_{\boldsymbol\Omega}\quad
&\Omega_{11}-(1-v_p)\boldsymbol\Omega_{12}
\boldsymbol\Sigma_{xx}^{-1}\boldsymbol\Omega_{21}\\
\text{subject to}\quad
&\mathbf0\preceq\boldsymbol\Omega\preceq\hat{\mathbf U}_n,\\
&\boldsymbol\Omega_{22}=\boldsymbol\Sigma_{xx}.
\end{aligned}
\end{equation}
Here $\boldsymbol\Omega$ ranges over symmetric $d_\phi\times d_\phi$ matrices partitioned conformably with $(\hat\tau,\hat{\boldsymbol\tau}_x^\top)^\top$. If the feasible set is empty and $\hat{\mathbf U}_{22}\succ\mathbf0$, set
\[
\hat v_n=\hat U_{11}-(1-v_p)\hat{\mathbf U}_{12}
\hat{\mathbf U}_{22}^{-1}\hat{\mathbf U}_{21}.
\]
If $\hat{\mathbf U}_{22}$ is singular, set $\hat v_n=\hat U_{11}$.
\end{definition}

Computationally, the above optimization problem is convex with semidefinite constraints and can therefore be solved efficiently using standard convex optimization methods. Moreover, Theorem~\ref{thm:opt_solution} below provides an explicit solution, making the proposed estimator even more straightforward to compute.

\begin{theorem}[Feasibility and solution of the variance optimization problem]
\label{thm:opt_solution}
Assume that $\boldsymbol\Sigma_{xx}\succ0$. The feasible set in \eqref{eqn:optimization_problem} is nonempty if and only if $\hat{\mathbf U}_{22}\succeq\boldsymbol\Sigma_{xx}$. On this event, define
\[
\mathbf M(t)
=\boldsymbol\Sigma_{xx}
+t(\hat{\mathbf U}_{22}-\boldsymbol\Sigma_{xx})
\]
and
\[
G(t)
=\hat{\mathbf U}_{12}\mathbf M(t)^{-1}
\left\{\boldsymbol\Sigma_{xx}
+t^2(\hat{\mathbf U}_{22}-\boldsymbol\Sigma_{xx})\right\}
\mathbf M(t)^{-1}\hat{\mathbf U}_{21}.
\]
Then the optimal value is
\[
\hat v_n
=\hat U_{11}-\hat{\mathbf U}_{12}
\mathbf M(t^\star)^{-1}\left\{
(1-v_p)\boldsymbol\Sigma_{xx}+(t^\star)^2
(\hat{\mathbf U}_{22}-\boldsymbol\Sigma_{xx})
\right\}\mathbf M(t^\star)^{-1}\hat{\mathbf U}_{21},
\]
where
\[
t^\star
=\min\left\{t\in[1-v_p,1]:G(t)\le\hat U_{11}\right\}.
\]
In particular, when $G(1-v_p)\le\hat U_{11}$, the optimal value simplifies to
\[
\hat v_n
=\hat U_{11}
-(1-v_p)\hat{\mathbf U}_{12}
\{v_p\boldsymbol\Sigma_{xx}+(1-v_p)\hat{\mathbf U}_{22}\}^{-1}
\hat{\mathbf U}_{21}.
\]
\end{theorem}

The next result establishes that the proposed estimator is asymptotically conservative for the variance of the H\'{a}jek estimator under rerandomization.
\begin{theorem}[Asymptotic conservativeness]
\label{thm:asymp_conservativeness}

Let
\[
v_n = \Var_{\operatorname{asymp}} (\hat{\tau}\mid M_n\le a) = \sigma_{n,y}^{2} \left\{ V_{11} -
(1-v_p) \mathbf V_{12} \mathbf V_{22}^{-1} \mathbf V_{21} \right\}
\]
denote the asymptotic conditional variance under rerandomization. Under the conditions of Theorem~\ref{thm:conservative_matrix}, for every $\epsilon>0$,
\[
\mathbb P \left( \hat v_n \geq v_n-\epsilon\sigma_{n,y}^{2}
\,\middle|\, M_n\le a \right)
\rightarrow 1 .
\]
\end{theorem}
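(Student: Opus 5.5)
The plan is to work on the event $\mathcal E_n=\{\mathbf D_n\hat{\mathbf U}_n\mathbf D_n\succeq \mathbf V-\varepsilon\mathbf I_{d_\phi}\}$. By Corollary~\ref{cor:conditional_psd_conservative}, $\mathbb P(\mathcal E_n\mid M_n\le a)\to1$ for every fixed $\varepsilon>0$. It therefore suffices to show the following deterministic statement: for every $\epsilon>0$ there is $\varepsilon>0$ such that, for all large $n$, $\mathcal E_n$ implies $\hat v_n\ge v_n-\epsilon\sigma_{n,y}^2$. Throughout I rescale by $\mathbf D_n$. Write $\tilde{\mathbf U}=\mathbf D_n\hat{\mathbf U}_n\mathbf D_n$, $\tilde{\boldsymbol\Sigma}=\mathbf D_{n,2}\boldsymbol\Sigma_{xx}\mathbf D_{n,2}$ (the lower-right block of $\mathbf D_n\mathbf U_n\mathbf D_n$, which converges to $\mathbf V_{22}\succ0$ by Assumption~\ref{ass:variance_stability}), and $\tilde v=\hat v_n/\sigma_{n,y}^2$. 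The scaling is a congruence, so the optimization problem \eqref{eqn:optimization_problem} is equivalent to the same problem with $(\tilde{\mathbf U},\tilde{\boldsymbol\Sigma})$ in place of $(\hat{\mathbf U}_n,\boldsymbol\Sigma_{xx})$, with objective divided by $\sigma_{n,y}^2$. The Schur-type term $\boldsymbol\Omega_{12}\boldsymbol\Sigma_{xx}^{-1}\boldsymbol\Omega_{21}$ transforms correctly under block-diagonal scaling.

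The key step is to exhibit a good feasible point. The natural candidate is the true scaled covariance $\mathbf D_n\mathbf U_n\mathbf D_n$, which has the correct lower block exactly and tends to $\mathbf V$. However, it is only dominated by $\tilde{\mathbf U}+(\varepsilon+o(1))\mathbf I$, not by $\tilde{\mathbf U}$ itself, so it must be shrunk. I will take $\boldsymbol\Omega^{(\eta)}$ with lower block $\tilde{\boldsymbol\Sigma}$, off-diagonal block $(1-\eta)\mathbf D_{n,1}\mathbf U_{12}\mathbf D_{n,2}$, and upper-left entry $(1-\eta)^2 D^2U_{11}+\eta\,c$ for a suitable small choice of $\eta$ and $c$.

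Feasibility is argued as follows. First, $\boldsymbol\Omega^{(\eta)}\succeq0$, since it is close to a convex combination of $\mathbf V$ and $\operatorname{diag}(0,\mathbf V_{22})$. Second, $\tilde{\mathbf U}-\boldsymbol\Omega^{(\eta)}\succeq0$. On $\mathcal E_n$ we have $\tilde{\mathbf U}\succeq\mathbf V-\varepsilon\mathbf I$, while $\tilde{\mathbf U}_{22}-\tilde{\boldsymbol\Sigma}\succeq-\varepsilon'\mathbf I$ with only small slack. This is the delicate point, and I flag it as the main obstacle. The lower-block constraint is an equality, so slack can only be absorbed in the $(1,1)$ and off-diagonal entries. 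If $\tilde{\mathbf U}_{22}\not\succeq\tilde{\boldsymbol\Sigma}$, the problem is infeasible, and by Theorem~\ref{thm:opt_solution} the fallback value $\hat U_{11}-(1-v_p)\hat{\mathbf U}_{12}\hat{\mathbf U}_{22}^{-1}\hat{\mathbf U}_{21}$ is used.

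So I split into two cases. In the infeasible case, monotonicity of $\mathbf S\mapsto S_{11}-(1-v_p)\mathbf S_{12}\mathbf S_{22}^{-1}\mathbf S_{21}$ in the PSD order (noted before Definition~\ref{def:rerand-variance-estimator}), together with continuity of this map near $\mathbf V$ (where $\mathbf V_{22}\succ0$), gives directly that the fallback value is at least $v_n/\sigma_{n,y}^2-\epsilon$ once $\varepsilon$ is small. In the feasible case $\tilde{\mathbf U}_{22}\succeq\tilde{\boldsymbol\Sigma}$, I instead use the explicit formula of Theorem~\ref{thm:opt_solution}, or equivalently the feasible point above. Writing $\tilde{\mathbf U}=\mathbf V-\varepsilon\mathbf I+\mathbf P$ with $\mathbf P\succeq0$, I choose $\boldsymbol\Omega$ with lower block $\tilde{\boldsymbol\Sigma}$ and first row taken from $\mathbf V-\varepsilon'\mathbf I$ after shrinking the cross term by a factor $1-\eta$, where $\eta=O(\sqrt{\varepsilon})$. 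A Schur-complement check then shows $\tilde{\mathbf U}-\boldsymbol\Omega\succeq0$ once $\varepsilon$ is small relative to $\eta$, using that $\lambda_{\min}(\mathbf V)>0$ (Assumption~\ref{ass:variance_stability}(iii)) to absorb the $\varepsilon$-perturbations.

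Finally, the objective at this point equals $V_{11}-(1-v_p)\mathbf V_{12}\mathbf V_{22}^{-1}\mathbf V_{21}+O(\eta)+o(1)$. Hence $\tilde v\ge v_n/\sigma_{n,y}^2-\epsilon$ after choosing $\eta$, and then $\varepsilon$, small. Combining both cases with $\mathbb P(\mathcal E_n\mid M_n\le a)\to1$ yields the theorem.
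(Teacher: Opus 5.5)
Your reduction to the event $\mathcal E_n$ and your treatment of the infeasible case are sound and agree with the paper. On $\mathcal E_n$ with $\varepsilon<\lambda_{\min}(\mathbf V)$, the fallback is the plug-in map, which is monotone in the positive semidefinite order and continuous at $\mathbf V$. The gap is in the feasible case. Your candidate $\boldsymbol\Omega$ has a cross term fixed in advance: a $(1-\eta)$-shrinkage of the true or limiting cross-covariance. The constraint $\tilde{\mathbf U}-\boldsymbol\Omega\succeq\mathbf 0$, however, has lower-right block $\tilde{\mathbf U}_{22}-\tilde{\boldsymbol\Sigma}$. On the feasible event you only know this block is positive semidefinite, and nothing in $\mathcal E_n$ bounds its smallest eigenvalue away from zero. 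Along a null direction $\mathbf w$ of that block, positive semidefiniteness forces $\mathbf w^\top(\tilde{\mathbf U}_{21}-\boldsymbol\Omega_{21})=0$.

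Concretely, take $p=1$ and $\tilde U_{22}=\tilde\Sigma$, which is compatible with both feasibility and $\mathcal E_n$. Then any feasible $\boldsymbol\Omega$ must have $\Omega_{12}=\tilde U_{12}$ exactly, while yours has $\Omega_{12}=(1-\eta)(\mathbf V_n)_{12}$. The shrinkage creates no slack where it is needed. It adds the mismatch $\eta(\mathbf V_n)_{21}$ in exactly the directions that have none. So the claimed Schur-complement check ``once $\varepsilon$ is small relative to $\eta$'' fails: the required smallness is governed by the random, uncontrolled quantity $\lambda_{\min}(\tilde{\mathbf U}_{22}-\tilde{\boldsymbol\Sigma})$, not by $\varepsilon$. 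Appealing to the closed form of Theorem~\ref{thm:opt_solution} does not bypass this. You would still need to lower-bound that expression, which again amounts to producing a good feasible point.

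The missing idea is a data-adaptive correction of the cross term, which is how the paper proceeds (Lemma~\ref{lem:near-dominance-optimization}).

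\begin{itemize}
\item Compare with $\mathbf V_n=\mathbf D_n\mathbf U_n\mathbf D_n$ rather than $\mathbf V$, because its lower block is exactly $\tilde{\boldsymbol\Sigma}$.
\item Set $r=[-\lambda_{\min}(\tilde{\mathbf U}-\mathbf V_n)]_+$, which is $o_p(1)$ conditionally on $M_n\le a$ by Corollary~\ref{cor:conditional_psd_conservative}, and $\mathbf H=\tilde{\mathbf U}+r\mathbf I-\mathbf V_n\succeq\mathbf 0$.
\item Feasibility gives $\mathbf H_{22}\succeq r\mathbf I_p$. With $\mathbf L=\bigl(\mathbf H_{22}^{-1}\mathbf H_{21},\ \mathbf I_p\bigr)^\top$, the matrix $\mathbf K=r\mathbf L\mathbf L^\top$ satisfies $\mathbf 0\preceq\mathbf K\preceq\mathbf H$ and $\mathbf K_{22}=r\mathbf I_p$. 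This is the sufficiency construction in the proof of Theorem~\ref{thm:opt_solution}.
\item Then $\boldsymbol\Omega=\mathbf V_n-r\mathbf I+\mathbf K$ is feasible for small $r$. Its cross term is $(\mathbf V_n)_{21}+\mathbf k$ with $\mathbf k=r\mathbf H_{22}^{-1}\mathbf H_{21}$. Hence $\tilde{\mathbf U}_{21}-\boldsymbol\Omega_{21}=(\mathbf I_p-r\mathbf H_{22}^{-1})\mathbf H_{21}$, whose component along the null directions of $\tilde{\mathbf U}_{22}-\tilde{\boldsymbol\Sigma}$ vanishes automatically.
\item Because $K_{11}=\|\mathbf k\|_2^2/r$, completing the square in $\|\mathbf k\|_2$ shows the objective drops by at most $Cr$ relative to its value at $\mathbf V_n$. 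That value converges to $v_n/\sigma_{n,y}^2$.
\end{itemize}

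This closes the feasible case.
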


There is one subtlety with the above estimator. The optimization problem in~\eqref{eqn:optimization_problem} may be infeasible with non-vanishing probability, in which case we use the plug-in fallback specified in Definition~\ref{def:rerand-variance-estimator}. When $\hat{\mathbf U}_{22}\succ\mathbf0$, this fallback can still reflect the precision gain from rerandomization, but it does not exploit the known covariate covariance $\boldsymbol\Sigma_{xx}$ through the optimization constraint. Infeasibility can occur because Theorem~\ref{thm:conservative_matrix} and Corollary~\ref{cor:conditional_psd_conservative} establish positive semidefinite dominance only up to an arbitrarily small slack after standardization, rather than the exact inequality $\hat{\mathbf U}_n\succeq\mathbf U_n$.

To address this issue in practice, we introduce a small ridge correction to $\hat{\mathbf U}_n$ that improves feasibility while preserving asymptotic validity. Formally, define
\begin{equation}
\label{eq:scale_adaptive_ridge}
\hat{\mathbf S}_n
=\operatorname{diag}\!\left(\hat U_{11},
\Sigma_{xx,11},\ldots,\Sigma_{xx,pp}\right),
\qquad
\hat{\mathbf U}_{n,\delta}
=\hat{\mathbf U}_n+\delta\hat{\mathbf S}_n,
\end{equation}
where $\delta\ge0$ is a correction level, which may be fixed or data-adaptive. For this corrected upper bound, define $\hat v_{n,\delta}$ by
\begin{equation}
\label{eq:ridge-optimization-problem}
\begin{aligned}
\hat v_{n,\delta}=\max_{\boldsymbol\Omega}\quad
&\Omega_{11}-(1-v_p)\boldsymbol\Omega_{12}
\boldsymbol\Sigma_{xx}^{-1}\boldsymbol\Omega_{21}\\
\text{subject to}\quad
&\mathbf0\preceq\boldsymbol\Omega\preceq\hat{\mathbf U}_{n,\delta},\\
&\boldsymbol\Omega_{22}=\boldsymbol\Sigma_{xx},
\end{aligned}
\end{equation}
and, if the corrected program is infeasible, set
\[
\hat v_{n,\delta}=
\begin{cases}
\hat U_{11,\delta}-(1-v_p)\hat{\mathbf U}_{12,\delta}
\hat{\mathbf U}_{22,\delta}^{-1}\hat{\mathbf U}_{21,\delta},
&\hat{\mathbf U}_{22,\delta}\succ\mathbf0,\\
\hat U_{11,\delta},&\text{otherwise},
\end{cases}
\]
where the subscripts denote the corresponding blocks of $\hat{\mathbf U}_{n,\delta}$. Thus, the corrected estimator uses the same fallback rule as Definition~\ref{def:rerand-variance-estimator}, and $\hat v_{n,0}=\hat v_n$.

Appendix~\ref{app:feasibility-refinement} studies the choice of the correction level $\delta$. It first gives the guarantee for every fixed $\delta>0$, and then proposes a minimal data-adaptive correction that makes the optimization feasible for every realization and is $o_p(1)$. In practice, a small fixed value such as $\delta=0.01$ remains a simple alternative, corresponding to a one-percent inflation along each marginal variance scale. We use this fixed correction in our simulations.

Finally, we make two remarks. First, the optimization framework developed in this section is not specific to our particular variance estimator~$\hat{\mathbf{U}}_n$. It applies whenever one has access to any joint covariance estimator satisfying the positive semidefinite conservativeness condition in~\eqref{eq:psd_conservative}. In this sense, the optimization step is a modular component that can be combined with different upstream variance estimators. Moreover, when multiple joint covariance estimators satisfying~\eqref{eq:psd_conservative} are available, one may apply the optimization procedure to each estimator and take the minimum of the resulting optimal values. For any fixed number of such estimators, this minimum remains asymptotically conservative while potentially yielding a tighter variance estimate.

Second, a related optimization-based approach to variance estimation was proposed by \citet{harshaw2026optimized}, who also formulate the construction of conservative variance estimators as a convex program. The two formulations differ in their orientation. \citet{harshaw2026optimized} search over the class of conservative estimators and minimize a measure of conservativeness, yielding the \emph{least conservative} among all valid bounds. Our formulation instead searches over covariance matrices compatible with the known covariate block and the estimated upper bound, and maximizes the implied conditional variance over that set. This worst-case perspective is tailored to the rerandomization setting.

\subsection{Confidence intervals}
\label{sec:CI}

In this subsection, we construct confidence intervals using the conservative variance estimators developed in the previous subsection. The construction requires some care because the limiting distribution of the H\'ajek estimator under rerandomization is generally non-Gaussian. Recall that $v_n = \Var_{\operatorname{asymp}}(\hat\tau\mid M_n\le a) = \sigma_{n,y}^2\{1-(1-v_p)R^2\}$ denotes the asymptotic variance of the H\'ajek estimator under rerandomization. By Theorem~\ref{thm:rerand_dist}, the studentized H\'ajek estimator satisfies
\begin{equation}
\label{eqn:limit_studentized}
\frac{\hat\tau-\tau}{\sqrt{v_n}} \,\Big|\, M_n\le a \xrightarrow{d}
\sqrt{\frac{1-R^2}{1-(1-v_p)R^2}}\,\varepsilon_0 + \sqrt{\frac{R^2}{1-(1-v_p)R^2}}\,L_{p,a}.
\end{equation}
Consequently, the usual Wald confidence interval based on a Gaussian critical value is generally not asymptotically valid.

Ideally, one would construct confidence intervals using the quantiles of the limiting distribution in~\eqref{eqn:limit_studentized}. However, these quantiles depend on the unknown quantity $R^2$, which is generally difficult to estimate reliably, as discussed at the beginning of this section. To avoid estimating $R^2$, we instead use the worst-case critical value over all possible values of $R^2$.

Specifically, for $R^2\in[0,1]$, define
\[
Q(R^2;p,a) = \sqrt{1-R^2}\,\varepsilon_0 + \sqrt{R^2}\,L_{p,a},
\]
and let $\omega_{1-\alpha/2}(R^2;p,a)$ denote its $(1-\alpha/2)$-quantile. The corresponding critical value for the limiting distribution in~\eqref{eqn:limit_studentized} is
\begin{equation}
\label{eq:rerand_critical_value}
q_{1-\alpha/2}(R^2;p,a) = \frac{\omega_{1-\alpha/2}(R^2;p,a)} {\sqrt{1-(1-v_p)R^2}}.
\end{equation}
We then define the worst-case critical value
\begin{equation}
\label{eq:worst_case_q}
\bar q_{1-\alpha/2}(p,a) = \sup_{R^2\in[0,1]} q_{1-\alpha/2}(R^2;p,a).
\end{equation}
This leads to the confidence interval
\begin{equation}
\label{eq:CI_v}
\mathcal I_{1-\alpha}
= \left[
\hat\tau-\bar q_{1-\alpha/2}(p,a)(\hat v_n)^{1/2},
\ \hat\tau+\bar q_{1-\alpha/2}(p,a)(\hat v_n)^{1/2}
\right].
\end{equation}

Theorem~\ref{thm:ci_coverage} establishes the asymptotic validity of the above confidence interval.

\begin{theorem}[Asymptotic validity of confidence intervals]
\label{thm:ci_coverage}
Under the conditions of Theorem~\ref{thm:conservative_matrix}, for every $\alpha\in(0,1)$, $\mathcal I_{1-\alpha}$ satisfies
\[
\liminf_{n\to\infty}
\mathbb P\left\{\tau\in\mathcal I_{1-\alpha}\,\middle|\,M_n\le a\right\}
\ge 1-\alpha.
\]
\end{theorem}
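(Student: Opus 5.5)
The plan is to reduce coverage to two facts. The first is the conditional limit law of Theorem~\ref{thm:rerand_dist}. The second is the conditional conservativeness of $\hat v_n$ from Theorem~\ref{thm:asymp_conservativeness}. These are glued together by the definition of the worst-case critical value. Write $R^2$ for the true asymptotic squared multiple correlation, set $\kappa=1-(1-v_p)R^2\in[v_p,1]$, so that $v_n=\sigma_{n,y}^2\kappa$, and let $\bar q=\bar q_{1-\alpha/2}(p,a)$. Non-coverage is the event $|\hat\tau-\tau|>\bar q\,\hat v_n^{1/2}$. Note that $\kappa\ge v_p>0$ because $a>0$, so $v_n$ is of exact order $\sigma_{n,y}^2$.

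First, fix $\epsilon>0$ and intersect with the event $E_n=\{\hat v_n\ge v_n-\epsilon\sigma_{n,y}^2\}$, whose conditional probability tends to one by Theorem~\ref{thm:asymp_conservativeness}. On $E_n$ the non-coverage event is contained in
\[
\sigma_{n,y}^{-1}|\hat\tau-\tau|>\bar q\,(\kappa-\epsilon)^{1/2}.
\]
Conditionally on $M_n\le a$, Theorem~\ref{thm:rerand_dist} gives $\sigma_{n,y}^{-1}(\hat\tau-\tau)\to Q(R^2;p,a)$ in distribution. The limit is continuous (it has a Gaussian component when $R^2<1$, and $L_{p,a}$ has a density when $R^2=1$) and symmetric, since $\varepsilon_0$ and $L_{p,a}$ are symmetric and independent. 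By the portmanteau theorem,
\[
\limsup_n \mathbb P\{\tau\notin\mathcal I_{1-\alpha}\mid M_n\le a\}\le \mathbb P\{|Q(R^2;p,a)|\ge \bar q(\kappa-\epsilon)^{1/2}\}.
\]

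Second, by definition $\bar q\ge q_{1-\alpha/2}(R^2;p,a)=\omega_{1-\alpha/2}(R^2;p,a)/\sqrt\kappa$. Therefore $\bar q\sqrt\kappa\ge\omega_{1-\alpha/2}$, and by symmetry $\mathbb P\{|Q|\ge \bar q\sqrt\kappa\}\le\alpha$. Letting $\epsilon\downarrow0$ and using continuity of the distribution function of $|Q|$ then yields the claimed bound $1-\alpha$.

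The main subtlety is ensuring $\bar q<\infty$, so that the argument is not vacuous. This requires that $\omega_{1-\alpha/2}(R^2;p,a)$ be bounded uniformly in $R^2\in[0,1]$. That holds because $|Q|\le|\varepsilon_0|+|L_{p,a}|\le|\varepsilon_0|+\sqrt a$, and because the denominator $\sqrt\kappa\ge\sqrt{v_p}>0$. A smaller point is the case $\hat v_n<0$ or the fallback definitions. These are harmless, since on $E_n$ with $\epsilon<\kappa$ we have $\hat v_n>0$. I also need to handle conditioning carefully: the conditional probabilities of $E_n^c$ vanish, and $\mathbb P(M_n\le a)\to\mathbb P(\chi^2_p\le a)>0$, which has already been used inside the earlier theorems.
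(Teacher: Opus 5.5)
Your proposal is correct and follows essentially the same route as the paper's proof. Both arguments intersect with the conservativeness event from Theorem~\ref{thm:asymp_conservativeness}, apply the portmanteau theorem to the continuous symmetric limit $Q(R^2;p,a)$ from Theorem~\ref{thm:rerand_dist}, let the slack go to zero, and conclude using $\bar q\sqrt{1-(1-v_p)R^2}\ge\omega_{1-\alpha/2}(R^2;p,a)$. The differences are cosmetic: you bound non-coverage through a $\limsup$ rather than coverage through a $\liminf$, and you get $\bar q<\infty$ from $|L_{p,a}|\le\sqrt a$ instead of the paper's Chebyshev bound $\omega_{1-\alpha/2}\le\sqrt{2/\alpha}$.
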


The same asymptotic coverage guarantee holds for the ridge-corrected bounds by Theorem~\ref{thm:adaptive-ridge-conservativeness} in Appendix~\ref{app:feasibility-refinement}.

Figure~\ref{fig:quantiles} illustrates the behavior of the critical value $q_{1-\alpha/2}(R^2;p,a)$. Across the parameter settings considered, $q_{1-\alpha/2}(R^2;p,a)$ often decreases as $R^2$ increases. Although such monotonicity does not hold universally for all $(R^2,p,a)$, the figure suggests that the worst-case critical value $\bar q_{1-\alpha/2}(p,a)$ is typically attained when $R^2$ is close to zero. When $R^2$ is not large, the critical value $q_{1-\alpha/2}(R^2;p,a)$ remains close to the Gaussian benchmark. In practice, $R^2$ is typically not very large, as a result, the standard normal critical value $z_{1-\alpha/2}$ can be used as a convenient numerical approximation when the small discrepancy displayed in the figure is acceptable. For the no-ridge estimator, $\hat v_n\leq\hat U_{11}$ by construction. Thus, when the same Gaussian critical value is used for both designs, the rerandomization interval is no longer than the corresponding Bernoulli Wald interval. With a positive ridge, the analogous comparison is with the ridge-adjusted marginal bound $\hat U_{11,\delta}$; a fixed ridge can therefore introduce a small additional inflation.

\begin{figure}
\centering \includegraphics[width=0.6\textwidth, trim={15cm 0cm 0cm 11cm},
clip]{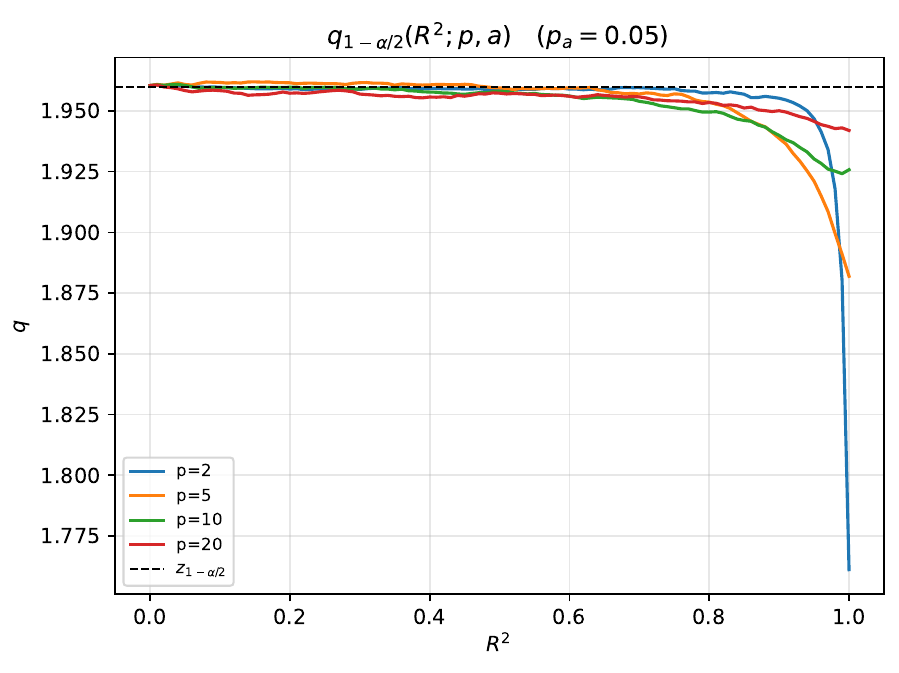} \caption{The figure plots $q_{1-\alpha/2}(R^2;p,a)$ in~\eqref{eq:rerand_critical_value} as a function of $R^2$, for $p\in\{2,5,10,20\}$ and $\alpha=0.05$. For each value of $p$, the threshold $a$ is chosen so that the rerandomization acceptance probability $p_a=\Pr(\chi^2_p\le a)$ equals $0.05$. The dashed horizontal line denotes the usual Gaussian critical value $z_{1-\alpha/2}$. Across the parameter settings shown, the rerandomization critical value remains very close to the Gaussian benchmark for most values of $R^2$, and tends to decrease only when $R^2$ is large, especially for smaller values of $p$. }
\label{fig:quantiles}
\end{figure}

Finally, we compare our inference strategy with that of \citet{savje2021average}. Their framework allows limited but otherwise arbitrary and unknown interference. Under these weak restrictions, inference is based on conservative variance bounds together with distribution-free tail inequalities such as Chebyshev’s inequality, since a Gaussian approximation cannot generally be justified. By contrast, the joint Gaussian limit in Theorem~\ref{thm:joint-normality} yields the rerandomization-specific limiting law in \eqref{eqn:limit_studentized}. This additional structure allows us to use design-specific critical values while preserving asymptotic coverage.

For completeness, we discuss several alternative conservative variance estimators and confidence interval constructions in Appendix~\ref{app:alternative_inference}. These alternatives provide useful intuition and, in some cases, simpler computational procedures. Among these alternatives, the optimization-based procedure developed in Sections~\ref{sec:opt} and~\ref{sec:CI} strikes a balance between statistical efficiency and computational tractability, which motivates our recommendation of this approach in practice.

\section{Simulation}
\label{appendx:simulation_main}

We conduct simulation studies to evaluate the finite-sample performance of the proposed framework. The simulations focus on (i) the variance of the H\'{a}jek estimator under rerandomization and (ii) the performance of the proposed variance estimator. For (i), we study, across different outcome models and different sets of covariates used in rerandomization, how much variance reduction rerandomization can achieve. For (ii), we assess whether the proposed variance estimator is conservative, how much it improves relative to its counterpart under the Bernoulli design, and how much it improves upon variance estimators based on existing baseline estimators from the literature. We present the main design choices and key findings here, with full implementation details deferred to Appendix~\ref{appendx:simulation}.

\subsection{Simulation setup}

We generate a Barab\'{a}si--Albert preferential-attachment network \citep{albert1999emergence} with $n = 2{,}000$ units and $m = 5$ edges attached per new node. The resulting graph has a power-law degree distribution: a small number of high-degree hub units coexist with a long tail of low-degree units, yielding a heterogeneous dependency structure characteristic of real social, information, and biological networks.

Treatment is assigned via $Z_i \overset{\text{i.i.d.}}{\sim} \operatorname{Bern}(0.5)$, subject to a rerandomization acceptance criterion. The target estimand, the EATE, is defined with respect to the Bernoulli$(0.5)$ design:
\[
\tau = \frac{1}{n}\sum_{i=1}^n \E_{Z_j \stackrel{\operatorname{i.i.d.}}{\sim}
\operatorname{Bern}(0.5)} \big[ Y_i(1;\mathbf Z_{-i}) - Y_i(0;\mathbf Z_{-i}) \big].
\]

Each unit has a covariate vector consisting of six pre-treatment covariates, along with treatment-dependent covariates constructed from the treatment vector $\mathbf Z$ and the network structure. These include the neighbor treatment proportion $H_{i,\text{prop}} = \frac{1}{|\mathcal{N}_i|}\sum_{j \in \mathcal{N}_i} Z_j$, and the treated-neighbor covariate sum normalized by degree $\mathbf H_{i,\text{wX}} = \frac{1}{|\mathcal{N}_i|} \sum_{j \in \mathcal{N}_i} Z_j \mathbf X_j$, where $\mathcal{N}_i$ denotes the set of $i$'s direct neighbors in the network, excluding $i$ itself. We assume that these treatment-dependent covariates are constructed using the true network.

To separate the sources of variation, we write our outcome as
\[
Y_i = \beta_0 + \tau Z_i + \beta_{\text{base}} \cdot f(\mathbf X_i) + X_{i,\text{hidden}}
\beta_{\text{hidden}} + \beta_{\text{interf}} \cdot g_i(\mathbf Z, \mathbf X) + \varepsilon_i,
\]
where $f(\mathbf X_i)$ denotes the baseline-covariate component and $g_i(\mathbf Z, \mathbf X)$ denotes the interference component.

To systematically control the signal composition, let $\text{std}_{\text{base}}$ denote the finite-population standard deviation of the baseline-covariate component $f(\mathbf X_i)$ across all $n$ units. Since $g_i(\mathbf Z, \mathbf X)$ also depends on the random treatment assignment, let $\text{std}_{\text{interf}}$ denote its standard deviation across units, averaged over the \emph{unconstrained} Bernoulli distribution of $\mathbf Z$ (so that the signal ratio characterizes the data-generating process and does not depend on the choice of rerandomization criterion). We parameterize each model by the \emph{signal ratio}
\[
\kappa = \frac{\beta_{\text{interf}} \cdot \text{std}_{\text{interf}}}{\beta_{\text{base}} \cdot
\text{std}_{\text{base}}}.
\]
Varying $\kappa$ shifts the dominant source of outcome variation from pre-treatment covariates ($\log_2 \kappa \ll 0$) to interference effects ($\log_2 \kappa \gg 0$), while holding the total signal strength $(\beta_{\text{base}} \cdot \text{std}_{\text{base}})^2 + (\beta_{\text{interf}} \cdot \text{std}_{\text{interf}})^2$ fixed.

We consider two outcome specifications and five additional settings are reported in Appendix~\ref{appendx:simulation}.

\textbf{Linear setting.} The outcome depends linearly on pre-treatment covariates and the neighbor treatment proportion:
\begin{equation}
\label{eq:outcome_linear}
Y_i = \beta_0 + \tau Z_i + \beta_{\text{base}} \, \mathbf X_i^\top \boldsymbol\beta_{\text{xbase}} +
\beta_{\text{hidden}}\, X_{i,\text{hidden}} + \beta_{\text{interf}} \, \beta_{\text{prop}} \,
H_{i,\text{prop}} + \varepsilon_i,
\end{equation}
where $\varepsilon_i \overset{\text{i.i.d.}}{\sim} \mathcal{N}(0,1)$.

\textbf{Nonlinear setting.} Both the pre-treatment and interference components are transformed nonlinearly:
\begin{equation}
\label{eq:outcome_nonlinear}
Y_i = \beta_0 + \tau Z_i + \beta_{\text{base}} \exp\!\big(\mathbf X_i^\top
\boldsymbol\beta_{\text{xbase}}\big) + \beta_{\text{hidden}}\, X_{i,\text{hidden}} +
\beta_{\text{interf}} \, \beta_{\text{prop}} \, \exp\!\big(H_{i,\text{prop}}\big) + \varepsilon_i,
\end{equation}
where $\varepsilon_i \overset{\text{i.i.d.}}{\sim} \mathcal{N}(0,1)$.

Here, $\tau$ represents the direct treatment effect, $\mathbf X_i$ are observed pre-treatment covariates, and $X_{i,\text{hidden}}$ is an unobserved pre-treatment covariate. The scalars $\beta_{\text{base}}$ and $\beta_{\text{interf}}$ control the relative strength of the baseline-covariate component and interference component. In both settings, we fix $\beta_0 = 0.5$ and $\tau = 2.0$.

The linear specification represents a simple interference model in which outcomes respond linearly to the fraction of treated neighbors. The nonlinear specification assesses the effectiveness of balancing first moments when outcomes depend nonlinearly on the observed covariates and exposures.

We compare three rerandomization strategies, each calibrated to a $5\%$ acceptance rate: Pre-Treatment (balancing only pre-treatment covariates $\mathbf X_i$), Treatment-Dependent (balancing only treatment-dependent covariates), and Joint (balancing both).

To assess how the choice of treatment-dependent covariates affects performance, we consider two balancing sets for the treatment-dependent component: (i) \textbf{NP}, which uses only the neighbor treatment proportion $H_{i,\text{prop}}$; (ii) \textbf{NP+NWX}, which additionally includes the neighbor-weighted covariate averages $\mathbf H_{i,\text{wX}}$. The outcome models in~\eqref{eq:outcome_linear}--\eqref{eq:outcome_nonlinear} are the same across the two settings; only the set of treatment-dependent covariates used by the Treatment-Dependent and Joint strategies differs. In the figures that follow, each outcome specification (linear/nonlinear) is reported under both balancing sets, yielding four panels per subsection.

\subsection{Variance reduction relative to Bernoulli experiment}

In Figure~\ref{fig:vr_main}, we empirically evaluate
\begin{equation}
\label{eq:vr_main}
\text{Variance Ratio} = \frac{\Var(\hat{\tau} \mid M_n \le a)}{\Var(\hat{\tau})},
\end{equation}
where $\Var(\hat{\tau})$ denotes the variance of the H\'{a}jek estimator under Bernoulli randomization, and $\Var(\hat{\tau} \mid M_n \le a)$ denotes its variance under rerandomization. Both quantities are estimated via Monte Carlo simulation. A smaller value of variance ratio indicates a greater precision gain from rerandomization.

\begin{figure}
\centering
\begin{subfigure}[b]{0.48\textwidth}
\centering
\includegraphics[width=\linewidth]{"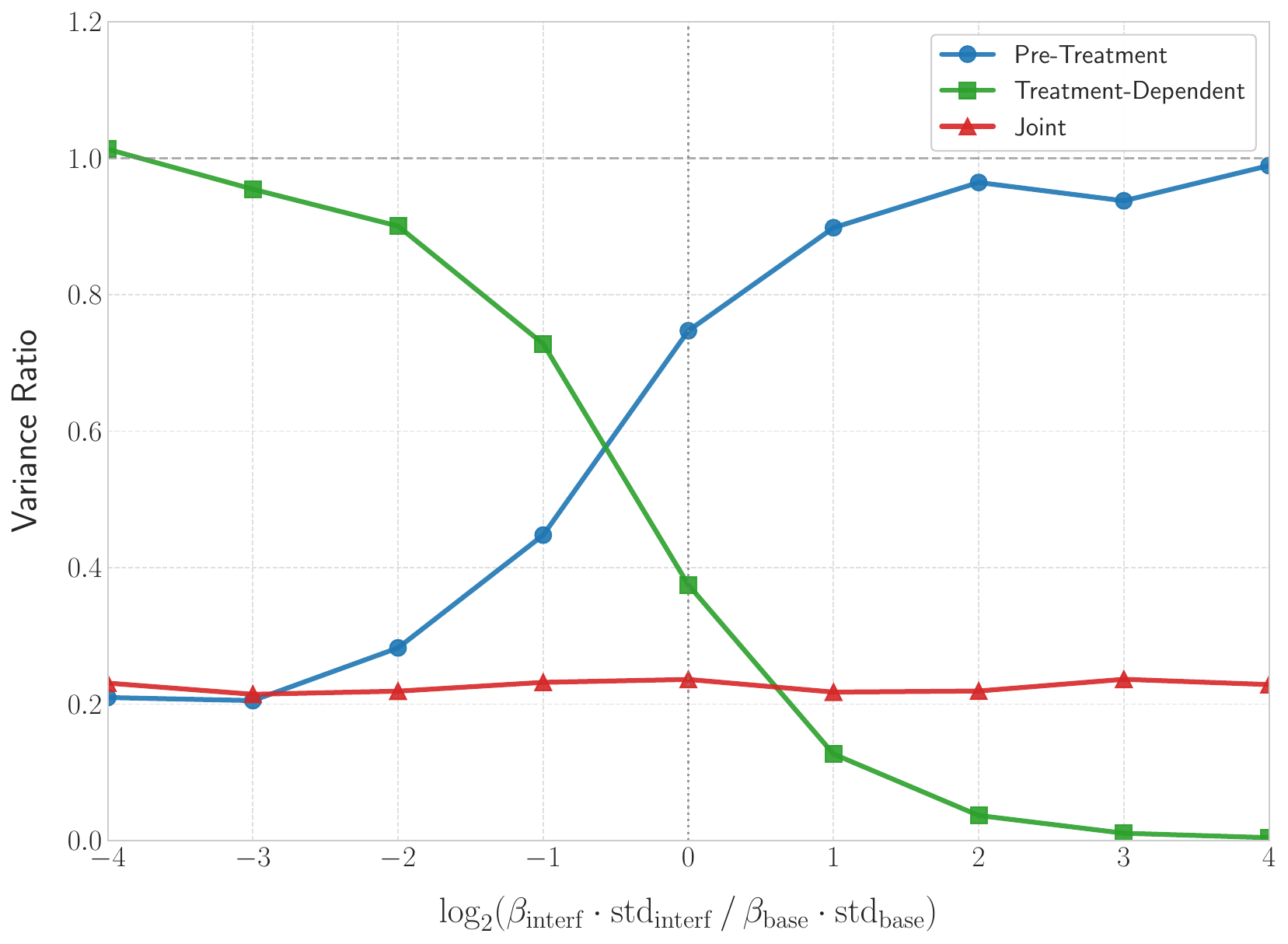"}
\caption{Linear, balancing set NP}
\label{fig:vr_type1_np_main}
\end{subfigure}
\hfill
\begin{subfigure}[b]{0.48\textwidth}
\centering
\includegraphics[width=\linewidth]{"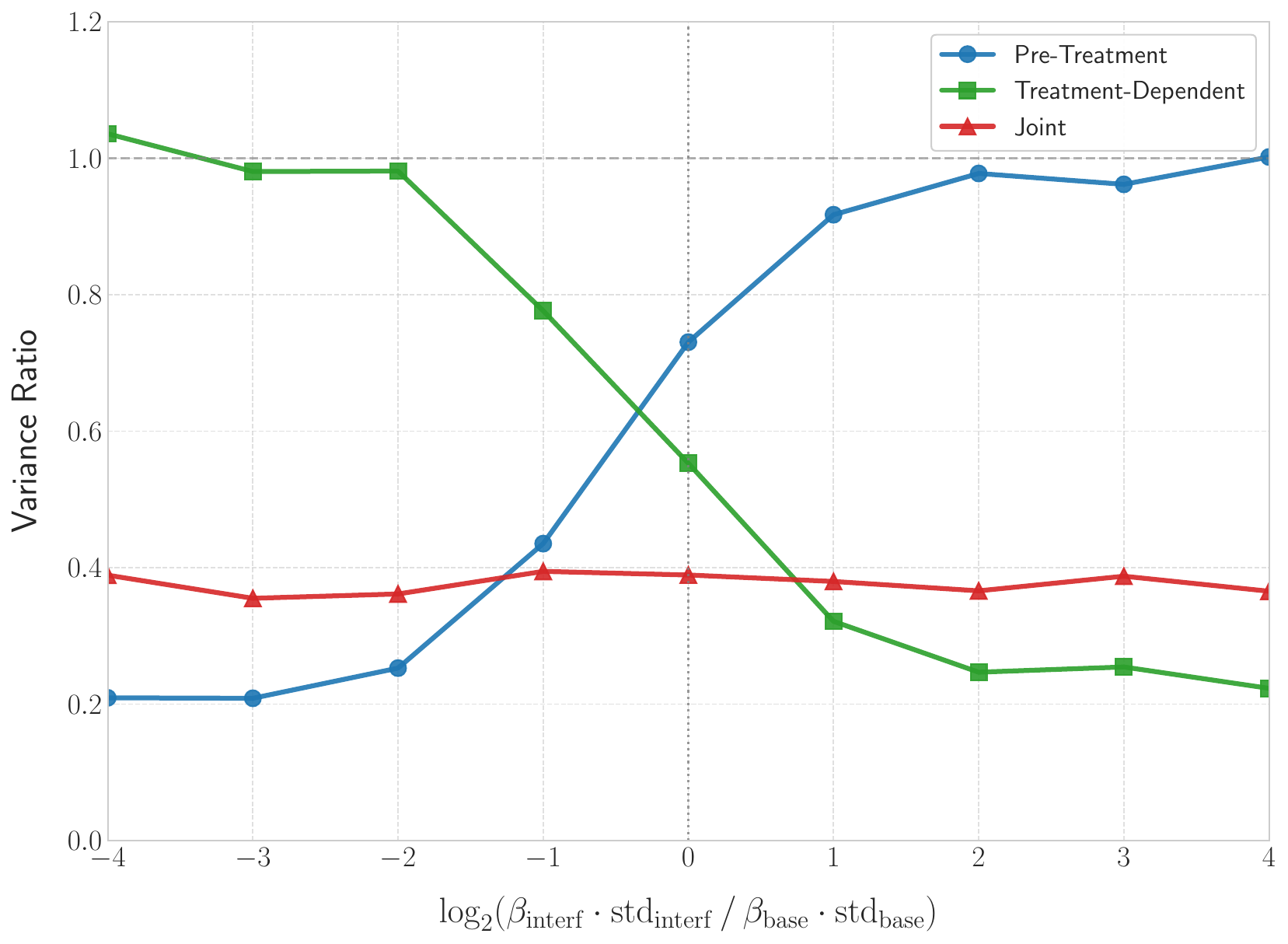"}
\caption{Linear, balancing set NP+NWX}
\label{fig:vr_type1_npnwx_main}
\end{subfigure}

\vspace{0.5em}

\begin{subfigure}[b]{0.48\textwidth}
\centering
\includegraphics[width=\linewidth]{"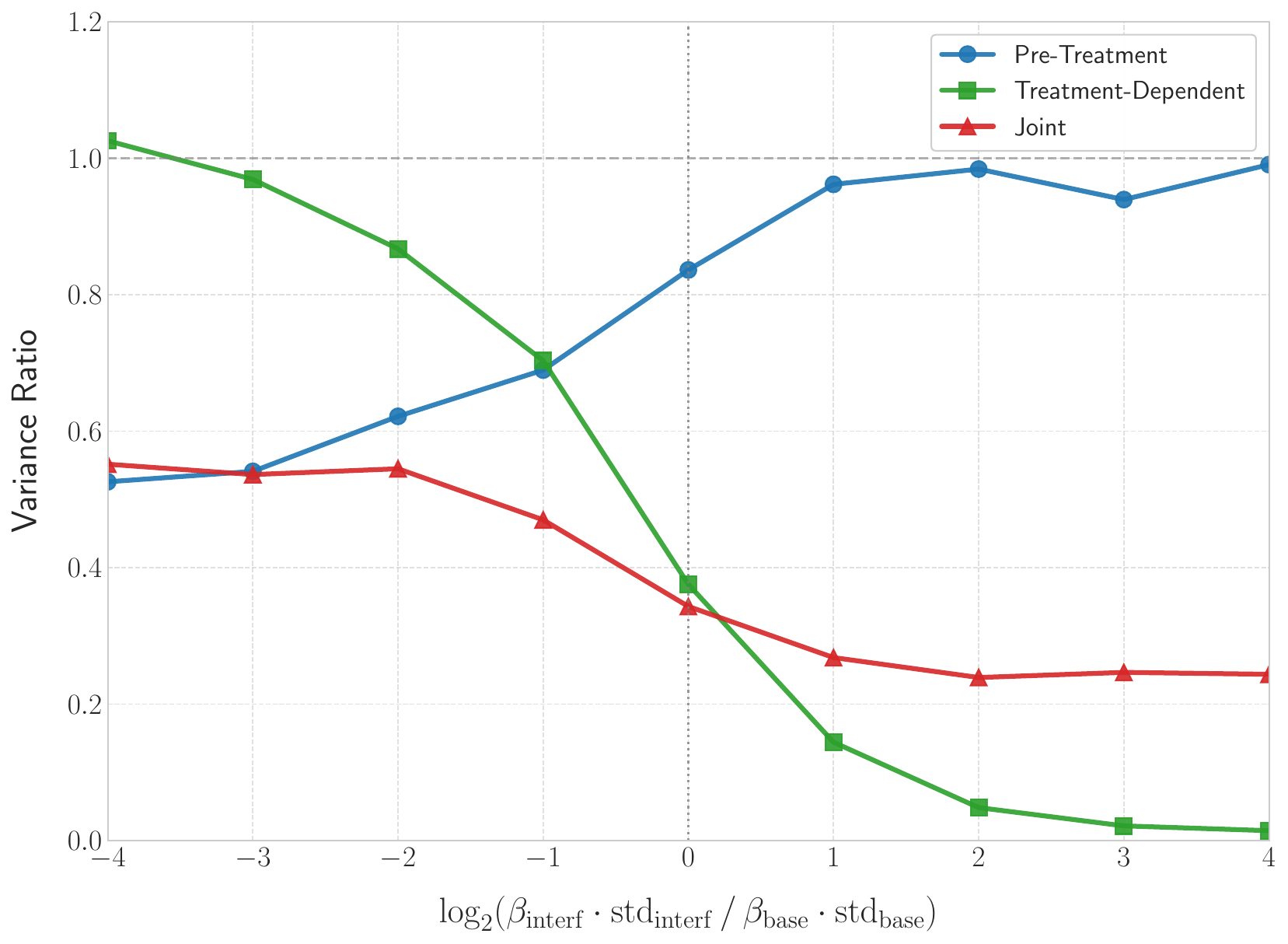"}
\caption{Nonlinear, balancing set NP}
\label{fig:vr_type5_np_main}
\end{subfigure}
\hfill
\begin{subfigure}[b]{0.48\textwidth}
\centering
\includegraphics[width=\linewidth]{"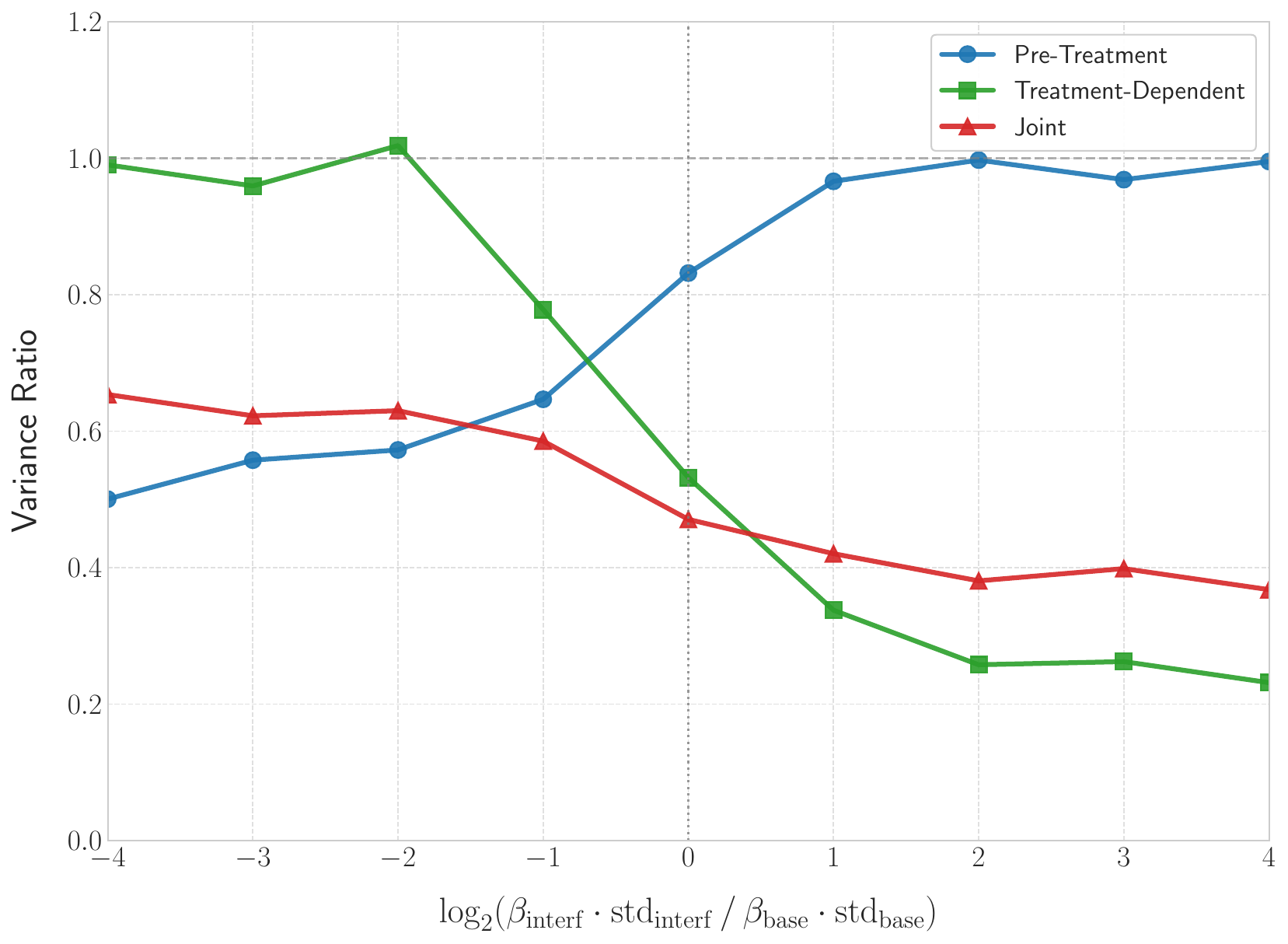"}
\caption{Nonlinear, balancing set NP+NWX}
\label{fig:vr_type5_npnwx_main}
\end{subfigure}
\caption{Variance ratio of the Hájek estimator under rerandomization relative to Bernoulli randomization, as defined in~\eqref{eq:vr_main}, for $\hat\tau$ as a function of the log signal ratio $\log_2 \kappa$. Rows correspond to outcome settings (top: linear; bottom: nonlinear); columns correspond to balancing sets (left: NP; right: NP+NWX). Blue: pre-treatment strategy; green: treatment-dependent strategy; red: joint strategy.}
\label{fig:vr_main}
\end{figure}

In the linear setting (top row), the pre-treatment strategy (blue) reduces the variance to roughly $0.21\times$ the Bernoulli variance in the pre-treatment-dominated regime ($\log_2\kappa \le -2$) but loses essentially all of its advantage in the interference-dominated regime ($\log_2\kappa \ge 2$). The treatment-dependent strategy (green) exhibits the mirror image: in the interference-dominated regime it shrinks the variance ratio to $0.04$ or below, while at $\log_2\kappa \le -2$ it provides no benefit. The crossover is sharp because, in this setting, $H_{i,\text{prop}}$ is exactly the source of the interference signal, so balancing on it removes nearly all of the assignment-induced interference variance. The nonlinear setting (bottom row) preserves the same qualitative crossover, but the magnitudes are attenuated: balancing first moments of the observed covariates need not fully balance the nonlinear transformations that enter the outcome model. The pre-treatment strategy bottoms out near $0.5$ rather than $0.2$, consistent with a weaker association between the treatment-effect estimator and the balanced first-moment contrasts. The reduction remains large, illustrating that rerandomization can provide substantial precision gains even when outcome-covariate relationships are nonlinear.

The joint strategy offers a practical default when the relative importance of pre-treatment covariates and interference is unknown. In our simulations, it delivers substantial variance reductions across signal regimes and avoids the marked loss of gains that can occur when balancing only the less informative block. A strategy tailored to the dominant source of variation can achieve greater precision, but joint balancing performs well without requiring that source to be identified in advance.

Comparing the NP and NP+NWX columns illustrates the effect of adding covariates to the balance criterion: the joint strategy under NP balancing achieves a near-uniform ratio of about $0.22$ across all signal regimes for the linear setting and about $0.24$--$0.55$ for the nonlinear setting, whereas under NP+NWX balancing the same joint strategy yields ratios of $0.36$--$0.39$ (linear) and $0.37$--$0.65$ (nonlinear). Adding $\mathbf H_{i,\text{wX}}$ to the criterion increases the dimension $p$ and inflates the factor $v_p$ in Corollary~\ref{thm:variance_reduction}, but contributes little additional explanatory power for these outcome models; neither~\eqref{eq:outcome_linear} nor~\eqref{eq:outcome_nonlinear} contains a $\mathbf H_{i,\text{wX}}$ term.

\subsection{Confidence interval length}
\label{simulation:variance_estimator}

We next examine whether the proposed inference procedure yields shorter confidence intervals than the Bernoulli benchmark. We compare average confidence interval lengths across Monte Carlo replications under the two designs. For rerandomization, let
\begin{equation}
\label{eq:ci_half_length}
\hat L_{\mathrm{ReM}}
=z_{1-\alpha/2}\,(\hat v_{n,\delta})^{1/2}
\end{equation}
denote the simulated half-length, where $\hat v_{n,\delta}$ is the fixed-ridge version of the optimization-based variance estimator with $\delta=0.01$. In these numerical comparisons we use the Gaussian approximation to the worst-case critical value discussed in Section~\ref{sec:CI}. For the Bernoulli benchmark, we use the usual Gaussian half-length
\[
\hat L_{\mathrm{Bern}} = z_{1-\alpha/2}\,\hat V_{\mathrm{Bern}}^{1/2},
\]
where $\hat V_{\mathrm{Bern}}$ is the conservative variance estimator under Bernoulli randomization from Section~\ref{sec:conservative_var}. We report the ratio of the average confidence interval length under rerandomization to that under Bernoulli randomization:
\begin{equation}
\label{eq:ci_length_ratio}
\rho = \frac{M^{-1}\sum_{m=1}^{M}\hat L_{\mathrm{ReM}}^{(m)}}
{M^{-1}\sum_{m=1}^{M}\hat L_{\mathrm{Bern}}^{(m)}}.
\end{equation}
A value $\rho \le 1$ indicates that rerandomization yields, on average, a confidence interval no longer than the Bernoulli benchmark.

\begin{figure}
\centering
\begin{subfigure}[b]{0.48\textwidth}
\centering
\includegraphics[width=\linewidth]{"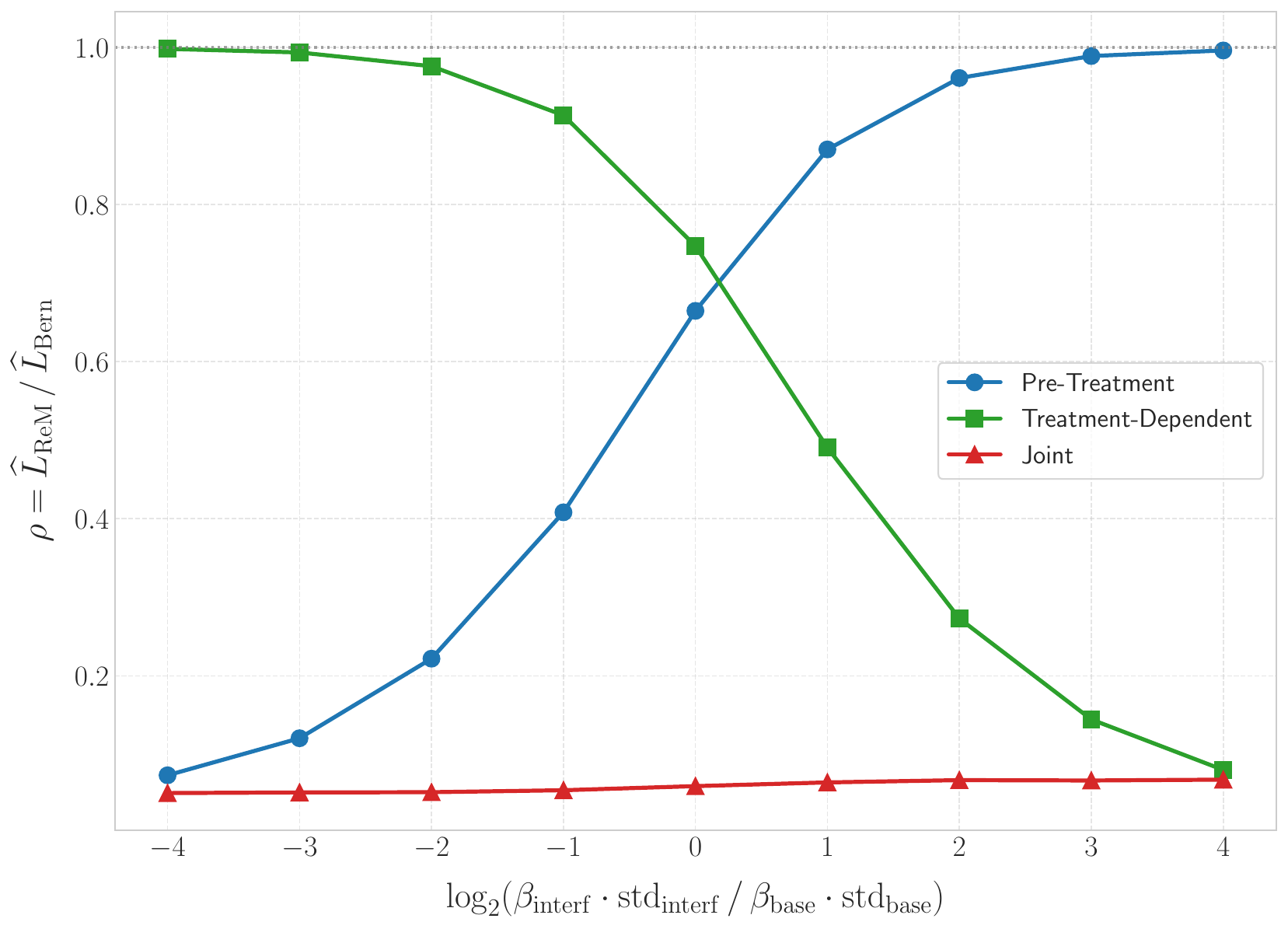"}
\caption{Linear, balancing set NP}
\label{fig:conserv_ratio_type1_np_main}
\end{subfigure}
\hfill
\begin{subfigure}[b]{0.48\textwidth}
\centering
\includegraphics[width=\linewidth]{"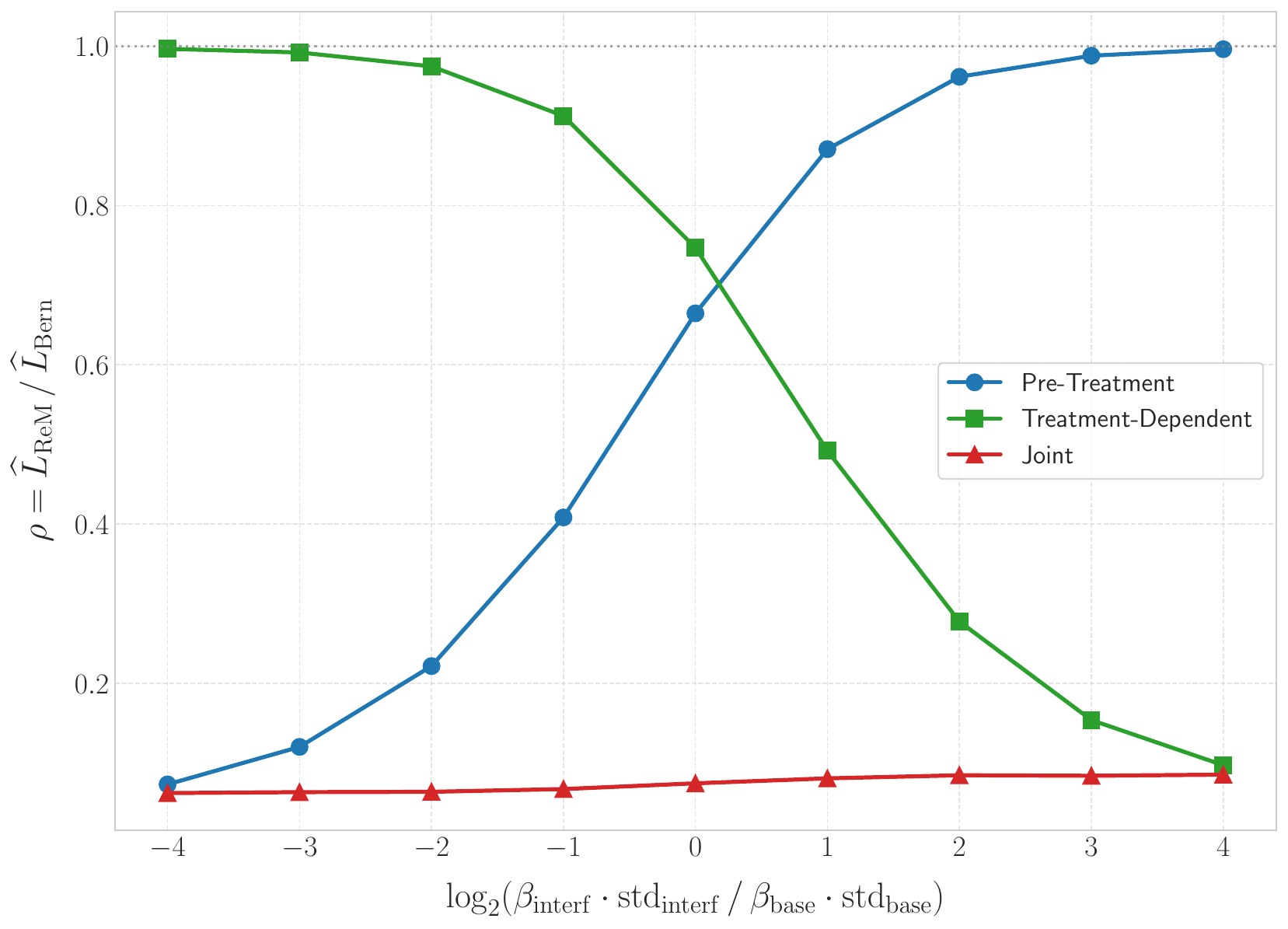"}
\caption{Linear, balancing set NP+NWX}
\label{fig:conserv_ratio_type1_npnwx_main}
\end{subfigure}

\vspace{0.5em}

\begin{subfigure}[b]{0.48\textwidth}
\centering
\includegraphics[width=\linewidth]{"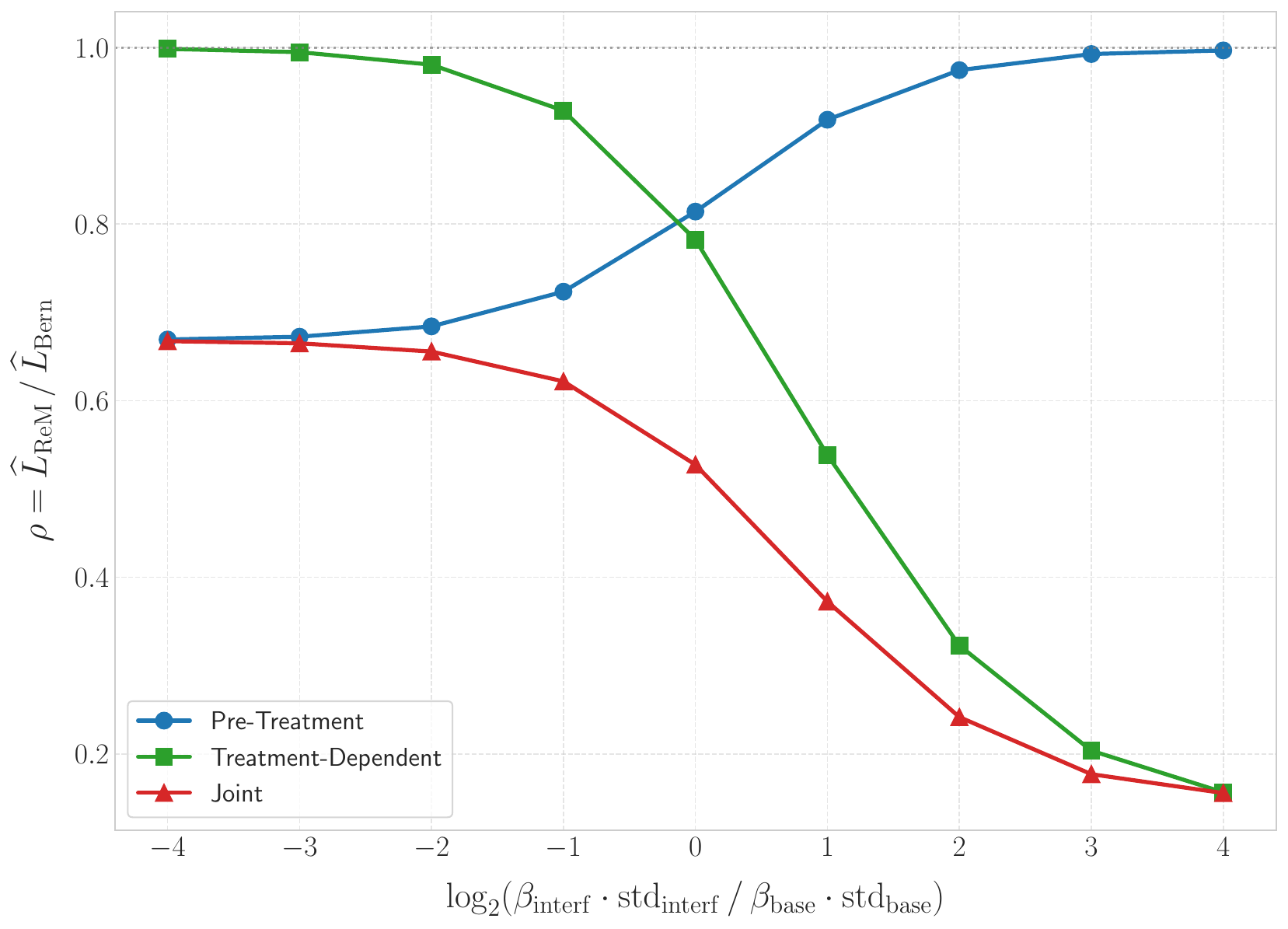"}
\caption{Nonlinear, balancing set NP}
\label{fig:conserv_ratio_type5_np_main}
\end{subfigure}
\hfill
\begin{subfigure}[b]{0.48\textwidth}
\centering
\includegraphics[width=\linewidth]{"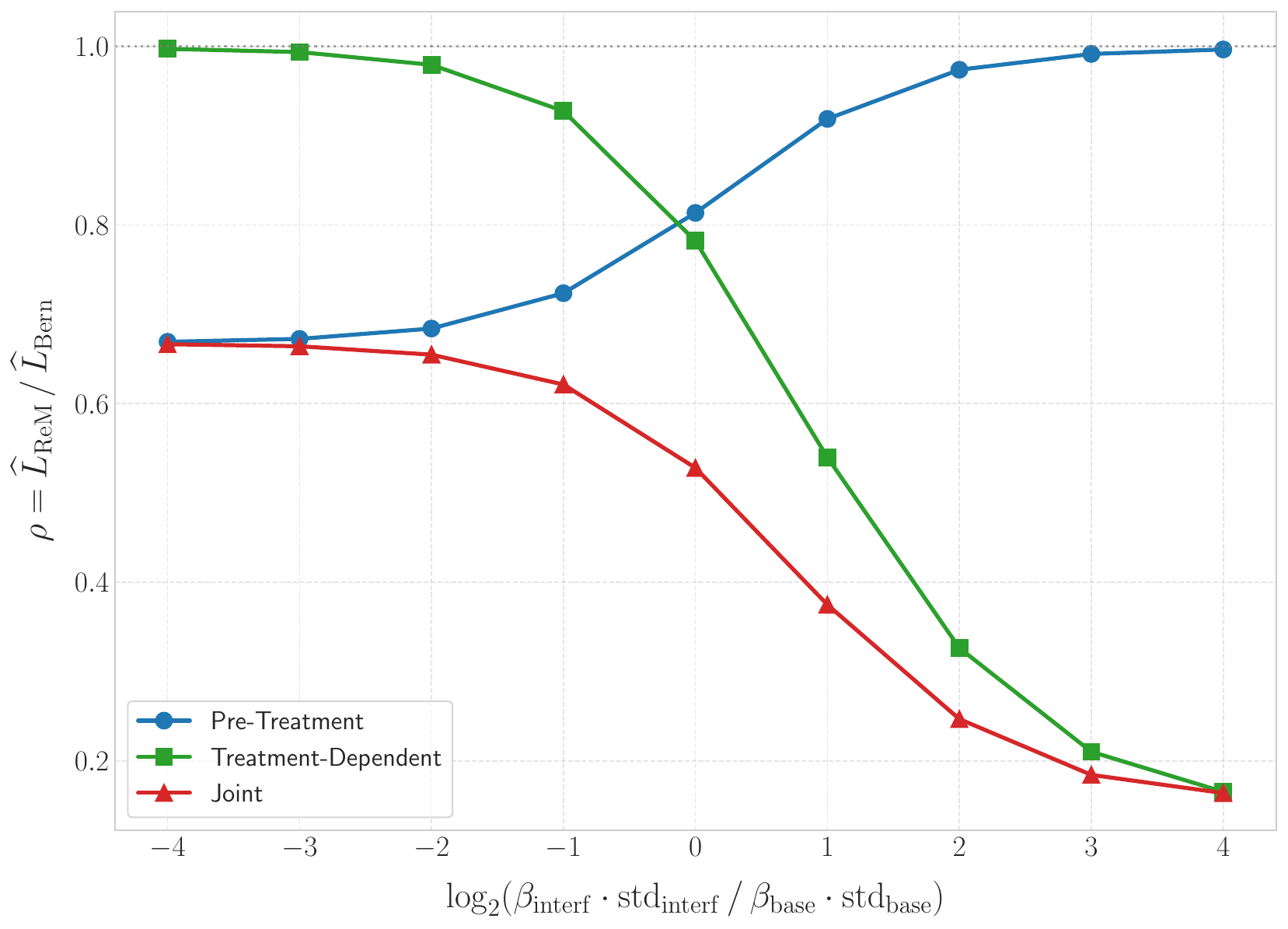"}
\caption{Nonlinear, balancing set NP+NWX}
\label{fig:conserv_ratio_type5_npnwx_main}
\end{subfigure}
\caption{Confidence interval length ratio $\rho$ in~\eqref{eq:ci_length_ratio} as a function of the log signal ratio $\log_2 \kappa$. Rows correspond to outcome settings (top: linear; bottom: nonlinear); columns correspond to balancing sets (left: NP; right: NP+NWX). Blue: pre-treatment strategy; green: treatment-dependent strategy; red: joint strategy. }
\label{fig:conserv_main}
\end{figure}

Figure~\ref{fig:conserv_main} shows the empirical ratio $\rho$ as a function of $\log_2\kappa$. Across all four panels and all three strategies, the ratio satisfies $\rho \le 1$. In other words, the proposed variance estimator successfully translates the design-stage variance reduction into a tighter inference: the confidence intervals constructed under rerandomization are at most as wide as those under the Bernoulli design and substantially shorter when the balancing covariates are predictive of the outcome.

The crossover pattern in Figure~\ref{fig:conserv_main} parallels that of Figure~\ref{fig:vr_main} for a transparent theoretical reason. Corollary~\ref{thm:variance_reduction} predicts that the conditional variance scales the unconditional variance by the factor $1-(1-v_p)R^2$, where $R^2$ is the population coefficient of determination between $\hat{\tau}$ and the balanced covariate imbalance vector $\tauxhat$. As $\log_2\kappa$ shifts from pre-treatment-dominated to interference-dominated, the dominant source of $R^2$ migrates from the pre-treatment covariates $\mathbf X$ to the treatment-dependent covariates $H_{i,\text{prop}}$ (and $\mathbf H_{i,\text{wX}}$ when included). The pre-treatment strategy thus delivers its largest improvement on the left, the treatment-dependent strategy on the right, and the joint strategy maintains a uniformly low $\rho$ because it retains explanatory power across both regimes. Comparing the NP and NP+NWX columns again shows that adding $\mathbf H_{i,\text{wX}}$ does not improve the ratio for these outcome models, in line with the practical caution noted in the previous subsection.

\subsection{Variance estimator: comparison with alternative estimators}
\label{subsection:sim_var_comp}
In this section, we examine alternative constructions of the variance estimator and compare them with the approach in Sections~\ref{sec:conservative_var} and~\ref{sec:opt}.

Specifically, we consider three different constructions of the unconditional joint covariance bound $\hat{\mathbf{U}}_n$: the estimator proposed in Definition~\ref{def:Un_estimator} and two alternatives. For each construction, we apply the scale-adaptive ridge below and plug the corrected bound into the ridge-adjusted optimization problem in~\eqref{eq:ridge-optimization-problem} to obtain the corresponding variance estimator $\hat v_{n,\delta}$ under rerandomization.

\begin{enumerate}
\item[(a)] \textbf{Proposed (group centering with per-unit local degrees).} This is the construction in Definition~\ref{def:Un_estimator}, repeated here for convenience:
\begin{equation}
\label{eq:Uhat_ours}
\hat{\mathbf{U}}_n^{\mathrm{ours}} = \frac{1}{n^2} \sum_{i=1}^n d_{n,i} \left[
\frac{Z_i\,\hat{\boldsymbol{\phi}}_i\hat{\boldsymbol{\phi}}_i^\top}{\pi^2} +
\frac{(1-Z_i)\,\hat{\boldsymbol{\phi}}_i\hat{\boldsymbol{\phi}}_i^\top}{(1-\pi)^2} \right],
\end{equation}
where $\hat{\boldsymbol{\phi}}_i$ is the within-treatment-group-centered version of $\boldsymbol{\phi}_i$ (i.e., the residual after subtracting the empirical mean over units sharing the treatment status of unit $i$), and $d_{n,i}$ is the local dependency degree introduced in Section~\ref{sec:conservative}.

\item[(b)] \textbf{No centering, per-unit local degrees.} This variant retains the per-unit weighting $d_{n,i}$ but uses the raw vectors $\boldsymbol{\phi}_i$ rather than the group-centered $\hat{\boldsymbol{\phi}}_i$:
\begin{equation}
\label{eq:Uhat_uncen}
\hat{\mathbf{U}}_n^{\mathrm{uncen}} = \frac{1}{n^2} \sum_{i=1}^n d_{n,i} \left[
\frac{Z_i\,\boldsymbol{\phi}_i\boldsymbol{\phi}_i^\top}{\pi^2} +
\frac{(1-Z_i)\,\boldsymbol{\phi}_i\boldsymbol{\phi}_i^\top}{(1-\pi)^2} \right].
\end{equation}
Comparing (a) and (b) isolates the contribution of the group-centering step.

\item[(c)] \textbf{Group centering, spectral-radius inflation.} This variant retains the centering step but replaces the per-unit weighting $d_{n,i}$ with a single global factor $\lambda_{\max}$, the spectral radius of the dependency adjacency matrix:
\begin{equation}
\label{eq:Uhat_spec}
\hat{\mathbf{U}}_n^{\mathrm{spec}} = \frac{\lambda_{\max}}{n^2} \sum_{i=1}^n \left[
\frac{Z_i\,\hat{\boldsymbol{\phi}}_i\hat{\boldsymbol{\phi}}_i^\top}{\pi^2} +
\frac{(1-Z_i)\,\hat{\boldsymbol{\phi}}_i\hat{\boldsymbol{\phi}}_i^\top}{(1-\pi)^2} \right].
\end{equation}
This mirrors the global spectral-radius inflation considered by \citet{savje2021average}, while retaining our within-arm centering step, which is not studied in \citet{savje2021average}. Comparing (a) and (c) isolates the contribution of replacing a single global graph summary with per-unit local degrees.
\end{enumerate}

For each construction $\mathrm{m} \in \{\mathrm{ours}, \mathrm{uncen}, \mathrm{spec}\}$, let
\[
\hat{\mathbf S}_n^{\mathrm m}
=\operatorname{diag}\!\left(
\hat U_{11}^{\mathrm m},\Sigma_{xx,11},\ldots,\Sigma_{xx,pp}
\right),
\qquad
\hat{\mathbf U}_{n,\delta}^{\mathrm m}
=\hat{\mathbf U}_n^{\mathrm m}+\delta\hat{\mathbf S}_n^{\mathrm m}.
\]
This is the variant-specific version of the scale-adaptive ridge in \eqref{eq:scale_adaptive_ridge}. We use the same prespecified $\delta$ for all three methods and apply the computation rule in Appendix~\ref{app:feasibility-refinement} to $\hat{\mathbf U}_{n,\delta}^{\mathrm m}$ to obtain the corresponding variance estimator $\hat v_{n,\delta}^{\mathrm m}$ under rerandomization.

To compare the three estimators, we report on the log scale the \emph{conservativeness ratio}
\begin{equation}
\label{eq:conserv_ratio_log}
\mathrm{CR}^{\mathrm{m}} \;=\; \log\!\left(
\frac{\E\!\left[\hat v_{n,\delta}^{\mathrm m} \,\middle|\, M_n
\le a\right]} {\Var\!\left(\hat{\tau} \,\middle|\, M_n \le a\right)} \right), \qquad \mathrm{m} \in
\{\mathrm{ours}, \mathrm{uncen}, \mathrm{spec}\}.
\end{equation}
Here, the numerator is the expected value of the variance estimator, and the denominator is the true variance of the H\'ajek estimator. A value $\mathrm{CR}^{\mathrm{m}} = 0$ corresponds to a perfectly calibrated variance estimator, $\mathrm{CR}^{\mathrm{m}} > 0$ indicates conservative behavior (overestimation of the conditional variance), and $\mathrm{CR}^{\mathrm{m}} < 0$ would indicate anti-conservative behavior and a violation of the inferential guarantee.

\begin{figure}
\centering
\begin{subfigure}[b]{0.48\textwidth}
\centering
\includegraphics[width=\linewidth]{"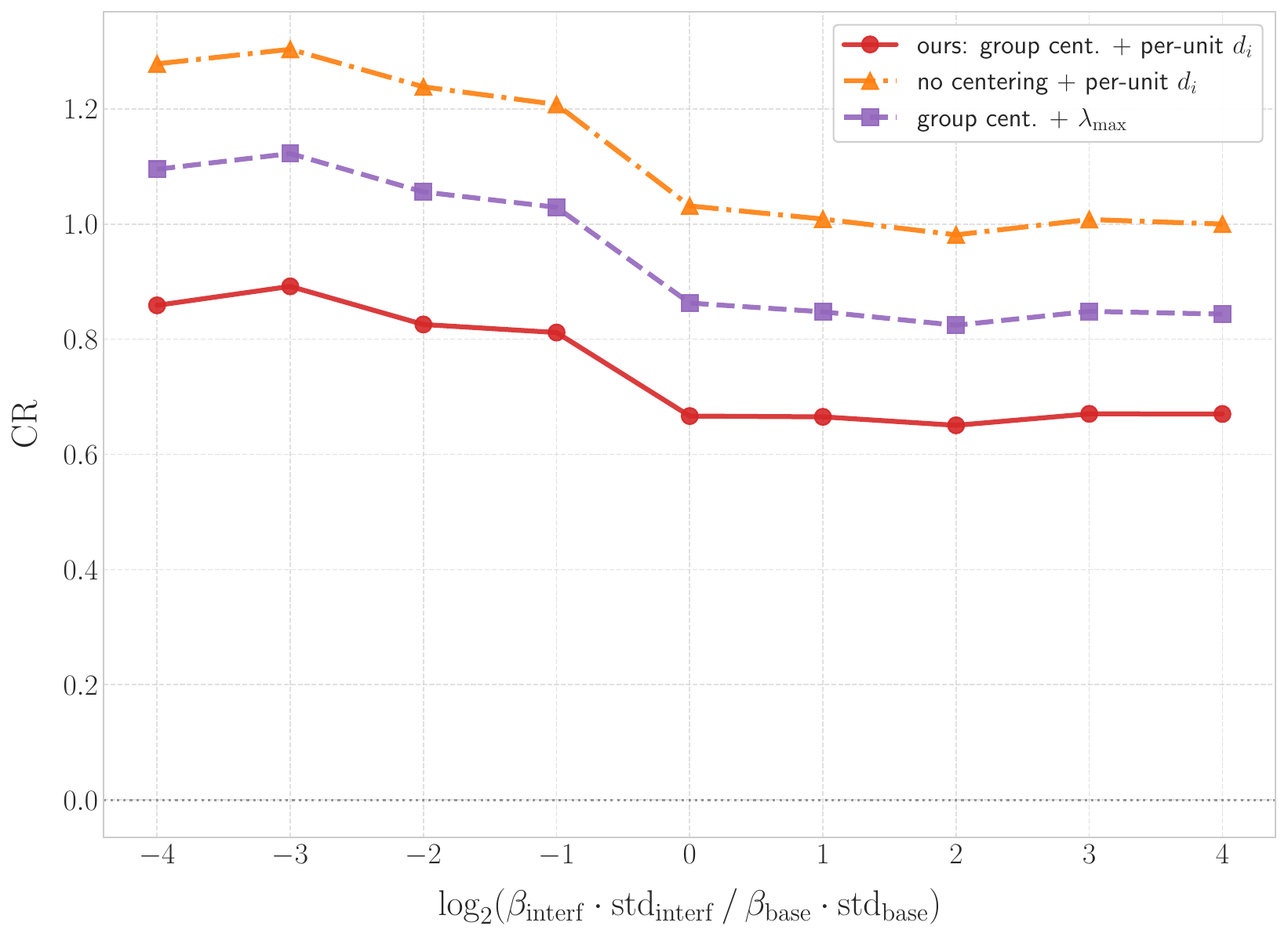"}
\caption{Linear, balancing set NP}
\label{fig:comparison_type1_np}
\end{subfigure}
\hfill
\begin{subfigure}[b]{0.48\textwidth}
\centering
\includegraphics[width=\linewidth]{"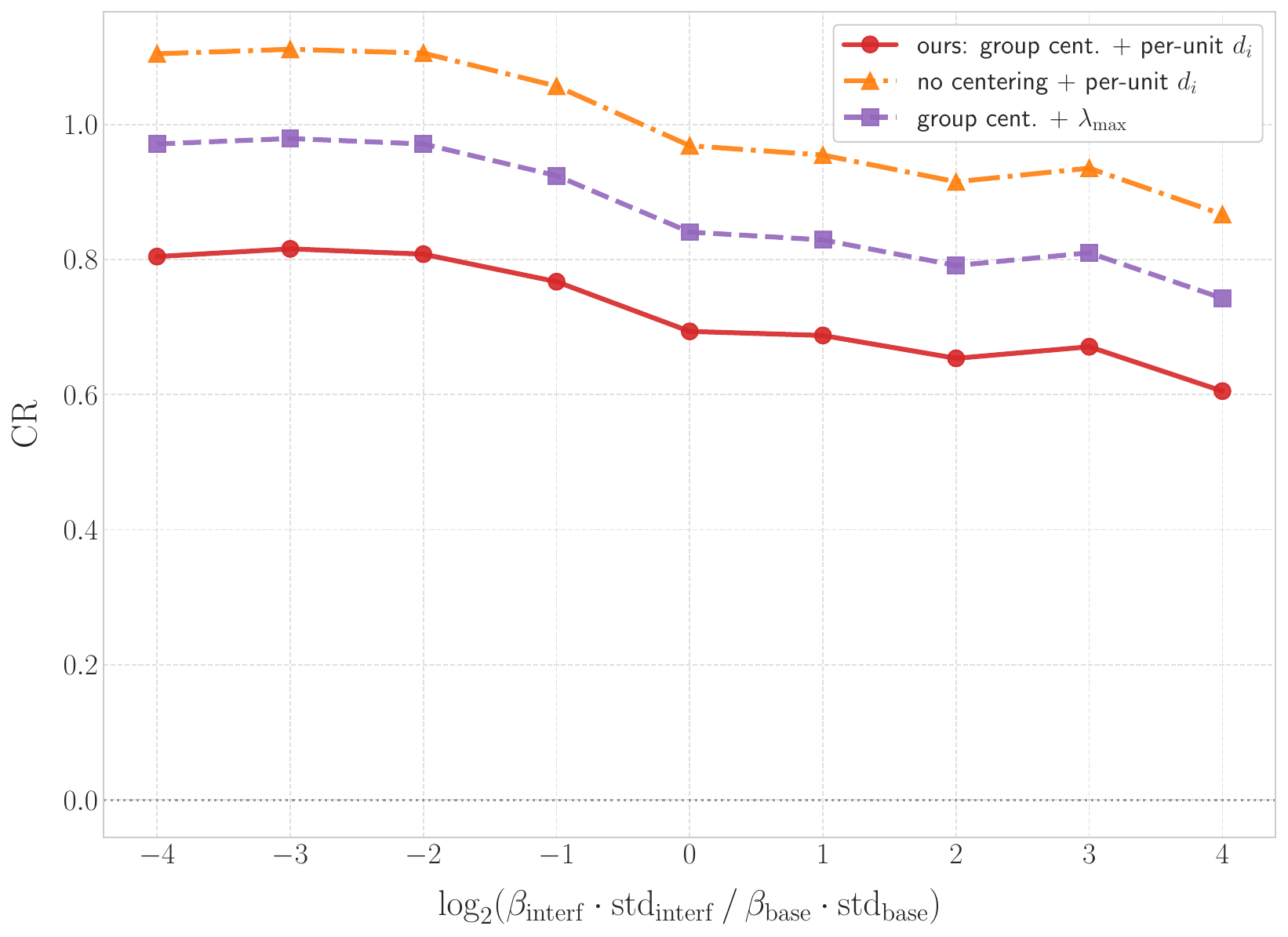"}
\caption{Linear, balancing set NP+NWX}
\label{fig:comparison_type1_npnwx}
\end{subfigure}

\vspace{0.5em}

\begin{subfigure}[b]{0.48\textwidth}
\centering
\includegraphics[width=\linewidth]{"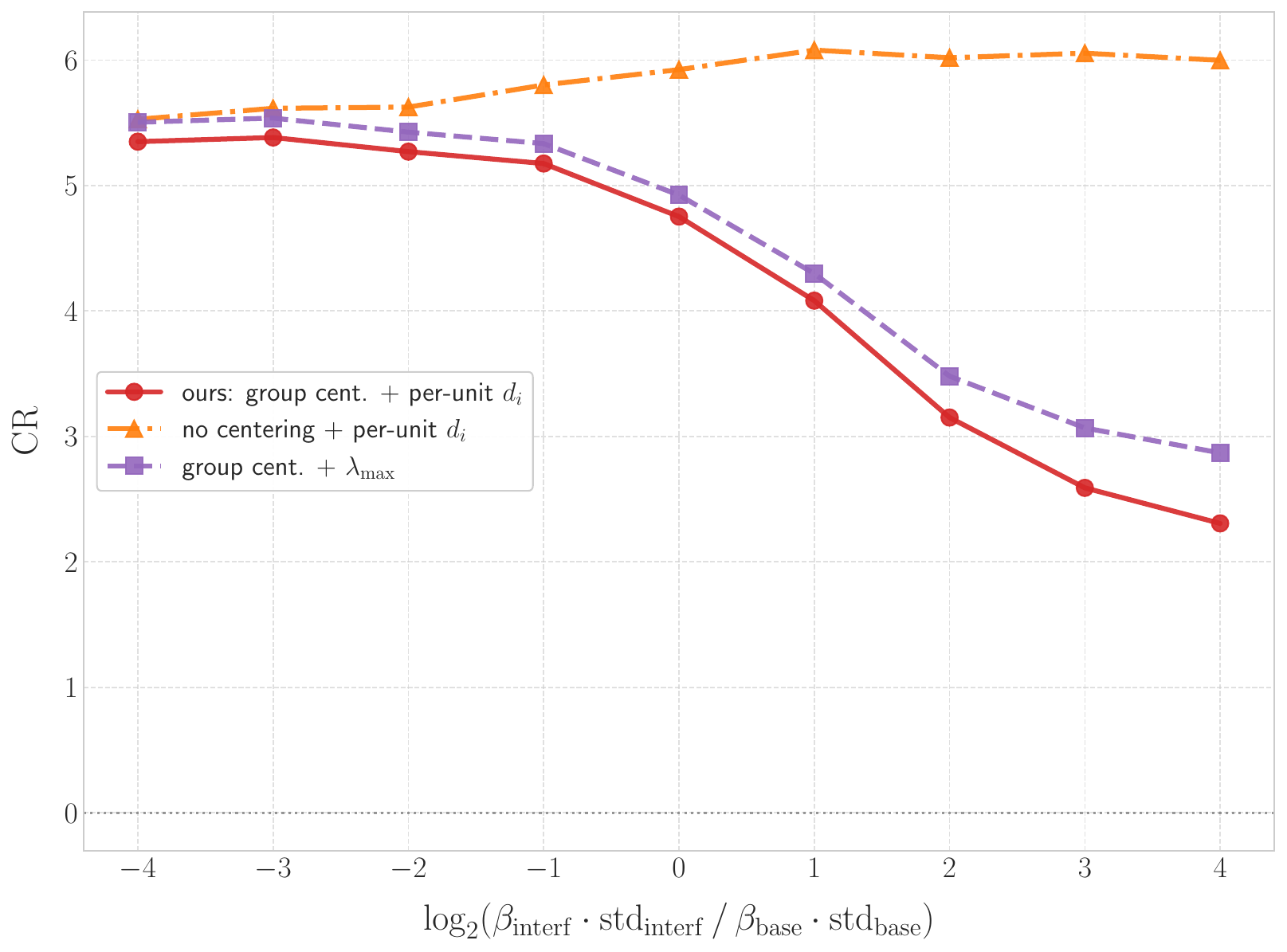"}
\caption{Nonlinear, balancing set NP}
\label{fig:comparison_type5_np}
\end{subfigure}
\hfill
\begin{subfigure}[b]{0.48\textwidth}
\centering
\includegraphics[width=\linewidth]{"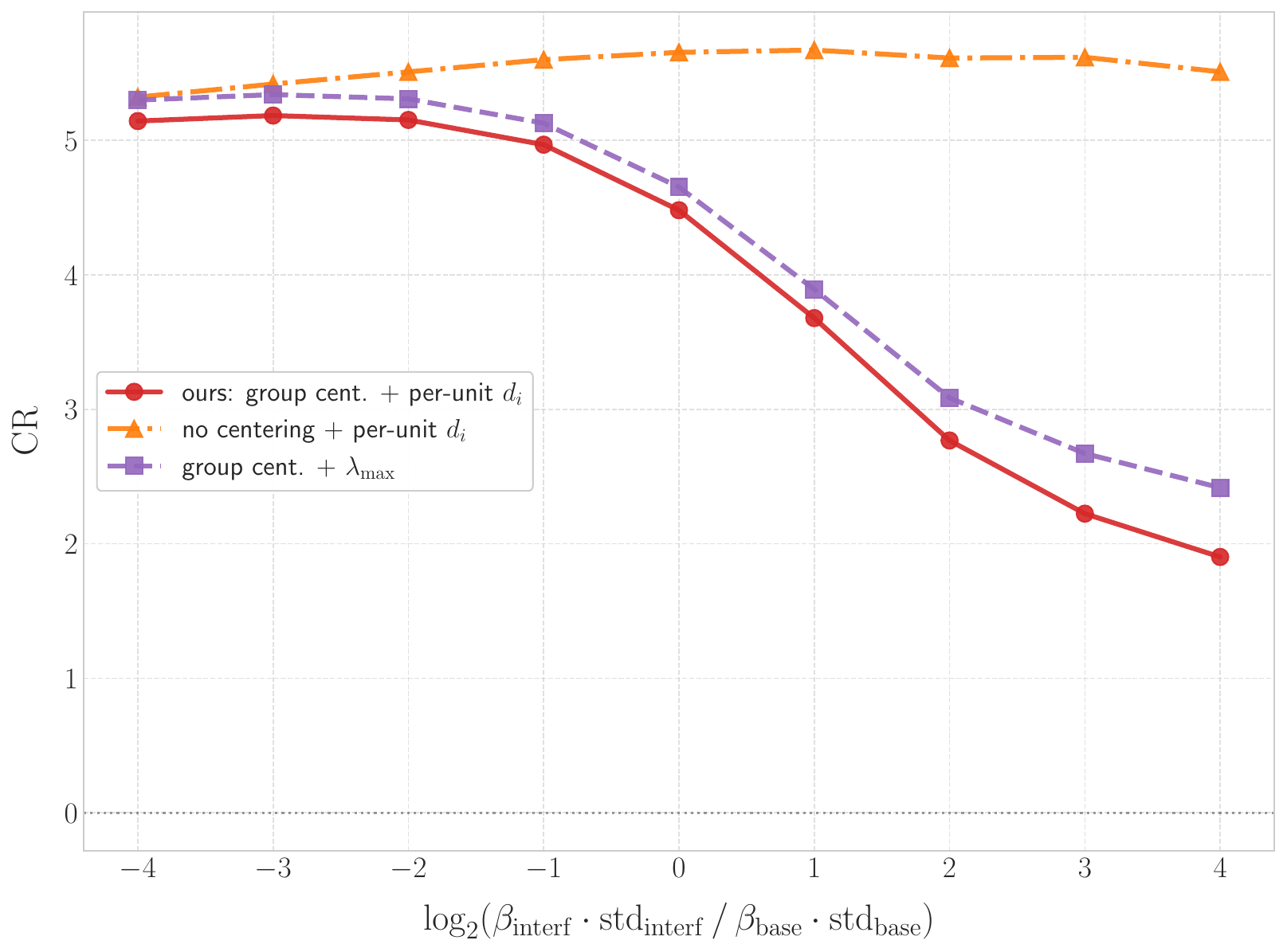"}
\caption{Nonlinear, balancing set NP+NWX}
\label{fig:comparison_type5_npnwx}
\end{subfigure}
\caption{Comparison of conservativeness ratios for three variance-estimator constructions, shown as a function of the log signal ratio $\log_2 \kappa$. Rows correspond to outcome settings (top: linear; bottom: nonlinear); columns correspond to balancing sets (left: NP; right: NP+NWX). Red: proposed group-centering with per-unit local degrees; orange: no-centering with per-unit local degrees; purple: group-centering with spectral-radius inflation.}
\label{fig:comparison_main}
\end{figure}

Figure~\ref{fig:comparison_main} plots the values of $\mathrm{CR}^{\mathrm{m}}$. Across all four panels and all three methods, $\mathrm{CR}^{\mathrm{m}} \ge 0$, confirming that all three variance estimators are conservative. The differences across constructions, however, are informative.

In the linear setting, the three estimators lie within a narrow band. The proposed estimator, $\hat v_{n,\delta}^{\mathrm{ours}}$, achieves the smallest CR, followed by the spectral-radius variant, $\hat v_{n,\delta}^{\mathrm{spec}}$, which incurs only a modest increase, while the no-centering variant, $\hat v_{n,\delta}^{\mathrm{uncen}}$, is slightly more conservative.

In the nonlinear setting, the contrast becomes more pronounced. The exponential outcome map can generate large within-group outcome means, helping explain why failing to center substantially inflates the variance estimator in these simulations: $\hat v_{n,\delta}^{\mathrm{uncen}}$ remains far above the other two across the entire signal range. The spectral-radius variant performs better and is closer to the proposed estimator than to the no-centering variant. However, it still trails $\hat v_{n,\delta}^{\mathrm{ours}}$ by a noticeable margin in the interference-dominated regime. This can occur because the degree distribution is highly heterogeneous, and a single global scaling factor can apply excessive inflation to units with smaller local degrees.

These results illustrate two features of the construction in Section~\ref{sec:conservative_var}. First, group centering removes group-level outcome components and substantially reduces the variance estimates in these simulations, with especially pronounced gains under the nonlinear specifications. Second, per-unit local weighting becomes more important as the dependency graph becomes more heterogeneous, since a global $\lambda_{\max}$-based bound applies the same worst-case inflation to every unit. The Barab\'{a}si--Albert network highlights this second effect, as its degree distribution is highly heterogeneous. In contrast, a more homogeneous design, such as Watts--Strogatz with fixed degree, would reduce this gap. We note that the ordering between the proposed estimator and the spectral-radius variant is not universal: when the outcome's magnitude scales with the unit's degree, hub residuals are amplified by both centering and the per-unit weight $d_{n,i}$, and the global $\lambda_{\max}$ alternative can be tighter. We document and explain this regime in Appendix~\ref{sec:sim_comparison_full} for the \texttt{exp-sumexp} and \texttt{exp-sumexp-full} models.

\section{Discussion}
\label{sec:discussion}

This paper develops a rerandomization framework for experiments with interference. In contrast to much of the interference literature, which requires modeling assumptions to incorporate covariates, we show that for the EATE and the H\'ajek estimator, rerandomization provides a model-agnostic way to achieve an asymptotic no-harm property: incorporating covariates at the design stage is guaranteed to be asymptotically no worse than not adjusting. Our results extend the classical rerandomization insights of \citet{morgan2012rerandomization} and \citet{li2018asymptotic} to settings where interference induces cross-unit dependence.

We conclude by outlining several directions for future work.

One direction is to move beyond the EATE to estimands that capture spillover or total effects. This is challenging because estimating spillover or total effects typically requires specifying an interference structure, such as exposure mappings or neighborhoods \citep{hudgens2008toward, leung2022causal}, making model-agnostic guarantees difficult. At the same time, existing estimators for spillover or total effects can be biased when the assumed structure is misspecified. This suggests that rerandomization may help not only with variance reduction, but also with mitigating finite-sample bias by balancing covariates and exposure summaries that drive spillover patterns.

Another direction is to move beyond Bernoulli experiments. Our analysis relies on viewing rerandomization as Bernoulli assignment conditional on a balance event, which is convenient for both the EATE and the asymptotic theory. In many applications, cluster randomization is used when interference is stronger within groups. In such settings, one could rerandomize at both the cluster and individual levels, balancing baseline cluster covariates as well as exposure summaries induced by within- and between-cluster spillovers.

It is also natural to consider combining rerandomization with regression adjustment. Under SUTVA, \citet{liding2020rerandomization} develop a unified randomization-inference theory for this combination, allowing the covariates used in the rerandomization criterion and those used in the analysis-stage adjustment to differ. Their results show that rerandomization can improve precision for a given estimator, while regression adjustment can further improve the estimated precision for a given design; they therefore recommend using rerandomization in the design stage together with regression adjustment and Huber--White robust standard errors in the analysis stage. Extending this conclusion to interference settings is nontrivial. Regression adjustment no longer has the same no-harm guarantee, because residualizing outcomes does not remove the cross-unit covariance induced by spillovers, and such no-harm properties may fail under interference \citep{gao2025causal}. A useful direction for future work is to develop regression-adjusted estimators tailored to rerandomized designs under interference, together with valid variance estimators and inference procedures.

An important practical question is how to choose covariates in interference-aware rerandomization. The theory highlights a trade-off: the benefit of rerandomization depends on the predictive power of the balance vector through $R^2$, while the threshold factor $v_p$ becomes less favorable as the number of covariates increases. Important pre-treatment variables and exposure summaries should be included, but adding many weak covariates can make the acceptance criterion harder to satisfy without improving precision. Our simulations illustrate this effect when incorporating weakly predictive neighbor-based summaries. Future work could study data-independent or sample-splitting approaches to covariate selection, for example using historical data or pilot studies to rank candidates before treatment assignment. In network settings, useful candidates may include baseline covariates, degree or centrality measures, and exposure summaries motivated by the application.

Finally, a practical limitation of the current inference procedure is that the variance estimator can be conservative, especially under nonlinear outcome models or highly heterogeneous networks. This conservativeness ensures coverage but may lead to confidence intervals that are wider than necessary. Developing sharper yet still design-valid variance estimators is an important direction for future work.

\bibliographystyle{plainnat}
\bibliography{ref}

@article{Andrews1984NonstrongMA,
  title={Non-strong mixing autoregressive processes},
  author={Donald W. K. Andrews},
  journal={Journal of Applied Probability},
  year={1984},
  volume={21},
  pages={930 - 934},
}

@article{Samii2012EstimatingAC,
  author  = {Peter M. Aronow and Cyrus Samii},
  title   = {Estimating average causal effects under general interference, with application to a social network experiment},
  journal = {The Annals of Applied Statistics},
  year    = {2017},
  volume  = {11},
  number  = {4},
  pages   = {1912--1947},
  doi     = {10.1214/16-AOAS1005}
}

@article{athey2018exact,
  title={Exact p-values for network interference},
  author={Athey, Susan and Eckles, Dean and Imbens, Guido W},
  journal={Journal of the American Statistical Association},
  volume={113},
  number={521},
  pages={230--240},
  year={2018},
  publisher={Taylor \& Francis}
}

@article{bailey1987restricted,
  title={Restricted randomization: A practical example},
  author={Bailey, Rosemary Anne},
  journal={Journal of the American Statistical Association},
  volume={82},
  number={399},
  pages={712--719},
  year={1987},
  publisher={Taylor \& Francis}
}

@article{albert1999emergence,
  title={Emergence of scaling in random networks},
  author={Barab{\'a}si, Albert-L{\'a}szl{\'o} and Albert, R{\'e}ka},
  journal={Science},
  volume={286},
  number={5439},
  pages={509--512},
  year={1999},
  publisher={American Association for the Advancement of Science}
}

@article{basse2018model,
  title={Model-assisted design of experiments in the presence of network-correlated outcomes},
  author={Basse, Guillaume W and Airoldi, Edoardo M},
  journal={Biometrika},
  volume={105},
  number={4},
  pages={849--858},
  year={2018},
  publisher={Oxford University Press}
}

@article{bloniarz2016lasso,
  title={Lasso adjustments of treatment effect estimates in randomized experiments},
  author={Bloniarz, Adam and Liu, Hanzhong and Zhang, Cun-Hui and Sekhon, Jasjeet S and Yu, Bin},
  journal={Proceedings of the National Academy of Sciences},
  volume={113},
  number={27},
  pages={7383--7390},
  year={2016},
  publisher={National Academy of Sciences}
}

@article{branson2021ridge,
  title={Ridge rerandomization: An experimental design strategy in the presence of covariate collinearity},
  author={Branson, Zach and Shao, Stephane},
  journal={Journal of Statistical Planning and Inference},
  volume={211},
  pages={287--314},
  year={2021},
  publisher={Elsevier}
}

@article{branson2024power,
  title={Power and sample size calculations for rerandomization},
  author={Branson, Zach and Li, Xinran and Ding, Peng},
  journal={Biometrika},
  volume={111},
  number={1},
  pages={355--363},
  year={2024},
  publisher={Oxford University Press}
}

@article{cai2015social,
  title={Social networks and the decision to insure},
  author={Cai, Jing and Janvry, Alain De and Sadoulet, Elisabeth},
  journal={American Economic Journal: Applied Economics},
  volume={7},
  number={2},
  pages={81--108},
  year={2015},
  publisher={American Economic Association 2014 Broadway, Suite 305, Nashville, TN 37203-2425}
}

@article{chang2024exact,
  title={Exact bias correction for linear adjustment of randomized controlled trials},
  author={Chang, Haoge and Middleton, Joel A and Aronow, PM},
  journal={Econometrica},
  volume={92},
  number={5},
  pages={1503--1519},
  year={2024},
  publisher={Wiley Online Library}
}

@article{cohen2022gaussian,
  title={Gaussian prepivoting for finite population causal inference},
  author={Cohen, Peter L and Fogarty, Colin B},
  journal={Journal of the Royal Statistical Society Series B: Statistical Methodology},
  volume={84},
  number={2},
  pages={295--320},
  year={2022},
  publisher={Oxford University Press}
}

@article{cohen2024no,
  title={No-harm calibration for generalized {Oaxaca--Blinder} estimators},
  author={Cohen, Peter L and Fogarty, Colin B},
  journal={Biometrika},
  volume={111},
  number={1},
  pages={331--338},
  year={2024},
  publisher={Oxford University Press}
}

@article{cortez2023exploiting,
  title={Exploiting neighborhood interference with low-order interactions under unit randomized design},
  author={Cortez-Rodriguez, Mayleen and Eichhorn, Matthew and Yu, Christina Lee},
  journal={Journal of Causal Inference},
  volume={11},
  number={1},
  pages={20220051},
  year={2023},
  publisher={De Gruyter}
}

@article{cox1982randomization,
  title={Randomization and concomitant variables in the design of experiments},
  author={Cox, DR},
  journal={Statistics and probability: Essays in honor of CR Rao},
  pages={197--202},
  year={1982},
  publisher={North-Holland Amsterdam}
}

@article{dedecker1998central,
  title={A central limit theorem for stationary random fields},
  author={Dedecker, J{\'e}r{\^o}me},
  journal={Probability Theory and Related Fields},
  volume={110},
  number={3},
  pages={397--426},
  year={1998},
  publisher={Springer}
}

@article{dedecker2001exponential,
  title={Exponential inequalities and functional central limit theorems for random fields},
  author={Dedecker, J{\'e}r{\^o}me},
  journal={ESAIM: Probability and Statistics},
  volume={5},
  pages={77--104},
  year={2001}
}

@incollection{dedecker2007weak,
  title={Weak dependence},
  author={Dedecker, J{\'e}r{\^o}me and Doukhan, Paul and Lang, Gabriel and Le{\'o}n R., Jos{\'e} Rafael and Louhichi, Sana and Prieur, Cl{\'e}mentine},
  booktitle={Weak dependence: With examples and applications},
  pages={9--20},
  year={2007},
  publisher={Springer}
}

@article{doukhan1999new,
  title={A new weak dependence condition and applications to moment inequalities},
  author={Doukhan, Paul and Louhichi, Sana},
  journal={Stochastic processes and their applications},
  volume={84},
  number={2},
  pages={313--342},
  year={1999},
  publisher={Elsevier}
}

@article{eckles2017design,
  title={Design and analysis of experiments in networks: Reducing bias from interference},
  author={Eckles, Dean and Karrer, Brian and Ugander, Johan},
  journal={Journal of Causal Inference},
  volume={5},
  number={1},
  pages={20150021},
  year={2017},
  publisher={De Gruyter}
}

@article{el2013central,
  title={A central limit theorem for stationary random fields},
  author={El Machkouri, Mohamed and Voln{\'y}, Dalibor and Wu, Wei Biao},
  journal={Stochastic Processes and their Applications},
  volume={123},
  number={1},
  pages={1--14},
  year={2013},
  publisher={Elsevier}
}

@article{fan2025causal,
  title={Causal Inference under Interference: Regression Adjustment and Optimality},
  author={Fan, Xinyuan and Leng, Chenlei and Wu, Weichi},
  journal={arXiv preprint arXiv:2502.06008},
  year={2025}
}

@article{fisher1992arrangement,
  title   = {The Arrangement of Field Experiments},
  author  = {Fisher, Ronald A.},
  journal = {Journal of the Ministry of Agriculture},
  volume  = {33},
  pages   = {503--513},
  year    = {1926}
}

@book{fisher1935design,
  author    = {Fisher, Ronald A.},
  title     = {The Design of Experiments},
  year      = {1935},
  publisher = {Oliver and Boyd},
  address   = {Edinburgh}
}

@article{fogarty2018regression,
  title={Regression-assisted inference for the average treatment effect in paired experiments},
  author={Fogarty, Colin B},
  journal={Biometrika},
  volume={105},
  number={4},
  pages={994--1000},
  year={2018},
  publisher={Oxford University Press}
}

@article{gao2025causal,
  title={Causal inference in network experiments: regression-based analysis and design-based properties},
  author={Gao, Mengsi and Ding, Peng},
  journal={Journal of Econometrics},
  volume={252},
  pages={106119},
  year={2025},
  publisher={Elsevier}
}

@article{guo2023generalized,
  title={The generalized {Oaxaca-Blinder} estimator},
  author={Guo, Kevin and Basse, Guillaume},
  journal={Journal of the American Statistical Association},
  volume={118},
  number={541},
  pages={524--536},
  year={2023},
  publisher={Taylor \& Francis}
}

@article{harshaw2026optimized,
  title={Optimized variance estimation under interference and complex experimental designs},
  author={Harshaw, Christopher and Middleton, Joel and S{\"a}vje, Fredrik},
  journal={Journal of the American Statistical Association},
  pages={1--14},
  year={2026},
  publisher={Taylor \& Francis}
}

@article{hu2022average,
  title={Average direct and indirect causal effects under interference},
  author={Hu, Yuchen and Li, Shuangning and Wager, Stefan},
  journal={Biometrika},
  volume={109},
  number={4},
  pages={1165--1172},
  year={2022},
  publisher={Oxford University Press}
}

@article{hudgens2008toward,
  title={Toward causal inference with interference},
  author={Hudgens, Michael G and Halloran, M Elizabeth},
  journal={Journal of the american statistical association},
  volume={103},
  number={482},
  pages={832--842},
  year={2008},
  publisher={Taylor \& Francis}
}

@article{imai2009essential,
  author  = {Imai, Kosuke and King, Gary and Nall, Clayton},
  title   = {The Essential Role of Pair Matching in Cluster-Randomized Experiments, with Application to the Mexican Universal Health Insurance Evaluation},
  journal = {Statistical Science},
  volume  = {24},
  number  = {1},
  pages   = {29--53},
  year    = {2009},
}

@article{imbens2011experimental,
  title={Experimental design for unit and cluster randomized trials},
  author={Imbens, Guido W},
  journal={International Initiative for Impact Evaluation Paper},
  year={2011}
}

@book{imbens2015causal,
  title={Causal inference for statistics, social, and biomedical sciences: An introduction},
  author={Imbens, Guido W and Rubin, Donald B},
  year={2015},
  publisher={Cambridge university press}
}

@article{johansson2022rerandomization,
  title={Rerandomization: A complement or substitute for stratification in randomized experiments?},
  author={Johansson, Per and Schultzberg, M{\aa}rten},
  journal={Journal of Statistical Planning and Inference},
  volume={218},
  pages={43--58},
  year={2022},
  publisher={Elsevier}
}

@article{junior2025does,
  title={Does Rerandomization Help Beyond Covariate Adjustment? A Review and Guide for Theory and Practice},
  author={Ribeiro Junior, Ant{\^o}nio Carlos Herling and Branson, Zach},
  journal={arXiv preprint arXiv:2512.05290},
  year={2025}
}

@article{kallus2018optimal,
  title={Optimal a priori balance in the design of controlled experiments},
  author={Kallus, Nathan},
  journal={Journal of the Royal Statistical Society Series B: Statistical Methodology},
  volume={80},
  number={1},
  pages={85--112},
  year={2018},
  publisher={Oxford University Press}
}

@article{kapelner2021harmonizing,
  title={Harmonizing optimized designs with classic randomization in experiments},
  author={Kapelner, Adam and Krieger, Abba M and Sklar, Michael and Shalit, Uri and Azriel, David},
  journal={The American Statistician},
  volume={75},
  number={2},
  pages={195--206},
  year={2021},
  publisher={Taylor \& Francis}
}

@article{kernan1999stratified,
  title={Stratified randomization for clinical trials},
  author={Kernan, Walter N and Viscoli, Catherine M and Makuch, Robert W and Brass, Lawrence M and Horwitz, Ralph I},
  journal={Journal of clinical epidemiology},
  volume={52},
  number={1},
  pages={19--26},
  year={1999},
  publisher={Elsevier}
}

@article{kojevnikov2021limit,
  title={Limit theorems for network dependent random variables},
  author={Kojevnikov, Denis and Marmer, Vadim and Song, Kyungchul},
  journal={Journal of Econometrics},
  volume={222},
  number={2},
  pages={882--908},
  year={2021},
  publisher={Elsevier}
}

@article{lee2019stable,
  title={Stable limit theorems for empirical processes under conditional neighborhood dependence},
  author={Lee, Ji Hyung and Song, Kyungchul},
  journal={Bernoulli},
  volume={25},
  number={2},
  pages={1189--1224},
  year={2019},
  publisher={JSTOR}
}

@article{leung2020treatment,
  title={Treatment and spillover effects under network interference},
  author={Leung, Michael P},
  journal={Review of Economics and Statistics},
  volume={102},
  number={2},
  pages={368--380},
  year={2020},
  publisher={MIT Press One Rogers Street, Cambridge, MA 02142-1209, USA journals-info~…}
}

@article{leung2022causal,
  title={Causal inference under approximate neighborhood interference},
  author={Leung, Michael P},
  journal={Econometrica},
  volume={90},
  number={1},
  pages={267--293},
  year={2022},
  publisher={Wiley Online Library}
}

@article{li2022random,
  title={Random graph asymptotics for treatment effect estimation under network interference},
  author={Li, Shuangning and Wager, Stefan},
  journal={The Annals of Statistics},
  volume={50},
  number={4},
  pages={2334--2358},
  year={2022},
  publisher={Institute of Mathematical Statistics}
}

@article{li2017general,
  title={General forms of finite population central limit theorems with applications to causal inference},
  author={Li, Xinran and Ding, Peng},
  journal={Journal of the American Statistical Association},
  volume={112},
  number={520},
  pages={1759--1769},
  year={2017},
  publisher={Taylor \& Francis}
}

@article{liding2020rerandomization,
  title={Rerandomization and regression adjustment},
  author={Li, Xinran and Ding, Peng},
  journal={Journal of the Royal Statistical Society Series B: Statistical Methodology},
  volume={82},
  number={1},
  pages={241--268},
  year={2020},
  publisher={Oxford University Press}
}

@article{li2018asymptotic,
  title={Asymptotic theory of rerandomization in treatment--control experiments},
  author={Li, Xinran and Ding, Peng and Rubin, Donald B},
  journal={Proceedings of the National Academy of Sciences},
  volume={115},
  number={37},
  pages={9157--9162},
  year={2018},
  publisher={National Academy of Sciences}
}

@article{lidingrubin2020rerandomization,
  title={Rerandomization in $2^K$ factorial experiments},
  author={Li, Xinran and Ding, Peng and Rubin, Donald B},
  journal={The Annals of Statistics},
  volume={48},
  number={1},
  pages={43--63},
  year={2020},
  publisher={JSTOR}
}

@article{lin2013agnostic,
  title={Agnostic notes on regression adjustments to experimental data: Reexamining {Freedman}'s critique},
  author={Lin, Winston},
  journal={The Annals of Applied Statistics},
  pages={295--318},
  year={2013},
  publisher={JSTOR}
}

@article{liu2025bayesian,
  title={A {Bayesian} criterion for rerandomization},
  author={Liu, Zhaoyang and Han, Tingxuan and Rubin, Donald B and Deng, Ke},
  journal={Journal of the American Statistical Association},
  volume={120},
  number={552},
  pages={2809--2821},
  year={2025},
  publisher={Taylor \& Francis}
}

@article{lu2023design,
  title={Design-based theory for cluster rerandomization},
  author={Lu, Xin and Liu, Tianle and Liu, Hanzhong and Ding, Peng},
  journal={Biometrika},
  volume={110},
  number={2},
  pages={467--483},
  year={2023},
  publisher={Oxford University Press}
}

@article{lu2024adjusting,
  title={Adjusting auxiliary variables under approximate neighborhood interference},
  author={Lu, Xin and Wang, Yuhao and Zhang, Zhiheng},
  journal={arXiv preprint arXiv:2411.19789},
  year={2024}
}

@article{miratrix2013adjusting,
  title={Adjusting treatment effect estimates by post-stratification in randomized experiments},
  author={Miratrix, Luke W and Sekhon, Jasjeet S and Yu, Bin},
  journal={Journal of the Royal Statistical Society Series B: Statistical Methodology},
  volume={75},
  number={2},
  pages={369--396},
  year={2013},
  publisher={Oxford University Press}
}

@article{morgan2012rerandomization,
  author  = {Morgan, Kari Lock and Rubin, Donald B.},
  title   = {Rerandomization to Improve Covariate Balance in Experiments},
  journal = {The Annals of Statistics},
  volume  = {40},
  number  = {2},
  pages   = {1263--1282},
  year    = {2012},
}

@article{morgan2015rerandomization,
  title={Rerandomization to balance tiers of covariates},
  author={Morgan, Kari Lock and Rubin, Donald B},
  journal={Journal of the American Statistical Association},
  volume={110},
  number={512},
  pages={1412--1421},
  year={2015},
  publisher={Taylor \& Francis}
}

@article{negi2021revisiting,
  title={Revisiting regression adjustment in experiments with heterogeneous treatment effects},
  author={Negi, Akanksha and Wooldridge, Jeffrey M},
  journal={Econometric Reviews},
  volume={40},
  number={5},
  pages={504--534},
  year={2021},
  publisher={Taylor \& Francis}
}

@article{neyman1923application,
  title={On the application of probability theory to agricultural experiments. Essay on principles},
  author={Neyman, Jerzy},
  journal={Ann. Agricultural Sciences},
  pages={1--51},
  year={1923}
}

@article{qu2025randomization,
  title={Randomization-based {Z}-estimation for evaluating average and individual treatment effects},
  author={Qu, Tianyi and Du, Jiangchuan and Li, Xinran},
  journal={Biometrika},
  volume={112},
  number={2},
  pages={asaf002},
  year={2025},
  publisher={Oxford University Press}
}

@article{rosenbaum2002covariance,
  title={Covariance adjustment in randomized experiments and observational studies},
  author={Rosenbaum, Paul R},
  journal={Statistical Science},
  volume={17},
  number={3},
  pages={286--327},
  year={2002},
  publisher={Institute of Mathematical Statistics}
}

@article{rosenblatt1956central,
  title={A central limit theorem and a strong mixing condition},
  author={Rosenblatt, Murray},
  journal={Proceedings of the national Academy of Sciences},
  volume={42},
  number={1},
  pages={43--47},
  year={1956}
}

@article{rubin1974estimating,
  title={Estimating causal effects of treatments in randomized and nonrandomized studies.},
  author={Rubin, Donald B},
  journal={Journal of educational Psychology},
  volume={66},
  number={5},
  pages={688},
  year={1974},
  publisher={American Psychological Association}
}

@article{rubindonald1980comment,
	author = {Rubin, D. B.},
	journal = {Journal of American Statistical Association},
	pages = {591--593},
	title = {Comment on ``{R}andomization analysis of experimental data: the {F}isher randomization test'' by {D}. {B}asu.},
	volume = {75},
	year = {1980}}

@article{sacerdote2001peer,
  title={Peer effects with random assignment: Results for {Dartmouth} roommates},
  author={Sacerdote, Bruce},
  journal={The Quarterly journal of economics},
  volume={116},
  number={2},
  pages={681--704},
  year={2001},
  publisher={MIT Press}
}

@article{savje2021average,
  title={Average treatment effects in the presence of unknown interference},
  author={S{\"a}vje, Fredrik and Aronow, Peter and Hudgens, Michael},
  journal={Annals of statistics},
  volume={49},
  number={2},
  pages={673},
  year={2021}
}

@article{simon1979restricted,
  title={Restricted randomization designs in clinical trials},
  author={Simon, Richard},
  journal={Biometrics},
  pages={503--512},
  year={1979},
  publisher={JSTOR}
}

@article{sobel2006randomized,
  title={What do randomized studies of housing mobility demonstrate? Causal inference in the face of interference},
  author={Sobel, Michael E},
  journal={Journal of the American Statistical Association},
  volume={101},
  number={476},
  pages={1398--1407},
  year={2006},
  publisher={Taylor \& Francis}
}

@article{su2021model,
  title={Model-assisted analyses of cluster-randomized experiments},
  author={Su, Fangzhou and Ding, Peng},
  journal={Journal of the Royal Statistical Society Series B: Statistical Methodology},
  volume={83},
  number={5},
  pages={994--1015},
  year={2021},
  publisher={Oxford University Press}
}

@article{tchetgen2012causal,
  title={On causal inference in the presence of interference},
  author={Tchetgen Tchetgen, Eric J. and VanderWeele, Tyler J},
  journal={Statistical methods in medical research},
  volume={21},
  number={1},
  pages={55--75},
  year={2012},
  publisher={SAGE Publications Sage UK: London, England}
}

@inproceedings{toulis2013estimation,
  title={Estimation of causal peer influence effects},
  author={Toulis, Panos and Kao, Edward},
  booktitle={International conference on machine learning},
  pages={1489--1497},
  year={2013},
  organization={PMLR}
}

@article{wang2023model,
  title={Model-robust inference for clinical trials that improve precision by stratified randomization and covariate adjustment},
  author={Wang, Bingkai and Susukida, Ryoko and Mojtabai, Ramin and Amin-Esmaeili, Masoumeh and Rosenblum, Michael},
  journal={Journal of the American Statistical Association},
  volume={118},
  number={542},
  pages={1152--1163},
  year={2023},
  publisher={Taylor \& Francis}
}

@article{wang2024model,
  title={Model-robust and efficient covariate adjustment for cluster-randomized experiments},
  author={Wang, Bingkai and Park, Chan and Small, Dylan S and Li, Fan},
  journal={Journal of the American Statistical Association},
  volume={119},
  number={548},
  pages={2959--2971},
  year={2024},
  publisher={Taylor \& Francis}
}

@article{wang2023rerandomization,
  title={Rerandomization in stratified randomized experiments},
  author={Wang, Xinhe and Wang, Tingyu and Liu, Hanzhong},
  journal={Journal of the American Statistical Association},
  volume={118},
  number={542},
  pages={1295--1304},
  year={2023},
  publisher={Taylor \& Francis}
}

@article{wang2025covariate,
  title={Covariate Adjustment Cannot Hurt: Treatment Effect Estimation under Interference with Low-Order Outcome Interactions},
  author={Wang, Xinyi and Li, Shuangning},
  journal={arXiv preprint arXiv:2509.03050},
  year={2025}
}

@article{wang2022rerandomization,
  title={Rerandomization with diminishing covariate imbalance and diverging number of covariates},
  author={Wang, Yuhao and Li, Xinran},
  journal={The Annals of Statistics},
  volume={50},
  number={6},
  pages={3439--3465},
  year={2022},
  publisher={JSTOR}
}

@article{yang2026design,
    author = {Haoyu Yang and Yichen Qin and Yan Yu and Yang Li},
    title = {Design Strategies for Networked Experiments via Interference Balancing},
    journal = {Journal of Business \& Economic Statistics},
    volume = {0},
    number = {ja},
    pages = {1--31},
    year = {2026},
    publisher = {Taylor \& Francis},
}

@article{yang2023rejective,
  title={Rejective sampling, rerandomization, and regression adjustment in survey experiments},
  author={Yang, Zihao and Qu, Tianyi and Li, Xinran},
  journal={Journal of the American Statistical Association},
  volume={118},
  number={542},
  pages={1207--1221},
  year={2023},
  publisher={Taylor \& Francis}
}

@article{zhang2024pca,
  title={{PCA} rerandomization},
  author={Zhang, Hengtao and Yin, Guosheng and Rubin, Donald B},
  journal={Canadian Journal of Statistics},
  volume={52},
  number={1},
  pages={5--25},
  year={2024},
  publisher={Wiley Online Library}
}

@article{zhao2024no,
  title={No star is good news: A unified look at rerandomization based on p-values from covariate balance tests},
  author={Zhao, Anqi and Ding, Peng},
  journal={Journal of Econometrics},
  volume={241},
  number={1},
  pages={105724},
  year={2024},
  publisher={Elsevier}
}

@article{zhao2022reconciling,
  title={Reconciling design-based and model-based causal inferences for split-plot experiments},
  author={Zhao, Anqi and Ding, Peng},
  journal={The Annals of Statistics},
  volume={50},
  number={2},
  pages={1170--1192},
  year={2022},
  publisher={Institute of Mathematical Statistics}
}

@article{zhao2024covariate,
  title={Covariate adjustment in randomized experiments with missing outcomes and covariates},
  author={Zhao, Anqi and Ding, Peng and Li, Fan},
  journal={Biometrika},
  volume={111},
  number={4},
  pages={1413--1420},
  year={2024},
  publisher={Oxford University Press}
}

@article{janson1988normal,
  title={Normal convergence by higher semiinvariants with applications to sums of dependent random variables and random graphs},
  author={Janson, Svante},
  journal={The Annals of Probability},
  pages={305--312},
  year={1988},
  publisher={JSTOR}
}

\clearpage
\appendix
\etocdepthtag.toc{appendix}

\begingroup
\etocsetnexttocdepth{subsubsection}
\etocsettagdepth{main}{none}
\etocsettagdepth{appendix}{subsubsection}
\etocsettocstyle{\section*{Appendix Contents}}{}
\begin{singlespace}
\small
\pdfbookmark[0]{Appendix Contents}{appendix-contents}
\tableofcontents
\end{singlespace}
\endgroup
\clearpage

\section{Proofs for Section~\ref{sec:asymptotic}}
\label{app:section3-proofs}

\subsection{Notation and H\'ajek linearization}
\label{app:common-linearization}

Throughout this appendix, the covariate dimension $p$ is fixed and $d_\phi=p+1$. Recall
\[
\boldsymbol\phi_i=\boldsymbol\phi_i(\mathbf Z)
=
\begin{pmatrix}
Y_i(\mathbf Z)\\
\mathbf X_i(\mathbf Z)
\end{pmatrix}
\in\mathbb R^{d_\phi}.
\]
For $z\in\{0,1\}$, define the unit-specific and population-average Bernoulli marginal means
\[
\boldsymbol\mu_{n,i,z}
=
\mathbb E\{\boldsymbol\phi_i(z,\mathbf Z_{-i})\},
\qquad
\boldsymbol\mu_{n,z}
=
\frac1n\sum_{i=1}^n\boldsymbol\mu_{n,i,z},
\]
and let
\[
\mathbf D_n
=
\operatorname{diag}\bigl(\sigma_{n,y}^{-1},\sigma_{n,x_1}^{-1},\ldots,
\sigma_{n,x_p}^{-1}\bigr).
\]
Define the influence vector
\begin{equation}
\boldsymbol\eta_{n,i}
=
\frac{Z_i}{\pi}(\boldsymbol\phi_i-\boldsymbol\mu_{n,1})
-
\frac{1-Z_i}{1-\pi}(\boldsymbol\phi_i-\boldsymbol\mu_{n,0}).
\label{eq:t1-influence}
\end{equation}
Writing
\[
\hat{\boldsymbol\tau}_n
=
\begin{pmatrix}
\hat\tau\\
\hat{\boldsymbol\tau}_x
\end{pmatrix},
\qquad
\boldsymbol\tau_n
=
\begin{pmatrix}
\tau\\
\boldsymbol\tau_x
\end{pmatrix},
\qquad
\boldsymbol\tau_x=\mathbb E(\hat{\boldsymbol\tau}_x),
\]
set
\begin{equation}
\mathbf T_n
=
\mathbf D_n(\hat{\boldsymbol\tau}_n-\boldsymbol\tau_n),
\qquad
\widetilde{\mathbf T}_n
=
\frac1n\mathbf D_n\sum_{i=1}^n\boldsymbol\eta_{n,i}.
\label{eq:t1-T-Ttilde}
\end{equation}
Because $Z_i$ is independent of $\mathbf Z_{-i}$,
\[
\mathbb E(\boldsymbol\eta_{n,i})
=
(\boldsymbol\mu_{n,i,1}-\boldsymbol\mu_{n,1})
-
(\boldsymbol\mu_{n,i,0}-\boldsymbol\mu_{n,0}),
\]
and therefore
\begin{equation}
\sum_{i=1}^n\mathbb E(\boldsymbol\eta_{n,i})=0.
\label{eq:t1-total-eta-mean-zero}
\end{equation}

For the random denominators in the H\'ajek estimator, define the standardized arm scores
\begin{equation}
\boldsymbol\zeta_{n,i}
=
\frac1{\sqrt n}
\begin{pmatrix}
\displaystyle \frac{Z_i}{\pi}\mathbf D_n(\boldsymbol\phi_i-\boldsymbol\mu_{n,1})\\[3mm]
\displaystyle \frac{1-Z_i}{1-\pi}\mathbf D_n(\boldsymbol\phi_i-\boldsymbol\mu_{n,0})
\end{pmatrix}
=
\begin{pmatrix}
\boldsymbol\zeta_{n,i}^{(1)}\\
\boldsymbol\zeta_{n,i}^{(0)}
\end{pmatrix}
\in\mathbb R^{2d_\phi},
\label{eq:t1-augmented-score}
\end{equation}
and let $\boldsymbol\zeta_{n,i}^{\circ}=\boldsymbol\zeta_{n,i}-\mathbb E(\boldsymbol\zeta_{n,i})$. The same calculation as above gives
\begin{equation}
\sum_{i=1}^n\mathbb E(\boldsymbol\zeta_{n,i}^{(z)})=0,
\qquad z=0,1.
\label{eq:t1-arm-score-total-mean-zero}
\end{equation}
For $z\in\{0,1\}$, put
\begin{equation}
\mathbf Q_{n,z}
=
\frac1n\sum_{i=1}^n\boldsymbol\zeta_{n,i}^{(z)}
=
\frac1n\sum_{i=1}^n\boldsymbol\zeta_{n,i}^{\circ,(z)},
\label{eq:t1-Qnz}
\end{equation}
where $\boldsymbol\zeta_{n,i}^{\circ,(z)}$ denotes the corresponding $d_\phi$-dimensional block of $\boldsymbol\zeta_{n,i}^{\circ}$. Under Assumptions~\ref{ass:boundedness} and \ref{ass:variance_lower_bound}, there is a constant $C<\infty$ such that
\begin{equation}
\sup_{n\ge1}\max_{i\in[n]}
\left\{\|\boldsymbol\zeta_{n,i}\|_2+\|\boldsymbol\zeta_{n,i}^{\circ}\|_2\right\}
\le C.
\label{eq:t1-zeta-uniform}
\end{equation}
We will also use the elementary inequality
\begin{equation}
|\operatorname{Cov}(U,W)|
\le
2\|W\|_\infty\,\mathbb E|U|
\label{eq:t1-elementary-cov-bound}
\end{equation}
for integrable $U$ and bounded $W$.

\begin{lemma}[$L^2$ H\'ajek linearization]
\label{lem:t1-hajek-linearization}
Under Assumptions~\ref{ass:variance_stability}--\ref{ass:variance_lower_bound}, suppose that, for $z\in\{0,1\}$,
\[
\mathbb E\|\mathbf Q_{n,z}\|_2^2\to0.
\]
Then
\begin{equation}
\mathbb E\|\mathbf T_n-\widetilde{\mathbf T}_n\|_2^2\to0. \label{eq:t1-L2-linearization}
\end{equation}
Consequently,
\begin{equation}
\mathbf T_n-\widetilde{\mathbf T}_n=o_p(1) \qquad\text{and}\qquad
\operatorname{Var}(\widetilde{\mathbf T}_n)\to \mathbf V.
\label{eq:t1-variance-equivalence}
\end{equation}
\end{lemma}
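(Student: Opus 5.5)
The plan is to write $\hat{\boldsymbol\tau}_n$ exactly in terms of the arm averages $\mathbf Q_{n,z}$ and the random treated fraction $\hat p_1=n_1/n$ (with $\hat p_0=1-\hat p_1$). The difference $\mathbf T_n-\widetilde{\mathbf T}_n$ then becomes a product of a denominator fluctuation of order $n^{-1/2}$ and $\mathbf Q_{n,z}$, and the hypothesis $\mathbb E\|\mathbf Q_{n,z}\|_2^2\to0$ finishes the argument.

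\textbf{Step 1: exact algebra.} Set $\boldsymbol\tau_n^\circ=\boldsymbol\mu_{n,1}-\boldsymbol\mu_{n,0}$. Its first coordinate is $\tau$, because under i.i.d.\ Bernoulli assignment the EATE equals $\frac1n\sum_i(\mu_{n,i,1}-\mu_{n,i,0})$ in the outcome coordinate. On the event $\{n_1\ge1,n_0\ge1\}$,
\[
\mathbf D_n(\bar{\boldsymbol\phi}_1-\boldsymbol\mu_{n,1})=\frac{\pi}{\hat p_1}\,\frac1n\sum_{i=1}^n\frac{Z_i}{\pi}\mathbf D_n(\boldsymbol\phi_i-\boldsymbol\mu_{n,1})=\frac{\pi}{\hat p_1}\sqrt n\,\mathbf Q_{n,1},
\]
and the control arm is analogous. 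Put $\mathbf T_n^\circ=\mathbf D_n(\hat{\boldsymbol\tau}_n-\boldsymbol\tau_n^\circ)$. Since $\widetilde{\mathbf T}_n=\sqrt n(\mathbf Q_{n,1}-\mathbf Q_{n,0})$, we get
\[
\mathbf T_n^\circ-\widetilde{\mathbf T}_n=\frac{\sqrt n(\pi-\hat p_1)}{\hat p_1}\mathbf Q_{n,1}-\frac{\sqrt n(\pi-\hat p_1)}{\hat p_0}\mathbf Q_{n,0},
\]
using $(1-\pi)-\hat p_0=\hat p_1-\pi$.

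\textbf{Step 2: the $L^2$ bound.} Let $G_n=\{|\hat p_1-\pi|\le\min(\pi,1-\pi)/2\}$. On $G_n$ both denominators are bounded below by a constant. By Cauchy--Schwarz, the $G_n$-part of $\mathbb E\|\mathbf T_n^\circ-\widetilde{\mathbf T}_n\|_2^2$ is at most a constant times
\[
\{\mathbb E[n^2(\hat p_1-\pi)^4]\}^{1/2}\,\{\mathbb E\|\mathbf Q_{n,z}\|_2^4\}^{1/2}.
\]
The first factor is bounded by the binomial fourth moment. For the second, \eqref{eq:t1-zeta-uniform} gives $\|\mathbf Q_{n,z}\|_2\le C$, so $\mathbb E\|\mathbf Q_{n,z}\|_2^4\le C^2\,\mathbb E\|\mathbf Q_{n,z}\|_2^2\to0$. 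On $G_n^c$ we use crude bounds instead. By Assumptions~\ref{ass:boundedness} and \ref{ass:variance_lower_bound}, $\|\mathbf D_n\|\le\sqrt n/C$, so $\|\mathbf T_n^\circ\|_2$ and $\|\widetilde{\mathbf T}_n\|_2$ are $O(\sqrt n)$ deterministically. This includes the zero-denominator convention, under which the estimator is set to $0$. Hoeffding's inequality gives $\mathbb P(G_n^c)\le2e^{-cn}$, so this part is $O(ne^{-cn})\to0$. Hence $\mathbb E\|\mathbf T_n^\circ-\widetilde{\mathbf T}_n\|_2^2\to0$.

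\textbf{Step 3: recentering at $\boldsymbol\tau_x=\mathbb E\hat{\boldsymbol\tau}_x$.} The only difference between $\mathbf T_n$ and $\mathbf T_n^\circ$ is the deterministic shift $\mathbf D_n(\boldsymbol\tau_n-\boldsymbol\tau_n^\circ)$, and I expect this to be the only genuinely delicate point. By \eqref{eq:t1-total-eta-mean-zero}, $\mathbb E\widetilde{\mathbf T}_n=0$. The $\mathbf X$-coordinates of $\mathbf D_n(\boldsymbol\tau_n-\boldsymbol\tau_n^\circ)$ equal $\mathbb E(\mathbf T_n^\circ-\widetilde{\mathbf T}_n)$ restricted to those coordinates, so Jensen and Step 2 bound them by $(\mathbb E\|\mathbf T_n^\circ-\widetilde{\mathbf T}_n\|_2^2)^{1/2}\to0$. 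The $Y$-coordinate of the shift is zero by construction. Therefore \eqref{eq:t1-L2-linearization} holds.

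\textbf{Step 4: consequences.} Convergence in probability, $\mathbf T_n-\widetilde{\mathbf T}_n=o_p(1)$, follows from Markov's inequality. For the variance statement, Assumption~\ref{ass:variance_stability}(i)--(ii) says exactly that $\operatorname{Var}(\mathbf T_n)=\mathbf D_n\operatorname{Var}(\hat{\boldsymbol\tau}_n)\mathbf D_n\to\mathbf V$. Since $\mathbf T_n$ is mean zero, $\mathbb E\|\mathbf T_n\|_2^2=\operatorname{tr}\operatorname{Var}(\mathbf T_n)$ stays bounded. Write $\boldsymbol\Delta_n=\mathbf T_n-\widetilde{\mathbf T}_n$. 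Then
\[
\|\operatorname{Var}(\mathbf T_n)-\operatorname{Var}(\widetilde{\mathbf T}_n)\|\le 2\{\mathbb E\|\mathbf T_n\|_2^2\}^{1/2}\{\mathbb E\|\boldsymbol\Delta_n\|_2^2\}^{1/2}+\mathbb E\|\boldsymbol\Delta_n\|_2^2\to0,
\]
which gives $\operatorname{Var}(\widetilde{\mathbf T}_n)\to\mathbf V$.
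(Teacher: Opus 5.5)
Your argument is correct and is essentially the paper's proof. It uses the exact expansion via $\pi/\hat p_1$, Cauchy--Schwarz against the binomial fourth moment together with $\mathbb E\|\mathbf Q_{n,z}\|_2^4\le C^2\,\mathbb E\|\mathbf Q_{n,z}\|_2^2$, and an exponential bound on the bad event; you fold the empty-arm convention into that event, whereas the paper isolates it as a separate term $\mathbf H_n$. It then recenters the covariate block by the expected remainder and compares variances by Cauchy--Schwarz.

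There are two harmless slips. First, the $\mathbf Q_{n,0}$ term in Step~1 should carry a plus sign, which is irrelevant since only norms are used. Second, $\mathbf T_n$ is not exactly mean zero, because its first coordinate is centered at $\tau$ rather than $\mathbb E\hat\tau$. However, $\mathbb E\|\mathbf T_n\|_2^2$ stays bounded because $\mathbb E\mathbf T_n=\mathbb E\boldsymbol\Delta_n\to\mathbf 0$; alternatively, use $\|\operatorname{Var}(\mathbf T_n)\|^{1/2}$ in the covariance bound, as the paper does.
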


\begin{proof}
By~\eqref{eq:t1-zeta-uniform}, $\|\mathbf Q_{n,z}\|_2\le C$ almost surely. Hence the assumed second-moment convergence also gives
\begin{equation}
\mathbb E\|\mathbf Q_{n,z}\|_2^4
\le C^2\mathbb E\|\mathbf Q_{n,z}\|_2^2
\to0.
\label{eq:t1-Q-fourth-generic}
\end{equation}
Let
\[
\hat\pi_n=\frac1n\sum_{i=1}^nZ_i.
\]
To keep track of the convention on the two extreme assignments, define
\[
\hat{\boldsymbol\tau}_n^{\mathrm{arm}}
=
\frac{\sum_iZ_i\boldsymbol\phi_i}{\sum_iZ_i}
-
\frac{\sum_i(1-Z_i)\boldsymbol\phi_i}{\sum_i(1-Z_i)},
\]

where each ratio is set to zero when its denominator vanishes. The estimator $\hat{\boldsymbol\tau}_n$ in the main text additionally sets the entire contrast to zero whenever one arm is empty. Hence, with
\[
\mathbf H_n=\hat{\boldsymbol\tau}_n-\hat{\boldsymbol\tau}_n^{\mathrm{arm}},
\]
$\mathbf H_n$ is supported on $\{\hat\pi_n\in\{0,1\}\}$. Assumptions~\ref{ass:boundedness} and~\ref{ass:variance_lower_bound} therefore give
\begin{equation}
\mathbb E\|\mathbf D_n\mathbf H_n\|_2^2
\le Cn\{\pi^n+(1-\pi)^n\}
\to0.
\label{eq:t1-extreme-assignment-difference}
\end{equation}
Define the treated-arm Hájek mean, with its zero-denominator ratio set to zero, by
\[
\bar{\boldsymbol{\phi}}_{n,1} = \frac{\sum_iZ_i\boldsymbol\phi_i}{\sum_iZ_i}.
\]
Set
\begin{equation}
\mathbf A_{n,1} = \frac1n\sum_{i=1}^n \frac{Z_i}{\pi} (\boldsymbol\phi_i-\boldsymbol\mu_{n,1}).
\label{eq:t1-An1}
\end{equation}
On the event $\{\hat\pi_n>0\}$,
\begin{equation}
\bar{\boldsymbol{\phi}}_{n,1}-\boldsymbol\mu_{n,1} = \frac{\pi}{\hat\pi_n}\mathbf A_{n,1} =
\mathbf A_{n,1}+\mathbf R_{n,1}, \qquad \mathbf R_{n,1} = \left(\frac{\pi}{\hat\pi_n}-1\right)
\mathbf A_{n,1}. \label{eq:t1-treated-exact}
\end{equation}
By \eqref{eq:t1-augmented-score} and \eqref{eq:t1-Qnz},
\begin{equation}
\mathbf D_n\mathbf A_{n,1} = \sqrt n\,\mathbf Q_{n,1}. \label{eq:t1-DA-Q}
\end{equation}

Define the event
\[
\mathcal E_{n,1} = \left\{\hat\pi_n\ge\frac\pi2\right\}.
\]
On $\mathcal E_{n,1}$,
\[
\left|\frac\pi{\hat\pi_n}-1\right| = \frac{|\hat\pi_n-\pi|}{\hat\pi_n} \le
\frac2\pi|\hat\pi_n-\pi|.
\]
Using \eqref{eq:t1-DA-Q},
\begin{equation}
\|\mathbf D_n\mathbf R_{n,1}\|_2^2 \mathbf 1\{\mathcal E_{n,1}\} \le C n(\hat\pi_n-\pi)^2
\|\mathbf Q_{n,1}\|_2^2. \label{eq:t1-good-rem-bound}
\end{equation}
The fourth central moment of the binomial average $\hat \pi_n$ satisfies
\begin{equation}
\mathbb E(\hat\pi_n-\pi)^4=\mathcal O(n^{-2}), \qquad \sup_n\mathbb E\{n^2(\hat\pi_n-\pi)^4\}<\infty.
\label{eq:t1-binomial-fourth}
\end{equation}
Thus, by Cauchy--Schwarz, \eqref{eq:t1-good-rem-bound}, and \eqref{eq:t1-Q-fourth-generic},
\begin{align}
&\mathbb E\left[ \|\mathbf D_n\mathbf R_{n,1}\|_2^2 \mathbf 1\{\mathcal E_{n,1}\} \right] \notag\\
&\quad\le C \left[ \mathbb E\{n^2(\hat\pi_n-\pi)^4\} \right]^{1/2} \left[ \mathbb E\|\boldsymbol
Q_{n,1}\|_2^4 \right]^{1/2} \to0. \label{eq:t1-good-rem-expectation}
\end{align}

For the complement, a Chernoff bound gives
\begin{equation}
\mathbb P(\mathcal E_{n,1}^c)\le e^{-c_1n} \label{eq:t1-chernoff-treated}
\end{equation}
for some $c_1>0$.
\[
\mathbf R_{n,1} = \bar{\boldsymbol{\phi}}_{n,1} -\boldsymbol\mu_{n,1} -\mathbf A_{n,1}
\]
is uniformly bounded: each of the three displayed vectors is uniformly bounded under Assumption~\ref{ass:boundedness}. Assumption~\ref{ass:variance_lower_bound} gives $\|\mathbf D_n\|_{\mathrm{op}}\le C\sqrt n$. Therefore,
\begin{equation}
\mathbb E\left[ \|\mathbf D_n\mathbf R_{n,1}\|_2^2 \mathbf 1\{\mathcal E_{n,1}^c\} \right] \le Cn
e^{-c_1n} \to0. \label{eq:t1-bad-rem-expectation}
\end{equation}
Combining \eqref{eq:t1-good-rem-expectation} and \eqref{eq:t1-bad-rem-expectation},
\begin{equation}
\mathbb E\|\mathbf D_n\mathbf R_{n,1}\|_2^2\to0. \label{eq:t1-treated-rem-L2}
\end{equation}

For the control arm, define
\[
\bar{\boldsymbol{\phi}}_{n,0}
=
\frac{\sum_i(1-Z_i)\boldsymbol\phi_i}{\sum_i(1-Z_i)},
\qquad
\mathbf A_{n,0}
=
\frac1n\sum_{i=1}^n
\frac{1-Z_i}{1-\pi}
(\boldsymbol\phi_i-\boldsymbol\mu_{n,0}),
\]
and
\[
\mathbf R_{n,0}
=
\bar{\boldsymbol{\phi}}_{n,0}
-\boldsymbol\mu_{n,0}
-\mathbf A_{n,0}.
\]
The same argument, applied on $\{1-\hat\pi_n\ge(1-\pi)/2\}$, gives
\begin{equation}
\mathbb E\|\mathbf D_n\mathbf R_{n,0}\|_2^2\to0. \label{eq:t1-control-rem-L2}
\end{equation}

Subtracting the two arm expansions gives
\begin{equation}
\hat{\boldsymbol\tau}_n^{\mathrm{arm}} -(\boldsymbol\mu_{n,1}-\boldsymbol\mu_{n,0}) =
\frac1n\sum_{i=1}^n\boldsymbol\eta_{n,i} +\mathbf R_n^{\circ},
\qquad
\mathbf R_n^{\circ}=\mathbf R_{n,1}-\mathbf R_{n,0},
\label{eq:t1-pretarget-expansion}
\end{equation}
and
\begin{equation}
\mathbb E\|\mathbf D_n\mathbf R_n^{\circ}\|_2^2\to0.
\label{eq:t1-pretarget-rem-L2}
\end{equation}

The first coordinate of $\boldsymbol\mu_{n,1}-\boldsymbol\mu_{n,0}$ is precisely the EATE $\tau$. For the covariate coordinates, however, the target in the Mahalanobis criterion is $\boldsymbol\tau_x=\mathbb E(\hat{\boldsymbol\tau}_x)$. Recall from~\eqref{eq:t1-total-eta-mean-zero} that $\sum_{i=1}^n\mathbb E(\boldsymbol\eta_{n,i})=0$. Taking expectations in the covariate block of \eqref{eq:t1-pretarget-expansion}, and using $\hat{\boldsymbol\tau}_n=\hat{\boldsymbol\tau}_n^{\mathrm{arm}}+\mathbf H_n$, therefore yields
\begin{equation}
\boldsymbol\tau_x
-
(\boldsymbol\mu_{n,1,x}-\boldsymbol\mu_{n,0,x})
=
\mathbb E(\mathbf R_{n,x}^{\circ}+\mathbf H_{n,x}).
\label{eq:t1-covariate-center-shift}
\end{equation}

Define the final remainder
\begin{equation}
\mathbf R_n
=
\begin{pmatrix}
R_{n,y}^{\circ}+H_{n,y}\\
\mathbf R_{n,x}^{\circ}+\mathbf H_{n,x}
-\mathbb E(\mathbf R_{n,x}^{\circ}+\mathbf H_{n,x})
\end{pmatrix}.
\label{eq:t1-final-remainder}
\end{equation}
Equations~\eqref{eq:t1-pretarget-expansion} and \eqref{eq:t1-covariate-center-shift} give the exact expansion
\begin{equation}
\hat{\boldsymbol\tau}_n-\boldsymbol\tau_n = \frac1n\sum_{i=1}^n\boldsymbol\eta_{n,i} +\mathbf
R_n. \label{eq:t1-final-expansion}
\end{equation}
Centering the covariate block can only reduce its second moment, so
\begin{align}
\mathbb E\|\mathbf D_n\mathbf R_n\|_2^2
&\le
\mathbb E\|\mathbf D_n(\mathbf R_n^{\circ}+\mathbf H_n)\|_2^2 \notag\\
&\le
2\mathbb E\|\mathbf D_n\mathbf R_n^{\circ}\|_2^2
+2\mathbb E\|\mathbf D_n\mathbf H_n\|_2^2
\to0.
\label{eq:t1-final-rem-L2}
\end{align}
Equations~\eqref{eq:t1-T-Ttilde}, \eqref{eq:t1-final-expansion}, and \eqref{eq:t1-final-rem-L2} prove \eqref{eq:t1-L2-linearization}.

For variance equivalence, set $\boldsymbol\Delta_n=\mathbf T_n-\widetilde{\mathbf T}_n$. The $L^2$ convergence gives
\[
\|\operatorname{Var}(\boldsymbol\Delta_n)\|_{\mathrm{op}}\to0.
\]
Assumption~\ref{ass:variance_stability} gives $\operatorname{Var}(\mathbf T_n)\to\mathbf V$, and hence
\[
\|\operatorname{Cov}(\mathbf T_n,\boldsymbol\Delta_n)\|_{\mathrm{op}} \le
\|\operatorname{Var}(\mathbf T_n)\|_{\mathrm{op}}^{1/2}
\|\operatorname{Var}(\boldsymbol\Delta_n)\|_{\mathrm{op}}^{1/2} \to0.
\]
Since $\widetilde{\mathbf T}_n=\mathbf T_n-\boldsymbol\Delta_n$,
\[
\operatorname{Var}(\widetilde{\mathbf T}_n) = \operatorname{Var}(\mathbf T_n)+o(1) \to
\mathbf V.
\]
This proves \eqref{eq:t1-variance-equivalence}.
\end{proof}

\subsection{Proof of Theorem~\ref{thm:joint-normality}}
\label{app:proof-theorem-1}

Throughout this subsection, let $\Delta_n$ denote the maximum degree of $\mathcal G_n^{\mathrm{dep}}$; equivalently,
\[
\Delta_n
=
\max_{i\in[n]}d_{n,i}^{\mathrm{dep}}
=
\max_{i\in[n]}\sum_{j=1}^n A_{n,ij}^{\mathrm{dep}}
\,.
\]

\begin{lemma}[Arm-score law of large numbers under a dependency graph]
\label{lem:dg-arm-score-lln}
Under Assumptions~\ref{ass:boundedness}--\ref{ass:network_sparsity}, for $z\in\{0,1\}$,
\begin{equation}
\mathbb E\|\mathbf Q_{n,z}\|_2^2
\le
C\frac{\Delta_n}{n}
=o(1).
\label{eq:dg-Q-second}
\end{equation}
\end{lemma}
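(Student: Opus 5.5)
Since $\sum_i \mathbb E(\boldsymbol\zeta_{n,i}^{(z)})=0$ by \eqref{eq:t1-arm-score-total-mean-zero}, $\mathbf Q_{n,z}=n^{-1}\sum_i\boldsymbol\zeta_{n,i}^{\circ,(z)}$ is a normalized sum of mean-zero vectors. Hence
\[
\mathbb E\|\mathbf Q_{n,z}\|_2^2=\frac1{n^2}\sum_{i=1}^n\sum_{j=1}^n \mathbb E\bigl\langle \boldsymbol\zeta_{n,i}^{\circ,(z)},\boldsymbol\zeta_{n,j}^{\circ,(z)}\bigr\rangle .
\]
The plan is to show that only the pairs $(i,j)$ with $A^{\mathrm{dep}}_{n,ij}=1$ contribute. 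Each such pair contributes a term bounded by a constant, and this yields the bound $\Delta_n/n$.

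\textbf{Step 1: vanishing of non-adjacent terms.} Each $\boldsymbol\zeta_{n,i}^{(z)}$ is a deterministic (measurable) function of $\mathbf W_{n,i}=(Z_i,\boldsymbol\phi_i(\mathbf Z)^\top)^\top$. Concretely, $\boldsymbol\zeta_{n,i}^{(1)}=n^{-1/2}(Z_i/\pi)\mathbf D_n(\boldsymbol\phi_i-\boldsymbol\mu_{n,1})$, with fixed centering constants, and analogously for $z=0$. If $i\neq j$ and $A^{\mathrm{dep}}_{n,ij}=0$, apply Definition~\ref{def:dependency-graph} with $\mathcal I=\{i\}$ and $\mathcal J=\{j\}$. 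This gives $\mathbf W_{n,i}\indep\mathbf W_{n,j}$ under the Bernoulli design, and Assumption~\ref{ass:network_sparsity} guarantees that this graph is a valid dependency graph. Hence $\boldsymbol\zeta_{n,i}^{\circ,(z)}$ and $\boldsymbol\zeta_{n,j}^{\circ,(z)}$ are independent and mean zero, so the cross term vanishes.

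\textbf{Step 2: bounding the remaining terms.} By the uniform bound \eqref{eq:t1-zeta-uniform}, which relies on Assumption~\ref{ass:boundedness} for $\|\boldsymbol\phi_i\|_\infty\le C_\phi$ and on Assumption~\ref{ass:variance_lower_bound} for $\|\mathbf D_n\|_{\mathrm{op}}\le C\sqrt n$, each surviving term satisfies $|\mathbb E\langle\cdot,\cdot\rangle|\le C^2$. The $n^{-1/2}$ in the definition of $\boldsymbol\zeta$ exactly offsets the $\sqrt n$ from $\mathbf D_n$, which is why the constant does not grow with $n$. Counting pairs then gives
\[
\mathbb E\|\mathbf Q_{n,z}\|_2^2\le \frac{C^2}{n^2}\sum_{i=1}^n d^{\mathrm{dep}}_{n,i}\le C^2\frac{\Delta_n}{n}.
\]

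\textbf{Step 3: conclusion.} Assumption~\ref{ass:network_sparsity} gives $\Delta_n=\mathcal O(n^{1/2-\delta})$, so $\Delta_n/n\to0$.

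There is no real obstacle in this argument. The one point needing care is Step 1: the independence must be justified through the augmented vectors $\mathbf W_{n,i}$, which include $Z_i$, rather than through $\boldsymbol\phi_i$ alone, because the score involves $Z_i$ explicitly. This is exactly why the dependency graph is defined on $\mathbf W_{n,i}$.
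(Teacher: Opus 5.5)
Your proposal is correct and follows the paper's proof essentially step for step. You expand $\mathbb E\|\mathbf Q_{n,z}\|_2^2$ as a double sum over centered scores, kill the non-adjacent pairs via the dependency graph on $\mathbf W_{n,i}$, bound the surviving $\le n\Delta_n$ pairs by a constant using \eqref{eq:t1-zeta-uniform}, and finish with Assumption~\ref{ass:network_sparsity}. The only cosmetic difference is that you work with vector inner products, whereas the paper argues coordinate by coordinate and then sums over the fixed $d_\phi$ coordinates.
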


\begin{proof}
Since $\boldsymbol\zeta_{n,i}^{\circ,(z)}$ is a measurable function of $\mathbf W_{n,i}$, $\mathcal G_n^{\mathrm{dep}}$ is also a dependency graph for the collection $\{\boldsymbol\zeta_{n,i}^{\circ,(z)}:i\in[n]\}$. Fix a coordinate $\ell\in[d_\phi]$. By \eqref{eq:t1-Qnz},
\[
\mathbb E(Q_{n,z,\ell}^2)
=
\frac1{n^2}\sum_{i=1}^n\sum_{j=1}^n
\operatorname{Cov}\bigl(\zeta_{n,i,\ell}^{\circ,(z)},
\zeta_{n,j,\ell}^{\circ,(z)}\bigr).
\]
The covariance is zero whenever $A_{n,ij}^{\mathrm{dep}}=0$. By~\eqref{eq:t1-zeta-uniform}, every remaining covariance is bounded in absolute value by a constant. Since there are at most $\sum_i d_{n,i}^{\mathrm{dep}}\le n\Delta_n$ ordered pairs with $A_{n,ij}^{\mathrm{dep}}=1$,
\[
\mathbb E(Q_{n,z,\ell}^2)
\le
C\frac{\Delta_n}{n}.
\]
Summing over the fixed number $d_\phi$ of coordinates proves the displayed bound. Assumption ~\ref{ass:network_sparsity} implies $\Delta_n/n\to0$.
\end{proof}

We use the following dependency-graph central limit theorem.

\begin{lemma}[Janson's dependency-graph criterion]
\label{lem:janson-dependency-clt}
For each $n$, let $X_{n,1},\ldots,X_{n,N_n}$ have a dependency graph whose maximum degree is at most $D_n$. Suppose $|X_{n,i}|\le L_n$ almost surely, and write $S_n=\sum_{i=1}^{N_n}X_{n,i}$ and $s_n^2=\operatorname{Var}(S_n)$. If, for some fixed integer $r\ge3$,
\[
N_nD_n^{r-1}\left(\frac{L_n}{s_n}\right)^r\to0,
\]
then $(S_n-\mathbb E S_n)/s_n\xrightarrow{d}N(0,1)$.
\end{lemma}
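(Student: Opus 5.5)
We would follow Janson's method of cumulants (\citealp{janson1988normal}). Set $\tilde S_n=(S_n-\mathbb E S_n)/s_n$, so that $\kappa_1(\tilde S_n)=0$ and $\kappa_2(\tilde S_n)=1$. The target is to show $\kappa_k(\tilde S_n)\to0$ for every $k\ge r$, and then to invoke a result that turns vanishing of all cumulants from some order onward into normal convergence.

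\textbf{Step 1: cumulant bound under a dependency graph.} We expand the joint cumulant by multilinearity, $\kappa_k(S_n)=\sum_{i_1,\ldots,i_k}\kappa(X_{n,i_1},\ldots,X_{n,i_k})$. A mixed cumulant vanishes whenever its index set splits into two groups with no dependency-graph edge between them, since the two groups are then independent. So only index tuples that are connected in the dependency graph contribute. Such a tuple can be built sequentially: choose the first index ($N_n$ ways), then attach each further index to a neighbor of an index already chosen (at most $kD_n$ ways each). This gives at most $C_kN_nD_n^{k-1}$ tuples. Each mixed cumulant is bounded by $C_kL_n^k$, via the moment--cumulant formula and $|X_{n,i}|\le L_n$. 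Hence
\[
|\kappa_k(\tilde S_n)|\le C_kN_nD_n^{k-1}(L_n/s_n)^k .
\]

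\textbf{Step 2: from order $r$ to all orders $k\ge r$.} The same edge count applied to $k=2$ gives $s_n^2\le N_nD_nL_n^2$, so $N_n\ge s_n^2/(D_nL_n^2)$. Substituting this into the hypothesis gives
\[
N_nD_n^{r-1}(L_n/s_n)^r\ge (D_nL_n/s_n)^{r-2}.
\]
Since $r\ge3$, it follows that $D_nL_n/s_n\to0$. For $k\ge r$ we then write
\[
N_nD_n^{k-1}(L_n/s_n)^k=N_nD_n^{r-1}(L_n/s_n)^r\,(D_nL_n/s_n)^{k-r}\to0 .
\]
Combined with Step 1, this shows $\kappa_k(\tilde S_n)\to0$ for all $k\ge r$.

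\textbf{Step 3: closing the argument, the main obstacle.} The cumulants of orders $3\le k<r$ are not directly controlled, so the classical method of moments does not apply immediately. We would use Janson's Theorem 1 in \citet{janson1988normal}: if $\mathbb E\tilde S_n=0$, $\operatorname{Var}\tilde S_n=1$ and $\kappa_k(\tilde S_n)\to0$ for all $k\ge r$, then $\tilde S_n\xrightarrow{d}N(0,1)$.

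If this result has to be reproduced rather than cited, the plan has three parts. First, the variance normalization together with the fact that the cumulants of order at least $r$ are bounded gives uniformly bounded moments of every order, hence tightness and subsequential limits whose moments are limits of the moments of $\tilde S_n$. Second, along any subsequence the intermediate cumulants converge, so every limit law $\mu$ has cumulants vanishing from order $r$ on; that is, $\log$ of its characteristic function is a polynomial of degree less than $r$ with quadratic term $-t^2/2$. Third, Marcinkiewicz's theorem says such a characteristic function must be Gaussian, so all intermediate cumulants vanish in the limit and $\mu=N(0,1)$. Uniqueness of the limit then gives convergence of the whole sequence. Justifying the moment convergence along subsequences, which needs uniform integrability from the higher cumulant bounds, is the delicate part of this reproduction.
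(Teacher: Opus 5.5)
Your proposal is correct and follows essentially the paper's route, since the paper simply invokes \citet{janson1988normal}. Your Steps~1--2 reproduce Janson's dependency-graph cumulant bound and the passage to all orders $k\ge r$. Two details in that reproduction are harmless: the count picks up an extra factor $k!$, because a connected index tuple need not be listed in a valid build order, and it needs $D_n\ge1$, as in Janson's convention. Step~3 is exactly his Theorem~1.

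The one slip is in your optional self-contained reproduction of that theorem. Unit variance plus vanishing cumulants of order $\ge r$ does not by itself bound the moments, because $\kappa_3,\dots,\kappa_{r-1}$ enter them. So the genuinely delicate step is showing these intermediate cumulants stay bounded. One way is to rescale by $c_n=\max_{3\le j<r}|\kappa_j(\tilde S_n)|^{1/j}$. If $c_n\to\infty$ along a subsequence, then $\tilde S_n/c_n$ has bounded cumulants, one of modulus one. Along a further subsequence these converge to the cumulants of a limit law with zero variance, which is a contradiction. After that, uniform integrability and the Marcinkiewicz step are routine.
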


Lemma~\ref{lem:janson-dependency-clt} is a result in \citet{janson1988normal}. We now verify it for every Cram\'er--Wold projection.

\begin{proof}[Proof of Theorem~\ref{thm:joint-normality}]
Lemma~\ref{lem:dg-arm-score-lln} verifies the condition of Lemma~\ref{lem:t1-hajek-linearization}. Hence
\begin{equation}
\mathbf T_n-\widetilde{\mathbf T}_n=o_p(1),
\qquad
\operatorname{Var}(\widetilde{\mathbf T}_n)\to\mathbf V.
\label{eq:dg-linearization}
\end{equation}

Fix a nonzero vector $\boldsymbol\lambda\in\mathbb R^{d_\phi}$ and define
\begin{equation}
X_{n,i}^{(\boldsymbol\lambda)}
=
\frac1{\sqrt n}\boldsymbol\lambda^\top\mathbf D_n
\left\{\boldsymbol\eta_{n,i}-\mathbb E(\boldsymbol\eta_{n,i})\right\},
\qquad
S_{n,\boldsymbol\lambda}
=
\sum_{i=1}^nX_{n,i}^{(\boldsymbol\lambda)}.
\label{eq:dg-summands}
\end{equation}
Because $X_{n,i}^{(\boldsymbol\lambda)}$ is a measurable function of $\mathbf W_{n,i}$, $\mathcal G_n^{\mathrm{dep}}$ is a dependency graph for this scalar array. Assumptions ~\ref{ass:boundedness} and~\ref{ass:variance_lower_bound} imply
\begin{equation}
\max_{i\in[n]}|X_{n,i}^{(\boldsymbol\lambda)}|
\le C_{\boldsymbol\lambda}
\label{eq:dg-summand-bound}
\end{equation}
for a constant independent of $n$: each coordinate of $\boldsymbol\eta_{n,i}$ is uniformly bounded, and $(\sqrt n\,\sigma_{n,y})^{-1}$ and $(\sqrt n\,\sigma_{n,x_j})^{-1}$ are uniformly bounded.

By~\eqref{eq:t1-total-eta-mean-zero},
\begin{equation}
S_{n,\boldsymbol\lambda}
=
\frac1{\sqrt n}\boldsymbol\lambda^\top\mathbf D_n\sum_{i=1}^n\boldsymbol\eta_{n,i}
=
\sqrt n\,\boldsymbol\lambda^\top\widetilde{\mathbf T}_n.
\label{eq:dg-S-Ttilde}
\end{equation}
Let $s_{n,\boldsymbol\lambda}^2=\operatorname{Var}(S_{n,\boldsymbol\lambda})$. From \eqref{eq:dg-linearization} and~\eqref{eq:dg-S-Ttilde},
\begin{equation}
\frac{s_{n,\boldsymbol\lambda}^2}{n}
=
\operatorname{Var}(\boldsymbol\lambda^\top\widetilde{\mathbf T}_n)
\to
\boldsymbol\lambda^\top\mathbf V\boldsymbol\lambda
=:
v_{\boldsymbol\lambda}>0.
\label{eq:dg-variance-rate}
\end{equation}
Thus $s_{n,\boldsymbol\lambda}\asymp\sqrt n$.

Let $\delta>0$ be the constant in Assumption~\ref{ass:network_sparsity}, and choose a fixed integer $r\ge3$ such that $(r-1)\delta>1/2$. After deleting the self-loops, the ordinary maximum degree is at most $\Delta_n$. Applying Lemma~\ref{lem:janson-dependency-clt} with $N_n=n$, $D_n=\Delta_n$, and $L_n=C_{\boldsymbol\lambda}$, the required quantity is bounded by
\begin{align*}
n\Delta_n^{r-1}
\left(\frac{C_{\boldsymbol\lambda}}{s_{n,\boldsymbol\lambda}}\right)^r
&\le
C n\,n^{(r-1)(1/2-\delta)}n^{-r/2}\\
&=
C n^{1/2-(r-1)\delta}
\to0.
\end{align*}
Therefore,
\[
\frac{S_{n,\boldsymbol\lambda}}{s_{n,\boldsymbol\lambda}}
\xrightarrow{d}N(0,1).
\]
Combining this with~\eqref{eq:dg-S-Ttilde} and~\eqref{eq:dg-variance-rate} yields
\[
\boldsymbol\lambda^\top\widetilde{\mathbf T}_n
=
\frac{s_{n,\boldsymbol\lambda}}{\sqrt n}
\frac{S_{n,\boldsymbol\lambda}}{s_{n,\boldsymbol\lambda}}
\xrightarrow{d}
N\bigl(0,\boldsymbol\lambda^\top\mathbf V\boldsymbol\lambda\bigr).
\]
By~\eqref{eq:dg-linearization} and Slutsky's theorem, the same limit holds with $\mathbf T_n$ in place of $\widetilde{\mathbf T}_n$. The Cram\'er--Wold theorem then gives
\[
\mathbf D_n(\hat{\boldsymbol\tau}_n-\boldsymbol\tau_n)
\xrightarrow{d}N(\boldsymbol 0,\mathbf V),
\]
which proves the result.
\end{proof}

\subsection{Proof of Theorem~\ref{thm:rerand_dist}}

By Theorem~\ref{thm:joint-normality}, the standardized joint vector
\begin{equation}\label{eq:Wn_def}
\mathbf W_n
:=
\begin{pmatrix}
W_{1n}\\
\widetilde{\mathbf X}_n
\end{pmatrix}
=
\begin{pmatrix}
\sigma_{n,y}^{-1}(\hat\tau-\tau)\\[2pt]
\mathbf D_{n,x}(\tauxhat-\boldsymbol\tau_x)
\end{pmatrix}
\xrightarrow{d}
\begin{pmatrix}
W_1^\ast\\
\widetilde{\mathbf X}^\ast
\end{pmatrix}
\sim
 N(\boldsymbol 0,\mathbf V),
\end{equation}
where
\[
\mathbf D_{n,x}
=
\operatorname{diag}
\bigl(\sigma_{n,x_1}^{-1},\ldots,\sigma_{n,x_p}^{-1}\bigr).
\]
In particular, $V_{11}=1$, and Assumption~\ref{ass:variance_stability} gives
\[
\mathbf B_n
:=
\mathbf D_{n,x}\boldsymbol\Sigma_{xx}\mathbf D_{n,x}
\to
\mathbf V_{22}.
\]
Since $\mathbf V_{22}\succ0$, $\mathbf B_n$ is positive definite for all sufficiently large $n$. The acceptance criterion can then be written as
\begin{equation}\label{eq:Mn_in_W}
M_n
=
\widetilde{\mathbf X}_n^\top
\mathbf B_n^{-1}
\widetilde{\mathbf X}_n.
\end{equation}
It follows from~\eqref{eq:Wn_def}, \eqref{eq:Mn_in_W}, Slutsky's theorem, and the continuous mapping theorem that
\[
(W_{1n},M_n)
\xrightarrow{d}
(W_1^\ast,M^\ast),
\qquad
M^\ast
:=
\widetilde{\mathbf X}^{\ast\top}
\mathbf V_{22}^{-1}
\widetilde{\mathbf X}^\ast
\sim\chi_p^2.
\]

We next justify the passage to the conditional law. Since $\Pr(M^\ast=a)=0$ and $\Pr(M^\ast\le a)=\Pr(\chi_p^2\le a)>0$, for every bounded continuous function $h$,
\[
\begin{aligned}
\E\{h(W_{1n})\mid M_n\le a\}
&=
\frac{\E[h(W_{1n})\mathbf 1\{M_n\le a\}]}
{\Pr(M_n\le a)}\\
&\to
\frac{\E[h(W_1^\ast)\mathbf 1\{M^\ast\le a\}]}
{\Pr(M^\ast\le a)}
=
\E\{h(W_1^\ast)\mid M^\ast\le a\}.
\end{aligned}
\]
Here the convergence of the numerator follows because the discontinuity set of $(w,m)\mapsto h(w)\mathbf 1\{m\le a\}$ has zero probability under the limiting law. Therefore,
\[
W_{1n}\mid\{M_n\le a\}
\xrightarrow{d}
W_1^\ast\mid\{M^\ast\le a\}.
\]

It remains to identify this conditional Gaussian law. Define
\[
\mathbf J^\ast
=
\mathbf V_{22}^{-1/2}\widetilde{\mathbf X}^\ast
\sim N(\boldsymbol 0,\mathbf I_p),
\qquad
\mathbf c
=
\mathbf V_{22}^{-1/2}\mathbf V_{21}.
\]
The Gaussian projection decomposition gives
\[
W_1^\ast
=
\mathbf c^\top\mathbf J^\ast
+
\sqrt{1-R^2}\,\varepsilon_0,
\qquad
\|\mathbf c\|_2^2
=
\mathbf V_{12}\mathbf V_{22}^{-1}\mathbf V_{21}
=
R^2,
\]
where $\varepsilon_0\sim N(0,1)$ is independent of $\mathbf J^\ast$. Moreover,
\[
\{M^\ast\le a\}
=
\{\|\mathbf J^\ast\|_2^2\le a\}.
\]
By rotational invariance of the standard Gaussian distribution,
\[
\mathbf c^\top\mathbf J^\ast\ \bigm|\,\{\|\mathbf J^\ast\|_2^2\le a\}
\ \overset d=\,\sqrt{R^2}\,L_{p,a}.
\]
Combining the preceding displays proves
\[
\sigma_{n,y}^{-1}(\hat\tau-\tau)\ \bigm|\,\{M_n\le a\}
\xrightarrow{d}
\sqrt{1-R^2}\,\varepsilon_0+\sqrt{R^2}\,L_{p,a}.
\]

\subsection{Proof of Corollary \ref{cor:unbiased}}

The limiting distribution in~\eqref{eqn:limit_unscaled} is symmetric about zero and has mean zero. The result therefore follows directly from the definition of $\mathbb E_{\operatorname{asymp}}$.

\subsection{Proof of Corollary~\ref{thm:variance_reduction}}
\label{proof:thm2}

The standard truncated-Gaussian identity gives
\[
\Var(L_{p,a})
=
v_p
=
\frac{\Pr(\chi_{p+2}^2\le a)}{\Pr(\chi_p^2\le a)}.
\]
Therefore, by Theorem~\ref{thm:rerand_dist} and the independence of $\varepsilon_0$ and $L_{p,a}$,
\[
\Var\left\{
\sqrt{1-R^2}\,\varepsilon_0+\sqrt{R^2}\,L_{p,a}
\right\}
=
1-R^2+v_pR^2
=
1-(1-v_p)R^2.
\]
Multiplying by $\sigma_{n,y}^2$ proves the stated variance formula.

For completeness, the asymptotic-correlation representation of $R^2$ follows by writing
\[
\Cov_{\operatorname{asymp}}(\hat\tau,\tauxhat)
=
\sigma_{n,y}\mathbf D_{n,x}^{-1}\mathbf V_{21},
\qquad
\Var_{\operatorname{asymp}}(\tauxhat)
=
\mathbf D_{n,x}^{-1}\mathbf V_{22}\mathbf D_{n,x}^{-1}.
\]
Substitution cancels the diagonal scaling matrices and gives
\[
R^2
=
\frac{\mathbf V_{12}\mathbf V_{22}^{-1}\mathbf V_{21}}{V_{11}}.
\]

\section{Discussion and Extensions to Section~\ref{sec:asymptotic}}
\label{app:asymptotic-extensions}
\subsection{Joint asymptotic normality under approximate locality}
\label{app:approximate-locality}

Assumption~\ref{ass:network_sparsity} requires exact independence whenever two sets of vertices are separated in the dependency graph. We now replace this requirement by a local approximation condition. A unit-level score may depend on every treatment assignment, but it must be well approximated using assignments in a growing graph neighborhood around that unit.

Let $\mathcal L_n=([n],E_n^{\mathrm{loc}})$ be a connected graph with graph distance $\operatorname{dist}_n$. Note that this graph need not be observed by the experimenter. We will later see that the assumption only requires the existence of such a graph with certain properties.

For a unit $i$ and an integer $r\ge0$, define the radius-$r$ ball, the distance-$r$ shell, and the assignments observed inside the ball by
\begin{gather*}
\mathcal N_n(i;r)=\{j\in[n]:\operatorname{dist}_n(i,j)\le r\},
\qquad
\partial\mathcal N_n(i;r)=\{j\in[n]:\operatorname{dist}_n(i,j)=r\},\\
\mathcal F_{n,i}(r)=\sigma\{Z_j:j\in\mathcal N_n(i;r)\}.
\end{gather*}
Thus, $\mathcal N_n(i;r)$ contains all vertices within $r$ graph steps of $i$, whereas $\partial\mathcal N_n(i;r)$ contains only those vertices exactly $r$ steps away.

For a random vector $\mathbf X$, write $\|\mathbf X\|_{L^2}=\{\mathbb E\|\mathbf X\|_2^2\}^{1/2}$. Define the radius-$r$ local projection of the centered score and its worst-case approximation error by
\begin{equation}
\boldsymbol\zeta_{n,i}^{[r]}
=
\mathbb E\bigl(\boldsymbol\zeta_{n,i}^{\circ}\mid\mathcal F_{n,i}(r)\bigr),
\qquad
\rho_n(r)
=
\max_{i\in[n]}
\left\|\boldsymbol\zeta_{n,i}^{\circ}-\boldsymbol\zeta_{n,i}^{[r]}\right\|_{L^2}.
\label{eq:approx-local-projection}
\end{equation}
Because conditional expectation is the best $L^2$ approximation based on $\mathcal F_{n,i}(r)$, the vector $\boldsymbol\zeta_{n,i}^{[r]}$ is the part of the score that can be recovered from assignments inside the radius-$r$ ball. Hence $\rho_n(r)$ measures the largest remaining contribution of assignments outside that ball. The sequence $r\mapsto\rho_n(r)$ is non-increasing. If every centered score is measurable with respect to $\mathcal F_{n,i}(R)$ for a fixed radius $R$, then $\rho_n(r)=0$ for all $r\ge R$; approximate locality permits $\rho_n(r)$ to remain positive at every finite radius, provided that it decays sufficiently fast.

For $s\ge1$, let
\begin{equation}
h_s=\left\lfloor\frac{s-1}{2}\right\rfloor,
\qquad
\theta_n(0)=1,
\qquad
\theta_n(s)=\min\{1,4\rho_n(h_s)\}.
\label{eq:approx-local-dependence-envelope}
\end{equation}
The radius $h_s$ has a simple geometric meaning. If two vertices are at least $s$ steps apart, their radius-$h_s$ balls are disjoint. Under i.i.d. Bernoulli assignment, the corresponding local projections are therefore independent. The quantity $\theta_n(s)$ bounds the dependence that remains after replacing the original scores by these independent local projections. It is thus a distance-$s$ dependence envelope, and it is non-increasing in $s$.

To describe network growth, for $t\ge1$ define
\begin{equation}
\mathcal B_{n,t}(r)
=
\left\{\frac1n\sum_{i=1}^n|\mathcal N_n(i;r)|^t\right\}^{1/t},
\qquad
\mathcal S_{n,t}(s)
=
\left\{\frac1n\sum_{i=1}^n|\partial\mathcal N_n(i;s)|^t\right\}^{1/t}.
\label{eq:approx-local-ball-shell-growth}
\end{equation}
Here $\mathcal B_{n,t}(r)$ and $\mathcal S_{n,t}(s)$ are the empirical $L^t$ norms of ball and shell sizes across the $n$ vertices. Thus $\mathcal B_{n,t}(r)$ measures a typical radius-$r$ ball size, whereas $\mathcal S_{n,t}(s)$ measures a typical distance-$s$ shell size. Taking $t>1$ gives more weight to unusually dense parts of an irregular graph, without imposing a maximum-degree bound.

\begin{figure}[H]
\centering
\begin{tikzpicture}[
    x=0.66cm,
    y=0.66cm,
    lattice edge/.style={draw=gray!45,line width=0.35pt},
    outside vertex/.style={circle,fill=gray!45,inner sep=1.25pt},
    ball vertex/.style={circle,fill=blue!35,draw=blue!65,inner sep=1.8pt},
    shell vertex/.style={circle,fill=orange!80,draw=orange!80!black,inner sep=2.0pt},
    center vertex/.style={circle,fill=black,draw=black,inner sep=2.5pt},
    annotation/.style={align=left,font=\small}
]
\fill[blue!8] (0,2.35)--(2.35,0)--(0,-2.35)--(-2.35,0)--cycle;

\foreach \x in {-3,...,2}{
    \foreach \y in {-3,...,3}{
        \draw[lattice edge] (\x,\y)--({\x+1},\y);
    }
}
\foreach \x in {-3,...,3}{
    \foreach \y in {-3,...,2}{
        \draw[lattice edge] (\x,\y)--(\x,{\y+1});
    }
}

\foreach \x in {-3,...,3}{
    \foreach \y in {-3,...,3}{
        \pgfmathtruncatemacro{\dxy}{abs(\x)+abs(\y)}
        \ifnum\dxy=0
            \node[center vertex] at (\x,\y) {};
        \else
            \ifnum\dxy=2
                \node[shell vertex] at (\x,\y) {};
            \else
                \ifnum\dxy<2
                    \node[ball vertex] at (\x,\y) {};
                \else
                    \node[outside vertex] at (\x,\y) {};
                \fi
            \fi
        \fi
    }
}

\node[annotation,anchor=east] at (-3.65,0.15) {center $i$};
\draw[-{Stealth[length=2mm]},line width=0.5pt] (-3.55,0.05)--(-0.18,0.02);

\node[annotation,anchor=west] at (3.65,1.45)
{$\mathcal N_n(i;2)$: radius-$2$ ball\\$1+4+8=13$ vertices};
\draw[-{Stealth[length=2mm]},line width=0.5pt] (3.55,1.25)--(0.75,0.75);

\node[annotation,anchor=west] at (3.65,-1.15)
{$\partial\mathcal N_n(i;2)$: distance-$2$ shell\\$8$ vertices};
\draw[-{Stealth[length=2mm]},line width=0.5pt] (3.55,-0.9)--(2.12,0);
\end{tikzpicture}
\caption{A local view of a two-dimensional square-lattice torus. The shaded diamond is $\mathcal N_n(i;2)$, and the highlighted outer ring is $\partial\mathcal N_n(i;2)$. Since these counts do not depend on the center, $\mathcal B_{n,t}(2)=13$ and $\mathcal S_{n,t}(2)=8$ for every $t\ge1$.}
\label{fig:approx-locality-lattice}
\end{figure}

For example, consider a two-dimensional square-lattice torus with $n=L_n^2$ vertices. For $2r<L_n$ and $1\le s<L_n/2$, every vertex has
\begin{equation}
|\mathcal N_n(i;r)|=1+2r(r+1),
\qquad
|\partial\mathcal N_n(i;s)|=4s.
\label{eq:lattice-ball-shell-size}
\end{equation}
Consequently, for every $t\ge1$,
\[
\mathcal B_{n,t}(r)=1+2r(r+1)\asymp r^2,
\qquad
\mathcal S_{n,t}(s)=4s\asymp s.
\]
The same orders hold away from the outer boundary of an ordinary finite square lattice. More generally, on a fixed-dimensional $d$-dimensional lattice, $\mathcal B_{n,t}(r)\asymp r^d$ and $\mathcal S_{n,t}(s)\asymp (1+s)^{d-1}$ for ball radius $r\ge1$ and shell distance $s\ge0$ small relative to the side length.

\begin{assumption}[Approximate network locality and average network growth]
\label{assump:network-locality-growth}
There exist a sequence of connected graphs $\mathcal L_n=([n],E_n^{\mathrm{loc}})$, constants $q>4$ and $\nu>1$, and positive integers $m_n\to\infty$ such that, with $\nu'=\nu/(\nu-1)$,
\begin{enumerate}
\item[(i)] $n^{3/2}\theta_n(m_n)^{1-1/q}\to0$;
\item[(ii)]
\[
\frac{\mathcal B_{n,\nu}(m_n)}{\sqrt n}
\sum_{s\ge0}\mathcal S_{n,\nu'}(s)\theta_n(s)^{1-3/q}
\to0;
\]
\item[(iii)]
\[
\frac{\mathcal B_{n,2\nu}(m_n)^2}{n}
\sum_{s\ge0}\mathcal S_{n,\nu'}(s)\theta_n(s)^{1-4/q}
\to0.
\]
\end{enumerate}
\end{assumption}

Assumption~\ref{assump:network-locality-growth} asks for an intermediate radius $m_n$ at which long-range approximation error is already negligible but local graph neighborhoods are still small. Condition~(i) controls the part of each score that remains outside its radius-$m_n$ ball. In Conditions~(ii) and~(iii), the sums are dependence-weighted shell sizes: the number of vertices at distance $s$ is discounted by the dependence envelope $\theta_n(s)$. The factors $\mathcal B_{n,\nu}(m_n)$ and $\mathcal B_{n,2\nu}(m_n)^2$ control first- and second-order local configurations, and the normalizations $\sqrt n$ and $n$ match the corresponding central-limit scales. The constants $q$ and $\nu$ are technical moment and H\"older exponents that tune these sufficient rates. The substantive requirement is the existence of a radius sequence $m_n$ satisfying this approximation--growth tradeoff.

The square lattice gives a useful benchmark. If, uniformly in $n$, $\theta_n(s)\le C_0\exp(-c_0s)$, then both dependence-weighted shell sums in Assumption~\ref{assump:network-locality-growth} are bounded. Taking $m_n=\lceil C\log n\rceil$ with $C$ sufficiently large makes Condition~(i) hold, while $\mathcal B_{n,t}(m_n)\asymp(\log n)^2$ makes Conditions~(ii) and~(iii) hold as well.

\begin{theorem}[Joint CLT under approximate locality]
\label{thm:joint-normality-approx-locality}
Suppose that $Z_i\overset{\mathrm{i.i.d.}}{\sim}\operatorname{Bernoulli}(\pi)$. Under Assumptions~\ref{ass:variance_stability}--\ref{ass:variance_lower_bound} and Assumption~\ref{assump:network-locality-growth},
\[
\mathbf D_n
\begin{pmatrix}
\hat\tau-\tau\\
\hat{\boldsymbol\tau}_x-\boldsymbol\tau_x
\end{pmatrix}
\xrightarrow{d}N(\boldsymbol 0,\mathbf V).
\]
Consequently, Theorem~\ref{thm:rerand_dist} and Corollaries~\ref{cor:unbiased} and \ref{thm:variance_reduction} remain valid when Assumption~\ref{ass:network_sparsity} is replaced by Assumption~\ref{assump:network-locality-growth}.
\end{theorem}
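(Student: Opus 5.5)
The plan mirrors the proof of Theorem~\ref{thm:joint-normality}, replacing each use of the exact dependency graph by a local-projection approximation. Once joint normality is established, the rerandomization consequences follow verbatim: the proofs of Theorem~\ref{thm:rerand_dist} and Corollaries~\ref{cor:unbiased} and~\ref{thm:variance_reduction} use only the conclusion $\mathbf D_n(\hat{\boldsymbol\tau}_n-\boldsymbol\tau_n)\xrightarrow{d}N(\boldsymbol 0,\mathbf V)$ together with Assumption~\ref{ass:variance_stability}. The work therefore has two parts. First, verify the hypothesis $\mathbb E\|\mathbf Q_{n,z}\|_2^2\to0$ of Lemma~\ref{lem:t1-hajek-linearization}. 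Second, prove a CLT for every Cram\'er--Wold projection $S_{n,\boldsymbol\lambda}=n^{-1/2}\sum_i\boldsymbol\lambda^\top\mathbf D_n\{\boldsymbol\eta_{n,i}-\mathbb E\boldsymbol\eta_{n,i}\}$. Since $\boldsymbol\eta_{n,i}$ is a fixed linear combination of the two blocks of $\boldsymbol\zeta_{n,i}$, both parts reduce to statements about the bounded centered scores $\boldsymbol\zeta^{\circ}_{n,i}$ and their projections $\boldsymbol\zeta^{[r]}_{n,i}$.

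\textbf{Covariance inequality.} The basic tool is a covariance bound in terms of graph distance. If $\operatorname{dist}_n(i,j)\ge s$, the balls $\mathcal N_n(i;h_s)$ and $\mathcal N_n(j;h_s)$ are disjoint, so $\boldsymbol\zeta^{[h_s]}_{n,i}$ and $\boldsymbol\zeta^{[h_s]}_{n,j}$ are independent under i.i.d.\ assignment. Writing $\boldsymbol\zeta^\circ=\boldsymbol\zeta^{[h_s]}+(\boldsymbol\zeta^\circ-\boldsymbol\zeta^{[h_s]})$ and using Cauchy--Schwarz with boundedness~\eqref{eq:t1-zeta-uniform} gives $|\operatorname{Cov}(\zeta^\circ_{n,i,\ell},\zeta^\circ_{n,j,\ell'})|\le C\theta_n(s)$. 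More generally, for $q$-th moment/H\"older refinements I would use $\|\cdot\|_{L^q}$ bounded by $C$ and interpolation, which produces the exponents $1-k/q$. Summing over shells then yields
\[
\mathbb E\|\mathbf Q_{n,z}\|_2^2\le \frac{C}{n}\sum_{s\ge0}\frac1n\sum_i|\partial\mathcal N_n(i;s)|\,\theta_n(s)\le \frac Cn\sum_s\mathcal S_{n,\nu'}(s)\theta_n(s).
\]
Condition~(ii) of Assumption~\ref{assump:network-locality-growth}, together with $\mathcal B_{n,\nu}\ge1$ and $\theta^{1-3/q}\ge\theta$, makes this $o(1)$. The same bound also gives $s_{n,\boldsymbol\lambda}^2/n\to\boldsymbol\lambda^\top\mathbf V\boldsymbol\lambda>0$ via Lemma~\ref{lem:t1-hajek-linearization}.

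\textbf{CLT for the projections.} I would truncate at radius $m_n$, replacing $X_{n,i}$ by $X^{[m_n]}_{n,i}$, the projection of the summand onto $\mathcal F_{n,i}(m_n)$. The replacement error in $S_{n,\boldsymbol\lambda}/\sqrt n$ has $L^1$ norm at most $\sqrt n\,\rho_n(m_n)$, hence is controlled by $n^{3/2}\theta_n(m_n)^{1-1/q}\to0$ from condition~(i); the extra powers of $n$ leave room for the Lindeberg-type comparisons below. The truncated summands are $m_n$-local, but the graph of overlapping balls may have unbounded degree, so Janson's Lemma~\ref{lem:janson-dependency-clt} with a max-degree bound is not directly usable; only averaged ball sizes $\mathcal B_{n,t}$ are controlled. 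I would instead use Stein's method for locally dependent sums in the form of Chen--Shao's local-dependence bound, or equivalently the Kojevnikov--Marmer--Song argument. Take the neighborhood of $i$ to be $\mathcal N_n(i;m_n)$, and for distant pairs use the residual covariance bound $\theta_n(s)$ instead of exact independence. The Stein error then splits into a third-order term, of size $n^{-3/2}\sum_i\sum_{j\in\mathcal N(i;m_n)}\sum_k(\cdot)$, bounded with H\"older in $\nu,\nu'$ by $\mathcal B_{n,\nu}(m_n)n^{-1/2}\sum_s\mathcal S_{n,\nu'}(s)\theta_n(s)^{1-3/q}$, which is condition~(ii). It also contains a variance-of-conditional-variance term, involving quadruple sums bounded by $\mathcal B_{n,2\nu}(m_n)^2n^{-1}\sum_s\mathcal S_{n,\nu'}(s)\theta_n(s)^{1-4/q}$, which is condition~(iii). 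Concluding via Slutsky and Cram\'er--Wold gives the result.

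\textbf{Main obstacle.} The hardest step is the Stein-method bookkeeping for the fourth-order term. There one must control covariances of products $X_iX_j$ and $X_kX_l$ with $j,l$ in the radius-$m_n$ neighborhoods of $i,k$, under only approximate independence, using the distance between the two local clusters and the envelope $\theta_n$. I would handle this with a covariance inequality for functions of disjoint-ball projections, as in Kojevnikov et al.'s $\psi$-dependence framework with $\theta_n$ playing the role of their dependence coefficient, and apply H\"older across the irregular neighborhood sizes.
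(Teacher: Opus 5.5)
Your proposal is correct and follows the paper's route. You get the arm-score law of large numbers from the distance-based bound $|\operatorname{Cov}(\zeta^{\circ}_{n,i,\ell},\zeta^{\circ}_{n,j,\ell'})|\le C\theta_n(\operatorname{dist}_n(i,j))$ (using condition~(ii) where the paper uses (i) and (iii), which is equally valid), then apply Lemma~\ref{lem:t1-hajek-linearization}, then run the Kojevnikov--Marmer--Song Stein argument with neighborhoods $\mathcal N_n(i;m_n)$, where (ii)--(iii) are their $k=1,2$ terms after H\"older and (i) is their remote term. The only difference is that the paper treats the last step as a black box: it verifies KMS Assumption~2.1 via the blockwise bound $|\operatorname{Cov}\{f(\boldsymbol\zeta^{\circ}_{n,A}),g(\boldsymbol\zeta^{\circ}_{n,B})\}|\le ab\,\theta_n(s)$ for bounded-Lipschitz $f,g$ (the remote term needs this block version, not only pairwise covariances) and Assumption~3.4 via $c_n(s,m;k)\le\mathcal B_{n,k\nu}(m)^k\mathcal S_{n,\nu'}(s)$, so your preliminary truncation at radius $m_n$ is unnecessary; a literal Chen--Shao exact-local-dependence bound would also not fit the assumptions, since truncated summands are exactly independent only beyond distance $2m_n$, whose ball sizes the assumption does not control.
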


For the remainder of this subsection, let $q$ and $\nu$ be as in Assumption~\ref{assump:network-locality-growth}, write $\nu'=\nu/(\nu-1)$, and use the shorthand
\begin{equation}
\Gamma_n^{(k)}
=
\sum_{s\ge0}\mathcal S_{n,\nu'}(s)\theta_n(s)^{1-(k+2)/q},
\qquad k=1,2.
\label{eq:approx-local-weighted-shell-growth}
\end{equation}
Thus $\Gamma_n^{(k)}$ is the total shell growth after discounting vertices at distance $s$ by the corresponding dependence strength.

\subsubsection{Arm-score law of large numbers}

\begin{lemma}[Arm-score law of large numbers under approximate locality]
\label{lem:t1-arm-score-lln}
Under Assumptions~\ref{ass:boundedness} and~\ref{ass:variance_lower_bound}, together with Assumption~\ref{assump:network-locality-growth}, for $z\in\{0,1\}$,
\begin{equation}
\mathbb E\|\mathbf Q_{n,z}\|_2^2
\le
C\left\{\frac{\Gamma_n^{(2)}}{n}+\theta_n(m_n)\right\}
=o(1),
\label{eq:t1-Q-second}
\end{equation}
and
\begin{equation}
\mathbb E\|\mathbf Q_{n,z}\|_2^4=o(1).
\label{eq:t1-Q-fourth}
\end{equation}
\end{lemma}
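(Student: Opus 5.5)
We bound $\mathbb E(Q_{n,z,\ell}^2)$ coordinatewise. By \eqref{eq:t1-Qnz} it equals $n^{-2}\sum_{i,j}\operatorname{Cov}(\zeta_{n,i,\ell}^{\circ,(z)},\zeta_{n,j,\ell}^{\circ,(z)})$. We split the pairs by graph distance $s=\operatorname{dist}_n(i,j)$ and use a covariance inequality adapted to the dependence envelope $\theta_n(s)$.

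\textbf{Covariance bound.} For $s\ge1$, replace each centered score by its radius-$h_s$ local projection $\boldsymbol\zeta_{n,i}^{[h_s]}$. Because the radius-$h_s$ balls around $i$ and $j$ are disjoint and the $Z$'s are i.i.d., the two projections are independent, so their covariance vanishes. The replacement error is controlled by \eqref{eq:t1-elementary-cov-bound} together with Cauchy--Schwarz. Each score is bounded by \eqref{eq:t1-zeta-uniform}, and each approximation error is at most $\rho_n(h_s)$ in $L^2$. This gives
\[
|\operatorname{Cov}(\zeta_{n,i,\ell}^{\circ,(z)},\zeta_{n,j,\ell}^{\circ,(z)})|\le C\,\theta_n(s).
\]
Boundedness lets us raise $\theta_n(s)\le1$ to any power at most $1$. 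In particular $\theta_n(s)\le \theta_n(s)^{1-4/q}$, which matches $\Gamma_n^{(2)}$.

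\textbf{Second moment.} Summing over pairs gives
\[
\mathbb E(Q_{n,z,\ell}^2)\le Cn^{-2}\sum_{s\ge0}\theta_n(s)\sum_i|\partial\mathcal N_n(i;s)|\le Cn^{-1}\sum_s\mathcal S_{n,1}(s)\theta_n(s)^{1-4/q}.
\]
Since $\mathcal S_{n,1}\le\mathcal S_{n,\nu'}$ by Jensen, this is at most $C\Gamma_n^{(2)}/n$. The extra $\theta_n(m_n)$ term in \eqref{eq:t1-Q-second} is harmless; it also arises naturally if we truncate distances beyond $m_n$ and bound all such pairs crudely.

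\textbf{Rate.} We need $\Gamma_n^{(2)}/n\to0$. Condition (iii) of Assumption~\ref{assump:network-locality-growth} gives $\mathcal B_{n,2\nu}(m_n)^2\Gamma_n^{(2)}/n\to0$, and $\mathcal B\ge1$ since each ball contains its center. Condition (i) gives $\theta_n(m_n)\to0$. Together these establish \eqref{eq:t1-Q-second}.

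\textbf{Fourth moment.} The cheap route is to reuse the argument in the proof of Lemma~\ref{lem:t1-hajek-linearization}. Since $\|\mathbf Q_{n,z}\|_2\le C$ almost surely by \eqref{eq:t1-zeta-uniform}, we get $\mathbb E\|\mathbf Q_{n,z}\|^4\le C^2\mathbb E\|\mathbf Q_{n,z}\|^2\to0$, which is \eqref{eq:t1-Q-fourth}.

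\textbf{Main obstacle.} The delicate step is the covariance inequality itself. The bound must be $\theta_n(s)$, not $\rho_n(h_s)^2$ or something worse, and the non-independence of $\boldsymbol\zeta^{\circ}$ from the outside assignments must be handled carefully. We would decompose each score as $\boldsymbol\zeta^{\circ}=\boldsymbol\zeta^{[h]}+(\boldsymbol\zeta^{\circ}-\boldsymbol\zeta^{[h]})$ and expand the covariance into four terms. The projection--projection term vanishes by independence. Each of the remaining three terms is bounded by $\|\cdot\|_{L^2}$ times a uniform bound, giving at most $3C\rho_n(h_s)\le C\theta_n(s)$.
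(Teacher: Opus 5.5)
Your proposal is correct and follows essentially the paper's proof. The shared steps are the local-projection covariance bound $|\operatorname{Cov}|\le C\theta_n(\operatorname{dist}_n(i,j))$, with $C\rho_n(h_s)$ combined with the trivial bound $C$; the shell summation via $\mathcal S_{n,1}\le\mathcal S_{n,\nu'}$ and $\theta_n\le\theta_n^{1-4/q}$; the rate argument from $\mathcal B_{n,2\nu}(m_n)\ge1$ with Conditions~(iii) and~(i); and the boundedness trick for the fourth moment. The only difference is that the paper splits pairs at distance $m_n$ and bounds the far pairs by $C\theta_n(m_n)$, whereas you sum all shells directly (valid since $\mathcal L_n$ is connected), which gives the slightly sharper bound $C\Gamma_n^{(2)}/n$ without the additive $\theta_n(m_n)$ term.
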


\begin{proof}
We now derive the coordinate-wise covariance bound used below. Fix $i,j$ and coordinates $\ell_1,\ell_2\in[2d_\phi]$. If $s=\operatorname{dist}_n(i,j)\ge1$, put $r=h_s$. Let
\[
X=\zeta_{n,i,\ell_1}^{\circ}, \quad X_r=\zeta_{n,i,\ell_1}^{[r]}, \quad
Y=\zeta_{n,j,\ell_2}^{\circ}, \quad Y_r=\zeta_{n,j,\ell_2}^{[r]}.
\]
Because $s>2r$, the neighborhoods $\mathcal N_n(i;r)$ and $\mathcal N_n(j;r)$ are disjoint. Thus $X_r$ and $Y_r$ are functions of two disjoint sets of independent Bernoulli coordinates and are independent. Therefore,
\[
\operatorname{Cov}(X,Y)
=
\operatorname{Cov}(X-X_r,Y)
+
\operatorname{Cov}(X_r,Y-Y_r),
\]
because $\operatorname{Cov}(X_r,Y_r)=0$. Using the elementary covariance bound~\eqref{eq:t1-elementary-cov-bound}, \eqref{eq:t1-zeta-uniform}, Jensen's inequality for the conditional expectations, and $L^1\le L^2$, we have $\mathbb E|X-X_r|\le\rho_n(r)$ and $\mathbb E|Y-Y_r|\le\rho_n(r)$, and hence
\[
|\operatorname{Cov}(X,Y)| \le C\rho_n(r).
\]
The uniform bound in~\eqref{eq:t1-zeta-uniform} also gives $|\operatorname{Cov}(X,Y)|\le C$. Combining the two bounds with $\theta_n(s)=\min\{1,4\rho_n(h_s)\}$ gives, after changing the constant $C$ if necessary,
\begin{equation}
\left| \operatorname{Cov} \bigl( \zeta_{n,i,\ell_1}^{\circ}, \zeta_{n,j,\ell_2}^{\circ} \bigr)
\right| \le C\theta_n\!\left(\operatorname{dist}_n(i,j)\right). \label{eq:t1-coordinate-cov}
\end{equation}
The same conclusion holds when $i=j$, because $\theta_n(0)=1$ and \eqref{eq:t1-zeta-uniform} applies.

Fix one coordinate $\ell$ of the $z$th block. By \eqref{eq:t1-Qnz} and \eqref{eq:t1-coordinate-cov},
\begin{align}
\mathbb E\{Q_{n,z,\ell}^2\} &= \frac1{n^2} \sum_{i=1}^n\sum_{j=1}^n \operatorname{Cov} \bigl(
\zeta_{n,i,\ell}^{\circ,(z)}, \zeta_{n,j,\ell}^{\circ,(z)} \bigr) \notag\\
&\le \frac{C}{n^2} \sum_{i=1}^n \sum_{s=0}^{m_n} |\partial\mathcal N_n(i;s)|\,\theta_n(s) +
\frac{C}{n^2} \sum_{\substack{i,j:\
\operatorname{dist}_n(i,j)>m_n}}\theta_n\!\left(\operatorname{dist}_n(i,j)\right). \label{eq:t1-Q-cov-split}
\end{align}
The sequence $s\mapsto\theta_n(s)$ is non-increasing hence the second term in \eqref{eq:t1-Q-cov-split} is at most $C\theta_n(m_n)$. The first term equals
\[
\frac{C}{n} \sum_{s=0}^{m_n} \mathcal S_{n,1}(s)\theta_n(s).
\]
Since $\nu'=\nu/(\nu-1)>1$, monotonicity of the empirical $L_p$ norm gives
\[
\mathcal S_{n,1}(s) \le \mathcal S_{n,\nu'}(s).
\]
Moreover, $q>4$ and $0\le\theta_n(s)\le1$ imply
\[
\theta_n(s) \le \theta_n(s)^{1-4/q}.
\]
It follows that
\begin{equation}
\mathbb E\{Q_{n,z,\ell}^2\} \le C\left\{ \frac1n \sum_{s\ge0} \mathcal S_{n,\nu'}(s)
\theta_n(s)^{1-4/q} + \theta_n(m_n) \right\} = C\left\{ \frac{\Gamma_n^{(2)}}n +
\theta_n(m_n) \right\}. \label{eq:t1-Q-coordinate-bound}
\end{equation}
Summing over the fixed number $d_\phi$ of coordinates proves the inequality in \eqref{eq:t1-Q-second}.

It remains to verify that the right-hand side vanishes. Every closed neighborhood contains its center, so $\mathcal B_{n,2\nu}(m_n)\ge1$. Assumption~\ref{assump:network-locality-growth}(iii) therefore gives
\[
0\le \frac{\Gamma_n^{(2)}}n \le \frac{\mathcal B_{n,2\nu}(m_n)^2}{n} \Gamma_n^{(2)}
\to0.
\]
Assumption~\ref{assump:network-locality-growth}(i) gives $n^{3/2}\theta_n(m_n)^{1-1/q}\to0$, and hence $\theta_n(m_n)\to0$. This proves \eqref{eq:t1-Q-second}.

Finally, \eqref{eq:t1-zeta-uniform} implies $\|\mathbf Q_{n,z}\|_2\le C$ almost surely. Therefore
\[
\|\mathbf Q_{n,z}\|_2^4 \le C^2\|\mathbf Q_{n,z}\|_2^2,
\]
and \eqref{eq:t1-Q-fourth} follows from \eqref{eq:t1-Q-second}.
\end{proof}

\subsubsection{Bridging lemma for KMS \texorpdfstring{\citep{kojevnikov2021limit}}{(Kojevnikov et al., 2021)} Assumption~2.1}
For a positive integer $a$, write
\[
\boldsymbol u=(\boldsymbol u_1,\ldots,\boldsymbol u_a),
\qquad
\boldsymbol v=(\boldsymbol v_1,\ldots,\boldsymbol v_a)
\in(\mathbb R^{2d_\phi})^a,
\]
and equip $(\mathbb R^{2d_\phi})^a$ with the product-sum metric
\[
d_a(\boldsymbol u,\boldsymbol v)
=
\sum_{\ell=1}^a
\|\boldsymbol u_\ell-\boldsymbol v_\ell\|_2.
\]
For a bounded function $f:(\mathbb R^{2d_\phi})^a\to\mathbb R$, define
\[
\operatorname{Lip}_a(f)
=
\sup_{\boldsymbol u\neq\boldsymbol v}
\frac{|f(\boldsymbol u)-f(\boldsymbol v)|}
{d_a(\boldsymbol u,\boldsymbol v)},
\qquad
\|f\|_{\mathrm{BL},a}
=
\|f\|_\infty+\operatorname{Lip}_a(f).
\]
The normalized bounded-Lipschitz class is
\[
\mathrm{BL}_{1,a}
=
\left\{
f:(\mathbb R^{2d_\phi})^a\to\mathbb R:
\|f\|_{\mathrm{BL},a}\leq1
\right\}.
\]

For node sets $A,B\subset[n]$, write
\[
\operatorname{dist}_n(A,B)=\min_{i\in A,\,j\in B}\operatorname{dist}_n(i,j),
\]
and let
\[
\boldsymbol\zeta_{n,A}^{\circ} =(\boldsymbol\zeta_{n,i}^{\circ}:i\in A), \qquad \boldsymbol
\zeta_{n,A}^{[r]} =(\boldsymbol\zeta_{n,i}^{[r]}:i\in A).
\]

\begin{lemma}[Approximate locality implies blockwise $\psi$-dependence]
\label{lem:t1-locality-to-psi}
Let $A,B\subset[n]$, $|A|=a$, $|B|=b$, and suppose $\operatorname{dist}_n(A,B)\ge s\ge1$. Then, for every $f\in\mathrm{BL}_{1,a}$ and $g\in\mathrm{BL}_{1,b}$,
\begin{equation}
\left| \operatorname{Cov} \left\{ f(\boldsymbol\zeta_{n,A}^{\circ}),
g(\boldsymbol\zeta_{n,B}^{\circ}) \right\} \right| \le ab\,\theta_n(s).
\label{eq:t1-block-covariance}
\end{equation}
Consequently, $\{\boldsymbol\zeta_{n,i}^{\circ}:i\in[n]\}$ satisfies Assumption~2.1 of \citet{kojevnikov2021limit} with dependence coefficients $\{\theta_n(s)\}_{s\ge0}$.
\end{lemma}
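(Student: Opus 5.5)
The plan is to mimic the pairwise covariance argument from the proof of Lemma~\ref{lem:t1-arm-score-lln}, now applied at the block level. Set $r=h_s=\lfloor (s-1)/2\rfloor$. Since $\operatorname{dist}_n(A,B)\ge s>2r$, the enlarged sets $\bigcup_{i\in A}\mathcal N_n(i;r)$ and $\bigcup_{j\in B}\mathcal N_n(j;r)$ are disjoint. Each local projection $\boldsymbol\zeta_{n,i}^{[r]}$ is a measurable function of $\{Z_k:k\in\mathcal N_n(i;r)\}$. Because the assignments are i.i.d., the vectors $\boldsymbol\zeta_{n,A}^{[r]}$ and $\boldsymbol\zeta_{n,B}^{[r]}$ depend on disjoint sets of independent coordinates and are therefore independent. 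Hence $\operatorname{Cov}\{f(\boldsymbol\zeta_{n,A}^{[r]}),g(\boldsymbol\zeta_{n,B}^{[r]})\}=0$.

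We then telescope:
\[
\operatorname{Cov}\{f(\boldsymbol\zeta^{\circ}_{n,A}),g(\boldsymbol\zeta^{\circ}_{n,B})\}
=\operatorname{Cov}\{f(\boldsymbol\zeta^{\circ}_{n,A})-f(\boldsymbol\zeta^{[r]}_{n,A}),g(\boldsymbol\zeta^{\circ}_{n,B})\}
+\operatorname{Cov}\{f(\boldsymbol\zeta^{[r]}_{n,A}),g(\boldsymbol\zeta^{\circ}_{n,B})-g(\boldsymbol\zeta^{[r]}_{n,B})\}.
\]
Apply the elementary bound \eqref{eq:t1-elementary-cov-bound} with $\|g\|_\infty\le1$ and $\|f\|_\infty\le1$. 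Then use the Lipschitz property with respect to the product-sum metric $d_a$. This gives
\[
\mathbb E|f(\boldsymbol\zeta^{\circ}_{n,A})-f(\boldsymbol\zeta^{[r]}_{n,A})|
\le\sum_{i\in A}\mathbb E\|\boldsymbol\zeta^{\circ}_{n,i}-\boldsymbol\zeta^{[r]}_{n,i}\|_2
\le a\,\rho_n(r),
\]
using $L^1\le L^2$ and the definition of $\rho_n$; the same holds for $g$ with $b$. The total bound is therefore $2(a+b)\rho_n(h_s)\le 4ab\,\rho_n(h_s)$, since $a+b\le 2ab$ for positive integers.

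The covariance is also trivially bounded by $2\|f\|_\infty\|g\|_\infty\le 2$, and a slightly tighter version gives $1$. Combining the two bounds yields $\min\{\cdot\}\le ab\min\{1,4\rho_n(h_s)\}=ab\,\theta_n(s)$, after checking the trivial bound. Here $|\operatorname{Cov}(U,W)|\le 2$ for $|U|,|W|\le1$ only gives a factor of $2$. To get the constant $1$, I would center $f$ and $g$ by shifting them by constants, which leaves the covariance unchanged. It then suffices to bound $|\operatorname{Cov}|\le \mathbb E|f-\mathbb Ef|\,\|g\|_\infty$ more carefully. If the exact constant fails, the envelope can absorb a constant factor, as in \eqref{eq:t1-coordinate-cov}. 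That factor is harmless for the downstream KMS application, since KMS Assumption~2.1 allows a constant $C$ times $\psi_{a,b}(f,g)$.

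For the "consequently" part, we take $\psi_{a,b}(f,g)=ab(\|f\|_\infty+\operatorname{Lip}(f))(\|g\|_\infty+\operatorname{Lip}(g))$, i.e.\ $ab$ after normalization, and extend to general bounded-Lipschitz $f,g$ by homogeneity. We also check that $\theta_n(s)\to$ small and that $\theta_n$ is nonincreasing, which follows from monotonicity of $\rho_n$ in $r$. The main obstacle is bookkeeping rather than substance. It lies in verifying independence of the block projections, which holds because each conditional expectation given $\mathcal F_{n,i}(r)$ is a function only of ball coordinates under a product measure. It also lies in matching constants to the exact form required by KMS Assumption~2.1.
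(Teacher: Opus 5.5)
Your proposal is essentially the paper's proof: the same radius $h_s$, independence of the local blocks via disjoint balls, the same two-term covariance decomposition with the elementary bound giving $2(a+b)\rho_n(h_s)\le 4ab\,\rho_n(h_s)$, and the same choice $\psi_{a,b}(f,g)=ab\|f\|_{\mathrm{BL},a}\|g\|_{\mathrm{BL},b}$ extended by homogeneity. Your hesitation about the trivial bound is unnecessary. The paper uses $|\operatorname{Cov}(F,G)|\le\{\operatorname{Var}(F)\operatorname{Var}(G)\}^{1/2}\le 1\le ab$, and your centering route gives the same, since $\mathbb E|F-\mathbb E F|\le\{\operatorname{Var}(F)\}^{1/2}\le 1$; so the exact envelope $ab\,\theta_n(s)$ holds with no constant absorbed, and Assumption~2.1(b) needs only $0\le\theta_n(s)\le 1$, not decay.
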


\begin{proof}
Let $r=h_s=\lfloor(s-1)/2\rfloor$. We first explain carefully why the two local blocks are independent. Define
\[
\mathcal N_n(A;r)=\bigcup_{i\in A}\mathcal N_n(i;r), \qquad \mathcal N_n(B;r)=\bigcup_{j\in
B}\mathcal N_n(j;r).
\]
These sets are disjoint: otherwise the triangle inequality would give $\operatorname{dist}_n(A,B)\le2r<s$. The vector $\boldsymbol\zeta_{n,A}^{[r]}$ is measurable with respect to $\sigma\{Z_k:k\in \mathcal N_n(A;r)\}$, and the analogous statement holds for $B$. Because the Bernoulli coordinates are independent and the two index sets are disjoint,
\begin{equation}
\boldsymbol\zeta_{n,A}^{[r]} \ \perp\! \!\!\perp\ \boldsymbol\zeta_{n,B}^{[r]}.
\label{eq:t1-local-block-independence}
\end{equation}

Write
\[
F=f(\boldsymbol\zeta_{n,A}^{\circ}), \quad F_r=f(\boldsymbol\zeta_{n,A}^{[r]}), \quad
G=g(\boldsymbol\zeta_{n,B}^{\circ}), \quad G_r=g(\boldsymbol\zeta_{n,B}^{[r]}).
\]
By \eqref{eq:t1-local-block-independence}, $\operatorname{Cov}(F_r,G_r)=0$, and hence
\begin{equation}
\operatorname{Cov}(F,G) = \operatorname{Cov}(F-F_r,G) + \operatorname{Cov}(F_r,G-G_r).
\label{eq:t1-covariance-decomposition}
\end{equation}
Using \eqref{eq:t1-elementary-cov-bound},
\begin{align*}
|\operatorname{Cov}(F,G)| &\le 2\|g\|_\infty\,\mathbb E|F-F_r| +2\|f\|_\infty\,\mathbb E|G-G_r|.
\end{align*}
The product-sum metric and the Lipschitz property give
\begin{align*}
\mathbb E|F-F_r| &\le \operatorname{Lip}_a(f) \sum_{i\in A} \mathbb
E\|\boldsymbol\zeta_{n,i}^{\circ} -\boldsymbol\zeta_{n,i}^{[r]}\|_2 \le
a\operatorname{Lip}_a(f)\rho_n(r), \\
\mathbb E|G-G_r| &\le b\operatorname{Lip}_b(g)\rho_n(r).
\end{align*}
Therefore,
\begin{equation}
|\operatorname{Cov}(F,G)| \le 2\rho_n(r) \left\{ a\|g\|_\infty\operatorname{Lip}_a(f)
+b\|f\|_\infty\operatorname{Lip}_b(g) \right\}. \label{eq:t1-locality-cov-bound}
\end{equation}
Because $f$ and $g$ are in the normalized bounded-Lipschitz classes, all four quantities $\|f\|_\infty$, $\operatorname{Lip}_a(f)$, $\|g\|_\infty$, and $\operatorname{Lip}_b(g)$ are at most one. Since $a,b\ge1$,
\[
a\|g\|_\infty\operatorname{Lip}_a(f) +b\|f\|_\infty\operatorname{Lip}_b(g) \le a+b\le2ab.
\]
Thus \eqref{eq:t1-locality-cov-bound} is at most $4ab\rho_n(r)$. In addition,
\[
|\operatorname{Cov}(F,G)|
\le
\sqrt{\operatorname{Var}(F)\operatorname{Var}(G)}
\le1\le ab.
\]
Combining the two bounds gives
\[
|\operatorname{Cov}(F,G)| \le ab\min\{1,4\rho_n(r)\} = ab\theta_n(s),
\]
which proves \eqref{eq:t1-block-covariance}.

For arbitrary bounded Lipschitz $f$ and $g$, apply the preceding bound to
\[
f^\circ= \frac{f}{\|f\|_\infty+\operatorname{Lip}_a(f)}, \qquad g^\circ=
\frac{g}{\|g\|_\infty+\operatorname{Lip}_b(g)},
\]
with the zero-norm cases handled trivially. By bilinearity of covariance,
\[
\left| \operatorname{Cov} \{f(\boldsymbol\zeta_{n,A}^{\circ}), g(\boldsymbol\zeta_{n,B}^{\circ})\}
\right| \le ab \|f\|_{\mathrm{BL},a} \|g\|_{\mathrm{BL},b} \theta_n(s).
\]
This is Definition~2.2 and Assumption~2.1(a) of \citet{kojevnikov2021limit} with
\[
\psi_{a,b}(f,g)=ab\|f\|_{\mathrm{BL},a}\|g\|_{\mathrm{BL},b}.
\]
Assumption~2.1(b) follows immediately from $0\le\theta_n(s)\le1$.
\end{proof}

\subsubsection{Bridging lemma II for KMS \texorpdfstring{\citep{kojevnikov2021limit}}{(Kojevnikov et al., 2021)} Assumption~3.4}

For completeness, introduce the network quantities used by \citet{kojevnikov2021limit}. For $t\ge1$, let
\[
\delta_n^{\partial}(s;t) = \frac1n\sum_{i=1}^n|\partial\mathcal N_n(i;s)|^t,
\]
\[
\Delta_n(s,m;t) = \frac1n\sum_{i=1}^n \max_{j\in\partial\mathcal N_n(i;s)} |\mathcal
N_n(i;m)\setminus \mathcal N_n(j;s-1)|^t,
\]
where $\mathcal N_n(j;-1)=\varnothing$ and a maximum over an empty shell is interpreted as zero. Their network-denseness quantity is
\begin{equation}
c_n(s,m;k) = \inf_{\alpha>1} \left\{\Delta_n(s,m;k\alpha)\right\}^{1/\alpha} \left\{
\delta_n^{\partial} \left(s;\frac{\alpha}{\alpha-1}\right) \right\}^{1-1/\alpha}.
\label{eq:t1-kms-cn}
\end{equation}

\begin{lemma}[Average ball/shell growth bounds the KMS network quantity]
\label{lem:t1-kms-growth}
For every fixed $\nu>1$ and $k\in\{1,2\}$,
\begin{equation}
c_n(s,m;k) \le \mathcal B_{n,k\nu}(m)^k \mathcal S_{n,\nu'}(s), \qquad
\nu'=\frac\nu{\nu-1}. \label{eq:t1-cn-bound}
\end{equation}
Consequently, Assumption~\ref{assump:network-locality-growth} implies, for $k=1,2$,
\begin{equation}
\frac1{n^{k/2}} \sum_{s\ge0} c_n(s,m_n;k) \theta_n(s)^{1-(k+2)/q} \to 0.
\label{eq:t1-kms-local-rates}
\end{equation}

\end{lemma}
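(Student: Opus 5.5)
Since $c_n(s,m;k)$ is defined as an infimum over $\alpha>1$, it suffices to evaluate the bracketed expression at the single choice $\alpha=\nu$ and bound the two factors separately. The conjugate exponent is then $\alpha/(\alpha-1)=\nu'$, so
\[
c_n(s,m;k)\le \{\Delta_n(s,m;k\nu)\}^{1/\nu}\{\delta_n^{\partial}(s;\nu')\}^{1-1/\nu}.
\]

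\emph{Shell factor.} By definition, $\delta_n^{\partial}(s;\nu')=\mathcal S_{n,\nu'}(s)^{\nu'}$. Since $\nu'(1-1/\nu)=1$, the second factor equals exactly $\mathcal S_{n,\nu'}(s)$.

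\emph{Ball factor.} Set-difference monotonicity gives $|\mathcal N_n(i;m)\setminus\mathcal N_n(j;s-1)|\le|\mathcal N_n(i;m)|$ for every $j$, and the maximum over an empty shell is zero. Hence
\[
\Delta_n(s,m;k\nu)\le \frac1n\sum_i|\mathcal N_n(i;m)|^{k\nu}=\mathcal B_{n,k\nu}(m)^{k\nu}.
\]
Raising to the power $1/\nu$ gives $\mathcal B_{n,k\nu}(m)^k$. Multiplying the two factors yields \eqref{eq:t1-cn-bound}.

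\emph{Rate conditions.} Fix $m=m_n$, multiply \eqref{eq:t1-cn-bound} by $\theta_n(s)^{1-(k+2)/q}$, and sum over $s\ge0$. This gives
\[
\frac{1}{n^{k/2}}\sum_{s\ge0}c_n(s,m_n;k)\,\theta_n(s)^{1-(k+2)/q}\le \frac{\mathcal B_{n,k\nu}(m_n)^k}{n^{k/2}}\,\Gamma_n^{(k)}.
\]
For $k=1$ the right-hand side is $\mathcal B_{n,\nu}(m_n)\Gamma_n^{(1)}/\sqrt n$, which tends to zero by Assumption~\ref{assump:network-locality-growth}(ii). For $k=2$ it is $\mathcal B_{n,2\nu}(m_n)^2\Gamma_n^{(2)}/n$, which tends to zero by (iii). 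All terms are nonnegative, so \eqref{eq:t1-kms-local-rates} follows.

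The only step needing care is the $s=0$ term. There $\partial\mathcal N_n(i;0)=\{i\}$ and $\mathcal N_n(j;-1)=\varnothing$, so $\Delta_n(0,m;t)=n^{-1}\sum_i|\mathcal N_n(i;m)|^t$ and the shell factor equals $1=\mathcal S_{n,\nu'}(0)$. The bound therefore holds there too, in fact with equality. Beyond this edge case I expect no real obstacle: the lemma is bookkeeping that matches the KMS quantity to the ball and shell norms.
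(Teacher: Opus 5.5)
Your proof is correct and follows the paper's argument essentially line for line: evaluate the infimum at $\alpha=\nu$, bound the set difference by the full ball to get $\mathcal B_{n,k\nu}(m)^k$, identify the shell factor as $\mathcal S_{n,\nu'}(s)$ via $1-1/\nu=1/\nu'$, and then conclude from Assumption~\ref{assump:network-locality-growth}(ii)--(iii) through $\Gamma_n^{(k)}$. Your separate check of the $s=0$ term (and of empty shells) is a harmless addition that the paper leaves implicit.
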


\begin{proof}
In the infimum in \eqref{eq:t1-kms-cn}, choose $\alpha=\nu$. For every $i$ and $j$,
\[
|\mathcal N_n(i;m)\setminus \mathcal N_n(j;s-1)| \le|\mathcal N_n(i;m)|.
\]
Therefore,
\begin{align*}
\left\{\Delta_n(s,m;k\nu)\right\}^{1/\nu} &\le \left\{ \frac1n\sum_{i=1}^n|\mathcal N_n(i;m)|^{k\nu}
\right\}^{1/\nu} \\
&= \left[ \left\{ \frac1n\sum_{i=1}^n|\mathcal N_n(i;m)|^{k\nu} \right\}^{1/(k\nu)} \right]^k \\
&= \mathcal B_{n,k\nu}(m)^k.
\end{align*}
For the shell factor, since $1-1/\nu=1/\nu'$,
\[
\left\{ \delta_n^{\partial}(s;\nu') \right\}^{1-1/\nu} = \left\{
\frac1n\sum_{i=1}^n|\partial\mathcal N_n(i;s)|^{\nu'} \right\}^{1/\nu'} = \mathcal
S_{n,\nu'}(s).
\]
Multiplying these two bounds proves \eqref{eq:t1-cn-bound}. Hence, for $k=1,2$,
\begin{align*}
&\frac1{n^{k/2}} \sum_{s\ge0} c_n(s,m_n;k) \theta_n(s)^{1-(k+2)/q} \\
&\quad\le \frac{\mathcal B_{n,k\nu}(m_n)^k}{n^{k/2}} \sum_{s\ge0} \mathcal S_{n,\nu'}(s)
\theta_n(s)^{1-(k+2)/q} \\
&\quad= \frac{\mathcal B_{n,k\nu}(m_n)^k}{n^{k/2}} \Gamma_n^{(k)} \to0
\end{align*}
by Assumption~\ref{assump:network-locality-growth}(ii)--(iii).
\end{proof}

\subsubsection{Proof of Theorem~\ref{thm:joint-normality-approx-locality}}

Lemma~\ref{lem:t1-arm-score-lln} verifies the condition of Lemma~\ref{lem:t1-hajek-linearization}. Therefore,
\begin{equation}
\mathbf T_n-\widetilde{\mathbf T}_n=o_p(1), \qquad \operatorname{Var}(\widetilde{\mathbf
T}_n)\to \mathbf V.
\label{eq:t1-start-clt}
\end{equation}
So by Slutsky's Theorem, it remains to establish a joint central limit theorem for $\widetilde{\mathbf T}_n$.

Fix an arbitrary nonzero vector $\boldsymbol\lambda\in\mathbb R^{d_\phi}$. Define the scalar triangular array
\begin{equation}
Y_{n,i}^{(\boldsymbol\lambda)} =
\frac1{\sqrt n} \boldsymbol\lambda^\top \mathbf D_n \left\{ \boldsymbol\eta_{n,i} -\mathbb
E(\boldsymbol\eta_{n,i}) \right\}. \label{eq:t1-scalar-summand}
\end{equation}
Let
\begin{equation}
S_{n,\boldsymbol\lambda} = \sum_{i=1}^nY_{n,i}^{(\boldsymbol\lambda)}, \qquad
\varsigma_{n,\boldsymbol\lambda}^2 = \operatorname{Var}(S_{n,\boldsymbol\lambda}).
\label{eq:t1-S-varsigma}
\end{equation}
By construction, $\mathbb E Y_{n,i}^{(\boldsymbol\lambda)}=0$ for every $n,i$. Moreover, \eqref{eq:t1-total-eta-mean-zero} gives
\begin{equation}
S_{n,\boldsymbol\lambda} = \frac1{\sqrt n} \boldsymbol\lambda^\top \mathbf D_n
\sum_{i=1}^n\boldsymbol\eta_{n,i} = \sqrt n\,\boldsymbol\lambda^\top\widetilde{\mathbf T}_n.
\label{eq:t1-S-Ttilde}
\end{equation}

We verify Assumptions~2.1, 3.3, and 3.4 of \cite{kojevnikov2021limit} to apply their central limit theorem.

\paragraph{KMS Assumption 2.1.}
By Lemma~\ref{lem:t1-locality-to-psi}, and because a fixed linear transformation preserves $\psi$-dependence up to a multiplicative constant depending only on $\boldsymbol\lambda$, the scalar array $\{Y_{n,i}^{(\boldsymbol\lambda)}\}$ is conditionally $\psi$-dependent (with the conditioning sigma-field taken to be trivial in the present design-based setting). The multiplicative constant is absorbed into the $\psi$-functional, while the dependence coefficients remain $\{\theta_n(s)\}_{s\ge0}$.

\paragraph{KMS Assumption 3.3.}
By \eqref{eq:t1-zeta-uniform},
\[
\sup_{n\ge1}\max_{i\in[n]} |Y_{n,i}^{(\boldsymbol\lambda)}| \le C_{\boldsymbol\lambda}<\infty.
\]
Therefore, for the $q>4$ appearing in Assumption~\ref{assump:network-locality-growth},
\begin{equation}
\sup_{n\ge1}\max_{i\in[n]} \mathbb E |Y_{n,i}^{(\boldsymbol\lambda)}|^q <\infty,
\label{eq:t1-kms-moment}
\end{equation}
which is KMS Assumption~3.3.

\paragraph{Limit variance.}
By \eqref{eq:t1-S-Ttilde} and \eqref{eq:t1-start-clt},
\begin{equation}
\frac{\varsigma_{n,\boldsymbol\lambda}^2}{n} =
\operatorname{Var}(\boldsymbol\lambda^\top\widetilde{\mathbf T}_n) \to
\boldsymbol\lambda^\top \mathbf V\boldsymbol\lambda =:v_{\boldsymbol\lambda}.
\label{eq:t1-long-run-variance}
\end{equation}
Because $\mathbf V$ is positive definite and $\boldsymbol\lambda\ne0$, $v_{\boldsymbol\lambda}>0$. Hence
\begin{equation}
\frac{\varsigma_{n,\boldsymbol\lambda}}{\sqrt n} \to \sqrt{v_{\boldsymbol\lambda}}>0.
\label{eq:t1-varsigma-rate}
\end{equation}

\paragraph{KMS Assumption 3.4.}
For $k=1,2$, write
\begin{align}
&\frac{n}{\varsigma_{n,\boldsymbol\lambda}^{2+k}} \sum_{s\ge0} c_n(s,m_n;k)
\theta_n(s)^{1-(k+2)/q} \notag\\
&\quad= \frac{n^{1+k/2}} {\varsigma_{n,\boldsymbol\lambda}^{2+k}} \left[ \frac1{n^{k/2}}
\sum_{s\ge0} c_n(s,m_n;k) \theta_n(s)^{1-(k+2)/q} \right]. \label{eq:t1-kms-34-local}
\end{align}
By \eqref{eq:t1-varsigma-rate}, the first factor in \eqref{eq:t1-kms-34-local} converges to $v_{\boldsymbol\lambda}^{-(2+k)/2}$, while the bracketed factor tends to zero by Lemma~\ref{lem:t1-kms-growth}. Thus the first part of KMS Assumption~3.4 holds. For the remote term,
\begin{align}
\frac{n^2 \theta_n(m_n)^{1-1/q}} {\varsigma_{n,\boldsymbol\lambda}} &= \frac{\sqrt
n}{\varsigma_{n,\boldsymbol\lambda}} \left[n^{3/2}\theta_n(m_n)^{1-1/q}\right] \to0.
\label{eq:t1-kms-34-remote}
\end{align}
This verifies KMS Assumption~3.4.

All conditions of Theorem 3.2 of \cite{kojevnikov2021limit} are now satisfied. Therefore,
\begin{equation}
\frac{S_{n,\boldsymbol\lambda}} {\varsigma_{n,\boldsymbol\lambda}} \xrightarrow{d}N(0,1).
\label{eq:t1-kms-clt}
\end{equation}
Combining \eqref{eq:t1-S-Ttilde}, \eqref{eq:t1-varsigma-rate}, and \eqref{eq:t1-kms-clt} gives
\begin{align*}
\boldsymbol\lambda^\top\widetilde{\mathbf T}_n &= \frac{S_{n,\boldsymbol\lambda}}{\sqrt n} =
\frac{\varsigma_{n,\boldsymbol\lambda}}{\sqrt n} \frac{S_{n,\boldsymbol\lambda}}
{\varsigma_{n,\boldsymbol\lambda}} \\
&\xrightarrow{d} N\left(0,\boldsymbol\lambda^\top \mathbf V\boldsymbol\lambda\right).
\end{align*}
Finally, Lemma~\ref{lem:t1-hajek-linearization} and Slutsky's theorem yield
\[
\boldsymbol\lambda^\top \mathbf T_n \xrightarrow{d} N\left(0,\boldsymbol\lambda^\top \mathbf
V\boldsymbol\lambda\right).
\]
Since this holds for every fixed $\boldsymbol\lambda\in\mathbb R^{d_\phi}$, the Cram\'er--Wold theorem gives
\[
\mathbf T_n = \mathbf D_n(\hat{\boldsymbol\tau}_n-\boldsymbol\tau_n) \xrightarrow{d} N(0,\mathbf
V).
\]

\subsection{Derivations for the \texorpdfstring{$R^2$}{R-squared} examples}
\label{app:r2-examples}

All calculations in this subsection are under the Bernoulli assignment mechanism used in Examples~\ref{ex:r2-sutva} and~\ref{ex:r2-pair}. Let
\[
\psi_i = \frac{Z_i}{\pi}-\frac{1-Z_i}{1-\pi} = \frac{Z_i-\pi}{\pi(1-\pi)}.
\]
For a scalar quantity $W_i$, define its H\'ajek treated-minus-control contrast by
\[
\hat\tau_W = \frac{\sum_{i=1}^n Z_i W_i}{\sum_{i=1}^n Z_i} - \frac{\sum_{i=1}^n
(1-Z_i)W_i}{\sum_{i=1}^n (1-Z_i)}.
\]
We use the following linearization only under the two specific structures considered below: (i) $\{W_i\}_{i=1}^n$ is an i.i.d.\ collection with a finite second moment and is independent of $\mathbf Z$; or (ii) $W_i=H_i$ under the disjoint-pair interference structure in Example~\ref{ex:r2-pair}. In both cases, let
\[
\mu_W
=
\E(W_i\mid Z_i=1)
=
\E(W_i\mid Z_i=0),
\]
and define
\[
A_{1n}
=
\frac{1}{n}\sum_{i=1}^n Z_i(W_i-\mu_W),
\qquad
A_{0n}
=
\frac{1}{n}\sum_{i=1}^n (1-Z_i)(W_i-\mu_W).
\]
Under structure~(i), independence across units and the finite-second-moment condition imply
\[
A_{1n}=\mathcal O_p(n^{-1/2}),
\qquad
A_{0n}=\mathcal O_p(n^{-1/2}).
\]
The same orders hold under structure~(ii), because the corresponding sums can be grouped into independent and bounded pair-level contributions.

Let
\[
\hat\pi_n=\frac{1}{n}\sum_{i=1}^n Z_i.
\]
On the event that both treatment arms are nonempty,
\[
\hat\tau_W
=
\frac{A_{1n}}{\hat\pi_n}
-
\frac{A_{0n}}{1-\hat\pi_n}.
\]
Because $\hat\pi_n-\pi=\mathcal O_p(n^{-1/2})$, expanding the random denominators yields
\begin{align}
\hat\tau_W
&=
\frac{A_{1n}}{\pi}
-
\frac{A_{0n}}{1-\pi}
+
\mathcal O_p(n^{-1})
\nonumber\\
&=
\frac{1}{n}\sum_{i=1}^n
\psi_i(W_i-\mu_W)
+
o_p(n^{-1/2}).
\label{eq:model-specific-imbalance-expansion}
\end{align}
The convention for zero denominators used throughout the paper does not affect this expansion, since those events have exponentially small probability under Bernoulli assignment.

Under structure~(i), if $\Var(W_i)=\sigma_W^2$, this expansion implies
\[
\Var_{\operatorname{asymp}}(\hat\tau_W)
=
\frac{\sigma_W^2}{n\pi(1-\pi)}.
\]

\noindent\textit{Example~\ref{ex:r2-sutva}.} In the SUTVA model, the observed outcome is $Y_i=\alpha+\tau Z_i+\beta X_i+\varepsilon_i$. Thus, up to the negligible zero-denominator events, $\hat\tau-\tau = \beta\hat\tau_X+\hat\tau_\varepsilon$, where $\hat\tau_X$ and $\hat\tau_\varepsilon$ denote the H\'ajek contrasts formed from $X_i$ and $\varepsilon_i$, respectively. Since the collections $\{X_i\}_{i=1}^n$, $\{\varepsilon_i\}_{i=1}^n$, and $\mathbf Z$ are mutually independent, and since both $X_i$ and $\varepsilon_i$ are i.i.d.\ with mean zero and finite second moments, structure~(i) of the preceding linearization applies separately to $W_i=X_i$ and $W_i=\varepsilon_i$. Therefore,
\[
\Var_{\operatorname{asymp}}(\hat\tau_X) = \frac{\sigma_X^2}{n\pi(1-\pi)}, \qquad
\Var_{\operatorname{asymp}}(\hat\tau_\varepsilon) = \frac{\sigma_\varepsilon^2}{n\pi(1-\pi)},
\]
and
\[
\Cov_{\operatorname{asymp}}(\hat\tau_X,\hat\tau_\varepsilon)=0.
\]
Therefore,
\[
R^2 = \frac{ \Cov_{\operatorname{asymp}}(\hat\tau,\hat\tau_X)^2 }{
\Var_{\operatorname{asymp}}(\hat\tau_X) \Var_{\operatorname{asymp}}(\hat\tau) } = \frac{
\beta^2\Var_{\operatorname{asymp}}(\hat\tau_X) }{ \beta^2\Var_{\operatorname{asymp}}(\hat\tau_X) +
\Var_{\operatorname{asymp}}(\hat\tau_\varepsilon) } = \frac{\beta^2\sigma_X^2}
{\beta^2\sigma_X^2+\sigma_\varepsilon^2}.
\]
This is the same as the marginal unit-level predictive quantity in this SUTVA model, since
\[
\frac{\Var\{\E(Y_i(\mathbf{z})\mid X_i)\}}{\Var\{Y_i(\mathbf{z})\}} = \frac{\beta^2\sigma_X^2}
{\beta^2\sigma_X^2+\sigma_\varepsilon^2}.
\]

\noindent\textit{Example~\ref{ex:r2-pair}.} Write the two units in pair $k$ as $k1$ and $k2$, so that $H_{k1}=Z_{k2}$, $H_{k2}=Z_{k1}$. Although $H_i$ is treatment-dependent, it is independent of unit $i$'s own treatment under independent Bernoulli assignment. In particular,
\[
\E(H_i\mid Z_i=1)
=
\E(H_i\mid Z_i=0)
=
\pi.
\]
Moreover, the centered arm-specific sums can be grouped into independent and bounded pair-level contributions. Consequently, $A_{1n}=\mathcal O_p(n^{-1/2})$ and $A_{0n}=\mathcal O_p(n^{-1/2})$, so structure~(ii) of the preceding linearization applies with $W_i=H_i$ and $\mu_W=\pi$:
\begin{align}
\hat\tau_H &= \frac{1}{n}\sum_{k=1}^{n/2}\sum_{i=1}^{2} \left\{ \frac{Z_{ki}}{\pi} -
\frac{1-Z_{ki}}{1-\pi} \right\} (H_{ki}-\pi) +o_p(n^{-1/2}) \nonumber\\
&= \frac{1}{n}\sum_{k=1}^{n/2} \frac{2(Z_{k1}-\pi)(Z_{k2}-\pi)}{\pi(1-\pi)} +o_p(n^{-1/2}).
\label{eq:H-expansion}
\end{align}
The second equality uses the fact that the two centered contributions within the same pair are identical:
\[
\left\{ \frac{Z_{k1}}{\pi} - \frac{1-Z_{k1}}{1-\pi} \right\}(Z_{k2}-\pi) = \left\{
\frac{Z_{k2}}{\pi} - \frac{1-Z_{k2}}{1-\pi} \right\}(Z_{k1}-\pi).
\]
The pair-level summands in \eqref{eq:H-expansion} are independent across pairs and have mean zero. Moreover, $\E\{(Z_{k1}-\pi)^2(Z_{k2}-\pi)^2\} = \pi^2(1-\pi)^2$, so each pair-level summand in \eqref{eq:H-expansion} has variance $4$. Therefore, $\Var_{\operatorname{asymp}}(\hat\tau_H) = \frac{1}{n^2}\cdot \frac{n}{2}\cdot 4 = \frac{2}{n}$.

The baseline covariate and residual contrasts satisfy $\Var_{\operatorname{asymp}}(\hat\tau_X) = \frac{\sigma_X^2}{n\pi(1-\pi)}$, $\Var_{\operatorname{asymp}}(\hat\tau_\varepsilon) = \frac{\sigma_\varepsilon^2}{n\pi(1-\pi)}$. The three first-order imbalance terms $\hat\tau_X$, $\hat\tau_H$, and $\hat\tau_\varepsilon$ are asymptotically uncorrelated, because the collections $\{X_i\}_{i=1}^n$, $\{\varepsilon_i\}_{i=1}^n$, and $\mathbf Z$ are mutually independent, and $X_i$ and $\varepsilon_i$ are mean zero.

The observed outcome satisfies $Y_i = \alpha+\tau Z_i+\beta X_i+\lambda H_i+\varepsilon_i$, and therefore $\hat\tau-\tau = \beta\hat\tau_X+\lambda\hat\tau_H+\hat\tau_\varepsilon$. With $\tauxhat=(\hat\tau_X,\hat\tau_H)^\top$, the asymptotic covariance matrix of $\tauxhat$ is diagonal:
\[
\Var_{\operatorname{asymp}}(\tauxhat) =
\begin{pmatrix}
\sigma_X^2/\{n\pi(1-\pi)\} & 0\\
0 & 2/n \end{pmatrix}.
\]
Moreover,
\[
\Cov_{\operatorname{asymp}}(\hat\tau,\tauxhat) =
\begin{pmatrix}
\beta\sigma_X^2/\{n\pi(1-\pi)\}\\
2\lambda/n \end{pmatrix}.
\]
Thus,
\[
\Cov_{\operatorname{asymp}}(\hat\tau,\tauxhat)^\top \Var_{\operatorname{asymp}}(\tauxhat)^{-1}
\Cov_{\operatorname{asymp}}(\hat\tau,\tauxhat) = \frac{\beta^2\sigma_X^2}{n\pi(1-\pi)} +
\frac{2\lambda^2}{n}.
\]
Since
\[
\Var_{\operatorname{asymp}}(\hat\tau) = \frac{\beta^2\sigma_X^2}{n\pi(1-\pi)} + \frac{2\lambda^2}{n}
+ \frac{\sigma_\varepsilon^2}{n\pi(1-\pi)},
\]
the $R^2$ in Corollary~\ref{thm:variance_reduction} is
\begin{equation}\label{eq:appendix-r2-pair-design} R^2 = \frac{
\beta^2\sigma_X^2/\{n\pi(1-\pi)\}+2\lambda^2/n }{ \beta^2\sigma_X^2/\{n\pi(1-\pi)\}+2\lambda^2/n
+\sigma_\varepsilon^2/\{n\pi(1-\pi)\} } = \frac{\beta^2\sigma_X^2+2\pi(1-\pi)\lambda^2}
{\beta^2\sigma_X^2+2\pi(1-\pi)\lambda^2+\sigma_\varepsilon^2}.
\end{equation}

For comparison, the marginal unit-level predictive quantity focuses on the prognostic component of a single unit's outcome,
\[
Y_i(\mathbf Z)-\tau Z_i = \alpha+\beta X_i+\lambda H_i+\varepsilon_i.
\]
Because $X_i$ is independent of $H_i$, and $\Var(H_i)=\pi(1-\pi)$, this unit-level quantity is
\begin{equation}\label{eq:appendix-r2-pair-unit} R^2_{\operatorname{unit}} = \frac{
\Var\{\E(Y_i(\mathbf Z)-\tau Z_i\mid X_i,H_i)\} }{ \Var\{Y_i(\mathbf Z)-\tau Z_i\} } =
\frac{\beta^2\sigma_X^2+\pi(1-\pi)\lambda^2}
{\beta^2\sigma_X^2+\pi(1-\pi)\lambda^2+\sigma_\varepsilon^2}.
\end{equation}
Comparing \eqref{eq:appendix-r2-pair-design} and \eqref{eq:appendix-r2-pair-unit}, the two expressions differ only in the exposure term: the $R^2$ contains $2\pi(1-\pi)\lambda^2$, whereas the marginal unit-level predictive quantity contains $\pi(1-\pi)\lambda^2$. The factor of two comes from the covariance structure of the H\'ajek exposure imbalance. To first order, the two units in the same pair contribute the same centered term to $\hat\tau_H$, so the asymptotic variance of the exposure imbalance is $2/n$. By contrast, an independent baseline covariate with the same marginal variance $\pi(1-\pi)$ would have asymptotic imbalance variance
\[
\frac{\pi(1-\pi)}{n\pi(1-\pi)}=\frac1n.
\]
This within-pair alignment is the source of the factor $2\pi(1-\pi)\lambda^2$ in the design-based expression.

\section{Proofs for Section~\ref{sec:conservative}}
\label{app:conservative-proofs}

\subsection{Preliminaries}

Let
\[
\psi_i = \frac{Z_i}{\pi}-\frac{1-Z_i}{1-\pi}, \qquad \boldsymbol\phi_{n,i}^{\mathrm{pop}} =
\boldsymbol\phi_i -Z_i\boldsymbol\mu_{n,1} -(1-Z_i)\boldsymbol\mu_{n,0},
\]
and recall the within-arm centered vector
\[
\hat{\boldsymbol\phi}_i = \boldsymbol\phi_i -Z_i\bar{\boldsymbol{\phi}}_1
-(1-Z_i)\bar{\boldsymbol{\phi}}_0.
\]
Define the arm-mean errors
\[
\boldsymbol\delta_{n,1} = \bar{\boldsymbol{\phi}}_1-\boldsymbol\mu_{n,1}, \qquad
\boldsymbol\delta_{n,0} = \bar{\boldsymbol{\phi}}_0-\boldsymbol\mu_{n,0}.
\]

Recall that $\mathcal G_n=([n],E_n)$ is the dependency graph used to construct the estimator in Definition~\ref{def:Un_estimator}, with adjacency matrix $\mathbf A_n$ and degrees $d_{n,i}=\sum_{j=1}^n A_{n,ij}$. Write
\[
\Delta_n=\max_{i\in[n]}d_{n,i}.
\]
For the proof, introduce the population-centered analogue of the estimator in Definition~\ref{def:Un_estimator}:
\begin{equation}
\label{eq:t3-pop-estimator}
\mathbf U_n^{\mathrm{pop}}
=
\frac1{n^2}\sum_{i=1}^n
 d_{n,i}\psi_i^2
\boldsymbol\phi_{n,i}^{\mathrm{pop}}
{\boldsymbol\phi_{n,i}^{\mathrm{pop}}}^{\top}.
\end{equation}
The estimator itself can be written as
\[
\hat{\mathbf U}_n
=
\frac1{n^2}\sum_{i=1}^n
 d_{n,i}\psi_i^2
\hat{\boldsymbol\phi}_i\hat{\boldsymbol\phi}_i^{\top}.
\]

Finally, let
\[
\mathbf U_n = \operatorname{Var}(\hat{\boldsymbol\eta}), \qquad \mathbf V_n = \mathbf D_n\mathbf
U_n\mathbf D_n.
\]
By Assumption~\ref{ass:variance_stability}, $\mathbf V_n\to\mathbf V$ entrywise. Since $d_\phi=p+1$ is fixed,
\begin{equation}
\label{eq:t3-Vn-op}
\|\mathbf V_n-\mathbf V\|_{\mathrm{op}}\to0.
\end{equation}

\begin{lemma}[Concentration and dominance of the population-centered estimator]
\label{lm:conservative_infeasible}
Suppose that $Z_i\stackrel{\mathrm{i.i.d.}}{\sim}\operatorname{Bernoulli}(\pi)$. Under Assumptions~\ref{ass:variance_stability}--\ref{ass:variance_lower_bound}, suppose that $\mathcal G_n$ is a dependency graph and that, for some constant $\delta>0$, $\Delta_n=\mathcal O(n^{1/2-\delta})$. Then there exists a deterministic sequence $b_n\to0$ such that, with
\[
\boldsymbol\Lambda_n
=
\mathbf D_n\,\mathbb E(\mathbf U_n^{\mathrm{pop}})\,\mathbf D_n,
\]
we have
\begin{equation}
\label{eq:t3-Sn-dominates-Vn}
\boldsymbol\Lambda_n \succeq \mathbf V_n-b_n\mathbf I_{d_\phi}.
\end{equation}
Moreover, for every $\varepsilon>0$,
\begin{equation}
\label{eq:t3-pop-centered-conservative}
\mathbb P\left\{
\mathbf D_n\mathbf U_n^{\mathrm{pop}}\mathbf D_n
\succeq
\mathbf V_n-\varepsilon\mathbf I_{d_\phi}
\right\}
\to1.
\end{equation}
\end{lemma}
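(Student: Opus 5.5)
The plan is to compare the standardized population-centered estimator $\mathbf M_n:=\mathbf D_n\mathbf U_n^{\mathrm{pop}}\mathbf D_n$ with $\mathbf V_n$ through the H\'ajek linearization. Throughout, $\mathbf a\in\mathbb R^{d_\phi}$ is a unit vector.

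\emph{Step 1 (identify the influence vector).} First I will check the algebraic identity $\boldsymbol\eta_{n,i}=\psi_i\boldsymbol\phi^{\mathrm{pop}}_{n,i}$. It holds because $Z_i\psi_i=Z_i/\pi$ and $(1-Z_i)\psi_i=-(1-Z_i)/(1-\pi)$. Set $u_i=n^{-1/2}\mathbf a^\top\mathbf D_n\boldsymbol\eta_{n,i}$. Then
\[
\mathbf a^\top\boldsymbol\Lambda_n\mathbf a=\frac1n\sum_i d_{n,i}\,\mathbb E u_i^2 .
\]
Write $\widetilde{\mathbf V}_n:=\operatorname{Var}(\widetilde{\mathbf T}_n)$. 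Then
\[
\mathbf a^\top\widetilde{\mathbf V}_n\mathbf a=\frac1n\sum_{i,j}\operatorname{Cov}(u_i,u_j).
\]

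\emph{Step 2 (dominance in expectation).} Each $u_i$ is a measurable function of $\mathbf W_{n,i}$, so $\mathcal G_n$ is a dependency graph for $\{u_i\}$. Hence $\operatorname{Cov}(u_i,u_j)=0$ unless $A_{n,ij}=1$. For adjacent pairs I will use $\operatorname{Cov}(u_i,u_j)\le\frac12\{\operatorname{Var}(u_i)+\operatorname{Var}(u_j)\}$. This works directly with centered variables, so no mean-product terms appear. Summing with the symmetric $\mathbf A_n$ gives
\[
\mathbf a^\top\widetilde{\mathbf V}_n\mathbf a\le\frac1n\sum_i d_{n,i}\operatorname{Var}(u_i)\le\frac1n\sum_i d_{n,i}\mathbb E u_i^2=\mathbf a^\top\boldsymbol\Lambda_n\mathbf a .
\]
Thus $\boldsymbol\Lambda_n\succeq\widetilde{\mathbf V}_n$ exactly. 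The hypothesis of Lemma~\ref{lem:t1-hajek-linearization} holds by Lemma~\ref{lem:dg-arm-score-lln}, applied to $\mathcal G_n$, whose maximum degree satisfies the same $\mathcal O(n^{1/2-\delta})$ bound. So that lemma gives $\widetilde{\mathbf V}_n\to\mathbf V$, and also $\mathbf V_n\to\mathbf V$. Taking $b_n=\|\mathbf V_n-\widetilde{\mathbf V}_n\|_{\mathrm{op}}\to0$ proves \eqref{eq:t3-Sn-dominates-Vn}.

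\emph{Step 3 (concentration, relative form).} Write $m=\mathbf a^\top\mathbf M_n\mathbf a=n^{-1}\sum_i d_{n,i}u_i^2$ and $\lambda=\mathbb E m=\mathbf a^\top\boldsymbol\Lambda_n\mathbf a$. By Assumptions~\ref{ass:boundedness} and~\ref{ass:variance_lower_bound}, $|u_i|\le C$. Using the dependency graph again, together with $\operatorname{Cov}(u_i^2,u_j^2)\le C\sqrt{\mathbb E u_i^2\,\mathbb E u_j^2}\le \frac C2(\mathbb E u_i^2+\mathbb E u_j^2)$, I expect
\[
\operatorname{Var}(m)\le\frac{C}{n^2}\sum_{i,j}A_{n,ij}d_{n,i}d_{n,j}\,\mathbb E u_i^2\le\frac{C\Delta_n^2}{n}\,\lambda .
\]
This variance is not absolutely small, since $\lambda$ can be as large as order $\Delta_n$. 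It is, however, small relative to $\lambda$. Fix $\varepsilon$ and set $t=\max\{\varepsilon,\lambda/2\}$. Chebyshev's inequality then gives
\[
\mathbb P(|m-\lambda|>t)\le \frac{C\Delta_n^2\lambda}{n t^2}\le \frac{2C\Delta_n^2}{n\varepsilon}=\mathcal O(n^{-2\delta})\to0 .
\]
On the complement of this event there are two cases. If $\lambda\le2\varepsilon$, then $m\ge\lambda-\varepsilon$. If $\lambda$ is larger, then $m\ge\lambda/2$, and this exceeds $\mathbf a^\top\mathbf V_n\mathbf a$ once $\lambda\ge 2\sup_n\|\mathbf V_n\|_{\mathrm{op}}+2\varepsilon$. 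In the intermediate range $2\varepsilon<\lambda<K$, Chebyshev with threshold $\varepsilon$ directly gives $\mathbb P(|m-\lambda|>\varepsilon)\le CK n^{-2\delta}/\varepsilon^2\to0$. Combining with Step 2 yields $m\ge\mathbf a^\top\mathbf V_n\mathbf a-2\varepsilon$ with probability tending to one, for each fixed $\mathbf a$.

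\emph{Step 4 (uniformity over $\mathbf a$: main obstacle).} Passing from a fixed $\mathbf a$ to the psd order requires uniformity over the sphere. A direct net argument fails because $\|\mathbf M_n\|_{\mathrm{op}}$ may be unbounded. I would handle this as follows. Because $\mathbf M_n$ is a nonnegative combination of rank-one psd terms, truncating $d_{n,i}$ to $\min(d_{n,i},K_n)$ only decreases $\mathbf M_n$ in the psd order. Choosing the truncation so that Step 2 still holds up to $o(1)$ is delicate, so as a fallback I would use the following two-part argument instead. First, apply Step 3 to the $d_\phi(d_\phi+1)/2$ fixed directions $\mathbf e_k$ and $\mathbf e_k\pm\mathbf e_l$ to control the entries of $\mathbf M_n$ where $\boldsymbol\Lambda_n$ stays bounded. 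Second, on the set of directions where $\mathbf a^\top\boldsymbol\Lambda_n\mathbf a$ is large, use the relative bound $\mathbf M_n\succeq\frac12\boldsymbol\Lambda_n$, which I would obtain from the matrix Chebyshev bound $\mathbb E\|\boldsymbol\Lambda_n^{-1/2}(\mathbf M_n-\boldsymbol\Lambda_n)\boldsymbol\Lambda_n^{-1/2}\|_F^2\le C\Delta_n^2/n$. This bound follows entrywise from the same covariance computation as in Step 3, after whitening by $\boldsymbol\Lambda_n^{-1/2}$, where $\boldsymbol\Lambda_n\succeq \tfrac12\mathbf V\succ0$ for large $n$. The whitened bound gives $\mathbf M_n\succeq(1-o_p(1))\boldsymbol\Lambda_n\succeq\mathbf V_n-b_n\mathbf I_{d_\phi}-o_p(1)\,\boldsymbol\Lambda_n$. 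Absorbing the last term is exactly where the case split between bounded and large directions of $\boldsymbol\Lambda_n$ is needed, and I expect this to be the most technical part. Once it is done, \eqref{eq:t3-pop-centered-conservative} follows.
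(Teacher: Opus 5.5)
You are following the paper's route, but you stop one line short and misjudge that line as the hard part. Your Steps 1--2 coincide with the paper's dominance-in-expectation step. The whitened bound you reach in Step 4, $\mathbb E\|\boldsymbol\Lambda_n^{-1/2}(\mathbf M_n-\boldsymbol\Lambda_n)\boldsymbol\Lambda_n^{-1/2}\|_F^2\le C\Delta_n^2/n$, is exactly the paper's concentration step. It holds for the following reason. Since $\lambda_{\min}(\boldsymbol\Lambda_n)\ge c_0>0$, the whitened summands $\psi_i^2(\mathbf u^\top\boldsymbol\Lambda_n^{-1/2}\mathbf D_n\boldsymbol\phi^{\mathrm{pop}}_{n,i})^2/n$ are uniformly bounded and satisfy $\sum_i d_{n,i}\mathbb E(\cdot)=n$. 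So each fixed direction has variance $\mathcal O(\Delta_n^2/n)$, and polarization over finitely many directions gives $\|\mathbf K_n-\mathbf I_{d_\phi}\|_{\mathrm{op}}\xrightarrow{p}0$ for $\mathbf K_n=\boldsymbol\Lambda_n^{-1/2}\mathbf M_n\boldsymbol\Lambda_n^{-1/2}$.

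The gap is the last step, which you leave undone because you think absorbing $-o_p(1)\boldsymbol\Lambda_n$ needs a case split between bounded and large directions. That term only appears because you write $(1-\alpha)\boldsymbol\Lambda_n=\boldsymbol\Lambda_n-\alpha\boldsymbol\Lambda_n$ and apply the Step 2 dominance to the first piece only. Apply it to $(1-\alpha)\boldsymbol\Lambda_n$ as a whole instead. On the event $\{\mathbf K_n\succeq(1-\alpha)\mathbf I_{d_\phi}\}$,
\[
\mathbf M_n=\boldsymbol\Lambda_n^{1/2}\mathbf K_n\boldsymbol\Lambda_n^{1/2}
\succeq(1-\alpha)\boldsymbol\Lambda_n
\succeq(1-\alpha)(\mathbf V_n-b_n\mathbf I_{d_\phi})
\succeq\mathbf V_n-(\alpha C_V+b_n)\mathbf I_{d_\phi},
\qquad C_V=\sup_n\|\mathbf V_n\|_{\mathrm{op}}.
\]
The possibly unbounded $\boldsymbol\Lambda_n$ is never multiplied by $\alpha$; only the bounded $\mathbf V_n$ is. Take $\alpha C_V\le\varepsilon/2$ and $n$ large enough that $b_n\le\varepsilon/2$, and the proof is complete with no case analysis.

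Once you see this, your Step 3 and the fixed-direction half of Step 4 are superfluous. The whitened bound already holds in all directions at once, and its only input beyond Step 2 is $\lambda_{\min}(\boldsymbol\Lambda_n)\ge c_0$, which follows from Step 2 together with $\mathbf V\succ\mathbf 0$. That same lower bound would also remove the case split inside Step 3: Chebyshev at threshold $\alpha\lambda$ gives $\mathbb P\{m<(1-\alpha)\lambda\}\le C\Delta_n^2/(n\alpha^2c_0)\to0$ directly.
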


\begin{proof}
\medskip\noindent\textbf{Step 1: Dominance in expectation.} Lemma~\ref{lem:dg-arm-score-lln}, applied with $\mathcal G_n^{\mathrm{dep}}=\mathcal G_n$, gives, for $z\in\{0,1\}$,
\[
\mathbb E\|\mathbf Q_{n,z}\|_2^2
\le
C\frac{\Delta_n}{n}
=
o(1).
\]
Thus Lemma~\ref{lem:t1-hajek-linearization} applies. Since $\boldsymbol\eta_{n,i}=\psi_i\boldsymbol\phi_{n,i}^{\mathrm{pop}}$, it yields
\[
\mathbf D_n\operatorname{Var}\left(\frac1n\sum_{i=1}^n\boldsymbol\eta_{n,i}\right)\mathbf D_n
=
\operatorname{Var}(\widetilde{\mathbf T}_n)
\to
\mathbf V.
\]
Together with~\eqref{eq:t3-Vn-op}, this implies that the deterministic sequence
\begin{equation}
\label{eq:t3-linearized-var-close}
b_n
:=
\left\|
\mathbf V_n-
\mathbf D_n\operatorname{Var}\left(\frac1n\sum_{i=1}^n\boldsymbol\eta_{n,i}\right)\mathbf D_n
\right\|_{\mathrm{op}}
\to0.
\end{equation}

Put $\boldsymbol\eta_{n,i}^{\circ} =\boldsymbol\eta_{n,i}-\mathbb E(\boldsymbol\eta_{n,i})$. Because $\boldsymbol\eta_{n,i}$ is a measurable function of $\mathbf W_{n,i}=(Z_i,\boldsymbol\phi_i^{\top})^{\top}$, the dependency-graph property gives
\[
A_{n,ij}=0
\quad\to\quad
\boldsymbol\eta_{n,i}^{\circ}\indep\boldsymbol\eta_{n,j}^{\circ}.
\]
Thus all covariance terms corresponding to nonedges vanish. Grouping the remaining terms by unordered edges and using
\[
\mathbf a\mathbf b^{\top}+\mathbf b\mathbf a^{\top}
\preceq
\mathbf a\mathbf a^{\top}+\mathbf b\mathbf b^{\top},
\]
which follows from $(\mathbf a-\mathbf b)(\mathbf a-\mathbf b)^{\top}\succeq\mathbf0$, yields
\begin{align}
\operatorname{Var}\left(\frac1n\sum_{i=1}^n\boldsymbol\eta_{n,i}\right)
&\preceq
\frac1{n^2}\sum_{i=1}^n d_{n,i}
\mathbb E\left(
\boldsymbol\eta_{n,i}^{\circ}\boldsymbol\eta_{n,i}^{\circ\top}
\right) \notag\\
&\preceq
\frac1{n^2}\sum_{i=1}^n d_{n,i}
\mathbb E\left(
\boldsymbol\eta_{n,i}\boldsymbol\eta_{n,i}^{\top}
\right)
=
\mathbb E(\mathbf U_n^{\mathrm{pop}}).
\label{eq:t3-expectation-domination}
\end{align}
After premultiplication and postmultiplication by $\mathbf D_n$, \eqref{eq:t3-linearized-var-close} and~\eqref{eq:t3-expectation-domination} prove \eqref{eq:t3-Sn-dominates-Vn}.

Since $\mathbf V_n\to\mathbf V\succ\mathbf0$ and $b_n\to0$, \eqref{eq:t3-Sn-dominates-Vn} implies that, for all sufficiently large $n$,
\begin{equation}
\label{eq:t3-Sn-positive}
\lambda_{\min}(\boldsymbol\Lambda_n)
\ge
c_0
:=
\frac12\lambda_{\min}(\mathbf V)
>
0.
\end{equation}

\medskip\noindent\textbf{Step 2: Concentration around the expectation.} For all sufficiently large $n$, define the relatively standardized estimator
\begin{equation}
\label{eq:t3-Kn-def}
\mathbf K_n
=
\boldsymbol\Lambda_n^{-1/2}
\mathbf D_n\mathbf U_n^{\mathrm{pop}}\mathbf D_n
\boldsymbol\Lambda_n^{-1/2}.
\end{equation}
Then $\mathbf K_n\succeq\mathbf0$ and
\begin{equation}
\label{eq:t3-EKn-I}
\mathbb E(\mathbf K_n)=\mathbf I_{d_\phi}.
\end{equation}

Fix a unit vector $\mathbf u\in\mathbb R^{d_\phi}$ and define
\begin{equation}
\label{eq:t3-qni-def}
q_{n,i}(\mathbf u)
=
\psi_i^2
\left(
\frac{
\mathbf u^\top\boldsymbol\Lambda_n^{-1/2}\mathbf D_n
\boldsymbol\phi_{n,i}^{\mathrm{pop}}
}{\sqrt n}
\right)^2.
\end{equation}
Assumptions~\ref{ass:boundedness} and~\ref{ass:variance_lower_bound}, together with fixed $d_\phi$, imply
\[
\sup_{n,i}
\frac{\|\mathbf D_n\boldsymbol\phi_{n,i}^{\mathrm{pop}}\|_2}{\sqrt n}
<
\infty.
\]
By~\eqref{eq:t3-Sn-positive} and $|\psi_i|\le\{\min(\pi,1-\pi)\}^{-1}$, there is a constant $C<\infty$ such that $0\le q_{n,i}(\mathbf u)\le C$, uniformly in $n,i,\mathbf u$. Moreover, \eqref{eq:t3-Kn-def} and~\eqref{eq:t3-EKn-I} give
\begin{equation}
\label{eq:t3-directional-identities}
\mathbf u^\top\mathbf K_n\mathbf u
=
\frac1n\sum_{i=1}^n d_{n,i}q_{n,i}(\mathbf u),
\qquad
\sum_{i=1}^n d_{n,i}\mathbb E\{q_{n,i}(\mathbf u)\}=n.
\end{equation}

Because $q_{n,i}(\mathbf u)$ is a measurable function of $\mathbf W_{n,i}$, it is independent of $q_{n,j}(\mathbf u)$ whenever $A_{n,ij}=0$. Its uniform boundedness implies
\[
\operatorname{Var}\{q_{n,i}(\mathbf u)\}
\le
C\mathbb E\{q_{n,i}(\mathbf u)\},
\]
and, whenever $A_{n,ij}=1$,
\[
\left|
\operatorname{Cov}\{q_{n,i}(\mathbf u),q_{n,j}(\mathbf u)\}
\right|
\le
C\left[
\mathbb E\{q_{n,i}(\mathbf u)\}
+
\mathbb E\{q_{n,j}(\mathbf u)\}
\right].
\]
Using~\eqref{eq:t3-directional-identities}, $d_{n,i}\le\Delta_n$, and $\sum_{j:A_{n,ij}=1}d_{n,j}\le d_{n,i}\Delta_n$, we obtain
\[
\operatorname{Var}\{\mathbf u^\top\mathbf K_n\mathbf u\}
\le
\frac{C}{n^2}\left(n\Delta_n+2n\Delta_n^2\right)
\le
C\frac{\Delta_n^2}{n}
\to
0.
\]
Thus $\mathbf u^\top\mathbf K_n\mathbf u\to_p1$ for every fixed unit vector $\mathbf u$.

Because $d_\phi$ is fixed, no sphere-covering argument is needed. Let
\[
\mathcal U_{d_\phi}
=
\left\{\mathbf e_j:1\le j\le d_\phi\right\}
\cup
\left\{
\frac{\mathbf e_j+\mathbf e_k}{\sqrt2}:1\le j<k\le d_\phi
\right\}.
\]
This is a finite set. Applying the preceding convergence to every $\mathbf u\in\mathcal U_{d_\phi}$ and using a union bound imply
\[
\max_{\mathbf u\in\mathcal U_{d_\phi}}
\left|\mathbf u^\top\mathbf K_n\mathbf u-1\right|
\xrightarrow{p}
0.
\]
The directions $\mathbf e_j$ determine the diagonal entries of $\mathbf K_n-\mathbf I_{d_\phi}$, while $(\mathbf e_j+\mathbf e_k)/\sqrt2$ determines the $(j,k)$ entry through
\[
(\mathbf K_n)_{jk}
=
\frac{
(\mathbf e_j+\mathbf e_k)^\top
\mathbf K_n
(\mathbf e_j+\mathbf e_k)
}{2}
-
\frac{(\mathbf K_n)_{jj}+(\mathbf K_n)_{kk}}{2}.
\]
It follows, since $d_\phi$ is fixed, that
\begin{equation}
\label{eq:t3-Kn-op-convergence}
\|\mathbf K_n-\mathbf I_{d_\phi}\|_{\mathrm{op}}
\xrightarrow{p}
0.
\end{equation}
Thus, for every fixed $\alpha\in(0,1)$,
\begin{equation}
\label{eq:t3-Kn-lower}
\mathbb P\left\{
\mathbf K_n\succeq(1-\alpha)\mathbf I_{d_\phi}
\right\}
\to1.
\end{equation}

\medskip\noindent\textbf{Step 3: Conclusion.} To finish, fix $\varepsilon>0$. Since $\mathbf V_n\to\mathbf V$ in operator norm, there exists $C_V<\infty$ such that $\mathbf V_n\preceq C_V\mathbf I_{d_\phi}$ for all sufficiently large $n$. Choose $\alpha>0$ so that $\alpha C_V\le\varepsilon/2$, and then choose $n$ sufficiently large that $b_n\le\varepsilon/2$. On the event in~\eqref{eq:t3-Kn-lower},
\begin{align*}
\mathbf D_n\mathbf U_n^{\mathrm{pop}}\mathbf D_n
&=
\boldsymbol\Lambda_n^{1/2}\mathbf K_n\boldsymbol\Lambda_n^{1/2}\\
&\succeq
(1-\alpha)\boldsymbol\Lambda_n\\
&\succeq
(1-\alpha)(\mathbf V_n-b_n\mathbf I_{d_\phi})\\
&\succeq
\mathbf V_n-\{\alpha C_V+b_n\}\mathbf I_{d_\phi}\\
&\succeq
\mathbf V_n-\varepsilon\mathbf I_{d_\phi}.
\end{align*}
Together with~\eqref{eq:t3-Kn-lower}, this proves \eqref{eq:t3-pop-centered-conservative}.
\end{proof}

\begin{lemma}[One-sided group-centering bound]
\label{lemma:group_centering}
Suppose that $Z_i\stackrel{\mathrm{i.i.d.}}{\sim}\operatorname{Bernoulli}(\pi)$. Under Assumptions~\ref{ass:boundedness} and~\ref{ass:variance_lower_bound}, suppose that $\mathcal G_n$ is a dependency graph and that, for some constant $\delta>0$, $\Delta_n=\mathcal O(n^{1/2-\delta})$. Then, for every fixed $\rho\in(0,1)$, there exists a random matrix $\mathbf R_{n,\rho}\succeq\mathbf0$ such that
\begin{equation}
\label{eq:t3-group-centering-conclusion}
\mathbf D_n\hat{\mathbf U}_n\mathbf D_n
\succeq
(1-\rho)\mathbf D_n\mathbf U_n^{\mathrm{pop}}\mathbf D_n
-
\mathbf R_{n,\rho},
\qquad
\|\mathbf R_{n,\rho}\|_{\mathrm{op}}=o_p(1).
\end{equation}
\end{lemma}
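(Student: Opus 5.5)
The plan is an algebraic decomposition of each summand followed by a crude norm bound on the cross terms. Write $\mathbf a_i=\boldsymbol\phi_{n,i}^{\mathrm{pop}}$ and $\mathbf b_i=Z_i\boldsymbol\delta_{n,1}+(1-Z_i)\boldsymbol\delta_{n,0}$. Then, off the exponentially rare event that an arm is empty, $\hat{\boldsymbol\phi}_i=\mathbf a_i-\mathbf b_i$.

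\textbf{Step 1: pointwise PSD inequality.} For any vectors $\mathbf a,\mathbf b$ and any $\rho\in(0,1)$, we have $\mathbf a\mathbf b^\top+\mathbf b\mathbf a^\top\preceq\rho\,\mathbf a\mathbf a^\top+\rho^{-1}\mathbf b\mathbf b^\top$. This is the PSD-ness of $(\sqrt\rho\,\mathbf a-\mathbf b/\sqrt\rho)(\cdot)^\top$. It gives
\[
(\mathbf a-\mathbf b)(\mathbf a-\mathbf b)^\top\succeq(1-\rho)\mathbf a\mathbf a^\top-(\rho^{-1}-1)\mathbf b\mathbf b^\top.
\]
Multiply by $d_{n,i}\psi_i^2/n^2\ge0$, sum over $i$, and conjugate by $\mathbf D_n$. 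This yields \eqref{eq:t3-group-centering-conclusion} with
\[
\mathbf R_{n,\rho}=(\rho^{-1}-1)\frac1{n^2}\sum_{i=1}^n d_{n,i}\psi_i^2\,\mathbf D_n\mathbf b_i\mathbf b_i^\top\mathbf D_n\succeq\mathbf0 .
\]
On the event that some arm is empty, $\hat{\mathbf U}_n$ and $\mathbf U_n^{\mathrm{pop}}$ are bounded after scaling by $Cn\cdot\Delta_n$. Add the difference $\mathbf D_n\{(1-\rho)\mathbf U_n^{\mathrm{pop}}-\hat{\mathbf U}_n\}\mathbf D_n$ to $\mathbf R_{n,\rho}$ there. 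Because the event has probability at most $\pi^n+(1-\pi)^n$, this addition is $o_p(1)$ and keeps PSD-ness if we use, for instance, its positive part plus the original term.

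\textbf{Step 2: norm bound on the remainder.} Since $|\psi_i|$ is bounded and $d_{n,i}\le\Delta_n$,
\[
\|\mathbf R_{n,\rho}\|_{\mathrm{op}}\le C_\rho\,\frac{\Delta_n}{n}\bigl(\|\mathbf D_n\boldsymbol\delta_{n,1}\|_2^2+\|\mathbf D_n\boldsymbol\delta_{n,0}\|_2^2\bigr).
\]
It therefore suffices to show $\|\mathbf D_n\boldsymbol\delta_{n,z}\|_2^2=O_p(\Delta_n)+o_p(1)$. Then $\|\mathbf R_{n,\rho}\|_{\mathrm{op}}=O_p(\Delta_n^2/n)=O_p(n^{-2\delta})\to0$.

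\textbf{Step 3: control of the arm-mean error.} Reuse the decomposition from the proof of Lemma~\ref{lem:t1-hajek-linearization}: $\boldsymbol\delta_{n,1}=\mathbf A_{n,1}+\mathbf R_{n,1}$ with $\mathbf D_n\mathbf A_{n,1}=\sqrt n\,\mathbf Q_{n,1}$. By Lemma~\ref{lem:dg-arm-score-lln} applied to $\mathcal G_n$, $\mathbb E\|\sqrt n\mathbf Q_{n,1}\|_2^2\le C\Delta_n$. The bounds \eqref{eq:t1-good-rem-expectation}--\eqref{eq:t1-bad-rem-expectation} give $\mathbb E\|\mathbf D_n\mathbf R_{n,1}\|_2^2\to0$. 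The control arm is handled identically. Markov's inequality then finishes the argument.

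\textbf{Main obstacle.} The step needing care is Step 3. Lemma~\ref{lem:t1-hajek-linearization} is stated under Assumption~\ref{ass:variance_stability}, which the present lemma does not assume. I would check that the remainder bounds for $\mathbf R_{n,z}$ used only Assumptions~\ref{ass:boundedness} and~\ref{ass:variance_lower_bound} (via $\|\mathbf D_n\|_{\mathrm{op}}\le C\sqrt n$), the Chernoff bound, and $\mathbb E\|\mathbf Q_{n,z}\|_2^2\to0$. If so, I would re-derive them directly rather than cite the lemma. This is also where the sparsity rate $\Delta_n=o(n^{1/2})$ is used essentially, through the product $\Delta_n\cdot\Delta_n/n$.
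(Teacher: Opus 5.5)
Your proposal is correct and follows essentially the same route as the paper: the same rank-one inequality $(\mathbf a-\mathbf b)(\mathbf a-\mathbf b)^\top\succeq(1-\rho)\mathbf a\mathbf a^\top-\rho^{-1}(1-\rho)\mathbf b\mathbf b^\top$, the same remainder $\mathbf R_{n,\rho}$, and the same bound $\|\mathbf R_{n,\rho}\|_{\mathrm{op}}\le C_\rho(\Delta_n/n)\max_z\|\mathbf D_n\boldsymbol\delta_{n,z}\|_2^2=\mathcal O_p(\Delta_n^2/n)$, obtained via Lemma~\ref{lem:dg-arm-score-lln}. The differences are minor. The paper bounds $\mathbf D_n\boldsymbol\delta_{n,z}$ through the exact identity $\mathbf D_n\boldsymbol\delta_{n,z}=(p_z/\hat p_{n,z})\sqrt n\,\mathbf Q_{n,z}$ on $\{\hat p_{n,z}\ge p_z/2\}$, which avoids revisiting the linearization remainders; your check that those remainders use only Assumptions~\ref{ass:boundedness} and~\ref{ass:variance_lower_bound} does succeed. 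Your empty-arm patch is unnecessary, because $\hat{\boldsymbol\phi}_i=\boldsymbol\phi_{n,i}^{\mathrm{pop}}-Z_i\boldsymbol\delta_{n,1}-(1-Z_i)\boldsymbol\delta_{n,0}$ holds for every realization under the zero-denominator convention.
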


\begin{proof}
Let
\[
\boldsymbol\gamma_{n,i}
=
\hat{\boldsymbol\phi}_i-\boldsymbol\phi_{n,i}^{\mathrm{pop}}
=
-Z_i\boldsymbol\delta_{n,1}-(1-Z_i)\boldsymbol\delta_{n,0},
\]
and set
\[
L_n
=
\max_{z\in\{0,1\}}
\|\mathbf D_n\boldsymbol\delta_{n,z}\|_2.
\]
We first derive a rate for $L_n$. Define
\[
p_1=\pi,
\qquad
p_0=1-\pi,
\qquad
\hat p_{n,1}=\frac1n\sum_{i=1}^n Z_i,
\qquad
\hat p_{n,0}=\frac1n\sum_{i=1}^n(1-Z_i),
\]
and let
\[
\mathcal E_n
=
\bigcap_{z\in\{0,1\}}
\left\{\hat p_{n,z}\ge\frac{p_z}{2}\right\}.
\]
The law of large numbers gives $\mathbb P(\mathcal E_n)\to1$. On $\mathcal E_n$, the definitions of the arm means and $\mathbf Q_{n,z}$ imply, for $z\in\{0,1\}$,
\[
\mathbf D_n\boldsymbol\delta_{n,z}
=
\frac{p_z}{\hat p_{n,z}}\sqrt n\,\mathbf Q_{n,z}.
\]
Lemma~\ref{lem:dg-arm-score-lln}, applied with $\mathcal G_n^{\mathrm{dep}}=\mathcal G_n$, gives
\[
\mathbb E\|\sqrt n\,\mathbf Q_{n,z}\|_2^2
\le
C\Delta_n.
\]
Markov's inequality therefore yields $\sqrt n\,\mathbf Q_{n,z}=\mathcal O_p(\sqrt{\Delta_n})$ for each $z$. Since $p_z/\hat p_{n,z}\le2$ on $\mathcal E_n$ and $\mathbb P(\mathcal E_n)\to1$, the preceding identity implies
\[
L_n
=
\mathcal O_p(\sqrt{\Delta_n}).
\]

For any $\mathbf a,\mathbf b\in\mathbb R^{d_\phi}$ and any $\rho\in(0,1)$,
\[
(\mathbf a+\mathbf b)(\mathbf a+\mathbf b)^\top
\succeq
(1-\rho)\mathbf a\mathbf a^\top
-
\frac{1-\rho}{\rho}\mathbf b\mathbf b^\top.
\]
Indeed, after moving the right-hand side to the left, the resulting matrix equals
\[
\left(\sqrt\rho\,\mathbf a+\rho^{-1/2}\mathbf b\right)
\left(\sqrt\rho\,\mathbf a+\rho^{-1/2}\mathbf b\right)^\top
\succeq
\mathbf0.
\]
Apply the preceding inequality with $\mathbf a=\mathbf D_n\boldsymbol\phi_{n,i}^{\mathrm{pop}}$ and $\mathbf b=\mathbf D_n\boldsymbol\gamma_{n,i}$, multiply by the nonnegative weight $n^{-2}d_{n,i}\psi_i^2$, and sum over $i$. This gives
\[
\mathbf D_n\hat{\mathbf U}_n\mathbf D_n
\succeq
(1-\rho)\mathbf D_n\mathbf U_n^{\mathrm{pop}}\mathbf D_n
-
\mathbf R_{n,\rho},
\]
where
\[
\mathbf R_{n,\rho}
=
\frac{1-\rho}{\rho n^2}
\sum_{i=1}^n
 d_{n,i}\psi_i^2
(\mathbf D_n\boldsymbol\gamma_{n,i})
(\mathbf D_n\boldsymbol\gamma_{n,i})^\top
\succeq
\mathbf0.
\]
Because $\|\mathbf D_n\boldsymbol\gamma_{n,i}\|_2\le L_n$, $\sup_i|\psi_i|\le\{\min(\pi,1-\pi)\}^{-1}$, and $\sum_i d_{n,i}\le n\Delta_n$,
\begin{align*}
\|\mathbf R_{n,\rho}\|_{\mathrm{op}}
&\le
C_\rho\frac{\Delta_n}{n}L_n^2\\
&=
\mathcal O_p\!\left(\frac{\Delta_n^2}{n}\right)
=
o_p(1),
\end{align*}
where the last equality follows from $\Delta_n=\mathcal O(n^{1/2-\delta})$. This proves~\eqref{eq:t3-group-centering-conclusion}.
\end{proof}

\subsection{Proof of Theorem~\ref{thm:conservative_matrix}}
\label{sec:thm3_proof}

Fix $\varepsilon>0$. Because $\mathbf V_n$ is positive semidefinite and $\|\mathbf V_n-\mathbf V\|_{\mathrm{op}}\to0$, there is a constant $C_V<\infty$ such that
\[
\mathbf0\preceq\mathbf V_n\preceq C_V\mathbf I_{d_\phi}
\]
for all sufficiently large $n$. Choose a fixed $\rho\in(0,1)$ such that $\rho C_V\le\varepsilon/4$.

By Lemma~\ref{lm:conservative_infeasible},
\begin{equation}
\label{eq:t3-final-a}
\mathbb P\left\{
\mathbf D_n\mathbf U_n^{\mathrm{pop}}\mathbf D_n
\succeq
\mathbf V_n-\frac{\varepsilon}{4}\mathbf I_{d_\phi}
\right\}
\to1.
\end{equation}
By Lemma~\ref{lemma:group_centering}, there is a random matrix $\mathbf R_{n,\rho}\succeq\mathbf0$ satisfying \eqref{eq:t3-group-centering-conclusion} and $\|\mathbf R_{n,\rho}\|_{\mathrm{op}}=o_p(1)$. Hence
\begin{equation}
\label{eq:t3-final-b}
\mathbb P\left\{
\|\mathbf R_{n,\rho}\|_{\mathrm{op}}
\le
\frac{\varepsilon}{4}
\right\}
\to1.
\end{equation}

On the event in~\eqref{eq:t3-final-b}, the facts that $\mathbf R_{n,\rho}\succeq\mathbf0$ and $\|\mathbf R_{n,\rho}\|_{\mathrm{op}}\le\varepsilon/4$ imply $\mathbf R_{n,\rho}\preceq(\varepsilon/4)\mathbf I_{d_\phi}$. Therefore, \eqref{eq:t3-group-centering-conclusion} gives
\begin{equation}
\label{eq:t3-final-one-sided}
\mathbf D_n\hat{\mathbf U}_n\mathbf D_n
\succeq
(1-\rho)\mathbf D_n\mathbf U_n^{\mathrm{pop}}\mathbf D_n
-
\frac{\varepsilon}{4}\mathbf I_{d_\phi}
\end{equation}
on that event. On the intersection of the events in~\eqref{eq:t3-final-a} and~\eqref{eq:t3-final-b}, combining~\eqref{eq:t3-final-a} and \eqref{eq:t3-final-one-sided} yields
\begin{align*}
\mathbf D_n\hat{\mathbf U}_n\mathbf D_n
&\succeq
(1-\rho)\left(
\mathbf V_n-\frac{\varepsilon}{4}\mathbf I_{d_\phi}
\right)
-
\frac{\varepsilon}{4}\mathbf I_{d_\phi}\\
&\succeq
\mathbf V_n
-
\left\{
\rho C_V+\frac{(1-\rho)\varepsilon}{4}+\frac{\varepsilon}{4}
\right\}\mathbf I_{d_\phi}\\
&\succeq
\mathbf V_n-\frac{3\varepsilon}{4}\mathbf I_{d_\phi}.
\end{align*}
Finally, by~\eqref{eq:t3-Vn-op}, for all sufficiently large $n$,
\begin{equation}
\label{eq:t3-final-Vn-V}
\mathbf V_n
\succeq
\mathbf V-\frac{\varepsilon}{4}\mathbf I_{d_\phi}.
\end{equation}
The intersection of the two events above has probability tending to one. On that intersection, for all sufficiently large $n$, the preceding bound and~\eqref{eq:t3-final-Vn-V} give
\[
\mathbf D_n\hat{\mathbf U}_n\mathbf D_n
\succeq
\mathbf V-\varepsilon\mathbf I_{d_\phi}.
\]
Equivalently,
\[
\mathbb P\left\{
\mathbf D_n\hat{\mathbf U}_n\mathbf D_n
-
\mathbf V
+
\varepsilon\mathbf I_{d_\phi}
\succeq
\mathbf0
\right\}
\to1,
\]
which proves Theorem~\ref{thm:conservative_matrix}.

\subsection{Proof of Corollary \ref{cor:conditional_psd_conservative}}
\label{sec:cor2_proof}

This follows immediately from Theorem~\ref{thm:conservative_matrix}. Indeed, let
\[
A_n(\varepsilon)
=\left\{
\mathbf D_n\hat{\mathbf U}_n\mathbf D_n-\mathbf V
+\varepsilon\mathbf I_{d_\phi}\succeq\mathbf0
\right\}.
\]
Then
\[
\mathbb P\!\left(A_n(\varepsilon)^c\mid M_n\le a\right)
\le
\frac{\mathbb P\!\left(A_n(\varepsilon)^c\right)}
{\mathbb P(M_n\le a)}.
\]
By Theorem~\ref{thm:conservative_matrix}, the numerator converges to zero. On the other hand, by Theorem~\ref{thm:joint-normality} and the continuous mapping theorem, $M_n \xrightarrow{d} \chi^2_p$, so $\mathbb{P}(M_n \le a) \to \mathbb{P}(\chi^2_p \le a) > 0$. Hence the conditional probability above converges to zero.

\subsection{Proof of Theorem \ref{thm:opt_solution}}

We first establish the feasibility criterion. The constraint $\boldsymbol\Omega\preceq\hat{\mathbf U}_{n}$ together with $\boldsymbol\Omega_{22}=\boldsymbol\Sigma_{xx}$ necessarily implies $\hat{\mathbf U}_{22}\succeq\boldsymbol\Sigma_{xx}$. Conversely, suppose this block inequality holds and set $\mathbf C=\hat{\mathbf U}_{22}\succ\mathbf0$ and
\[
\mathbf L=
\begin{pmatrix}
\hat{\mathbf U}_{12}\mathbf C^{-1}\\
\mathbf I_p
\end{pmatrix}.
\]
The Schur complement of $\hat{\mathbf U}_{n}\succeq\mathbf0$ gives the decomposition
\[
\hat{\mathbf U}_{n}
=
\begin{pmatrix}
\hat U_{11}-\hat{\mathbf U}_{12}\mathbf C^{-1}
\hat{\mathbf U}_{21}&\mathbf0\\
\mathbf0&\mathbf0
\end{pmatrix}
+\mathbf L\mathbf C\mathbf L^\top.
\]
Thus $\boldsymbol\Omega_0=\mathbf L\boldsymbol\Sigma_{xx}\mathbf L^\top$ is positive semidefinite, has lower-right block $\boldsymbol\Sigma_{xx}$, and is dominated by $\hat{\mathbf U}_{n}$ because
\[
\hat{\mathbf U}_{n}-\boldsymbol\Omega_0
=
\begin{pmatrix}
\hat U_{11}-\hat{\mathbf U}_{12}\mathbf C^{-1}
\hat{\mathbf U}_{21}&\mathbf0\\
\mathbf0&\mathbf0
\end{pmatrix}
+\mathbf L(\mathbf C-\boldsymbol\Sigma_{xx})\mathbf L^\top
\succeq\mathbf0.
\]
This proves sufficiency.

For ease of presentation we additionally assume $\hat{\mathbf U}_{22} - \boldsymbol\Sigma_{xx} \succ 0$ in the derivation below; the positive-semidefinite boundary case is justified at the end of the proof.

\medskip\noindent\textbf{Step 1: Two-sided bounds on $\Omega_{11}$.} Apply the Schur complement to each of the two PSD constraints. The lower constraint $\boldsymbol\Omega \succeq 0$, with lower-right block $\boldsymbol\Sigma_{xx} \succ 0$, gives
\begin{equation}\label{eq:opt_Omega11_lower} \Omega_{11} \;\ge\;
\boldsymbol\Omega_{12}\,\boldsymbol\Sigma_{xx}^{-1}\,\boldsymbol\Omega_{21}.
\end{equation}
The upper constraint $\hat{\mathbf U}_{n} - \boldsymbol\Omega \succeq 0$, with lower-right block $\hat{\mathbf U}_{22} - \boldsymbol\Sigma_{xx} \succ 0$, gives
\begin{equation}\label{eq:opt_Omega11_upper} \Omega_{11} \;\le\; \hat U_{11} -
(\hat{\mathbf U}_{12} - \boldsymbol\Omega_{12})(\hat{\mathbf U}_{22} -
\boldsymbol\Sigma_{xx})^{-1}(\hat{\mathbf U}_{21} - \boldsymbol\Omega_{21}).
\end{equation}
Hence, for any choice of $(\boldsymbol\Omega_{12}, \boldsymbol\Omega_{21})$, the variable $\Omega_{11}$ ranges within the interval defined by~\eqref{eq:opt_Omega11_lower}--\eqref{eq:opt_Omega11_upper}, and feasibility of the triple requires this interval to be nonempty (the right-hand side of~\eqref{eq:opt_Omega11_upper} must dominate the right-hand side of~\eqref{eq:opt_Omega11_lower}). Within any feasible interval, the objective $\Omega_{11} - (1-v_p)\,\boldsymbol\Omega_{12}\boldsymbol\Sigma_{xx}^{-1}\boldsymbol\Omega_{21}$ is strictly increasing in $\Omega_{11}$, so the maximum places $\Omega_{11}$ at its upper endpoint:
\begin{equation}\label{eq:opt_Omega11_in_Omega21} \Omega_{11} \;=\; \hat U_{11} -
(\hat{\mathbf U}_{12} - \boldsymbol\Omega_{12})(\hat{\mathbf U}_{22} -
\boldsymbol\Sigma_{xx})^{-1}(\hat{\mathbf U}_{21} - \boldsymbol\Omega_{21}).
\end{equation}
This saturates the upper PSD constraint but does \emph{not} automatically guarantee the lower PSD constraint: substituting~\eqref{eq:opt_Omega11_in_Omega21} into~\eqref{eq:opt_Omega11_lower} produces a single quadratic constraint that must be tracked alongside the reduced objective:
\begin{equation}\label{eq:opt_lower_quad} (\hat{\mathbf U}_{12} -
\boldsymbol\Omega_{12})(\hat{\mathbf U}_{22} -
\boldsymbol\Sigma_{xx})^{-1}(\hat{\mathbf U}_{21} - \boldsymbol\Omega_{21}) +
\boldsymbol\Omega_{12}\boldsymbol\Sigma_{xx}^{-1}\boldsymbol\Omega_{21}
\;\le\; \hat U_{11}.
\end{equation}
Equivalently, \eqref{eq:opt_lower_quad} is precisely the condition that the upper bound~\eqref{eq:opt_Omega11_upper} dominates the lower bound~\eqref{eq:opt_Omega11_lower} for the chosen $(\boldsymbol\Omega_{12}, \boldsymbol\Omega_{21})$. The reduced problem is
\begin{equation}\label{eq:opt_reduced} \max_{\boldsymbol\Omega_{21}} \;\; \hat U_{11} -
(\hat{\mathbf U}_{12} - \boldsymbol\Omega_{12})(\hat{\mathbf U}_{22} -
\boldsymbol\Sigma_{xx})^{-1}(\hat{\mathbf U}_{21} - \boldsymbol\Omega_{21}) -
(1-v_p)\,\boldsymbol\Omega_{12}\boldsymbol\Sigma_{xx}^{-1}\boldsymbol\Omega_{21}
\quad\text{s.t.~\eqref{eq:opt_lower_quad}}.
\end{equation}

\medskip\noindent\textbf{Step 2: Parametric form of the optimizer.} The objective in~\eqref{eq:opt_reduced} is strictly concave in $\boldsymbol\Omega_{21}$ and~\eqref{eq:opt_lower_quad} defines a compact convex ellipsoid, so an optimum exists. When this ellipsoid has nonempty interior, the KKT conditions apply. Introducing a Lagrange multiplier $\mu\ge0$ for~\eqref{eq:opt_lower_quad} and equating the gradient in $\boldsymbol\Omega_{21}$ to zero yields
\[
(1+\mu)(\hat{\mathbf U}_{22} - \boldsymbol\Sigma_{xx})^{-1}
(\hat{\mathbf U}_{21} -
\boldsymbol\Omega_{21}) \;=\; (1-v_p+\mu)\,\boldsymbol\Sigma_{xx}^{-1}\boldsymbol\Omega_{21},
\]
or equivalently $(\hat{\mathbf U}_{22} - \boldsymbol\Sigma_{xx})^{-1} \hat{\mathbf U}_{21} = \bigl[(\hat{\mathbf U}_{22} - \boldsymbol\Sigma_{xx})^{-1} + t\,\boldsymbol\Sigma_{xx}^{-1}\bigr]\boldsymbol\Omega_{21}$, where
\[
t \;:=\; \frac{1-v_p+\mu}{1+\mu} \in [1-v_p,\,1]
\]
(so $t = 1-v_p$ at $\mu = 0$ and $t \to 1$ as $\mu \to \infty$). If the ellipsoid has empty interior, strict convexity of its defining quadratic makes it a singleton at the quadratic's minimizer, which is exactly the same path at $t=1$. Thus both cases are covered by $t\in[1-v_p,1]$. Using the algebraic identity
\[
(\hat{\mathbf U}_{22} - \boldsymbol\Sigma_{xx})^{-1}
+t\,\boldsymbol\Sigma_{xx}^{-1}
\;=\;
(\hat{\mathbf U}_{22} - \boldsymbol\Sigma_{xx})^{-1}
\bigl[\boldsymbol\Sigma_{xx}
+t(\hat{\mathbf U}_{22} - \boldsymbol\Sigma_{xx})\bigr]
\boldsymbol\Sigma_{xx}^{-1}
\;=\;
(\hat{\mathbf U}_{22} - \boldsymbol\Sigma_{xx})^{-1}
\mathbf M(t)\boldsymbol\Sigma_{xx}^{-1},
\]
where $\mathbf M(t)$ is as in the theorem statement and $\mathbf M(t)\succ0$ on $[1-v_p,1]$ (since $\boldsymbol\Sigma_{xx}\succ0$, $\hat{\mathbf U}_{22}-\boldsymbol\Sigma_{xx}\succeq0$, and $t\ge0$), we solve for $\boldsymbol\Omega_{21}$:
\begin{equation}\label{eq:opt_Omega21_path}
\boldsymbol\Omega_{21}(t) \;=\; \boldsymbol\Sigma_{xx}\,\mathbf
M(t)^{-1}\,\hat{\mathbf U}_{21}, \qquad
\boldsymbol\Omega_{12}(t) \;=\;
\hat{\mathbf U}_{12}\,\mathbf M(t)^{-1}\,\boldsymbol\Sigma_{xx}.
\end{equation}
A direct computation gives
\begin{align*}
\hat{\mathbf U}_{21}-\boldsymbol\Omega_{21}(t)
&=\{\mathbf M(t)-\boldsymbol\Sigma_{xx}\}
\mathbf M(t)^{-1}\hat{\mathbf U}_{21}\\
&=t(\hat{\mathbf U}_{22}-\boldsymbol\Sigma_{xx})
\mathbf M(t)^{-1}\hat{\mathbf U}_{21}.
\end{align*}
Substituting into~\eqref{eq:opt_Omega11_in_Omega21} gives
\begin{equation}\label{eq:opt_Omega11_path}
\Omega_{11}(t) \;=\; \hat U_{11}
-t^2\hat{\mathbf U}_{12}\mathbf M(t)^{-1}
(\hat{\mathbf U}_{22}-\boldsymbol\Sigma_{xx})
\mathbf M(t)^{-1}\hat{\mathbf U}_{21}.
\end{equation}

\medskip\noindent\textbf{Step 3: Constraints on the path.} Substituting~\eqref{eq:opt_Omega21_path} into the left-hand side of~\eqref{eq:opt_lower_quad} and using
\[
\hat{\mathbf U}_{21}-\boldsymbol\Omega_{21}(t)
=t(\hat{\mathbf U}_{22}-\boldsymbol\Sigma_{xx})
\mathbf M(t)^{-1}\hat{\mathbf U}_{21}
\]
from Step~2 yields, by direct expansion,
\[
\begin{aligned}
&\boldsymbol\Omega_{12}(t)\boldsymbol\Sigma_{xx}^{-1}\boldsymbol\Omega_{21}(t)\\
&\quad+
(\hat{\mathbf U}_{12}-\boldsymbol\Omega_{12}(t))
(\hat{\mathbf U}_{22}-\boldsymbol\Sigma_{xx})^{-1}
(\hat{\mathbf U}_{21}-\boldsymbol\Omega_{21}(t))
=G(t).
\end{aligned}
\]
Thus the constraint~\eqref{eq:opt_lower_quad} on the path is equivalent to $G(t)\le\hat U_{11}$.

\smallskip \noindent\emph{Monotonicity of $G(t)$.} Let
\[
\mathbf L
:=
\boldsymbol\Sigma_{xx}^{-1/2}
(\hat{\mathbf U}_{22}-\boldsymbol\Sigma_{xx})
\boldsymbol\Sigma_{xx}^{-1/2}
\succeq0,
\]
with eigendecomposition $\mathbf L=\mathbf P\operatorname{diag}(\lambda_1,\ldots,\lambda_p)\mathbf P^\top$, $\lambda_j\ge0$. Set $\mathbf r=\mathbf P^\top\boldsymbol\Sigma_{xx}^{-1/2}\hat{\mathbf U}_{21}$. Then
\[
\mathbf M(t)^{-1}
=
\boldsymbol\Sigma_{xx}^{-1/2}\mathbf P
\operatorname{diag}\bigl((1+t\lambda_j)^{-1}\bigr)
\mathbf P^\top\boldsymbol\Sigma_{xx}^{-1/2},
\]
and the theorem's definition of $G(t)$ becomes
\begin{equation}
\label{eq:opt_G_diagonal}
G(t)
=
\sum_{j=1}^p r_j^2\frac{1+t^2\lambda_j}{(1+t\lambda_j)^2},
\qquad
G'(t)
=
2(t-1)\sum_{j=1}^p\frac{\lambda_jr_j^2}{(1+t\lambda_j)^3}.
\end{equation}
For $t\in[0,1]$, $G'(t)\le0$, so $G$ is non-increasing on $[1-v_p,1]$. At $t=1$, $\mathbf M(1)=\hat{\mathbf U}_{22}$ and
\[
G(1)
=
\hat{\mathbf U}_{12}
(\hat{\mathbf U}_{22})^{-1}
\hat{\mathbf U}_{21}
\le\hat U_{11}.
\]
This is the Schur complement of $\hat{\mathbf U}_{n}\succeq0$.

\smallskip

\textbf{Case A:} If $G(1-v_p)\le\hat U_{11}$, then at $\mu=0$ ($t=1-v_p$) the unconstrained maximizer of~\eqref{eq:opt_reduced} is feasible. By concavity and KKT this is the global optimum, so $t^\star=1-v_p$.

\textbf{Case B:} If $G(1-v_p)>\hat U_{11}$, complementary slackness forces~\eqref{eq:opt_lower_quad} to bind. Combining $G(1-v_p)>\hat U_{11}$, $G(1)\le\hat U_{11}$, monotonicity, and continuity yields $t^\star\in(1-v_p,1]$ with $G(t^\star)=\hat U_{11}$.

If $G$ is constant at $\hat U_{11}$ on a sub-interval, every point of that sub-interval satisfies the KKT conditions and yields the same optimal value $\hat v_{n}$. We pick the smallest such $t^\star$ for definiteness.

\medskip\noindent\textbf{Step 4: Optimal value.} With $t^\star$ from Cases A and B, $\boldsymbol\Omega^\star$ has blocks
\[
\boldsymbol\Omega^\star_{22} = \boldsymbol\Sigma_{xx}, \qquad \boldsymbol\Omega^\star_{12} =
\hat{\mathbf U}_{12}\,\mathbf M(t^\star)^{-1}\boldsymbol\Sigma_{xx}, \qquad
\boldsymbol\Omega^\star_{21} = (\boldsymbol\Omega^\star_{12})^\top,
\]
and $\Omega^\star_{11} = \Omega_{11}(t^\star)$ from~\eqref{eq:opt_Omega11_path}. The optimal value is
\begin{align*}
\hat v_{n}
&=\Omega^\star_{11}
-(1-v_p)\,\boldsymbol\Omega^\star_{12}
\boldsymbol\Sigma_{xx}^{-1}\boldsymbol\Omega^\star_{21}\\
&=\hat U_{11}
-\hat{\mathbf U}_{12}\mathbf M(t^\star)^{-1}
\bigl[(1-v_p)\boldsymbol\Sigma_{xx}
+(t^\star)^2(\hat{\mathbf U}_{22}-\boldsymbol\Sigma_{xx})\bigr]
\mathbf M(t^\star)^{-1}\hat{\mathbf U}_{21}.
\end{align*}
In Case A ($t^\star = 1-v_p$), the bracketed matrix simplifies to
\[
(1-v_p)\boldsymbol\Sigma_{xx}
+(1-v_p)^2(\hat{\mathbf U}_{22}-\boldsymbol\Sigma_{xx})
\;=\;
(1-v_p)\mathbf M(1-v_p),
\]
so one $\mathbf M(1-v_p)^{-1}$ cancels and
\[
\hat v_{n}
\;=\;
\hat U_{11}
-(1-v_p)\hat{\mathbf U}_{12}
\{v_p\boldsymbol\Sigma_{xx}
+(1-v_p)\hat{\mathbf U}_{22}\}^{-1}
\hat{\mathbf U}_{21},
\]
where the last equality uses $\mathbf M(1-v_p) =v_p\boldsymbol\Sigma_{xx}+(1-v_p)\hat{\mathbf U}_{22}$.

It remains to justify the boundary case $\hat{\mathbf U}_{22}-\boldsymbol\Sigma_{xx}\succeq\mathbf0$. For $\xi>0$, replace $\hat{\mathbf U}_{n}$ by
\[
\hat{\mathbf U}_{n,\xi}
=\hat{\mathbf U}_{n}+
\begin{pmatrix}
0&\mathbf0\\
\mathbf0&\xi\mathbf I_p
\end{pmatrix}.
\]
The preceding argument applies to each $\xi$. As $\xi\downarrow0$, the corresponding feasible sets are nested and contained in a common compact set. Hence every sequence of optimizers has a convergent subsequence whose limit is feasible for the original problem, while every originally feasible matrix remains feasible for all $\xi>0$. Continuity of the objective therefore implies convergence of the optimal values. Moreover, $\mathbf M_\xi(t)^{-1}$ and $G_\xi(t)$ converge uniformly on $[1-v_p,1]$ to their $\xi=0$ counterparts because $\mathbf M_\xi(t)\succeq\boldsymbol\Sigma_{xx}\succ\mathbf0$. Any limit point of the corresponding $t_\xi^\star$ consequently satisfies the limiting KKT conditions; if the limiting solution is not unique, all points in the resulting level interval give the same objective value. Taking the smallest such point gives the formulas stated in the theorem.

\subsection{Stability lemma}
\label{app:variance-optimization-stability}
Feasibility of the optimization does not ensure that its feasible set contains the true covariance matrix $\mathbf U_n$. By Theorem~\ref{thm:opt_solution}, the program is feasible if and only if $\hat{\mathbf U}_{22}\succeq\boldsymbol\Sigma_{xx}$, whereas feasibility of $\mathbf U_n$ itself requires the full matrix inequality $\hat{\mathbf U}_n\succeq\mathbf U_n$. Thus, a feasible program may exclude $\mathbf U_n$ and return a value below the objective evaluated at $\mathbf U_n$, without triggering the fallback in Definition~\ref{def:rerand-variance-estimator}.

Corollary~\ref{cor:conditional_psd_conservative} controls the covariance bound only up to a vanishing eigenvalue error after standardization. The following lemma shows that this failure of exact dominance can lower the optimized value, including its infeasibility fallback, by at most a constant times that error. This provides the deterministic bound needed to prove Theorem~\ref{thm:asymp_conservativeness} and will also be used for the ridge corrections in Appendix~\ref{app:feasibility-refinement}.

For a positive definite matrix $\mathbf B$ and a positive semidefinite matrix $\mathbf A$, partitioned with a scalar first block, write
\[
q_{\mathbf B}(\boldsymbol\Omega)
=\Omega_{11}-(1-v_p)\boldsymbol\Omega_{12}
\mathbf B^{-1}\boldsymbol\Omega_{21},
\]
and let $\Phi_{\mathbf B}(\mathbf A)$ be the maximum of $q_{\mathbf B}$ over $\mathbf0\preceq\boldsymbol\Omega\preceq\mathbf A$ with $\boldsymbol\Omega_{22}=\mathbf B$, using $A_{11}$ as the fallback when this set is empty. Let $\widetilde\Phi_{\mathbf B}(\mathbf A)$ denote the same optimized value with the fallback
\[
F(\mathbf A)=A_{11}-(1-v_p)\mathbf A_{12}\mathbf A_{22}^{-1}\mathbf A_{21}
\]
when $\mathbf A_{22}\succ\mathbf0$, and $A_{11}$ otherwise, using the same singular-block convention as Definition~\ref{def:rerand-variance-estimator}.

\begin{samepage}
\begin{lemma}[Stability under near dominance]
\label{lem:near-dominance-optimization}
Let $\mathbf W\succ\mathbf0$ satisfy $\mathbf W_{22}=\mathbf B\succ\mathbf0$, let $\mathbf A\succeq\mathbf0$, and put
\[
r=\bigl[-\lambda_{\min}(\mathbf A-\mathbf W)\bigr]_+.
\]
If the eigenvalues of $\mathbf W$ and $\mathbf B$ are bounded away from zero and their operator norms are bounded above, then, for all sufficiently small $r$,
\[
\min\{\Phi_{\mathbf B}(\mathbf A),\widetilde\Phi_{\mathbf B}(\mathbf A)\}
\ge q_{\mathbf B}(\mathbf W)-Cr,
\]
where $C<\infty$ depends only on these uniform bounds and on $v_p$.
\end{lemma}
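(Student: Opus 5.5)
The plan is to split according to whether the optimization over $\mathbf0\preceq\boldsymbol\Omega\preceq\mathbf A$, $\boldsymbol\Omega_{22}=\mathbf B$ is feasible. In each case we produce an explicit lower bound of the form $q_{\mathbf B}(\mathbf W)-Cr$. The basic input is $\mathbf A\succeq\mathbf W-r\mathbf I$. In particular $\mathbf A_{22}\succeq\mathbf B-r\mathbf I_p$ and $A_{11}\ge W_{11}-r$. Throughout, let $c_0>0$ denote a common lower bound on $\lambda_{\min}(\mathbf W)$ and $\lambda_{\min}(\mathbf B)$, and $C_0$ a common upper bound on their operator norms. We take $r\le c_0/4$ ("sufficiently small").

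\textbf{Feasible case.} Here we construct an explicit feasible point close to $\mathbf W$ by shrinking $\mathbf W$ toward its "regression part". Write $\mathbf W=\mathbf L\mathbf B\mathbf L^\top+\operatorname{diag}(s,\mathbf0)$ with $\mathbf L=(\mathbf W_{12}\mathbf B^{-1};\mathbf I_p)$ and Schur complement $s=W_{11}-\mathbf W_{12}\mathbf B^{-1}\mathbf W_{21}\ge c_0$. Consider
\[
\boldsymbol\Omega_\kappa=\mathbf W-\kappa r\,\mathbf e_1\mathbf e_1^\top-\kappa' r\,\mathbf L_0,
\]
where $\mathbf L_0$ is chosen so that the $(2,2)$ block is untouched. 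The simplest choice is $\boldsymbol\Omega=\mathbf W-\gamma r\,\mathbf P$, with $\mathbf P$ PSD supported so that $\mathbf P_{22}=\mathbf0$. Since $\mathbf P_{22}=\mathbf0$ and $\mathbf P\succeq0$, we must have $\mathbf P=\mathbf e_1\mathbf e_1^\top$, which only lowers $\Omega_{11}$. That alone may not yield $\mathbf A-\boldsymbol\Omega\succeq0$, so a better route is needed.

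The better route uses the feasibility hypothesis. We have $\mathbf A_{22}\succeq\mathbf B$, so $\mathbf A$ and $\mathbf W$ differ only in the $11$ and $12$ blocks up to a PSD slack in the $22$ block. We use the explicit feasible family from the proof of Theorem~\ref{thm:opt_solution} applied to $\mathbf A$, together with a convex combination. Let $\boldsymbol\Omega_0=\mathbf L_A\mathbf B\mathbf L_A^\top$ be the feasible point built there. Then set $\boldsymbol\Omega_\lambda=(1-\lambda)\mathbf W'+\lambda\boldsymbol\Omega_0$, where $\mathbf W'=\mathbf W-\beta r\,\mathbf e_1\mathbf e_1^\top$.

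The main obstacle is to show $\mathbf A-\boldsymbol\Omega_\lambda\succeq0$ for $\lambda=O(r)$ and $\beta=O(1)$. The planned argument is as follows. $\mathbf A-\mathbf W'\succeq -r\mathbf I+\beta r\,\mathbf e_1\mathbf e_1^\top$ is nonnegative except in directions with a $22$ component. On those directions, $\mathbf A-\boldsymbol\Omega_0$ is PSD with $22$ block $\mathbf A_{22}-\mathbf B$. That block may be degenerate, so instead we lean on the uniform positivity of $\mathbf B$. We absorb the $-r\mathbf I$ in the $22$ directions by interpolating: $\lambda(\mathbf A-\boldsymbol\Omega_0)+(1-\lambda)(\mathbf A-\mathbf W')$. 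We then check positivity on $\mathbb R\times\mathbb R^p$ by a $2\times2$ Schur-type bound, using $\lambda_{\min}(\mathbf W)\ge c_0$. The objective $q_{\mathbf B}$ is Lipschitz on bounded sets, with a constant depending on $\|\mathbf B^{-1}\|$ and on bounds for $\mathbf W$ and $\mathbf A$ (note $\|\mathbf A\|$ only enters through $\boldsymbol\Omega_0$, which is bounded by $\|\mathbf A_{12}\|$, in turn bounded by $\|\mathbf W\|+r$ plus feasibility). Hence $\Phi_{\mathbf B}(\mathbf A)\ge q_{\mathbf B}(\boldsymbol\Omega_\lambda)\ge q_{\mathbf B}(\mathbf W)-Cr$.

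\textbf{Infeasible case.} The value is the fallback: either $A_{11}$ (for $\Phi$) or $F(\mathbf A)$ (for $\widetilde\Phi$). For $A_{11}$ we have $A_{11}\ge W_{11}-r\ge q_{\mathbf B}(\mathbf W)-r$. For $F(\mathbf A)$, note that $\mathbf A_{22}\succeq\mathbf B-r\mathbf I\succeq(c_0/2)\mathbf I$, so $F$ is the nonsingular branch. We use the monotonicity already noted in the text: $F(\mathbf S)=v_pS_{11}+(1-v_p)(S_{11}-\mathbf S_{12}\mathbf S_{22}^{-1}\mathbf S_{21})$ is monotone in the PSD order on $\{\mathbf S_{22}\succ0\}$. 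This gives $F(\mathbf A)\ge F(\mathbf W-r\mathbf I)$. Then $F(\mathbf W-r\mathbf I)\ge F(\mathbf W)-Cr$ by smoothness of $F$ near $\mathbf W$, since $\mathbf W_{22}-r\mathbf I\succeq(c_0/2)\mathbf I$. Finally $F(\mathbf W)=q_{\mathbf B}(\mathbf W)$ because $\mathbf W_{22}=\mathbf B$. The same monotonicity argument also bounds $\widetilde\Phi$ in the feasible case, since there $\widetilde\Phi=\Phi$. Taking the minimum of the two cases yields the claim, with $C$ depending only on $c_0$, $C_0$, and $v_p$.
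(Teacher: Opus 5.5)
\paragraph{Verdict.}
Your infeasible case is correct and matches the paper. For the $A_{11}$ fallback, $A_{11}\ge W_{11}-r\ge q_{\mathbf B}(\mathbf W)-r$; for the plug-in fallback, monotonicity of $F$ gives $F(\mathbf A)\ge F(\mathbf W-r\mathbf I)\ge q_{\mathbf B}(\mathbf W)-Cr$. The feasible case, however, has a genuine gap.

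\paragraph{The interpolation family fails.}
The family $\boldsymbol\Omega_\lambda=(1-\lambda)\mathbf W'+\lambda\boldsymbol\Omega_0$ need not contain any feasible point whose value is within $O(r)$ of $q_{\mathbf B}(\mathbf W)$. Every member has lower-right block $\mathbf B$, so $(\mathbf A-\boldsymbol\Omega_\lambda)_{22}=\mathbf A_{22}-\mathbf B$ for every $\lambda$. The positivity of $\mathbf B$ therefore never enters, and feasibility only gives $\mathbf A_{22}\succeq\mathbf B$, possibly with equality.

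Take $p=1$, $\mathbf W=\operatorname{diag}(2,1)$, $\mathbf B=1$ and $\mathbf A=\begin{pmatrix}2&r\\ r&1\end{pmatrix}$. The deficiency is exactly $r$ and the program is feasible. Yet $(\mathbf A-\boldsymbol\Omega_\lambda)_{22}=0$ while $(\mathbf A-\boldsymbol\Omega_\lambda)_{12}=(1-\lambda)r$, so dominance forces $\lambda=1$. But $q_{\mathbf B}(\boldsymbol\Omega_0)=v_pr^2$, nowhere near $q_{\mathbf B}(\mathbf W)=2$; the true optimum $2-(1-v_p)r^2$ is attained at $\boldsymbol\Omega=\mathbf A$.

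This is not a boundary artifact. Replacing $A_{22}$ by $1+\epsilon$ with $\epsilon\ll r$ shows that $\lambda=O(r)$ would require $\beta$ of order $r/\epsilon$, so the loss in $q_{\mathbf B}$ is not uniformly $O(r)$.

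\paragraph{Two further problems.}
First, your Lipschitz step bounds $\|\mathbf A_{12}\|$ by $\|\mathbf W\|+r$. But $\mathbf A$ is controlled only from below ($\mathbf A\succeq\mathbf W-r\mathbf I$), so any constant depending on $\|\mathbf A\|$ is inadmissible.

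Second, monotonicity of $F$ cannot cover $\widetilde\Phi$ in the feasible case. There the optimized value can be strictly smaller than $F(\mathbf A)$; compare the Case A formula in Theorem~\ref{thm:opt_solution}. So everything rests on the feasible-point construction.

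\paragraph{The missing idea.}
Correct $\mathbf W$ by a matrix adapted to $\mathbf A$, rather than interpolating toward a fixed point. Let $\mathbf H=\mathbf A+r\mathbf I-\mathbf W\succeq\mathbf0$. Feasibility gives $\mathbf H_{22}\succeq r\mathbf I$. Running the sufficiency construction from the proof of Theorem~\ref{thm:opt_solution} with upper bound $\mathbf H$ and prescribed block $r\mathbf I$ produces $\mathbf0\preceq\mathbf K\preceq\mathbf H$ with $\mathbf K_{22}=r\mathbf I$.

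Then $\boldsymbol\Omega=\mathbf W-r\mathbf I+\mathbf K$ has three properties:
\begin{itemize}
\item it is positive semidefinite for $r<\lambda_{\min}(\mathbf W)$;
\item its lower-right block is $\mathbf B$;
\item it satisfies $\mathbf A-\boldsymbol\Omega=\mathbf H-\mathbf K\succeq\mathbf0$.
\end{itemize}

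With $\mathbf k=\mathbf K_{21}$ and $c=1-v_p$, the Schur complement of $\mathbf K$ gives $K_{11}\ge\|\mathbf k\|_2^2/r$. Hence
\[
q_{\mathbf B}(\boldsymbol\Omega)-q_{\mathbf B}(\mathbf W)
\ge -r+\bigl(r^{-1}-c\|\mathbf B^{-1}\|_{\mathrm{op}}\bigr)\|\mathbf k\|_2^2
-2c\|\mathbf W_{12}\mathbf B^{-1}\|_2\|\mathbf k\|_2
\ge -Cr ,
\]
where the last step is completing the square. The bound is uniform in $\mathbf k$, and therefore in $\mathbf A$.

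The quadratic gain $\|\mathbf k\|_2^2/r$ in the $(1,1)$ entry is what pays for an off-diagonal shift of arbitrary size. Nothing in your construction plays that role. In the example above, this construction returns exactly $\boldsymbol\Omega=\mathbf A$.
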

\end{samepage}

\begin{proof}
Let $c=1-v_p$ and $\mathbf H=\mathbf A+r\mathbf I-\mathbf W\succeq\mathbf0$. If the program is infeasible, the fallback and the $(1,1)$ entry of $\mathbf A+r\mathbf I\succeq\mathbf W$ give
\[
\Phi_{\mathbf B}(\mathbf A)=A_{11}
\ge W_{11}-r
\ge q_{\mathbf B}(\mathbf W)-r.
\]
For the plug-in fallback, take $r$ small enough that $\mathbf A\succeq\mathbf W-r\mathbf I\succ\mathbf0$. The map $F$ is monotone in the positive semidefinite order: it is the sum of $v_p A_{11}$ and $c$ times the Schur complement of $\mathbf A_{22}$. Consequently,
\begin{align*}
\widetilde\Phi_{\mathbf B}(\mathbf A)
=F(\mathbf A)
&\ge F(\mathbf W-r\mathbf I)\\
&=q_{\mathbf B}(\mathbf W)-r
-c\mathbf W_{12}\bigl\{(\mathbf B-r\mathbf I)^{-1}-\mathbf B^{-1}\bigr\}\mathbf W_{21}\\
&\ge q_{\mathbf B}(\mathbf W)-Cr.
\end{align*}
The last bound is uniform for small $r$, since $(\mathbf B-r\mathbf I)^{-1}-\mathbf B^{-1} =r(\mathbf B-r\mathbf I)^{-1}\mathbf B^{-1}$ and the assumed spectral bounds apply.

Suppose instead that the program is feasible, so the two optimized values coincide. Feasibility implies $\mathbf A_{22}\succeq\mathbf B$, and hence $\mathbf H_{22}\succeq r\mathbf I$. The conclusion is immediate when $r=0$, because $\mathbf W$ is then feasible. For $r>0$, the sufficiency argument in the proof of Theorem~\ref{thm:opt_solution}, applied with upper bound $\mathbf H$ and prescribed lower-right block $r\mathbf I$, gives a matrix $\mathbf0\preceq\mathbf K\preceq\mathbf H$ with $\mathbf K_{22}=r\mathbf I$. When $r$ is smaller than $\lambda_{\min}(\mathbf W)$,
\[
\boldsymbol\Omega=\mathbf W-r\mathbf I+\mathbf K
\]
is positive semidefinite, is dominated by $\mathbf A$, and has lower-right block $\mathbf B$; therefore it is feasible. Write $\mathbf k=\mathbf K_{21}$. The Schur complement of $\mathbf K\succeq\mathbf0$ gives $K_{11}\ge\|\mathbf k\|_2^2/r$. Consequently,
\begin{align*}
q_{\mathbf B}(\boldsymbol\Omega)-q_{\mathbf B}(\mathbf W)
&=-r+K_{11}
-2c\mathbf W_{12}\mathbf B^{-1}\mathbf k
-c\mathbf k^\top\mathbf B^{-1}\mathbf k\\
&\ge-r+
\left(r^{-1}-c\|\mathbf B^{-1}\|_{\mathrm{op}}\right)\|\mathbf k\|_2^2
-2c\|\mathbf W_{12}\mathbf B^{-1}\|_2\|\mathbf k\|_2\\
&\ge-Cr,
\end{align*}
where the last inequality follows by completing the square, uniformly for small $r$.
\end{proof}

\subsection{Proof of Theorem~\ref{thm:asymp_conservativeness}}
\label{app:proof-asymp-conservativeness}

Write
\[
\mathbf A_n=\mathbf D_n\hat{\mathbf U}_n\mathbf D_n,
\qquad
\mathbf V_n=\mathbf D_n\mathbf U_n\mathbf D_n,
\qquad
\mathbf B_n=(\mathbf V_n)_{22}
=\mathbf D_{n,x}\boldsymbol\Sigma_{xx}\mathbf D_{n,x},
\]
where $\mathbf D_{n,x}$ is the covariate block of $\mathbf D_n$, and define the oracle standardized deficiency
\[
r_n=\bigl[-\lambda_{\min}(\mathbf A_n-\mathbf V_n)\bigr]_+.
\]
Corollary~\ref{cor:conditional_psd_conservative} and $\|\mathbf V_n-\mathbf V\|_{\mathrm{op}}\to0$ imply $r_n=o_p(1)$ conditionally on $M_n\le a$. Since $\mathbf V_n\to\mathbf V\succ\mathbf0$, $\mathbf A_n\succeq\mathbf V_n-r_n\mathbf I\succ\mathbf0$ with conditional probability tending to one, so the singular-block convention above has no effect on the asymptotic conclusion. Applying Lemma~\ref{lem:near-dominance-optimization} to the plug-in fallback gives
\[
\frac{\hat v_n}{\sigma_{n,y}^2}
=\widetilde\Phi_{\mathbf B_n}(\mathbf A_n)
\ge
q_{\mathbf B_n}(\mathbf V_n)-Cr_n
\]
with conditional probability tending to one. Assumption~\ref{ass:variance_stability} gives $\mathbf V_n\to\mathbf V\succ\mathbf0$, and hence
\[
q_{\mathbf B_n}(\mathbf V_n)
\to
V_{11}-(1-v_p)\mathbf V_{12}\mathbf V_{22}^{-1}\mathbf V_{21}
=\frac{v_n}{\sigma_{n,y}^2}.
\]
The claimed probability statement follows.

\subsection{Proof of Theorem~\ref{thm:ci_coverage}}
\label{app:proof-ci-coverage}

Fix $\alpha \in (0,1)$. Recall from~\eqref{eq:rerand_critical_value} that, with $Q(R^2;p,a) = \sqrt{1-R^2}\,\varepsilon_0 + \sqrt{R^2}\,L_{p,a}$ and $\omega_{1-\alpha/2}(R^2;p,a)$ its $(1-\alpha/2)$-quantile,
\[
q_{1-\alpha/2}(R^2;p,a) = \frac{\omega_{1-\alpha/2}(R^2;p,a)}{\sqrt{1-(1-v_p)R^2}}, \qquad \bar
q_{1-\alpha/2}(p,a) = \sup_{R^2\in[0,1]} q_{1-\alpha/2}(R^2;p,a).
\]
Write $\bar q := \bar q_{1-\alpha/2}(p,a)$, a deterministic constant depending only on $(p,a,\alpha)$, and let $D := 1-(1-v_p)R^2$ denote the true (unknown) design parameter, so that $\operatorname{Var}\bigl(Q(R^2;p,a)\bigr) = D \ge v_p > 0$ and $v_n = \sigma_{n,y}^2\,D$. The constant $\bar q$ is finite: Chebyshev's inequality gives $\omega_{1-\alpha/2}(R^2;p,a)\le\sqrt{2/\alpha}$ uniformly in $R^2$, whereas $\sqrt D\ge\sqrt{v_p}$. Coverage is equivalent to
\[
\tau \in \mathcal I_{1-\alpha}
\;\Longleftrightarrow\;
|\hat\tau-\tau|\le\bar q\,(\hat v_n)^{1/2}.
\]

Fix $\eta\in(0,v_p)$. By Theorem~\ref{thm:asymp_conservativeness}, the event
\[
\mathcal B_n(\eta)
:=\bigl\{\hat v_n\ge v_n-\eta\sigma_{n,y}^2\bigr\}
\]
satisfies $\mathbb P\bigl(\mathcal B_n(\eta)\mid M_n\le a\bigr)\to1$. Since $v_n=\sigma_{n,y}^2D$ and $D-\eta\ge v_p-\eta>0$, on $\mathcal B_n(\eta)$ we have
\[
\hat v_n\ge\sigma_{n,y}^2(D-\eta),
\qquad\text{hence}\qquad
\bar q\,(\hat v_n)^{1/2}
\ge\bar q\,\sigma_{n,y}\sqrt{D-\eta}.
\]
Thus on $\mathcal B_n(\eta)$ the interval $\mathcal I_{1-\alpha}$ contains the deterministic-half-width interval $\bigl[\hat\tau \pm \sigma_{n,y}\,\bar q\sqrt{D-\eta}\bigr]$.

Let $A_n := \bigl\{|\hat\tau-\tau|/\sigma_{n,y} \le \bar q\sqrt{D-\eta}\bigr\}$. By the preceding inclusion, $A_n\cap\mathcal B_n(\eta)\subseteq\{\tau\in\mathcal I_{1-\alpha}\}$, so by the elementary inequality $\mathbb P(A\cap B)\ge\mathbb P(A)-\mathbb P(B^c)$,
\[
\mathbb P\bigl(\tau\in\mathcal I_{1-\alpha}\mid M_n\le a\bigr)
\ge
\mathbb P\bigl(A_n\mid M_n\le a\bigr)
-\mathbb P\bigl(\mathcal B_n(\eta)^c\mid M_n\le a\bigr),
\]
where the last term tends to $0$. By~\eqref{eqn:limit_unscaled}, conditional on $M_n\le a$, $(\hat\tau-\tau)/\sigma_{n,y}\xrightarrow{d} Q(R^2;p,a)$, whose distribution is continuous for every $R^2\in[0,1]$ (by Gaussian convolution when $R^2<1$, and because $L_{p,a}$ has a continuous density when $R^2=1$). Hence $\{x:|x|\le \bar q\sqrt{D-\eta}\}$ is a continuity set of the limit, and by the Portmanteau theorem
\[
\liminf_{n\to\infty}
\mathbb P\bigl(\tau\in\mathcal I_{1-\alpha}\mid M_n\le a\bigr)
\ge
\mathbb P\bigl(|Q(R^2;p,a)|\le\bar q\sqrt{D-\eta}\bigr).
\]
Letting $\eta\downarrow0$ and using continuity of the CDF of $|Q(R^2;p,a)|$,
\begin{equation}\label{eq:proof_after_eta} \liminf_{n\to\infty}\mathbb P\bigl(\tau\in\mathcal
I_{1-\alpha}\mid M_n\le a\bigr)
\ge\mathbb P\bigl(|Q(R^2;p,a)|\le\bar q\sqrt D\bigr).
\end{equation}

Let $T(R^2;p,a) := Q(R^2;p,a)/\sqrt D$ be the studentized limit (unit variance). By construction~\eqref{eq:rerand_critical_value},
\[
q_{1-\alpha/2}(R^2;p,a)
=\frac{\omega_{1-\alpha/2}(R^2;p,a)}{\sqrt D}
\]
is exactly the $(1-\alpha/2)$-quantile of $T(R^2;p,a)$. Since $Q$, and hence $T$, is symmetric about zero,
\[
\mathbb P\bigl(|T(R^2;p,a)| \le q_{1-\alpha/2}(R^2;p,a)\bigr) = 1-\alpha.
\]
Because $\bar q = \sup_{r\in[0,1]} q_{1-\alpha/2}(r;p,a) \ge q_{1-\alpha/2}(R^2;p,a)$ at the true $R^2$, monotonicity of the CDF of $|T(R^2;p,a)|$ gives
\[
\mathbb P\bigl(|Q(R^2;p,a)| \le \bar q\sqrt D\bigr) = \mathbb P\bigl(|T(R^2;p,a)| \le \bar q\bigr)
\ge \mathbb P\bigl(|T(R^2;p,a)| \le q_{1-\alpha/2}(R^2;p,a)\bigr) = 1-\alpha.
\]
Combining this with~\eqref{eq:proof_after_eta} yields
\[
\liminf_{n\to\infty}
\mathbb P\bigl(\tau\in\mathcal I_{1-\alpha}\mid M_n\le a\bigr)
\ge1-\alpha.
\]

\section{ Discussion and Extensions to Section~\ref{sec:conservative}}
\label{sec:ex_conservative}

\subsection{Alternative approaches to conservative inference}
\label{app:alternative_inference}

In this section, we briefly discuss several alternative conservative procedures, including simpler variance bounds, alternative confidence-interval constructions, and a Gaussian-prepivoting approach inspired by \citet{cohen2022gaussian}. Although these approaches are generally less efficient than our proposed method or require stronger assumptions, they provide useful intuition.

\subsubsection{Alternative variance estimators}
\label{sec:alt_var_estimator}

Recall from Corollary~\ref{thm:variance_reduction} that the asymptotic variance under rerandomization can be written as
\[
\Var_{\operatorname{asymp}}(\hat{\tau} \mid M_n \le a)= \sigma_{n,y}^2 V_{11} \Bigl\{ (1-R^2)+v_p
R^2 \Bigr\} = \sigma_{n,y}^2 V_{11} (1-R^2) + v_p \sigma_{n,y}^2 V_{11} R^2 ,
\]
where
\[
R^2 = \frac{ \mathbf V_{12} \mathbf V_{22}^{-1} \mathbf V_{21} }{ V_{11} }.
\]

Our proposed estimator is obtained by solving the optimization problem~(\ref{eqn:optimization_problem}) and is specifically designed to provide a relatively sharp conservative upper bound. Nevertheless, two simpler alternatives are available.

\paragraph{Bernoulli estimator}
The first alternative is the variance estimator inherited from the Bernoulli design:
\[
\hat U_{11}.
\]
Since $\hat{\mathbf U}$ is conservative for the covariance matrix under Bernoulli randomization, this estimator remains conservative under rerandomization. However, it completely ignores the variance reduction induced by rerandomization and is therefore typically overly conservative.

\paragraph{Small-\texorpdfstring{$a$}{a} plug-in estimator}
A second alternative becomes particularly natural when the rerandomization threshold is small. When $a \approx 0$, we have $v_p \approx 0$. Moreover, the solution $\hat v_{n,\delta}$ of the optimization problem~\eqref{eq:ridge-optimization-problem} satisfies
\[
\hat v_{n,\delta}
\approx
\hat U_{11,\delta}
-\hat{\mathbf U}_{12,\delta}
(\hat{\mathbf U}_{22,\delta})^{-1}
\hat{\mathbf U}_{21,\delta}.
\]
Combining this observation with Corollary~\ref{thm:variance_reduction}, we have the approximation
\[
\Var_{\operatorname{asymp}}(\hat{\tau} \mid M_n \le a) \approx \sigma_{n,y}^2 V_{11}(1-R^2) =
U_{11}-\mathbf U_{12} \mathbf U_{22}^{-1} \mathbf U_{21},
\]
which motivates a plug-in estimator for $\Var_{\operatorname{asymp}}(\hat{\tau} \mid M_n \le a)$ when $a \approx 0$. More generally, $\Var_{\operatorname{asymp}}(\hat{\tau} \mid M_n \le a) =\sigma_{n,y}^2 V_{11} (1-R^2)+v_p\sigma_{n,y}^2V_{11}R^2$, while $\sigma_{n,y}^2V_{11}R^2\le\sigma_{n,y}^2V_{11}=U_{11}$ can be conservatively bounded by $\hat U_{11,\delta}$. Combining the two pieces yields the variance estimator
\[
\hat U_{11,\delta}(1+v_p)
-\hat{\mathbf U}_{12,\delta}
(\hat{\mathbf U}_{22,\delta})^{-1}
\hat{\mathbf U}_{21,\delta}.
\]

When $v_p$ is close to 0, it is often substantially less conservative than $\hat U_{11,\delta}$ and may provide a useful approximation to our optimization-based estimator, because the plug-in estimate used here is motivated by approximating the solution $\hat v_{n,\delta}$ in the optimization problem \eqref{eq:ridge-optimization-problem}, with reduced computational complexity. On the other hand, when $v_p$ is large, this estimator may become more conservative than the naive estimator $\hat U_{11,\delta}$.

Although this estimator is heuristic and does not currently enjoy the same theoretical guarantees as our proposed optimization-based procedure, it provides useful intuition and may serve as a computationally simple approximation in settings where $a$ is small. In practice, $a$ is typically chosen to be small. In our simulations, we set the acceptance probability to $p_a = 0.05$, with $a$ equal to the corresponding lower $p_a$-quantile of the $\chi_p^2$ distribution.

\paragraph{Relaxing the covariate covariance constraint}
\label{sec:alt_var_estimator_opt}
The rerandomization conditional variance takes the Schur-complement form
\[
U_{11}-(1-v_p)\,\mathbf U_{12}\mathbf U_{22}^{-1}\mathbf U_{21},
\]
in which the cross term $\mathbf U_{12}$ and the covariate block $\mathbf U_{22}$ are tightly coupled through the positive semidefinite constraint. A naive plug-in that replaces $\hat{\mathbf U}_{22,\delta}$ in $\hat{\mathbf U}_{n,\delta}$ with the known $\boldsymbol\Sigma_{xx}$ while leaving $\hat{\mathbf U}_{12,\delta}$ unchanged can produce a matrix that is no longer positive semidefinite, potentially yielding negative variance estimates.

Another possible alternative is to remove the constraint
\[
\boldsymbol\Omega_{22}=\boldsymbol\Sigma_{xx}
\]
and optimize over the entire covariance matrix, with the objective replaced by the Schur complement form
\[
\Omega_{11} - (1-v_p) \boldsymbol\Omega_{12}\boldsymbol\Omega_{22}^{-1}\boldsymbol\Omega_{21}.
\]
Here $\boldsymbol\Omega_{22}\succ\mathbf0$ is required for the inverse to exist. The relaxed problem is feasible whenever $\hat{\mathbf U}_{22,\delta}\succ\mathbf0$, since $\boldsymbol\Omega=\hat{\mathbf U}_{n,\delta}$ is then admissible. However, it no longer exploits the known covariance structure of the rerandomization criterion, where the covariate covariance is fixed at $\boldsymbol\Sigma_{xx}$. Moreover, allowing $\boldsymbol\Omega_{22}$ to vary may lead to substantially more conservative solutions, since enlarging $\boldsymbol\Omega_{22}$ can reduce the penalty term in the Schur complement. Therefore, we retain the constraint $\boldsymbol\Omega_{22}=\boldsymbol\Sigma_{xx}$ and address the resulting exact-feasibility issue through the ridge correction in Appendix~\ref{app:feasibility-refinement}; the simulations use the fixed scale-adaptive ridge in~\eqref{eq:scale_adaptive_ridge}.

A further alternative is to remove the constraint $\boldsymbol\Omega_{22}=\boldsymbol\Sigma_{xx}$ while retaining the known covariate covariance in the objective, leading to
\[
\Omega_{11} - (1-v_p) \boldsymbol\Omega_{12} \boldsymbol\Sigma_{xx}^{-1} \boldsymbol\Omega_{21}.
\]
This formulation has two potential advantages. First, it continues to exploit the known information in $\boldsymbol\Sigma_{xx}$ while guaranteeing feasibility. Second, the objective is continuous in $\boldsymbol\Omega$. The argument underlying Theorem~\ref{thm:conservative_matrix} shows that any failure of $\hat{\mathbf U}_n$ to dominate $\mathbf U_n$ is asymptotically negligible in the relevant standardized positive-semidefinite order. Consequently, a small finite-sample violation of conservativeness may not materially affect the optimized value under this continuous formulation. The main drawback, however, is that removing the constraint $\boldsymbol\Omega_{22}=\boldsymbol\Sigma_{xx}$ substantially enlarges the feasible set and allows the lower-right block to differ from the known covariate covariance. Maximization over this enlarged set may therefore produce an excessively conservative variance bound. For this reason, we ultimately adopt the constrained formulation presented in the main text.

\subsubsection{Alternative confidence intervals}

Under rerandomization,
\[
\hat\tau-\tau \;\Big|\; M_n \le a \;\dot\sim\; \bigl( U_{11} - \mathbf U_{12}
\boldsymbol\Sigma_{xx}^{-1} \mathbf U_{21} \bigr)^{1/2} \varepsilon_0 + \bigl( \mathbf U_{12}
\boldsymbol\Sigma_{xx}^{-1} \mathbf U_{21} \bigr)^{1/2} L_{p,a}.
\]
where
\[
\varepsilon_0 \sim N(0,1), \qquad L_{p,a} = J_1 \mid \mathbf J^\top \mathbf J \le a.
\]

This representation suggests several possible conservative confidence interval constructions.

\paragraph{Bernoulli Wald interval}
A first and particularly simple approach is to ignore the rerandomization structure entirely and use the conservative Bernoulli approximation
\[
\hat U_{11}^{1/2}\varepsilon_0.
\]
This leads to the standard Wald-type interval
\[
\hat\tau \pm z_{1-\alpha/2} \hat U_{11}^{1/2}.
\]

\paragraph{Direct quantile optimization}
A second approach is to directly optimize the quantile of the asymptotic rerandomization distribution,
\[
\bigl( U_{11} - \mathbf U_{12} \boldsymbol\Sigma_{xx}^{-1} \mathbf U_{21} \bigr)^{1/2} \varepsilon_0
+ \bigl( \mathbf U_{12} \boldsymbol\Sigma_{xx}^{-1} \mathbf U_{21} \bigr)^{1/2} L_{p,a},
\]
subject to the constraints in \eqref{eqn:optimization_problem}. In principle, this procedure may yield a sharper conservative interval than the alternatives considered above. However, the resulting optimization problem appears computationally challenging.

\paragraph{Plug-in rerandomization interval}
A third approach parallels the plug-in variance estimator above. Specifically, one may conservatively estimate the two variance components $ \sigma_{n,y}^2 V_{11}(1-R^2) \quad\text{and}\quad v_p \sigma_{n,y}^2 V_{11}R^2 $ separately as we mentioned above, leading to the conservative limiting distribution
\[
\bigl( \hat U_{11} - \hat{\mathbf U}_{12} \hat{\mathbf U}_{22}^{-1} \hat{\mathbf U}_{21}
\bigr)^{1/2} \varepsilon_0 + \hat U_{11}^{1/2} L_{p,a}.
\]
Confidence intervals can then be constructed using quantiles of this distribution.

Theoretically, this yields an asymptotically conservative confidence interval. Empirically, however, it can be substantially more conservative especially when $a$ is large.

Compared with the direct quantile optimization approach, this method is computationally much simpler. Moreover, when $v_p$ is small, it is often close to the direct quantile optimization approach because the explained component contributes little to the rerandomization variance. Consequently, the asymptotic distribution is dominated by the unexplained component, $\left( U_{11} - \mathbf U_{12} \boldsymbol\Sigma_{xx}^{-1} \mathbf U_{21} \right)^{1/2} \varepsilon_0,$ which is estimated directly by this approximation.

\subsubsection{Gaussian prepivoting}

Another alternative, closer in spirit to randomization inference, is the Gaussian prepivoting approach of \citet{cohen2022gaussian}. In rerandomized experiments without interference, their method first transforms a base statistic by the Gaussian approximation to its conditional distribution given the balance event, and then uses the randomization distribution of the prepivoted statistic under Fisher's sharp null. For example, for a two-sided test based on the treatment-effect estimator, a natural Gaussian-prepivoted statistic has the form
\[
G(\mathbf z;\tau_0) = \frac{ \Pr_{\hat{\boldsymbol\Omega}}\left\{ |A| \le |\hat{\tau}(\mathbf
z)-\tau_0|,\quad \mathbf B^\top \boldsymbol\Sigma_{xx}^{-1} \mathbf B \le a \right\} }{
\Pr_{\hat{\boldsymbol\Omega}_{22}}\left\{ \mathbf B^\top \boldsymbol\Sigma_{xx}^{-1} \mathbf B
\le a \right\} },
\]
where $(A,\mathbf B^\top)^\top$ is a mean-zero Gaussian vector with covariance $\hat{\boldsymbol\Omega}$, intended to approximate the joint distribution of the treatment-effect estimator and the covariate imbalance. The randomization distribution of $G(\mathbf w;\tau_0)$ over allowable assignments $\mathbf w$ is then used for testing. In this way, Gaussian prepivoting avoids estimating a scalar $R^2$ directly; instead, it uses an estimate of the full joint Gaussian law of the estimator and the imbalance vector.

This approach is closely related to the conditional Gaussian approximation used in our analysis. Indeed, if the true joint covariance of $(\hat{\tau},\tauxhat)$ were known, then the Gaussian-prepivoting reference distribution would be based on the same conditional law that appears in Theorem~\ref{thm:rerand_dist}. In that oracle case, a prepivoted test and the corresponding large-sample conditional-Gaussian test would have the same rejection rule. This observation helps explain why Gaussian prepivoting is a natural candidate for avoiding an explicit plug-in estimator of $R^2$.

However, Gaussian prepivoting is not directly applicable under our assumptions. First, its validity theory is developed for finite-population SUTVA settings with fixed covariates, using a sharp-null randomization distribution while obtaining asymptotic validity for weak nulls. A null hypothesis such as $\tau=\tau_0$ does not impute all missing potential outcomes under interference, so the exact Fisher-randomization component of their method does not transfer directly. Second, applying the large-sample Gaussian component of their method would require a covariance estimator that can be validly inserted into the conditional Gaussian law of $(\hat{\tau},\tauxhat)$ given the rerandomization event. Their theory relies on a blockwise covariance-estimation condition, whereas our Theorem~\ref{thm:conservative_matrix} provides a joint PSD conservative bound.

A joint PSD conservative covariance bound does yield a conservative plug-in variance bound, as discussed in Section~\ref{sec:opt}, but this alone does not establish the validity of the full Gaussian-prepivoting procedure. The optimization in Definition~\ref{def:rerand-variance-estimator} uses the known covariate-imbalance covariance $\boldsymbol\Sigma_{xx}$ to sharpen the variance bound: it takes the worst-case conditional variance over all covariance matrices compatible with both this known block and the joint conservative bound. Our confidence intervals then use the worst-case critical value from Section~\ref{sec:CI}.

\subsection{Ridge correction for feasibility}
\label{app:feasibility-refinement}

Theorem~\ref{thm:conservative_matrix} establishes covariance dominance only up to a vanishing eigenvalue error after standardization, rather than exact finite-sample domination $\hat{\mathbf U}_n\succeq\mathbf U_n$. This distinction creates two difficulties for the optimization in~\eqref{eqn:optimization_problem}. If $\hat{\mathbf U}_{22}\not\succeq\boldsymbol\Sigma_{xx}$, the known covariate block cannot satisfy the upper-bound constraint, so the feasible set is empty. Even when the program is feasible, it may exclude the true covariance matrix $\mathbf U_n$, as discussed in Appendix~\ref{app:variance-optimization-stability}.

Theorem~\ref{thm:asymp_conservativeness} establishes asymptotic conservativeness despite these possibilities, using the fallback and Lemma~\ref{lem:near-dominance-optimization}. Here we study the scale-adaptive ridge correction introduced in Section~\ref{sec:opt} to improve feasibility while preserving this guarantee. We first consider a fixed positive correction, which restores full covariance dominance with probability tending to one, and then a minimal data-dependent correction that guarantees feasibility for every realization.

\begin{theorem}[Fixed ridge correction]
\label{thm:fixed-ridge-feasibility}
Suppose that the conditions of Theorem~\ref{thm:conservative_matrix} hold and that $\boldsymbol\Sigma_{xx}\succ\mathbf0$. For every fixed $\delta>0$,
\[
\mathbb P\left\{
\hat{\mathbf U}_{n,\delta}\succeq\mathbf U_n
\,\middle|\,M_n\le a
\right\}\to1.
\]
Consequently,
\[
\mathbb P\left\{
\hat{\mathbf U}_{22,\delta}\succeq\boldsymbol\Sigma_{xx}
\,\middle|\,M_n\le a
\right\}\to1,
\]
so the ridge-corrected optimization problem is feasible with probability tending to one. The corresponding unconditional statements also hold.
\end{theorem}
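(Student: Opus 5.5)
The plan is to pass to the standardized scale and apply Corollary~\ref{cor:conditional_psd_conservative} with a slack chosen to match the ridge. Write $\mathbf A_n=\mathbf D_n\hat{\mathbf U}_n\mathbf D_n$ and $\mathbf V_n=\mathbf D_n\mathbf U_n\mathbf D_n$. Since $\mathbf D_n$ is invertible, the target event $\hat{\mathbf U}_{n,\delta}\succeq\mathbf U_n$ is equivalent to
\[
\mathbf A_n+\delta\mathbf D_n\hat{\mathbf S}_n\mathbf D_n\succeq\mathbf V_n .
\]
The matrix $\mathbf D_n\hat{\mathbf S}_n\mathbf D_n$ is diagonal. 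Its first entry is $(\mathbf A_n)_{11}$ and its remaining entries are $(\mathbf B_n)_{jj}=\sigma_{n,x_j}^{-2}\Sigma_{xx,jj}$. By Assumption~\ref{ass:variance_stability}, $(\mathbf B_n)_{jj}\to V_{jj}=1$ deterministically.

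The first step is to bound the first diagonal entry from below. Evaluating Corollary~\ref{cor:conditional_psd_conservative} at $\mathbf e_1$ gives $(\mathbf A_n)_{11}\ge V_{11}-\varepsilon=1-\varepsilon$ with conditional probability tending to one. Choosing $\varepsilon\le 1/2$, every diagonal entry of $\mathbf D_n\hat{\mathbf S}_n\mathbf D_n$ is then at least $1/2$ for large $n$, so $\delta\mathbf D_n\hat{\mathbf S}_n\mathbf D_n\succeq(\delta/2)\mathbf I_{d_\phi}$ on that event.

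The second step handles the gap between $\mathbf V_n$ and $\mathbf V$. Because $\|\mathbf V_n-\mathbf V\|_{\mathrm{op}}\to0$ by \eqref{eq:t3-Vn-op}, we have $\mathbf V\succeq\mathbf V_n-(\delta/4)\mathbf I$ for all large $n$.

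Finally, apply Corollary~\ref{cor:conditional_psd_conservative} once more with $\varepsilon=\min\{1/2,\delta/4\}$, which gives $\mathbf A_n\succeq\mathbf V-(\delta/4)\mathbf I$ with conditional probability tending to one. On the intersection of these events,
\[
\mathbf A_n+\delta\mathbf D_n\hat{\mathbf S}_n\mathbf D_n\succeq \mathbf V_n-\tfrac{\delta}{2}\mathbf I+\tfrac{\delta}{2}\mathbf I=\mathbf V_n .
\]
This proves the first claim. The same argument, using Theorem~\ref{thm:conservative_matrix} instead of the corollary, gives the unconditional version.

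For the consequence, take the lower-right principal block of $\hat{\mathbf U}_{n,\delta}\succeq\mathbf U_n$. This yields $\hat{\mathbf U}_{22,\delta}\succeq\mathbf U_{22}=\boldsymbol\Sigma_{xx}$, since $\mathbf U_{22}=\Var(\tauxhat)=\boldsymbol\Sigma_{xx}$. Feasibility then follows from the characterization in Theorem~\ref{thm:opt_solution}, applied with $\hat{\mathbf U}_{n,\delta}$ in place of $\hat{\mathbf U}_n$. Only $\boldsymbol\Sigma_{xx}\succ\mathbf0$ is needed there, and $\hat{\mathbf U}_{n,\delta}\succeq\mathbf0$ holds automatically. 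Alternatively, $\boldsymbol\Omega=\mathbf U_n$ is itself a feasible point.

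The main obstacle is the first step: the $(1,1)$ entry of the ridge is data-dependent, through $\hat U_{11}$, rather than a fixed scale. It must be shown to stay bounded below on the $\sigma_{n,y}^2$ scale. The plan is to obtain this from the same PSD-dominance corollary rather than from a separate consistency argument, which is why the slack $\varepsilon$ is chosen to be at most $1/2$.
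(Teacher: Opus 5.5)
Your proposal is correct and follows essentially the same route as the paper. After congruence by $\mathbf D_n$, the data-dependent $(1,1)$ ridge entry is bounded below using the same PSD dominance; the paper phrases this through the deficiency $r_n=[-\lambda_{\min}(\mathbf A_n-\mathbf V_n)]_+=o_p(1)$ rather than a fixed slack $\varepsilon$. Then $\delta\mathbf D_n\hat{\mathbf S}_n\mathbf D_n\succeq(\delta/2)\mathbf I_{d_\phi}$ absorbs the vanishing deficiency, and the lower-right block yields $\hat{\mathbf U}_{22,\delta}\succeq\boldsymbol\Sigma_{xx}$. Your observation that $\mathbf U_n$ is itself a feasible point is a slightly more direct way to conclude feasibility than invoking Theorem~\ref{thm:opt_solution}.
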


A fixed $\delta$ is simple, but it introduces a nonvanishing relative correction. We next define a data-adaptive choice that adds only the amount needed for feasibility. Let
\[
\mathbf S_{x,n}
=\operatorname{diag}(\Sigma_{xx,11},\ldots,\Sigma_{xx,pp})
\]
denote the lower-right block of $\hat{\mathbf S}_n$.

\begin{definition}[Minimal feasibility correction]
\label{def:minimal-feasibility-correction}
Assume that $\boldsymbol\Sigma_{xx}\succ\mathbf0$, and define
\begin{equation}
\label{eq:minimal-feasibility-correction}
\hat\delta_n^{\mathrm{feas}}
=
\left[
-\lambda_{\min}\left\{
\mathbf S_{x,n}^{-1/2}
(\hat{\mathbf U}_{22}-\boldsymbol\Sigma_{xx})
\mathbf S_{x,n}^{-1/2}
\right\}
\right]_+,
\end{equation}
where $[x]_+=\max\{x,0\}$. Define
\[
\hat{\mathbf U}_n^{\mathrm{feas}}
=\hat{\mathbf U}_{n,\hat\delta_n^{\mathrm{feas}}},
\qquad
\hat v_n^{\mathrm{feas}}
=\hat v_{n,\hat\delta_n^{\mathrm{feas}}}.
\]
\end{definition}

The correction in~\eqref{eq:minimal-feasibility-correction} is the smallest nonnegative scalar within the scale-adaptive family~\eqref{eq:scale_adaptive_ridge} that makes the optimization feasible. Indeed, for every $\delta\ge0$,
\[
\hat{\mathbf U}_{22,\delta}\succeq\boldsymbol\Sigma_{xx}
\quad\Longleftrightarrow\quad
\delta\ge\hat\delta_n^{\mathrm{feas}}.
\]
Thus the program based on $\hat{\mathbf U}_n^{\mathrm{feas}}$ is feasible for every realization.

The next result shows that the minimal correction coefficient converges to zero in probability.

\begin{proposition}[Vanishing minimal correction]
\label{prop:minimal-feasibility-correction-vanishes}
Suppose that the conditions of Theorem~\ref{thm:conservative_matrix} hold and that $\boldsymbol\Sigma_{xx}\succ\mathbf0$. Then
\[
\hat\delta_n^{\mathrm{feas}}=o_p(1)
\]
both under the Bernoulli assignment law and under the conditional rerandomization law given $M_n\le a$. Equivalently, for every $\eta>0$,
\[
\mathbb P\{\hat\delta_n^{\mathrm{feas}}>\eta\}\to0,
\qquad
\mathbb P\{\hat\delta_n^{\mathrm{feas}}>\eta\mid M_n\le a\}\to0.
\]
\end{proposition}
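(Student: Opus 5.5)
The plan is to reduce the statement to the covariate block of Theorem~\ref{thm:conservative_matrix}, by way of a standardized comparison. Write $\mathbf D_{n,x}=\operatorname{diag}(\sigma_{n,x_1}^{-1},\ldots,\sigma_{n,x_p}^{-1})$. Because $\mathbf U_{22}=\boldsymbol\Sigma_{xx}$, the diagonal entries satisfy $\Sigma_{xx,jj}/\sigma_{n,x_j}^2\to1$ by Assumption~\ref{ass:variance_stability}(i). Hence $\mathbf S_{x,n}^{-1/2}=\mathbf E_n\mathbf D_{n,x}$ with $\mathbf E_n$ diagonal and $\mathbf E_n\to\mathbf I_p$. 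This gives
\[
\mathbf S_{x,n}^{-1/2}(\hat{\mathbf U}_{22}-\boldsymbol\Sigma_{xx})\mathbf S_{x,n}^{-1/2}
=\mathbf E_n\bigl\{\mathbf D_{n,x}\hat{\mathbf U}_{22}\mathbf D_{n,x}-\mathbf B_n\bigr\}\mathbf E_n,
\qquad \mathbf B_n=\mathbf D_{n,x}\boldsymbol\Sigma_{xx}\mathbf D_{n,x}\to\mathbf V_{22}.
\]
The problem therefore becomes one of bounding the negative part of the smallest eigenvalue of the bracketed matrix.

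Next, fix $\eta>0$ and take $\varepsilon>0$ small. Theorem~\ref{thm:conservative_matrix} shows that, with probability tending to one, $\mathbf D_n\hat{\mathbf U}_n\mathbf D_n\succeq\mathbf V-\varepsilon\mathbf I_{d_\phi}$. Taking the lower-right principal block preserves the PSD order, so on the same event $\mathbf D_{n,x}\hat{\mathbf U}_{22}\mathbf D_{n,x}\succeq\mathbf V_{22}-\varepsilon\mathbf I_p$. Since $\|\mathbf B_n-\mathbf V_{22}\|_{\mathrm{op}}\to0$, for large $n$ this yields $\mathbf D_{n,x}\hat{\mathbf U}_{22}\mathbf D_{n,x}-\mathbf B_n\succeq-2\varepsilon\mathbf I_p$. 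Congruence by $\mathbf E_n$ then gives
\[
\mathbf E_n(\cdots)\mathbf E_n\succeq-2\varepsilon\|\mathbf E_n\|_{\mathrm{op}}^2\mathbf I_p\succeq-3\varepsilon\mathbf I_p
\]
for large $n$. Consequently $\hat\delta_n^{\mathrm{feas}}\le3\varepsilon$ on that event. Choosing $3\varepsilon<\eta$ gives $\mathbb P\{\hat\delta_n^{\mathrm{feas}}>\eta\}\to0$ under the Bernoulli law.

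For the conditional statement, I would use the event bound from the proof of Corollary~\ref{cor:conditional_psd_conservative}:
\[
\mathbb P(\cdot\mid M_n\le a)\le\frac{\mathbb P(\cdot)}{\mathbb P(M_n\le a)}.
\]
Here $\mathbb P(M_n\le a)\to\Pr(\chi^2_p\le a)>0$ by Theorem~\ref{thm:joint-normality}. Alternatively, Corollary~\ref{cor:conditional_psd_conservative} can be applied directly in place of Theorem~\ref{thm:conservative_matrix} in the argument above.

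The main point requiring care is the scale mismatch, since $\mathbf S_{x,n}$ uses the exact $\Sigma_{xx,jj}$ rather than $\sigma_{n,x_j}^2$. This is handled by the ratio convergence in Assumption~\ref{ass:variance_stability}(i), which makes $\mathbf E_n$ bounded. No nonnegativity issue arises, because $[\cdot]_+$ only requires a lower bound on the eigenvalues.
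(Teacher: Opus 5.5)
Your proposal is correct and follows essentially the same route as the paper: restrict the standardized dominance from Theorem~\ref{thm:conservative_matrix} (or Corollary~\ref{cor:conditional_psd_conservative} for the conditional law) to the lower-right principal block, and absorb the scale mismatch by congruence with the diagonal factor $\operatorname{diag}(\sigma_{n,x_j}/\Sigma_{xx,jj}^{1/2})\to\mathbf I_p$. The paper packages the slack as $r_n=[-\lambda_{\min}(\mathbf D_n\hat{\mathbf U}_n\mathbf D_n-\mathbf D_n\mathbf U_n\mathbf D_n)]_+=o_p(1)$ and obtains the explicit bound $\hat\delta_n^{\mathrm{feas}}\le r_n\|\mathbf H_n^{-2}\|_{\mathrm{op}}$, where the paper's $\mathbf H_n^{-1}$ is your diagonal factor $\mathbf E_n$; you instead work on a fixed-$\varepsilon$ event, which is a purely cosmetic difference.
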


The minimal correction guarantees feasibility for every realization, but need not make the corrected covariance bound dominate $\mathbf U_n$. Since this correction is data-dependent and may vanish, Theorem~\ref{thm:fixed-ridge-feasibility} does not apply directly. The following result establishes asymptotic conservativeness for this choice and, more generally, for any nonnegative ridge correction. If the correction leaves the program infeasible, the estimator retains the fallback specified after~\eqref{eq:ridge-optimization-problem}.

\begin{theorem}[Conservativeness under nonnegative corrections]
\label{thm:adaptive-ridge-conservativeness}
\label{thm:feasibility-refinements}
Suppose that the conditions of Theorem~\ref{thm:conservative_matrix} hold and that $\boldsymbol\Sigma_{xx}\succ\mathbf0$. Let $\tilde\delta_n$ be any almost surely finite, nonnegative, possibly data-dependent correction level. Then, for every $\eta>0$,
\[
\mathbb P\left\{
\hat v_{n,\tilde\delta_n}
\ge v_n-\eta\sigma_{n,y}^2
\,\middle|\,M_n\le a
\right\}\to1.
\]
The corresponding unconditional statement also holds. In particular, the conclusion applies to $\hat v_n^{\mathrm{feas}}$. The confidence-interval conclusion of Theorem~\ref{thm:ci_coverage} remains valid after replacing $\hat v_n$ by any of these corrected bounds.
\end{theorem}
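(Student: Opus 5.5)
The argument reduces to the stability bound of Lemma~\ref{lem:near-dominance-optimization}, following the proof of Theorem~\ref{thm:asymp_conservativeness}. The key observation is that adding a nonnegative ridge can only enlarge the upper bound in the positive semidefinite order, so the oracle deficiency can only shrink.

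Work on the standardized scale. Put $\mathbf A_n=\mathbf D_n\hat{\mathbf U}_n\mathbf D_n$ and $\mathbf A_{n,\tilde\delta_n}=\mathbf D_n\hat{\mathbf U}_{n,\tilde\delta_n}\mathbf D_n=\mathbf A_n+\tilde\delta_n\mathbf D_n\hat{\mathbf S}_n\mathbf D_n$. Let $\mathbf V_n=\mathbf D_n\mathbf U_n\mathbf D_n$ and $\mathbf B_n=\mathbf D_{n,x}\boldsymbol\Sigma_{xx}\mathbf D_{n,x}$. Because $\hat U_{11}\ge0$ and $\boldsymbol\Sigma_{xx}$ has positive diagonal, $\hat{\mathbf S}_n\succeq\mathbf0$. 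Hence, almost surely, $\mathbf A_{n,\tilde\delta_n}\succeq\mathbf A_n$. It follows that
\[
r_{n,\tilde\delta_n}:=\bigl[-\lambda_{\min}(\mathbf A_{n,\tilde\delta_n}-\mathbf V_n)\bigr]_+\le r_n:=\bigl[-\lambda_{\min}(\mathbf A_n-\mathbf V_n)\bigr]_+ .
\]
Corollary~\ref{cor:conditional_psd_conservative}, together with $\|\mathbf V_n-\mathbf V\|_{\mathrm{op}}\to0$, gives $r_n=o_p(1)$ conditionally on $M_n\le a$. Therefore $r_{n,\tilde\delta_n}=o_p(1)$ as well. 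This holds whatever the (possibly data-dependent) law of $\tilde\delta_n$, because the comparison is pathwise.

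Next I would check that the conditions of Lemma~\ref{lem:near-dominance-optimization} hold, with $\mathbf W=\mathbf V_n$, $\mathbf B=\mathbf B_n$ and $\mathbf A=\mathbf A_{n,\tilde\delta_n}$. For large $n$, $\mathbf V_n\to\mathbf V\succ\mathbf0$ supplies uniform spectral bounds for $\mathbf W$ and $\mathbf B$. Also $\mathbf A\succeq\mathbf V_n-r\mathbf I\succ\mathbf0$ on an event of conditional probability tending to one, so the singular-block fallback never triggers there.

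The estimator $\hat v_{n,\tilde\delta_n}/\sigma_{n,y}^2$ equals $\widetilde\Phi_{\mathbf B_n}(\mathbf A_{n,\tilde\delta_n})$: it is the optimum if the program is feasible, and the plug-in $F$ otherwise. The rescaling by $\mathbf D_n$ maps the constraint $\boldsymbol\Omega_{22}=\boldsymbol\Sigma_{xx}$ to $\mathbf B_n$ and leaves the objective invariant up to the factor $\sigma_{n,y}^2$, exactly as in the earlier proof. The lemma then yields
\[
\hat v_{n,\tilde\delta_n}/\sigma_{n,y}^2\ge q_{\mathbf B_n}(\mathbf V_n)-Cr_{n,\tilde\delta_n},
\]
with a constant $C$ that does not depend on $\tilde\delta_n$. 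Since $q_{\mathbf B_n}(\mathbf V_n)\to v_n/\sigma_{n,y}^2$, this gives the stated probability bound.

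The same chain applies verbatim without conditioning, using Theorem~\ref{thm:conservative_matrix} in place of Corollary~\ref{cor:conditional_psd_conservative}; this is the unconditional statement. For $\hat v_n^{\mathrm{feas}}$, note that $\hat\delta_n^{\mathrm{feas}}$ is finite and nonnegative. For the confidence interval, the proof of Theorem~\ref{thm:ci_coverage} used $\hat v_n$ only through the event $\mathcal B_n(\eta)$, so it carries over to any corrected bound.

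The main obstacle is the uniformity of the constant $C$ and the ``sufficiently small $r$'' threshold in Lemma~\ref{lem:near-dominance-optimization}. Both depend only on the spectral bounds of $\mathbf W$ and $\mathbf B$, not on $\mathbf A$. A potentially huge $\tilde\delta_n$ only enlarges $\mathbf A$, so it does not break the lemma. Still, the proof should state this explicitly, since the feasible branch builds $\boldsymbol\Omega=\mathbf W-r\mathbf I+\mathbf K$ from the dominance of $\mathbf A$ and nothing else.
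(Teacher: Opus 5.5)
Your proposal is correct and follows the paper's own proof essentially step by step. The steps are the same: the pathwise order $\widetilde{\mathbf A}_n\succeq\mathbf A_n$ yields $\widetilde r_n\le r_n=o_p(1)$; congruence by $\mathbf D_n$ reduces the corrected program and its fallback to the near-dominance lemma; and the argument concludes with $q_{\mathbf B_n}(\mathbf V_n)\to v_n/\sigma_{n,y}^2$. Your explicit remark that the lemma's constant and small-$r$ threshold depend only on the spectral bounds of $\mathbf W$ and $\mathbf B$, and hence not on $\tilde\delta_n$, is a useful clarification that the paper leaves implicit.
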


\paragraph{Computation after the ridge correction.}
The derivation in the proof of Theorem~\ref{thm:opt_solution} is deterministic and therefore applies to fixed or data-dependent correction levels. Whenever the corrected program~\eqref{eq:ridge-optimization-problem} is feasible, its optimal value is given by that theorem after replacing each block of $\hat{\mathbf U}_n$ by the corresponding block of $\hat{\mathbf U}_{n,\delta}$ and defining $\mathbf M(t)$, $G(t)$, and $t^\star$ with those corrected blocks. The known matrix $\boldsymbol\Sigma_{xx}$ is unchanged. In particular, this formula applies for every realization when $\delta\ge\hat\delta_n^{\mathrm{feas}}$.

\subsection{Proofs for Section~\ref{app:feasibility-refinement}}
\label{sec:prf_feasible_refinement}

\subsubsection{Proof of Theorem~\ref{thm:fixed-ridge-feasibility}}

\begin{proof}
All statements below are proved under both the unconditional Bernoulli law and the conditional law given $M_n\le a$. Write
\[
\mathbf V_n=\mathbf D_n\mathbf U_n\mathbf D_n,
\qquad
\mathbf A_n=\mathbf D_n\hat{\mathbf U}_n\mathbf D_n,
\qquad
r_n=
\bigl[-\lambda_{\min}(\mathbf A_n-\mathbf V_n)\bigr]_+.
\]
Theorem~\ref{thm:conservative_matrix} and $\|\mathbf V_n-\mathbf V\|_{\mathrm{op}}\to0$ imply $r_n=o_p(1)$ under the unconditional law. Corollary~\ref{cor:conditional_psd_conservative} gives the same conclusion conditionally on $M_n\le a$.

Let
\[
\mathbf T_n=\mathbf D_n\hat{\mathbf S}_n\mathbf D_n
=\operatorname{diag}\!\left(
\frac{\hat U_{11}}{\sigma_{n,y}^2},
\frac{\Sigma_{xx,11}}{\sigma_{n,x_1}^2},\ldots,
\frac{\Sigma_{xx,pp}}{\sigma_{n,x_p}^2}
\right).
\]
Assumption~\ref{ass:variance_stability} gives $\Sigma_{xx,jj}/\sigma_{n,x_j}^2\to1$ for every $j$. Moreover,
\[
\frac{\hat U_{11}}{\sigma_{n,y}^2}
=(\mathbf A_n)_{11}
\ge (\mathbf V_n)_{11}-r_n
=1+o_p(1).
\]
It follows that
\[
\mathbb P\left\{\mathbf T_n\succeq\frac12\mathbf I_{d_\phi}\right\}\to1
\]
under both laws. For a fixed $\delta>0$, on the event $\{\mathbf T_n\succeq\mathbf I_{d_\phi}/2,\ r_n\le\delta/2\}$,
\begin{align*}
\mathbf D_n(\hat{\mathbf U}_{n,\delta}-\mathbf U_n)\mathbf D_n
&=\mathbf A_n-\mathbf V_n+\delta\mathbf T_n\\
&\succeq(-r_n+\delta/2)\mathbf I_{d_\phi}
\succeq\mathbf0.
\end{align*}
The probability of this event tends to one. Congruence by $\mathbf D_n^{-1}$ proves $\hat{\mathbf U}_{n,\delta}\succeq\mathbf U_n$ with probability tending to one. Since $(\mathbf U_n)_{22}=\boldsymbol\Sigma_{xx}$, the lower-right block implication gives $\hat{\mathbf U}_{22,\delta}\succeq\boldsymbol\Sigma_{xx}$, and Theorem~\ref{thm:opt_solution} then gives feasibility.
\end{proof}

\subsubsection{Proof of Proposition~\ref{prop:minimal-feasibility-correction-vanishes}}

\begin{proof}
Use the notation from the proof of Theorem~\ref{thm:fixed-ridge-feasibility}, and let
\[
\mathbf E_n
=\mathbf D_{n,x}
(\hat{\mathbf U}_{22}-\boldsymbol\Sigma_{xx})
\mathbf D_{n,x}.
\]
Because $\mathbf E_n$ is the lower-right principal block of $\mathbf A_n-\mathbf V_n$,
\[
\mathbf E_n\succeq-r_n\mathbf I_p.
\]
Let
\[
\mathbf H_n
=\mathbf D_{n,x}\mathbf S_{x,n}^{1/2}
=\operatorname{diag}\!\left(
\frac{\sqrt{\Sigma_{xx,11}}}{\sigma_{n,x_1}},\ldots,
\frac{\sqrt{\Sigma_{xx,pp}}}{\sigma_{n,x_p}}
\right).
\]
Assumption~\ref{ass:variance_stability} implies $\mathbf H_n\to\mathbf I_p$. Moreover,
\begin{align*}
&\mathbf S_{x,n}^{-1/2}
(\hat{\mathbf U}_{22}-\boldsymbol\Sigma_{xx})
\mathbf S_{x,n}^{-1/2}\\
&\qquad=
\mathbf H_n^{-1}\mathbf E_n\mathbf H_n^{-1}
\succeq-r_n\mathbf H_n^{-2}.
\end{align*}
Consequently,
\[
0\le\hat\delta_n^{\mathrm{feas}}
\le r_n\|\mathbf H_n^{-2}\|_{\mathrm{op}}.
\]
Since $r_n=o_p(1)$ and $\|\mathbf H_n^{-2}\|_{\mathrm{op}}\to1$, the desired conclusion holds both unconditionally and conditionally on $M_n\le a$.
\end{proof}

\subsubsection{Proof of Theorem~\ref{thm:adaptive-ridge-conservativeness}}

\begin{proof}
Write
\[
\mathbf A_n=\mathbf D_n\hat{\mathbf U}_n\mathbf D_n,
\qquad
\widetilde{\mathbf A}_n
=\mathbf D_n\hat{\mathbf U}_{n,\tilde\delta_n}\mathbf D_n,
\qquad
\mathbf V_n=\mathbf D_n\mathbf U_n\mathbf D_n,
\]
and let
\[
\mathbf B_n=(\mathbf V_n)_{22}
=\mathbf D_{n,x}\boldsymbol\Sigma_{xx}\mathbf D_{n,x}.
\]
Because $\tilde\delta_n\ge0$ and $\hat{\mathbf S}_n\succeq\mathbf0$, $\widetilde{\mathbf A}_n\succeq\mathbf A_n$. Therefore, with
\[
r_n=
\bigl[-\lambda_{\min}(\mathbf A_n-\mathbf V_n)\bigr]_+,
\qquad
\widetilde r_n=
\bigl[-\lambda_{\min}(\widetilde{\mathbf A}_n-\mathbf V_n)\bigr]_+,
\]
we have $0\le\widetilde r_n\le r_n$ for every realization. As in the proof of Theorem~\ref{thm:asymp_conservativeness}, $r_n=o_p(1)$ both unconditionally and conditionally on $M_n\le a$.

Congruence by $\mathbf D_n$ transforms the corrected optimization problem and its plug-in fallback into the standardized construction in Lemma~\ref{lem:near-dominance-optimization}. Hence
\[
\frac{\hat v_{n,\tilde\delta_n}}{\sigma_{n,y}^2}
=\widetilde\Phi_{\mathbf B_n}(\widetilde{\mathbf A}_n)
\ge q_{\mathbf B_n}(\mathbf V_n)-C\widetilde r_n
\]
with probability tending to one under either law. Assumption~\ref{ass:variance_stability} gives $\mathbf V_n\to\mathbf V\succ\mathbf0$, so
\[
q_{\mathbf B_n}(\mathbf V_n)
\to
V_{11}-(1-v_p)\mathbf V_{12}\mathbf V_{22}^{-1}\mathbf V_{21}
=\frac{v_n}{\sigma_{n,y}^2}.
\]
This proves the stated conservativeness result.

The conclusion for $\hat v_n^{\mathrm{feas}}$ follows by taking $\tilde\delta_n=\hat\delta_n^{\mathrm{feas}}$. The confidence-interval conclusion follows from the proof of Theorem~\ref{thm:ci_coverage}, which uses only the asymptotic lower bound on the variance estimator.
\end{proof}

\section{Simulation Details}
\label{appendx:simulation}

This appendix provides full implementation details for the simulation study summarized in Section~\ref{appendx:simulation_main}. We describe the complete data-generating process, all parameter specifications, the full simulation protocol, and detailed regime-specific analysis of results. All simulations are implemented in Python.

\subsection{Data generating process}
\label{sec:sim_dgp}

\paragraph{Network Topology.}
We generate an undirected physical network $\mathcal G_n^{\mathrm{BA}}=([n],E_n^{\mathrm{BA}})$ with $n = 2000$ units using the Barab\'{a}si--Albert preferential-attachment model \citep{albert1999emergence}. Starting from a complete graph on $m + 1 = 6$ seed units, we sequentially add new units one at a time; each new unit attaches $m = 5$ edges to existing units, with the probability of attaching to unit $j$ proportional to its current degree (``rich-get-richer''). This procedure produces a network whose degree distribution obeys a power law, $P(d) \propto d^{-3}$, so that a small number of high-degree hub units coexist with a long tail of low-degree peripheral units. This heterogeneous structure is characteristic of real social, information, and biological networks, and contrasts sharply with random-graph or small-world models in which the degree distribution is tightly concentrated around its mean. Let $\mathbf A_n^{\mathrm{BA}}\in\{0,1\}^{n\times n}$ denote the resulting adjacency matrix, with $A_{n,ij}^{\mathrm{BA}}=1$ if and only if units $i$ and $j$ are connected or $i=j$, and 0 otherwise. The graph $\mathcal G_n^{\mathrm{BA}}$ is held fixed across all simulation replications. Under this parameterization, the expected mean number of non-self neighbors is approximately $2m = 10$, while the maximum degree grows as $\mathcal O(m\sqrt{n})$, producing the degree heterogeneity that most stresses the localized variance-bounding machinery developed in Section~\ref{sec:conservative}. For variance estimation, let $\mathcal G_n^{\mathrm{dep}}$ denote the dependency graph induced by the data-generating process. In the simulations, the practitioner graph is taken to be exact, so $\mathcal G_n=\mathcal G_n^{\mathrm{dep}}$.

\paragraph{Treatment Assignment.}
Treatment is assigned via independent Bernoulli randomization,
\[
Z_i\overset{\mathrm{i.i.d.}}{\sim}\operatorname{Bern}(\pi),
\qquad \pi=0.5,
\]
subject to the rerandomization acceptance criterion described below.

\paragraph{Covariates.}
For each unit $i$, we generate the following covariates:
\begin{itemize}
\item \textbf{Observed Pre-Treatment Covariates} ($\mathbf X_i \in \mathbb{R}^{p_s}$, $p_s = 6$): Drawn from $\mathbf X_i \sim \mathcal{N}(\boldsymbol 0, \sigma_x^2 \mathbf{I}_{p_s})$ with $\sigma_x = 0.316$, making it comparable with treatment-dependent covariates, giving each component a variance of approximately $0.1$. These covariates are observable pre-treatment and available for rerandomization.

\item \textbf{Hidden Covariate} ($X_{i, \text{hidden}} \in \mathbb{R}$): Drawn independently from the same distribution as the pre-treatment covariates. This variable influences the outcome but is not used in the rerandomization criterion, thereby introducing realistic unobserved heterogeneity and testing the method's robustness to incomplete covariate information.
\end{itemize}

\paragraph{Treatment-Dependent Covariates.}
To capture the network interference structure, we construct treatment-dependent covariates that depend jointly on the treatment vector $\mathbf Z$, the adjacency matrix $\mathbf A_n^{\mathrm{BA}}$, and the pre-treatment covariates $\mathbf X$. Let $\mathcal N_i = \{j \neq i : A_{n,ij}^{\mathrm{BA}} = 1\}$ denote the set of direct neighbors of unit $i$ (excluding $i$ itself). We define the following interference summaries:
\begin{align}
\text{Neighbor Treatment Proportion:} \quad & H_{i, \text{prop}} = \frac{1}{|\mathcal{N}_i|} \sum_{j
\in \mathcal{N}_i} Z_j, \label{eq:sim_h_prop} \\[4pt]
\text{Neighbor-Weighted Covariates:} \quad & \mathbf H_{i, \text{wX}} = \frac{1}{|\mathcal{N}_i|}
\sum_{j \in \mathcal{N}_i} Z_j \mathbf X_j, \label{eq:sim_h_wx}
\end{align}

The covariates and outcomes used in our simulations do not strictly satisfy the uniform boundedness condition in Assumption~\ref{ass:boundedness}. Moreover, the simulated dependency graph is quite dense: at $n=2{,}000$, its largest degree is approximately $1{,}673$. We therefore do not design this finite-sample stress test to satisfy the polynomial sparsity requirement in Assumption~\ref{ass:network_sparsity}. These choices are made primarily for implementation and computational convenience and to provide a demanding assessment of the proposed method. Strong performance in this setting provides reassuring evidence---although not a formal theoretical implication---for performance in more regular settings that fully satisfy our assumptions.

\subsection{Outcome models}
\label{sec:sim_outcome}

We evaluate performance under seven outcome specifications that vary along two axes: the functional form of the pre-treatment block and the structure of the interference block. The general outcome for unit~$i$ takes the form
\begin{equation}
\label{eq:outcome_general}
\begin{aligned}
Y_i
={}&\beta_0+\tau Z_i
+\beta_{\text{base}}\,f(\mathbf X_i,\boldsymbol\beta_{\text{xbase}})
+X_{i,\text{hidden}}\beta_{\text{hidden}}\\
&+\beta_{\text{interf}}\,
 g_i(\mathbf Z,\mathbf A_n^{\mathrm{BA}},\mathbf X;
 \boldsymbol\beta_{\text{xinterf}})
+\varepsilon_i.
\end{aligned}
\end{equation}
We fix $\beta_0=0.5$ and $\tau=2.0$, where $\tau$ is the estimand of interest. The coefficient $\beta_{\text{hidden}}$ is drawn from $\mathrm{Uniform}(0.5,2.0)$ with a random sign and held fixed across replications. The noise variables satisfy $\varepsilon_i\overset{\mathrm{i.i.d.}}{\sim}\mathcal N(0,1)$. The functions $f(\cdot)$ and $g_i(\cdot)$ specify the pre-treatment and interference blocks; their detailed forms are given below.

\paragraph{Signal composition.}
The scalars $\beta_{\text{base}}$ and $\beta_{\text{interf}}$ are outer multipliers that control the relative contribution of the pre-treatment and treatment-dependent components. Let $\text{std}_{\text{base}} = \text{SD}\bigl(f(\mathbf X_i, \boldsymbol\beta_{\text{xbase}})\bigr)$ and $\text{std}_{\text{interf}} = \text{SD}\bigl(g_i(\mathbf Z, \mathbf A_n^{\mathrm{BA}}, \mathbf X; \boldsymbol\beta_{\text{xinterf}})\bigr)$ denote the standard deviations of the two blocks across units, where the latter is additionally averaged over the \emph{unconstrained} Bernoulli distribution of $\mathbf Z$ (so that the signal ratio characterizes the data-generating process and does not depend on the choice of rerandomization criterion). The quantity $\text{std}_{\text{base}}$ is computed exactly from the realized $\mathbf X$ and the fixed $\boldsymbol\beta_{\text{xbase}}$; $\text{std}_{\text{interf}}$ is computed by Monte Carlo with $N_{\text{std}} = 2{,}000$ Bernoulli draws. We define the \emph{signal ratio}
\begin{equation}
\label{eq:signal_ratio}
\kappa = \frac{\beta_{\text{interf}} \cdot \text{std}_{\text{interf}}}{\beta_{\text{base}} \cdot
\text{std}_{\text{base}}},
\end{equation}
and, given a target $\log_2\kappa$, choose $(\beta_{\text{base}}, \beta_{\text{interf}})$ to satisfy~\eqref{eq:signal_ratio} together with the constant total-signal constraint $(\beta_{\text{base}} \cdot \text{std}_{\text{base}})^2 + (\beta_{\text{interf}} \cdot \text{std}_{\text{interf}})^2 = C^2$ with $C = 30$. As $\log_2\kappa$ varies from $-4$ to $4$ in unit steps, the dominant source of outcome variation smoothly transitions from the pre-treatment covariates to the interference effects, while the total signal magnitude remains fixed.

\paragraph{Pre-treatment blocks.}
We use two functional forms for the pre-treatment block~$f$, indexed by a label $\mathsf{L}$ (linear) or $\mathsf{E}$ (exponential):
\begin{align*}
\text{(linear, }\mathsf{L}\text{):} \quad & f(\mathbf X_i) = \mathbf X_i^\top
\boldsymbol\beta_{\text{xbase}}, \\
\text{(exponential, }\mathsf{E}\text{):} \quad & f(\mathbf X_i) = \exp\!\big(\mathbf X_i^\top
\boldsymbol\beta_{\text{xbase}}\big).
\end{align*}
The vector $\boldsymbol\beta_{\text{xbase}} \in \mathbb{R}^{p_s}$ is drawn once from $\mathrm{Uniform}(0.5, 2.0)^{p_s}$ with each entry given a random sign from $\mathrm{Uniform}\{-1,+1\}$, and is held fixed across all signal ratios and replications.

\paragraph{Interference blocks.}
We consider three interference structures, denoted by descriptive identifiers that name the interference summaries appearing in $g_i$:
\begin{align}
\mathsf{prop}: \quad & g_i = \beta_{\text{prop}} \, H_{i,\text{prop}}, \label{eq:gprop} \\[0pt]
\mathsf{prop\text{+}NWX}: \quad & g_i = \beta_{\text{prop}} \, H_{i,\text{prop}} + \mathbf
H_{i,\text{wX}}^\top \boldsymbol\beta_{\text{xinterf}}, \label{eq:gpropnwx} \\[0pt]
\mathsf{NWX}: \quad & g_i = \mathbf H_{i,\text{wX}}^\top \boldsymbol\beta_{\text{xinterf}},
\label{eq:gnwx}
\end{align}
plus four additional structures that pair the exponential pre-treatment block with exponential interference:
\begin{align}
\mathsf{exp\text{-}prop}: \quad & g_i = \beta_{\text{prop}} \, \exp\!\big(H_{i,\text{prop}}\big),
\label{eq:gexpprop}\\[0pt]
\mathsf{sumexp}: \quad & g_i = \beta_{\text{prop}} \sum_{j \in \mathcal{N}_i}\! \exp\!\big(Z_j /
|\mathcal{N}_i|\big), \label{eq:gsumexp}\\[0pt]
\mathsf{exp\text{-}prop\text{+}NWX}: \quad & g_i = \beta_{\text{prop}}
\exp\!\big(H_{i,\text{prop}}\big) + \exp\!\big(\mathbf H_{i,\text{wX}}\big)^\top
\boldsymbol\beta_{\text{xinterf}}, \label{eq:gexpnwx} \\[0pt]
\mathsf{sumexp\text{-}full}: \quad & g_i = \beta_{\text{prop}}\!\sum_{j\in\mathcal{N}_i}\!
\exp\!\big(Z_j/|\mathcal{N}_i|\big) + \!\sum_{j\in\mathcal{N}_i}\!\exp\!\big(Z_j \mathbf
X_j/|\mathcal{N}_i|\big)^{\!\top}\!\boldsymbol\beta_{\text{xinterf}}. \label{eq:gsumexpfull}
\end{align}
In all cases, $\beta_{\text{prop}}$ is fixed at $5.0$, $\boldsymbol\beta_{\text{xinterf}}$ (when present) is drawn once from $\mathrm{Uniform}(0.5, 2.0)^{p_s}$ with random signs and held fixed, and the exponential acting on a vector is interpreted entry-wise.

\paragraph{The seven outcome models.}
Pairing each pre-treatment block with each interference block yields the following seven specifications, summarized in Table~\ref{tab:outcome_models}. The three ``linear'' models combine $f(\mathbf X_i) = \mathbf X_i^\top \boldsymbol\beta_{\text{xbase}}$ with the three interference structures~\eqref{eq:gprop}--\eqref{eq:gnwx}; the four ``exponential'' models combine $f(\mathbf X_i) = \exp(\mathbf X_i^\top \boldsymbol\beta_{\text{xbase}})$ with~\eqref{eq:gexpprop}--\eqref{eq:gsumexpfull}.

\begin{table}[ht]
\centering\small
\begin{tabular}{l l l}
\toprule
Model & Pre-treatment block $f$ & Interference block $g_i$ \\
\midrule
\texttt{lin-prop}              & $\mathbf X_i^\top\boldsymbol\beta_{\text{xbase}}$    & $\beta_{\text{prop}} H_{i,\text{prop}}$ \\
\texttt{lin-prop+nwx}          & $\mathbf X_i^\top\boldsymbol\beta_{\text{xbase}}$    & $\beta_{\text{prop}} H_{i,\text{prop}} + \mathbf H_{i,\text{wX}}^\top\boldsymbol\beta_{\text{xinterf}}$ \\
\texttt{lin-nwx}               & $\mathbf X_i^\top\boldsymbol\beta_{\text{xbase}}$    & $\mathbf H_{i,\text{wX}}^\top\boldsymbol\beta_{\text{xinterf}}$ \\
\texttt{exp-prop}              & $\exp(\mathbf X_i^\top\boldsymbol\beta_{\text{xbase}})$ & $\beta_{\text{prop}} \exp(H_{i,\text{prop}})$ \\
\texttt{exp-sumexp}            & $\exp(\mathbf X_i^\top\boldsymbol\beta_{\text{xbase}})$ & $\beta_{\text{prop}}\sum_{j\in\mathcal N_i} \exp(Z_j/|\mathcal{N}_i|)$ \\
\texttt{exp-prop+nwx}          & $\exp(\mathbf X_i^\top\boldsymbol\beta_{\text{xbase}})$ & $\beta_{\text{prop}}\exp(H_{i,\text{prop}}) + \exp(\mathbf H_{i,\text{wX}})^\top\boldsymbol\beta_{\text{xinterf}}$ \\
\texttt{exp-sumexp-full}       & $\exp(\mathbf X_i^\top\boldsymbol\beta_{\text{xbase}})$ & $\beta_{\text{prop}}\sum_{j\in\mathcal N_i} \exp(Z_j/|\mathcal{N}_i|) + \sum_{j\in\mathcal N_i} \exp(Z_j\mathbf X_j/|\mathcal{N}_i|)^\top\boldsymbol\beta_{\text{xinterf}}$ \\
\bottomrule
\end{tabular}
\caption{The seven outcome specifications. ``Linear'' models combine the linear pre-treatment block with one of three interference structures; ``exponential'' models pair the exponential pre-treatment block with the corresponding exponential interference structures. We refer to each model by its descriptive identifier (e.g.\ \texttt{lin-prop}) throughout the appendix.}
\label{tab:outcome_models}
\end{table}

The body of the paper presents \texttt{lin-prop} as the linear setting~\eqref{eq:outcome_linear} and \texttt{exp-prop} as the nonlinear setting~\eqref{eq:outcome_nonlinear}; the other five models are reported in this appendix.

\subsection{Simulation protocol}
\label{sec:sim_protocol}

The following protocol underlies both simulation studies. The variance-reduction study is described in Section~\ref{sec:sim_vr}, and the conservative-variance study is described in Section~\ref{sec:sim_conserv}. For each outcome model and each target value of $\log_2\kappa$ on the grid
\[
\{-4,-3,\ldots,3,4\},
\]
we perform the steps below. Quantities marked \emph{exact} are computed deterministically once $\mathbf X$ and the network $\mathcal G_n^{\mathrm{BA}}$ are fixed; quantities marked \emph{Monte Carlo} are estimated from explicit treatment-assignment draws as described.

\paragraph{Step 1: covariate covariance and acceptance threshold.}
We estimate the population covariance matrix of the H\'{a}jek covariate imbalance vector,
\begin{equation*}
\boldsymbol\Sigma_{xx} := \Var_{\text{Bern}}(\tauxhat),
\end{equation*}
by Monte Carlo with $N_{\text{cov}} = 5{,}000$ independent Bernoulli draws of $\mathbf Z$ under the unconstrained design (used as $\hat{\boldsymbol\Sigma}_{xx}$ throughout). The acceptance threshold for each strategy is set \emph{exactly} to $a = F^{-1}_{\chi^2_p}(p_a)$ with $p_a = 0.05$, where $p$ is the dimension of the covariate vector entering the Mahalanobis criterion. The variance-correction factor $v_p = F_{\chi^2_{p+2}}(a) / F_{\chi^2_p}(a)$ is also computed \emph{exactly} from the $\chi^2$ distribution. For all reported variance calculations, we prespecify the scale-adaptive ridge at $\delta=0.01$, corresponding to a one-percent inflation on each marginal-variance scale.

\paragraph{Step 2: rerandomization criteria.}
We compare three acceptance criteria, each calibrated to acceptance rate $p_a = 0.05$:
\begin{itemize}
\item \textbf{Pre-treatment strategy:} accepts $\mathbf Z$ if the Mahalanobis distance based on the $p_s=6$ pre-treatment covariates satisfies $M_n^{(\text{pre})}\le F^{-1}_{\chi^2_{p_s}}(p_a)$.
\item \textbf{Treatment-dependent strategy:} accepts if $M_n^{(\text{trt})}\le F^{-1}_{\chi^2_{p_d}}(p_a)$. The treatment-dependent block is \texttt{NP} ($H_{i,\text{prop}}$ alone, $p_d=1$), \texttt{NWX} ($\mathbf H_{i,\text{wX}}$ alone, $p_d=p_s=6$), or \texttt{NP+NWX} ($p_d=7$).
\item \textbf{Joint strategy:} accepts if $M_n^{(\text{joint})}\le F^{-1}_{\chi^2_{p_s+p_d}}(p_a)$, using the concatenation of pre-treatment and treatment-dependent covariates.
\end{itemize}

\paragraph{Step 3: rerandomization sampling.}
For each strategy and each signal ratio, we sample treatment assignments by acceptance--rejection: propose $\mathbf Z \sim \mathrm{Bern}(\pi)^{\otimes n}$ and accept if the corresponding Mahalanobis distance satisfies the criterion. We collect $M = 2{,}000$ accepted draws (\emph{Monte Carlo}). The crude (unconstrained) Bernoulli benchmark $\Var(\hat\tau)$ is estimated separately from $N_{\text{Bern}} = 10{,}000$ proposed draws (\emph{Monte Carlo}), regardless of acceptance.

\paragraph{Step 4: performance measures.}
We report four quantities, all estimated by Monte Carlo from the $M$ rerandomization-accepted draws and (where applicable) the $N_{\text{Bern}}$ crude Bernoulli draws produced in Step~3. Below, $\Var(\cdot)$, $\E[\cdot]$, and $\Pr\{\cdot\}$ denote the population quantities under the relevant assignment distribution; each is replaced by its Monte Carlo sample analogue when actually evaluated.
\begin{itemize}
\item \textbf{Variance ratio} (Section~\ref{sec:sim_vr}):
\[
\mathrm{VR} \;=\; \frac{\Var(\hat\tau \mid M_n \le a)}{\Var(\hat\tau)}.
\]
The numerator is estimated by the sample variance of $\hat\tau$ over the $M$ rerandomization-accepted draws; the denominator, by the sample variance over the $N_{\text{Bern}}$ Bernoulli draws.

\item \textbf{Confidence interval length ratio} $\rho$ (Section~\ref{sec:sim_conserv}, body Eq.~\eqref{eq:ci_length_ratio}):
\[
\rho \;=\; \frac{M^{-1}\sum_{m=1}^{M}\hat L_{\mathrm{ReM}}^{(m)}}
{N_{\text{Bern}}^{-1}\sum_{b=1}^{N_{\text{Bern}}}\hat L_{\mathrm{Bern}}^{(b)}}.
\]
Here $\hat L_{\mathrm{ReM}} =z_{1-\alpha/2}\,(\hat v_{n,\delta})^{1/2}$ is the rerandomization confidence interval half-length from~\eqref{eq:ci_half_length}, evaluated on each of the $M$ rerandomization-accepted draws; on each draw $\hat v_{n,\delta}$ is computed from the per-draw ridge-adjusted bound $\hat{\mathbf U}_{n,\delta}$ based on Definition~\ref{def:Un_estimator}, $\hat{\boldsymbol\Sigma}_{xx}$, and $v_p$ via the computation rule in Appendix~\ref{app:feasibility-refinement}, with the infeasibility fallback specified after~\eqref{eq:ridge-optimization-problem}. The denominator averages the Bernoulli half-length $\hat L_{\mathrm{Bern}} = z_{1-\alpha/2}\,\hat V_{\mathrm{Bern}}^{1/2}$ over the $N_{\text{Bern}}$ unconstrained Bernoulli draws, where $\hat V_{\mathrm{Bern}}$ is the $(1,1)$ entry of $\hat{\mathbf U}_n$. Following the discussion in Section~\ref{sec:CI}, we use the Gaussian critical value $z_{1-\alpha/2}$ as a numerical approximation to $\bar q_{1-\alpha/2}(p,a)$; over the values of $p$ and $a$ used in these experiments, the two differ by at most $0.2\%$.

\item \textbf{Coverage probability} (Section~\ref{sec:sim_conserv}):
\[
\mathrm{Cov} \;=\; \Pr\!\Bigl\{\bigl|\hat\tau - \tau\bigr| \le z_{1-\alpha/2}\sqrt{\hat
v_{n,\delta}}\,\Bigm|\, M_n \le a\Bigr\},
\]
estimated by the Monte Carlo frequency of CI-coverage events over the $M$ rerandomization-accepted draws, using the true treatment effect $\tau = 2.0$ from~\eqref{eq:outcome_general} and $\alpha = 0.05$.

\item \textbf{Conservativeness ratio on the log scale} $\mathrm{CR}^{\mathrm m}$ (body Eq.~\eqref{eq:conserv_ratio_log}; reported in Section~\ref{sec:sim_comparison_full} for $m \in \{\mathrm{ours},\mathrm{uncen},\mathrm{spec}\}$, joint strategy only):
\[
\mathrm{CR}^{\mathrm m} \;=\; \log\!\left(
\frac{\E\!\left[\hat v_{n,\delta}^{\mathrm m} \,\bigm|\, M_n \le a\right]}
{\Var(\hat\tau \mid M_n \le a)} \right).
\]
The numerator is estimated by the Monte Carlo average of $\hat v_{n,\delta}^{\mathrm m}$ over the $M$ rerandomization-accepted draws; the denominator by the sample variance of $\hat\tau$ over the same $M$ draws. Each variant uses its own ridge-adjusted joint covariance bound: $\hat{\mathbf U}_{n,\delta}^{\mathrm{ours}}$ (based on Definition~\ref{def:Un_estimator}), $\hat{\mathbf U}_{n,\delta}^{\mathrm{uncen}}$ (based on Eq.~\eqref{eq:Uhat_uncen}, without group centering), and $\hat{\mathbf U}_{n,\delta}^{\mathrm{spec}}$ (based on Eq.~\eqref{eq:Uhat_spec}, replacing the per-unit $d_{n,i}$ by the global $\lambda_{\max}$); each is plugged into the same closed form to obtain its $\hat v_{n,\delta}^{\mathrm m}$.
\end{itemize}

\paragraph{Single shared experiment.}
The network $\mathcal G_n^{\mathrm{BA}}$, the pre-treatment covariates $\mathbf X$, the hidden covariate $X_{i,\text{hidden}}$, and the coefficient vectors $\boldsymbol\beta_{\text{xbase}}, \boldsymbol\beta_{\text{xinterf}}, \beta_{\text{hidden}}$ are drawn \emph{once} and held fixed across all signal ratios, balancing sets, and outcome models, so that variation in the reported quantities reflects only differences in the design and the outcome functional form, not differences in the underlying realization.

\subsection{Variance ratio}
\label{sec:sim_vr}

We report the variance ratio $\mathrm{VR}$ as a function of $\log_2 \kappa$ for all seven outcome models in Table~\ref{tab:outcome_models}. The body of the paper showed \texttt{lin-prop} and \texttt{exp-prop}; the remaining five models are reported here. Smaller values indicate larger variance reduction.

Figure~\ref{fig:vr_lin_appendix} reports the three models with linear $f$. The crossover pattern is sharp for all three: the pre-treatment strategy reduces the variance ratio from above $0.9$ in the interference-dominated regime to roughly $0.20$--$0.25$ in the pre-treatment-dominated regime, while the treatment-dependent strategy moves in the opposite direction. The joint strategy under \texttt{NP} balancing provides precision gains across signal regimes. Under \texttt{NP+NWX} balancing, the variance ratio of the joint strategy is somewhat larger (roughly $0.36$--$0.40$ uniformly across $\log_2\kappa$); this is a consequence of the increased dimension $p$ of the criterion, which inflates $v_p$ without commensurate gains in $R^2$ when the outcome does not load on $\mathbf H_{i,\text{wX}}$. The exception is \texttt{lin-nwx}, whose interference block does load on $\mathbf H_{i,\text{wX}}$: there, the \texttt{NP+NWX} balancing set captures the relevant variation and outperforms \texttt{NP}-only balancing.

\begin{figure}[p]
\centering \rule{0pt}{1ex}\texttt{lin-prop}\\[1pt]
\includegraphics[width=0.36\textwidth]{"figures/variance_reduction/vr_covs-np_type1.pdf"}\hfill
\includegraphics[width=0.36\textwidth]{"figures/variance_reduction/vr_covs-np+nwx_type1.pdf"}\\[0pt]
\rule{0pt}{1ex}\texttt{lin-prop+nwx}\\[1pt]
\includegraphics[width=0.36\textwidth]{"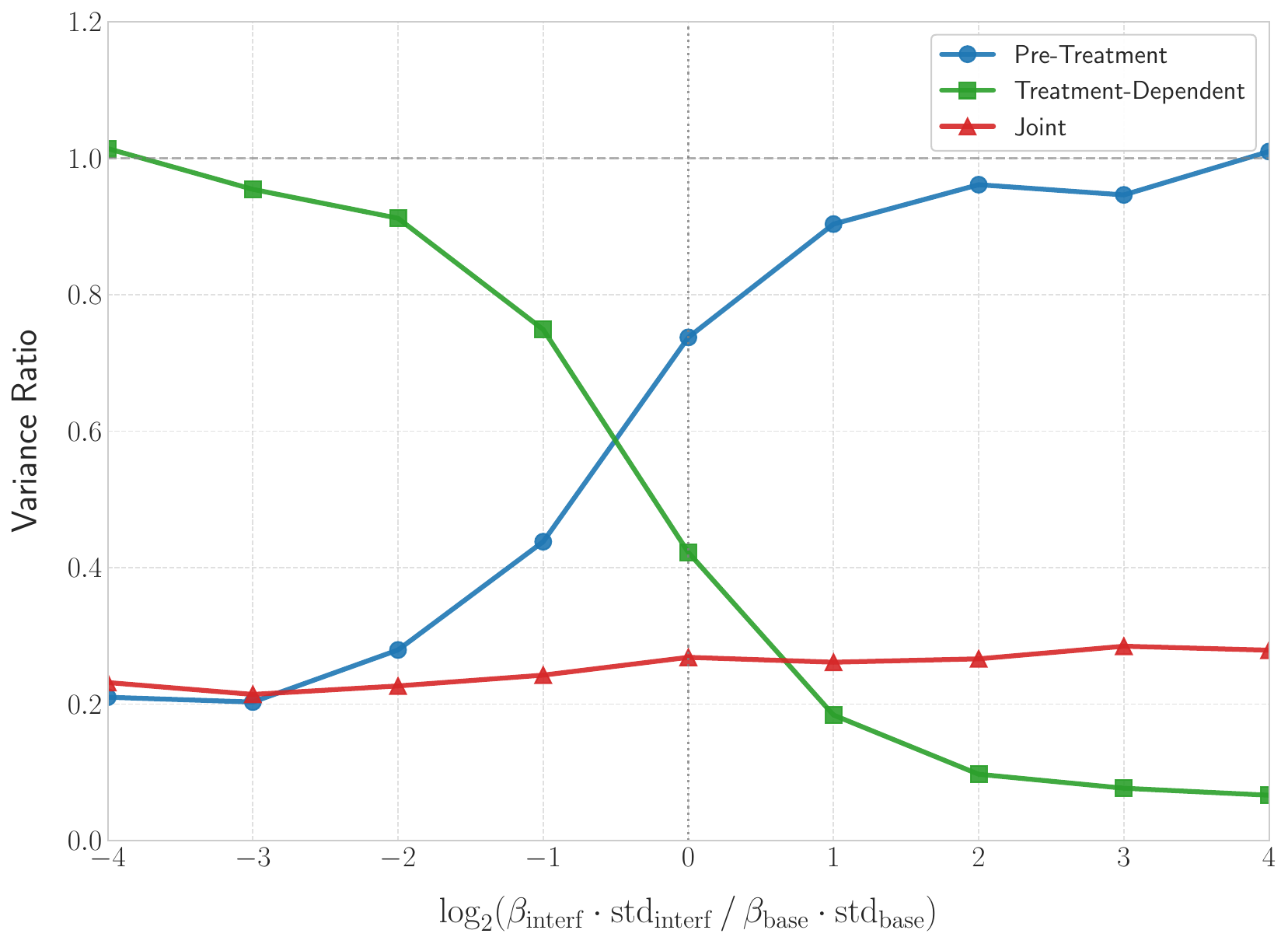"}\hfill
\includegraphics[width=0.36\textwidth]{"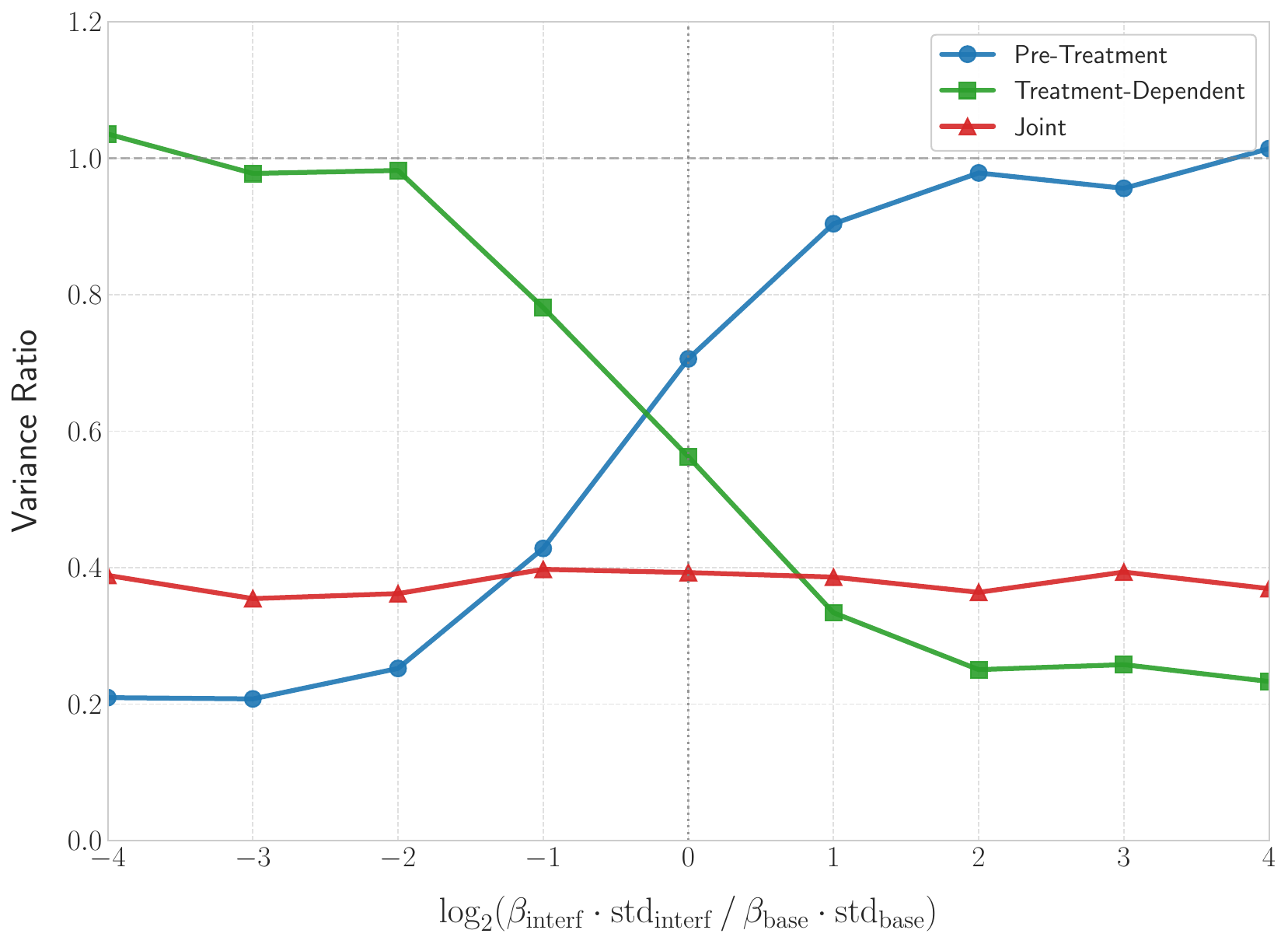"}\\[0pt]
\rule{0pt}{1ex}\texttt{lin-nwx}\\[1pt]
\includegraphics[width=0.36\textwidth]{"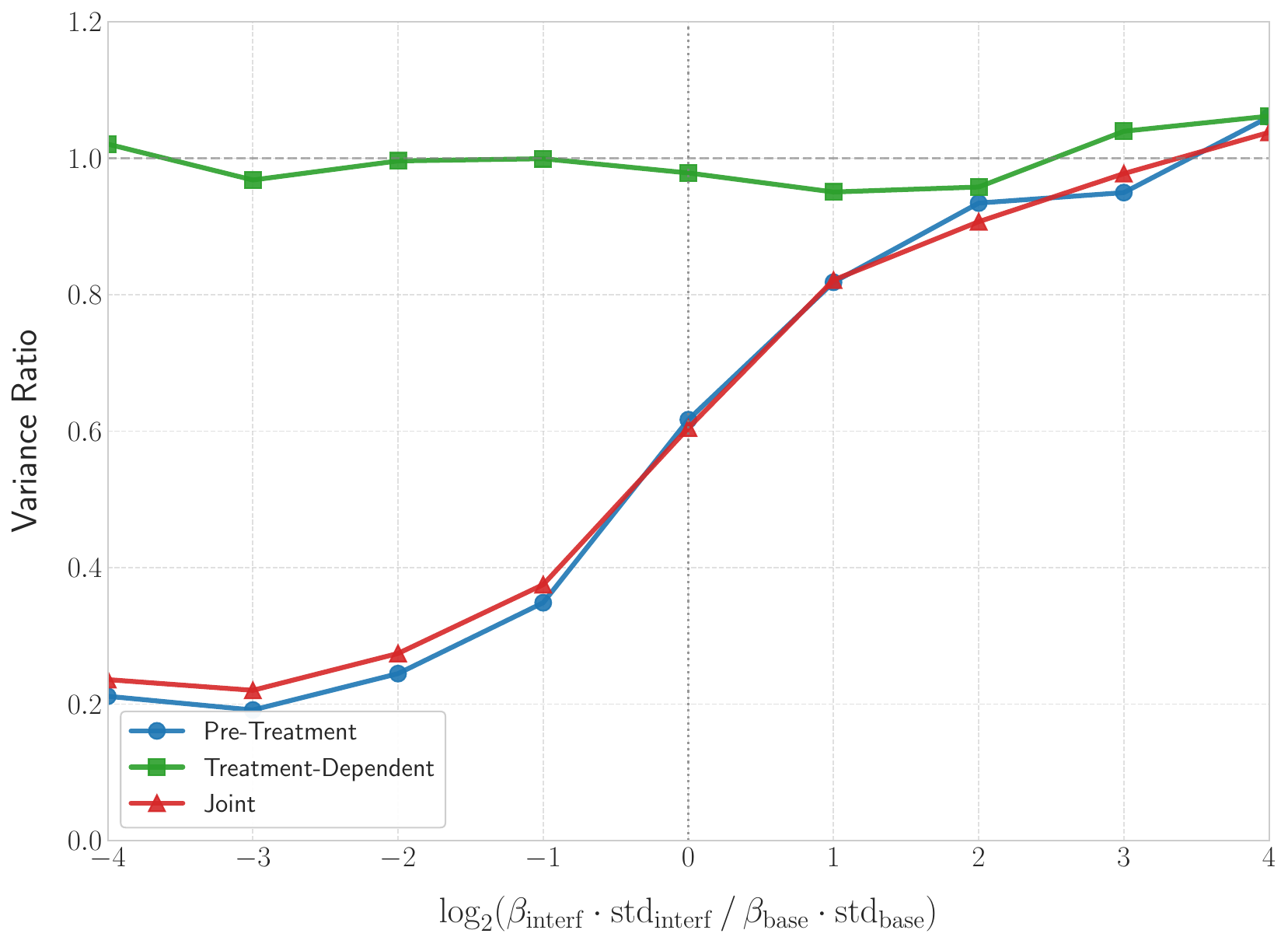"}\hfill
\includegraphics[width=0.36\textwidth]{"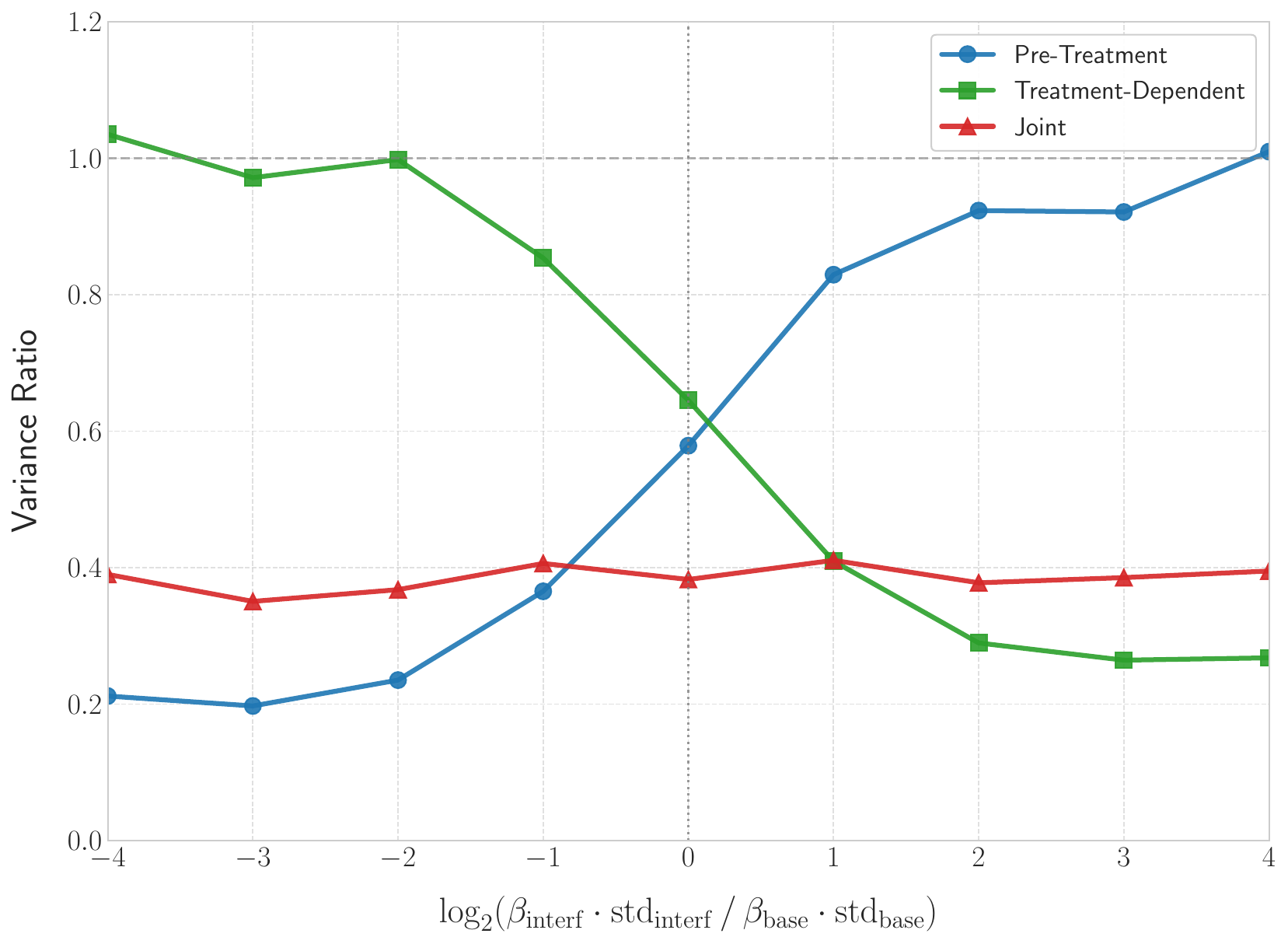"}\\[0pt]
\caption{Variance ratio for the three linear outcome models. Within each row, the left panel uses balancing set \texttt{NP} and the right panel uses balancing set \texttt{NP+NWX}. Blue: pre-treatment strategy; green: treatment-dependent strategy; red: joint strategy.}
\label{fig:vr_lin_appendix}
\end{figure}

Figure~\ref{fig:vr_exp_appendix} reports the four models with exponential $f$. The qualitative crossover is preserved, but the magnitudes are attenuated relative to the linear case: the pre-treatment strategy now bottoms out near $0.50$ rather than $0.20$, consistent with the weaker association between the treatment-effect estimator and the balanced first-moment contrasts under the nonlinear outcome model. The treatment-dependent strategy remains highly effective in the interference-dominated regime, often pushing the ratio to $0.05$ or below for \texttt{exp-prop} and \texttt{exp-prop+nwx}. The joint strategy provides variance reductions across these settings, with smaller gains under \texttt{NP+NWX} when $\mathbf H_{i,\text{wX}}$ does not enter the outcome.

\begin{figure}[p]
\centering \rule{0pt}{1ex}\texttt{exp-prop}\\[1pt]
\includegraphics[width=0.36\textwidth]{"figures/variance_reduction/vr_covs-np_type5.pdf"}\hfill
\includegraphics[width=0.36\textwidth]{"figures/variance_reduction/vr_covs-np+nwx_type5.pdf"}\\[0pt]
\rule{0pt}{1ex}\texttt{exp-sumexp}\\[1pt]
\includegraphics[width=0.36\textwidth]{"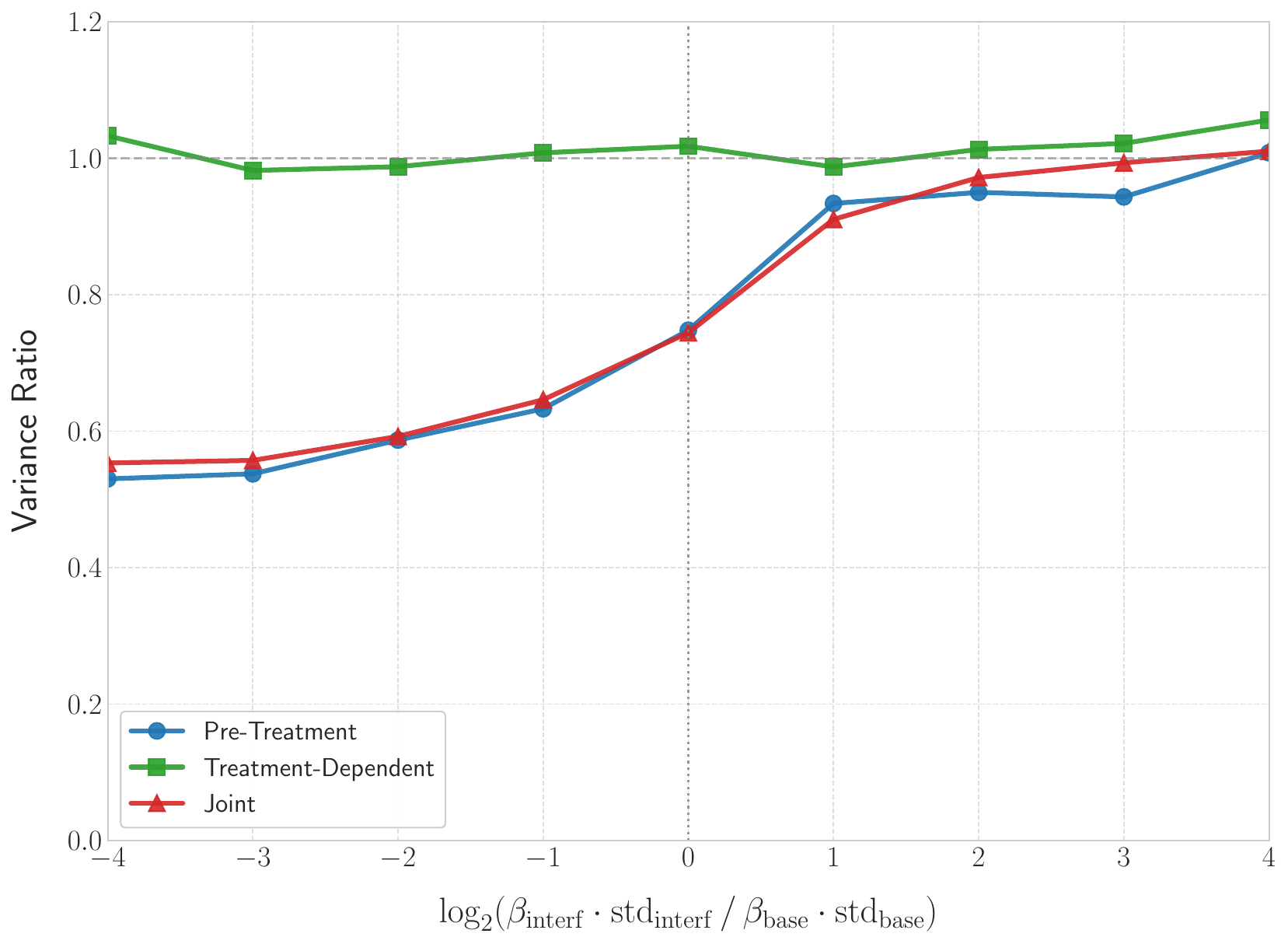"}\hfill
\includegraphics[width=0.36\textwidth]{"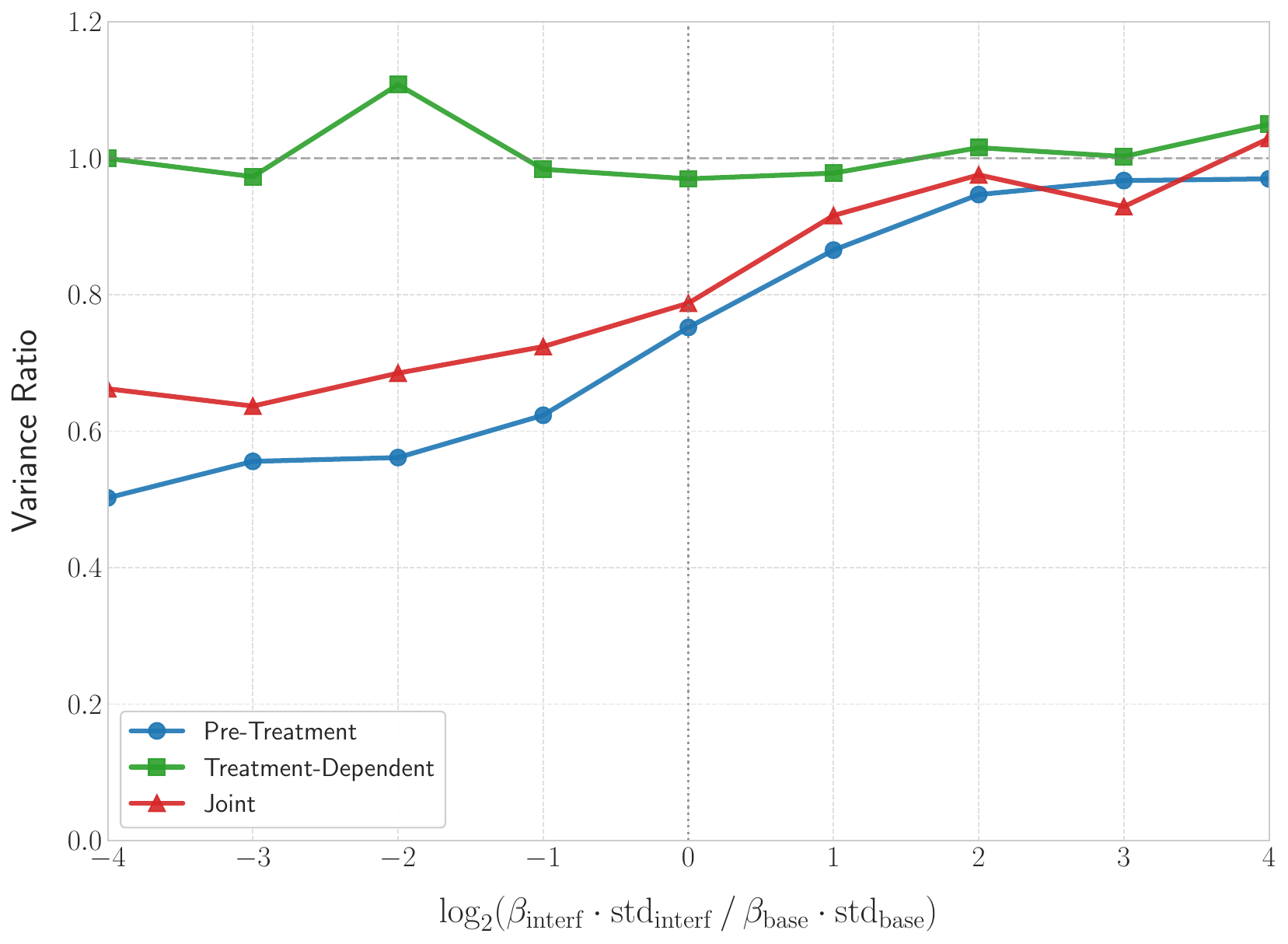"}\\[0pt]
\rule{0pt}{1ex}\texttt{exp-prop+nwx}\\[1pt]
\includegraphics[width=0.36\textwidth]{"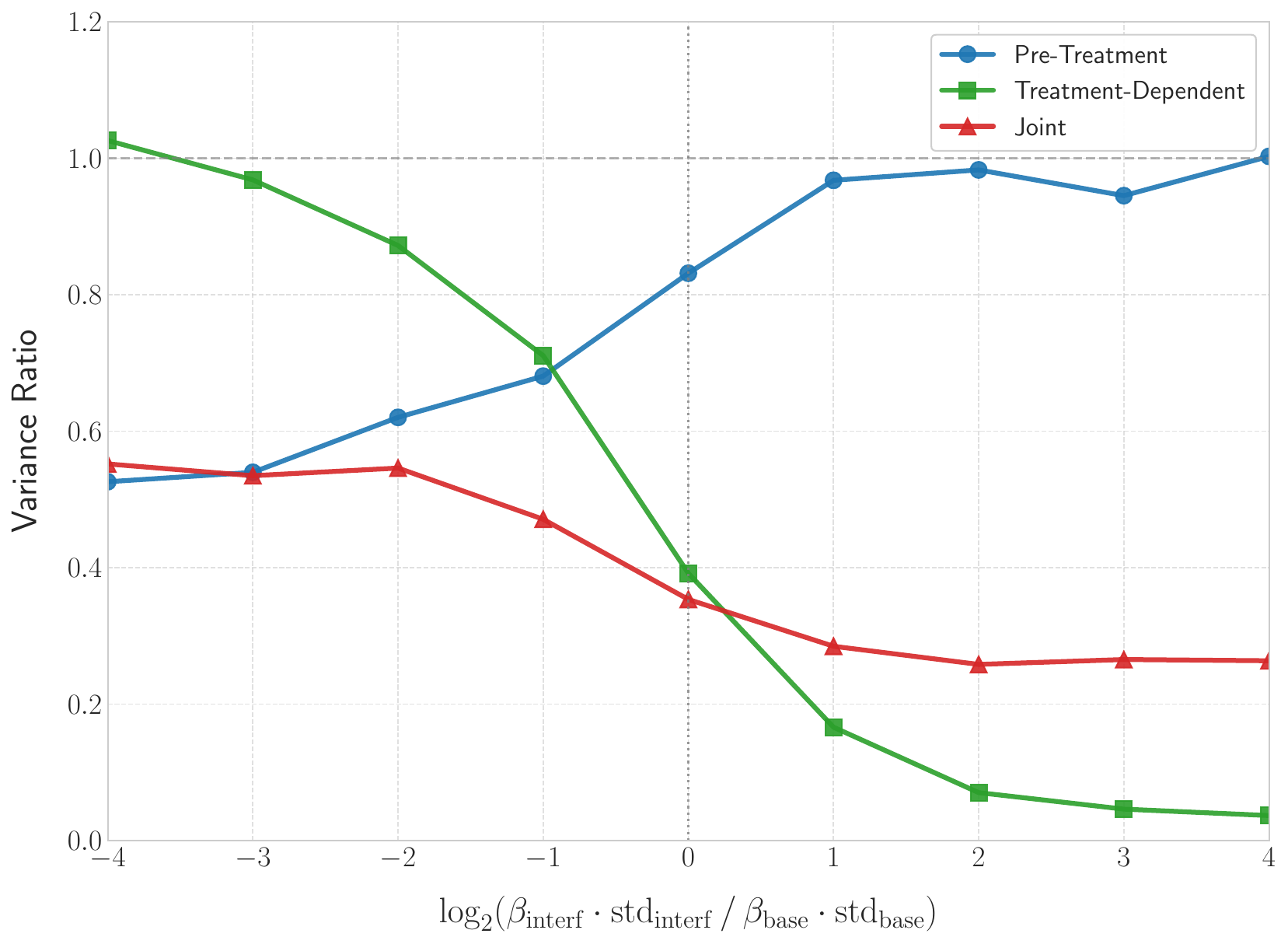"}\hfill
\includegraphics[width=0.36\textwidth]{"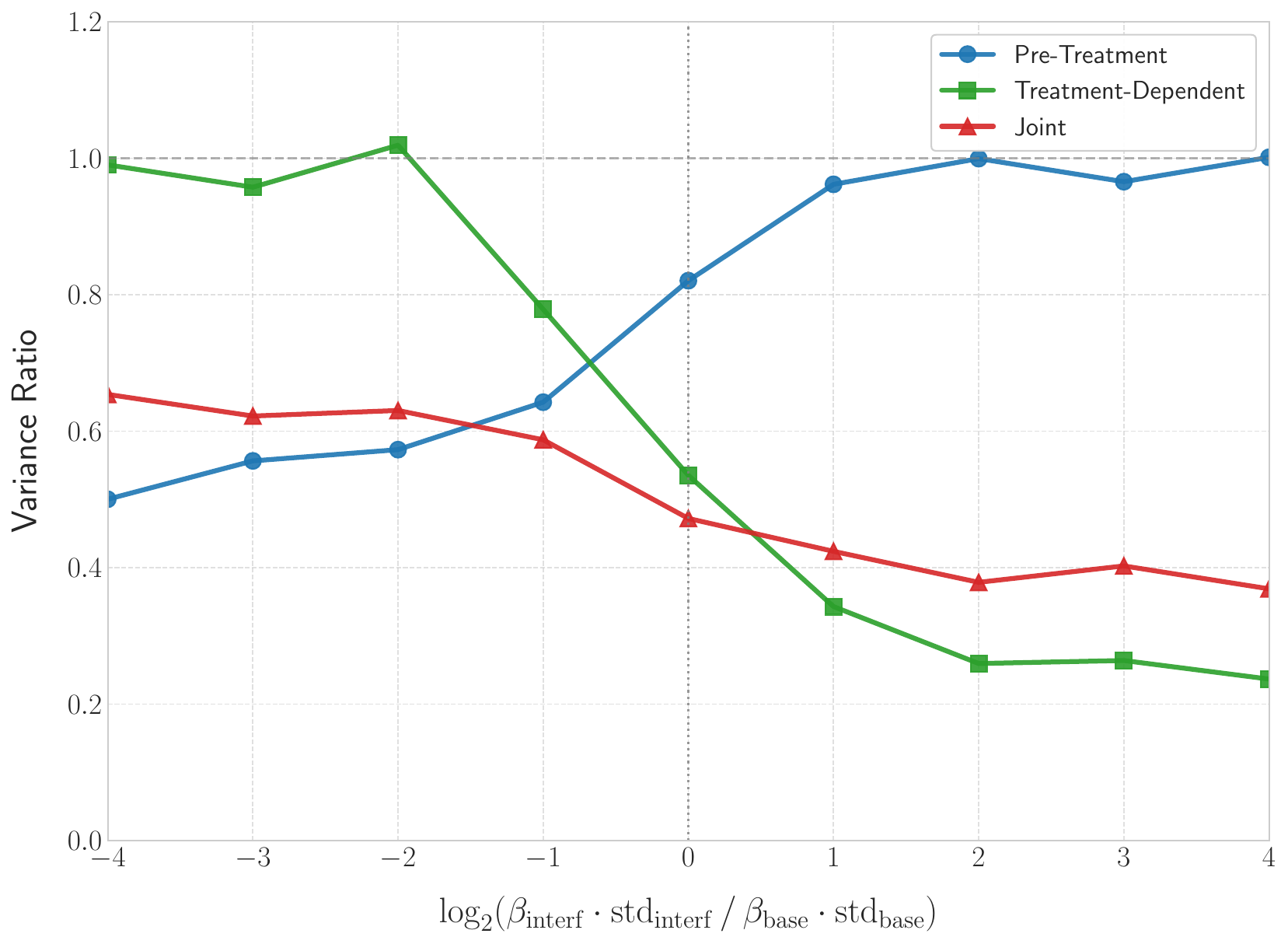"}\\[0pt]
\rule{0pt}{1ex}\texttt{exp-sumexp-full}\\[1pt]
\includegraphics[width=0.36\textwidth]{"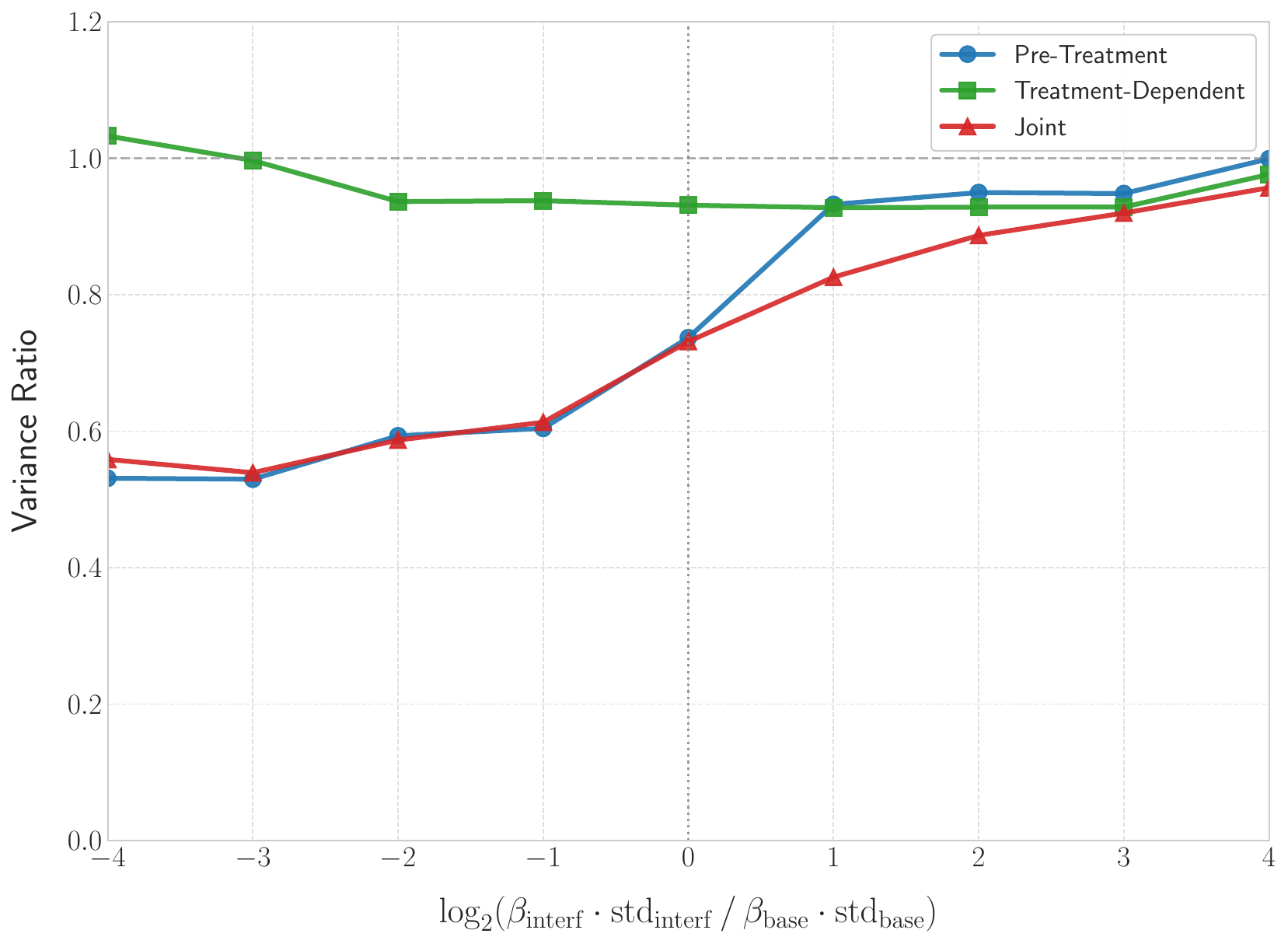"}\hfill
\includegraphics[width=0.36\textwidth]{"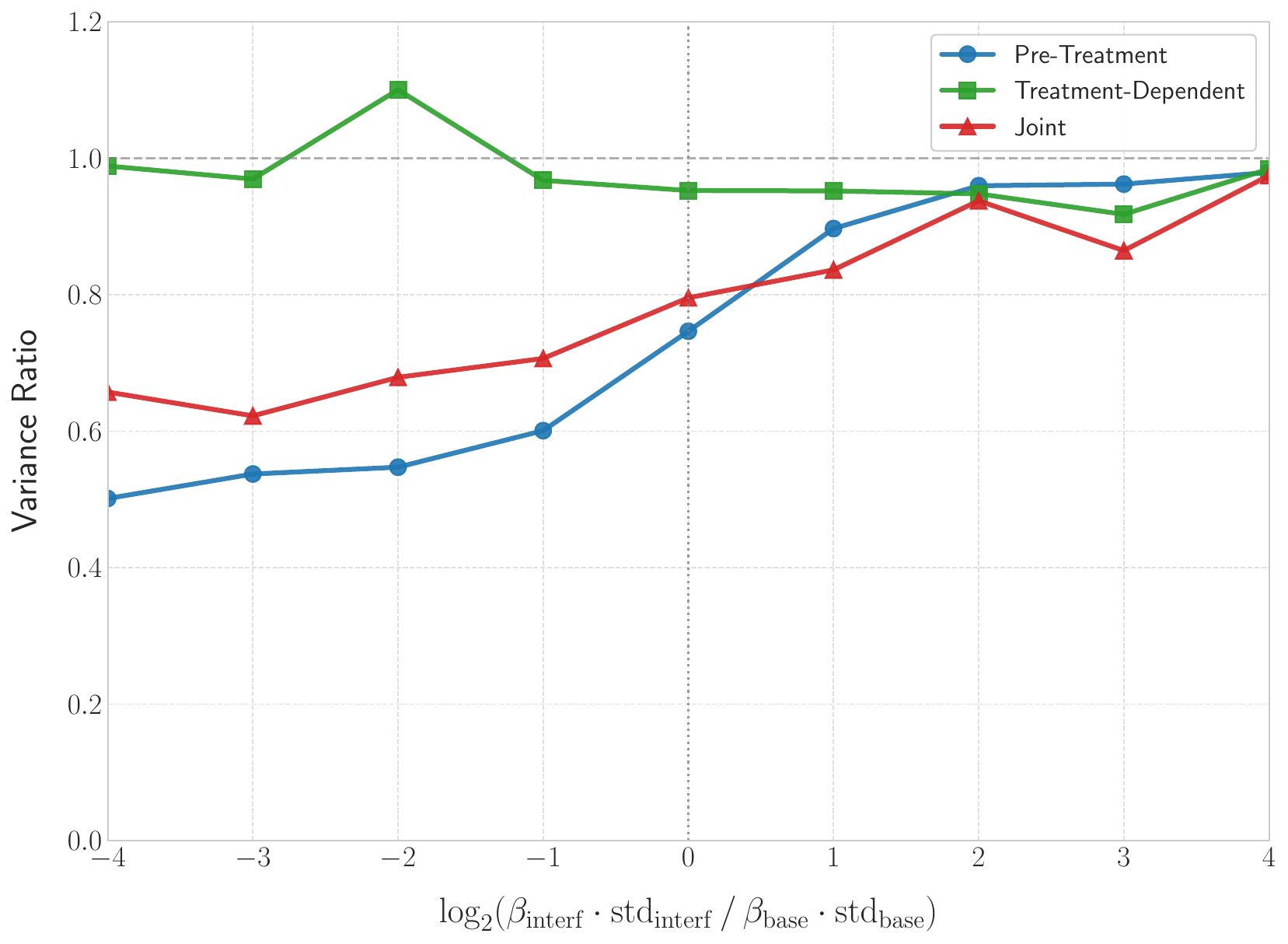"}\\[0pt]
\caption{Variance ratio for the four exponential outcome models. Layout, line types, and colors follow Figure~\ref{fig:vr_lin_appendix}.}
\label{fig:vr_exp_appendix}
\end{figure}

Two messages emerge from the variance-reduction sweep across all seven models. First, when the rerandomization criterion includes covariates whose corresponding directions are loaded by the outcome (e.g., the pre-treatment $\mathbf X$ in the pre-treatment-dominated regime; $H_{i,\text{prop}}$ for \texttt{prop}-type interference; $\mathbf H_{i,\text{wX}}$ for \texttt{nwx}-type interference), the variance reduction is dramatic. Second, the joint strategy is uniformly competitive but not uniformly optimal: when the criterion includes covariates that do not enter the outcome, the dimensional inflation of $v_p$ produces a measurable but modest loss. This argues for choosing the balancing set deliberately based on what is suspected to drive the outcome, rather than aggregating all available covariates.

\subsection{Conservative variance estimator}
\label{sec:sim_cv}

\subsubsection{Conservativeness ratio and confidence interval}
\label{sec:sim_conserv}

We now report values of the confidence interval length ratio $\rho = \hat L_{\mathrm{ReM}} / \hat L_{\mathrm{Bern}}$, and the empirical $95\%$ CI coverage, all defined and computed as in Section~\ref{sec:sim_protocol}, Step~4. Figures~\ref{fig:ratio_lin_appendix} and~\ref{fig:ratio_exp_appendix} report $\rho$ from~\eqref{eq:ci_length_ratio} for the linear and exponential model groups respectively. In almost every panel and for every strategy, $\rho \le 1$, confirming that the design-stage gain captured by the optimization procedure in Section~\ref{sec:opt} translates into a strictly shorter confidence interval under rerandomization. Across all seven models, the joint strategy is the only one that maintains $\rho$ uniformly below $1$. Coverage is empirically close to 1 in our Monte Carlo runs across all settings, so we display it for only two representative models, \texttt{lin-prop} and \texttt{exp-prop}, in Figure~\ref{fig:cov_appendix}. Coverage is essentially $1$ for all strategies and signal ratios, an immediate consequence of the conservativeness margin documented above. Because every other model exhibits the same near-unit coverage, we omit those panels.

\begin{figure}[p]
\centering \rule{0pt}{1ex}\texttt{lin-prop}\\[1pt]
\includegraphics[width=0.36\textwidth]{"figures/conservative_variance/cv_covs-np_type1_ratio.pdf"}\hfill
\includegraphics[width=0.36\textwidth]{"figures/conservative_variance/cv_covs-np+nwx_type1_ratio.pdf"}\\[0pt]
\rule{0pt}{1ex}\texttt{lin-prop+nwx}\\[1pt]
\includegraphics[width=0.36\textwidth]{"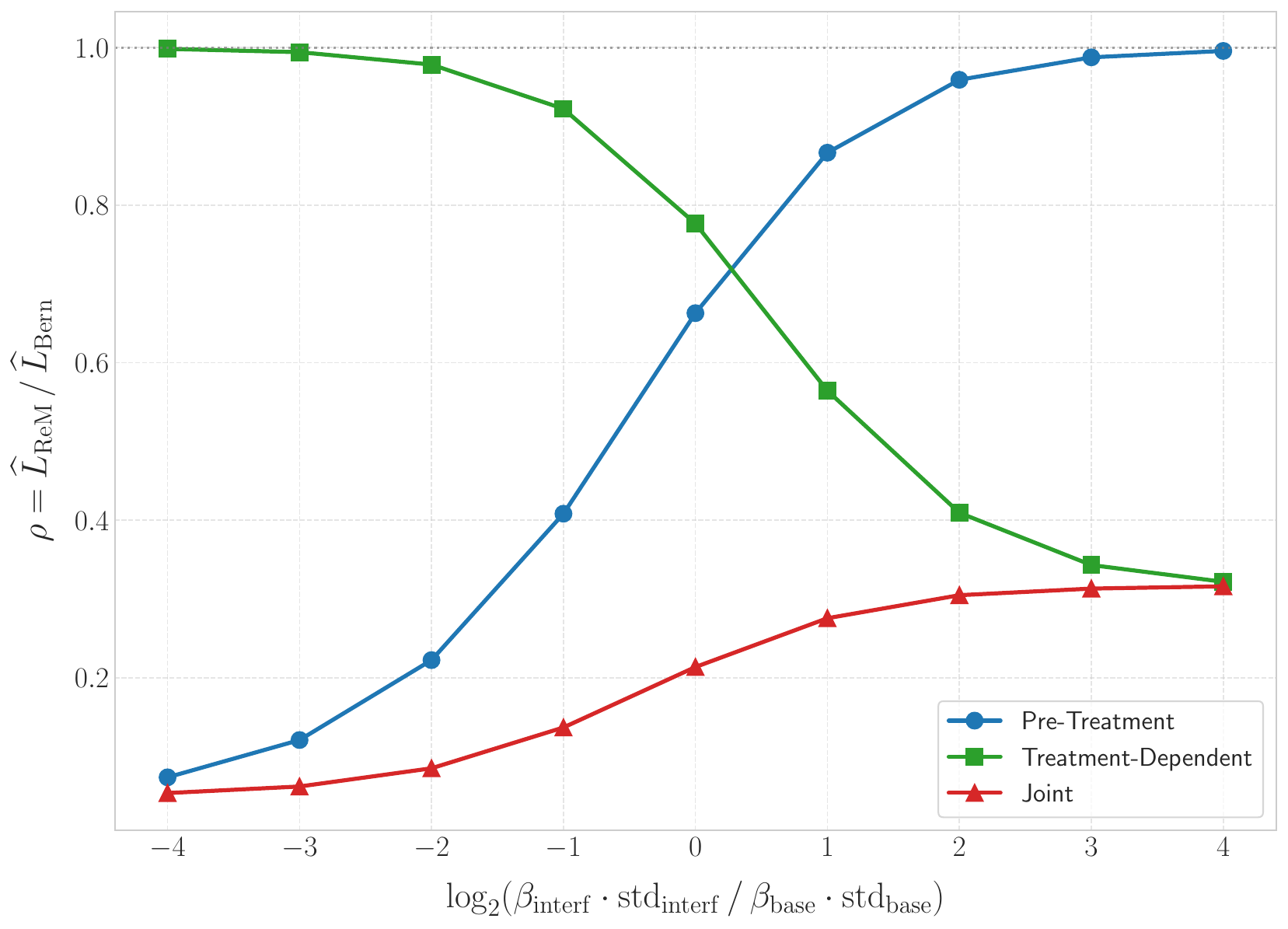"}\hfill
\includegraphics[width=0.36\textwidth]{"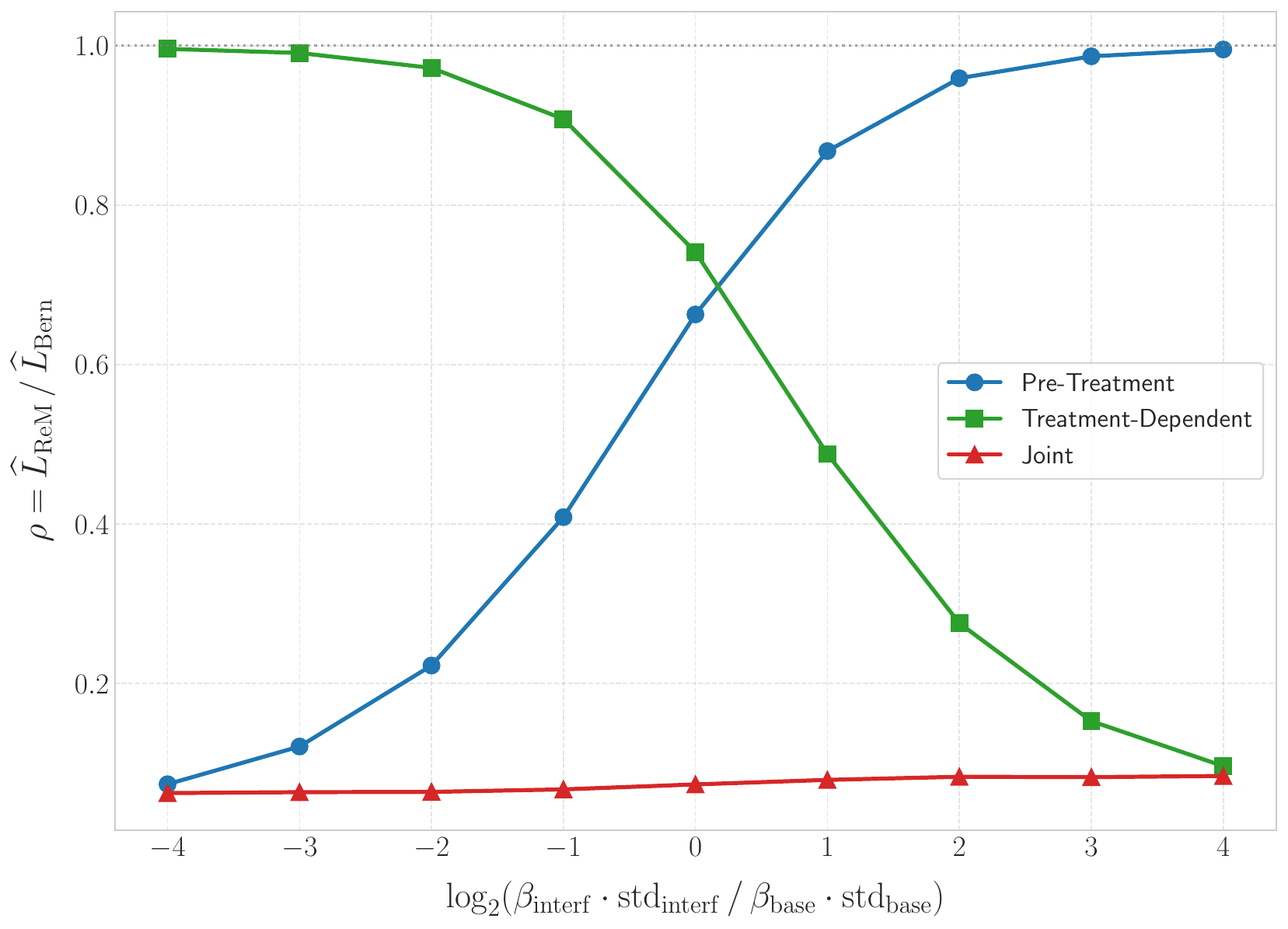"}\\[0pt]
\rule{0pt}{1ex}\texttt{lin-nwx}\\[1pt]
\includegraphics[width=0.36\textwidth]{"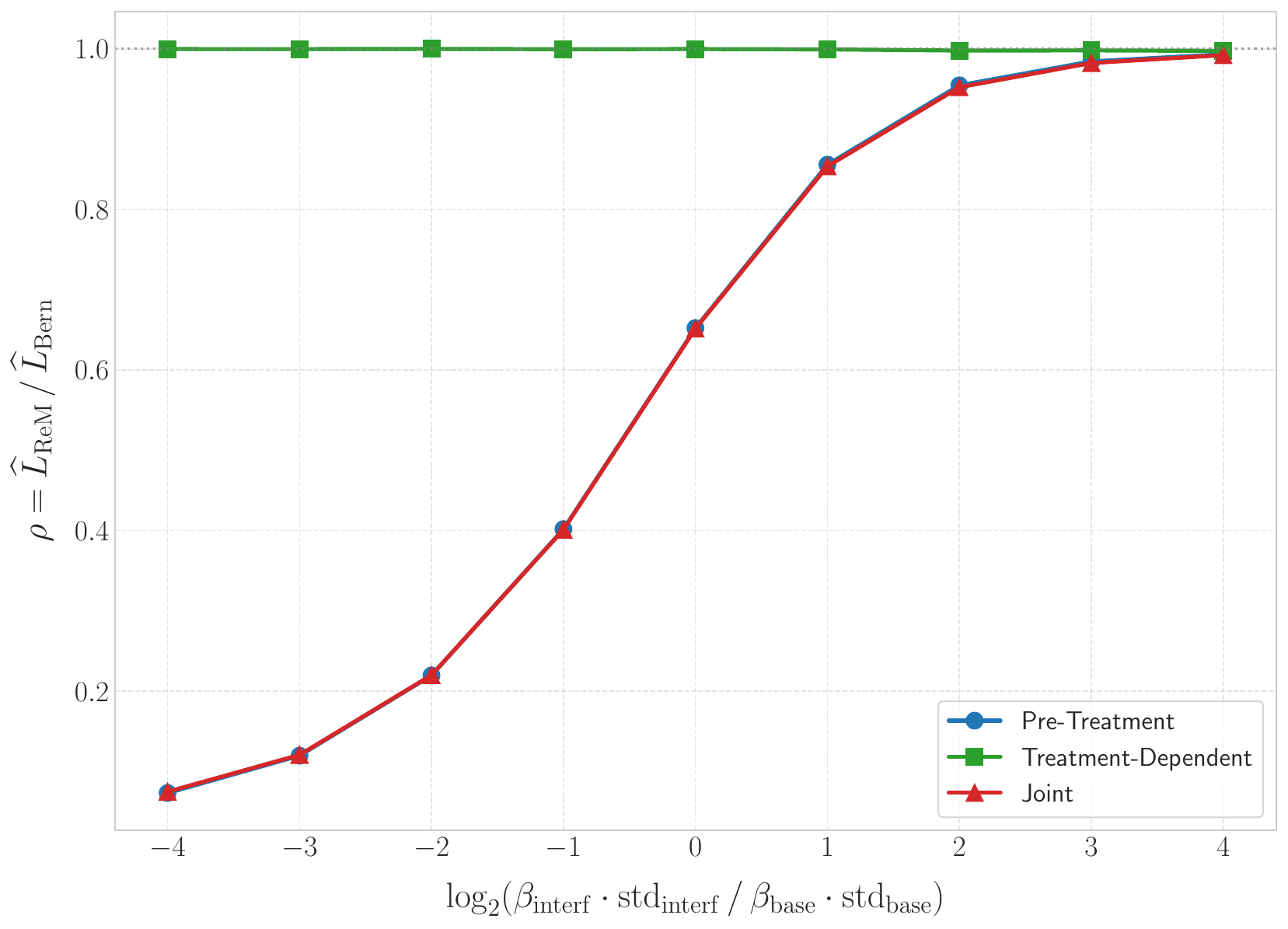"}\hfill
\includegraphics[width=0.36\textwidth]{"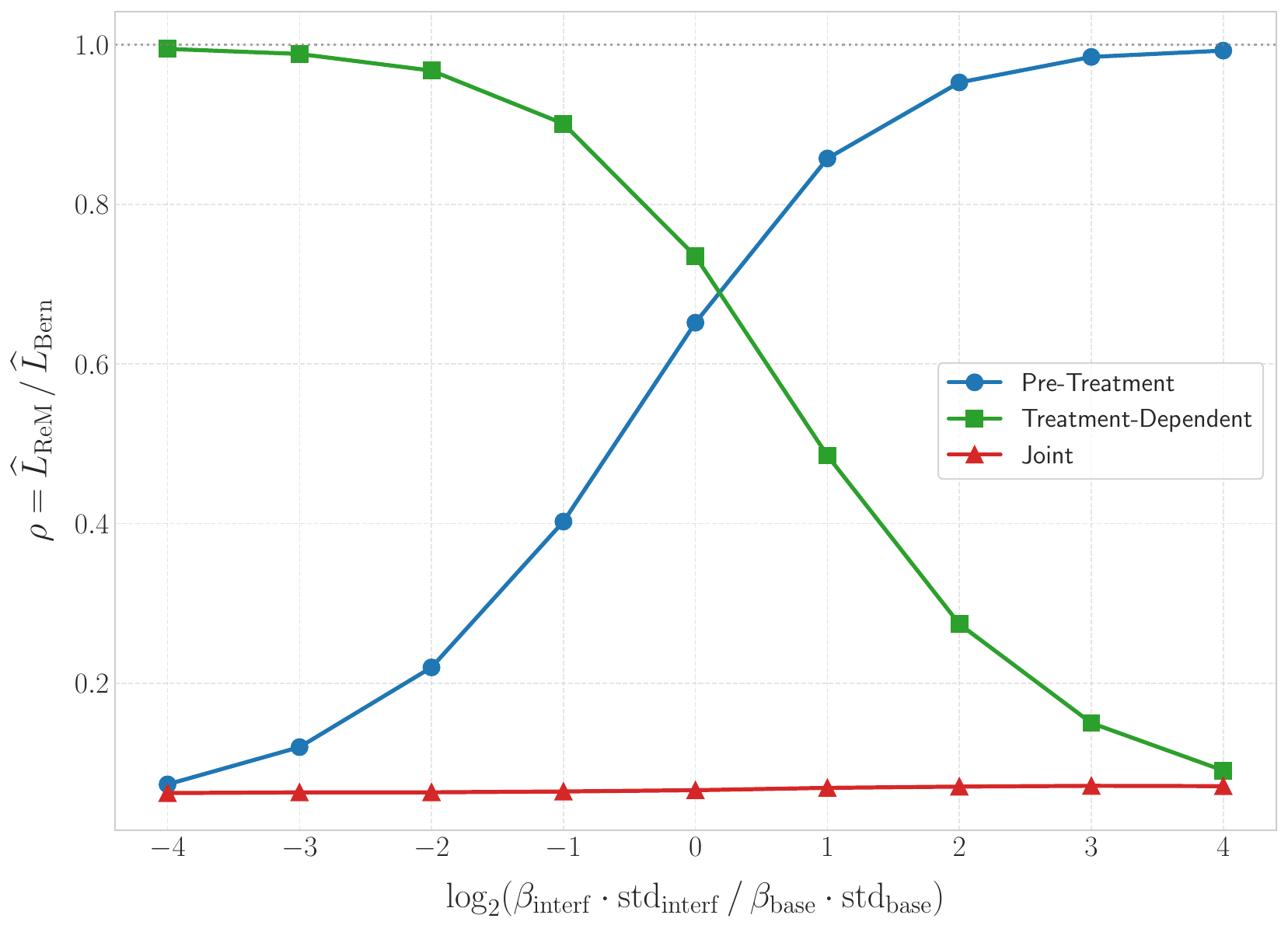"}\\[0pt]
\caption{Confidence interval length ratio $\rho$ from~\eqref{eq:ci_length_ratio} for the three linear outcome models. Within each row, left: balancing \texttt{NP}; right: \texttt{NP+NWX}.}
\label{fig:ratio_lin_appendix}
\end{figure}

\begin{figure}[p]
\centering \rule{0pt}{1ex}\texttt{exp-prop}\\[1pt]
\includegraphics[width=0.36\textwidth]{"figures/conservative_variance/cv_covs-np_type5_ratio.pdf"}\hfill
\includegraphics[width=0.36\textwidth]{"figures/conservative_variance/cv_covs-np+nwx_type5_ratio.pdf"}\\[0pt]
\rule{0pt}{1ex}\texttt{exp-sumexp}\\[1pt]
\includegraphics[width=0.36\textwidth]{"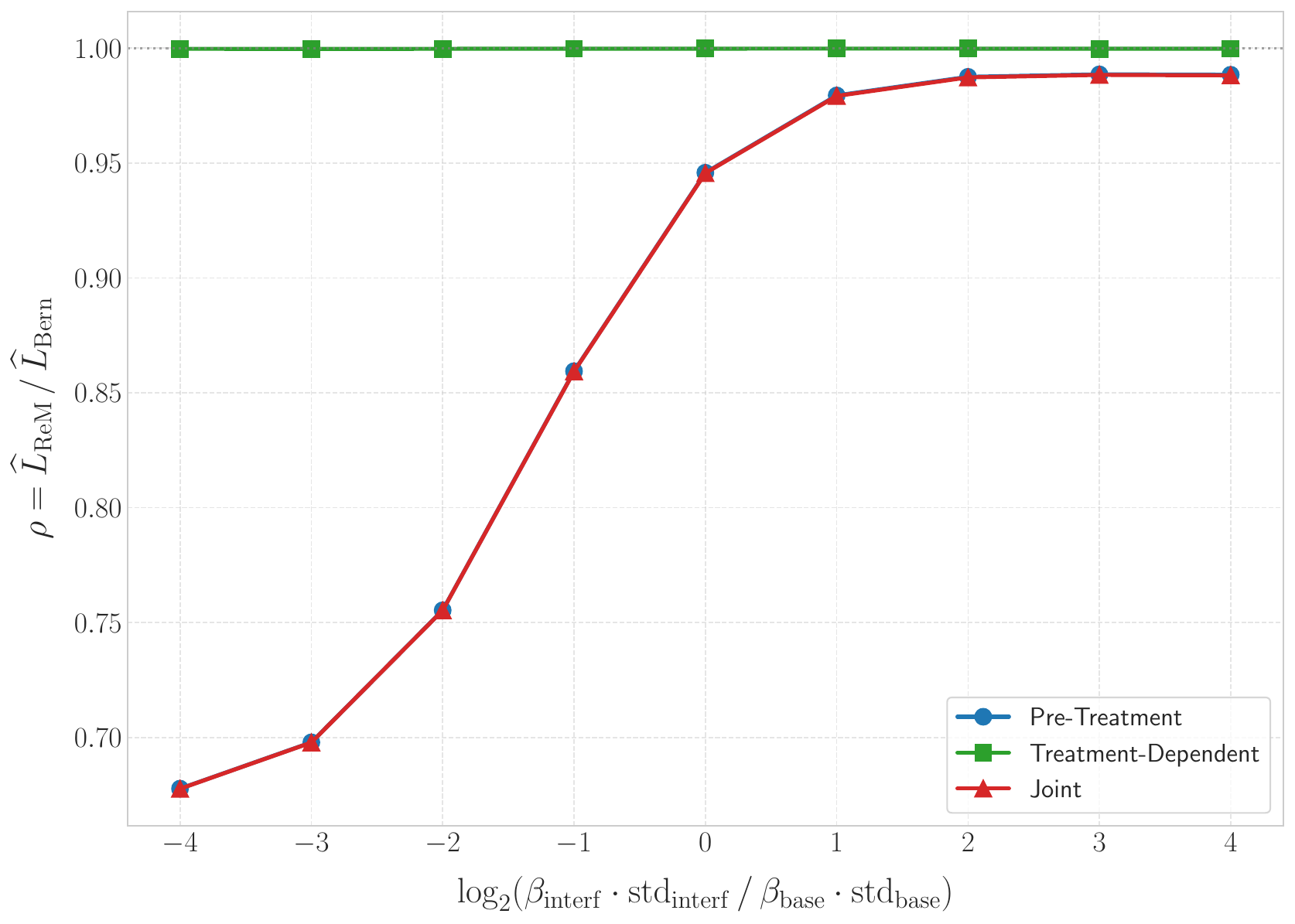"}\hfill
\includegraphics[width=0.36\textwidth]{"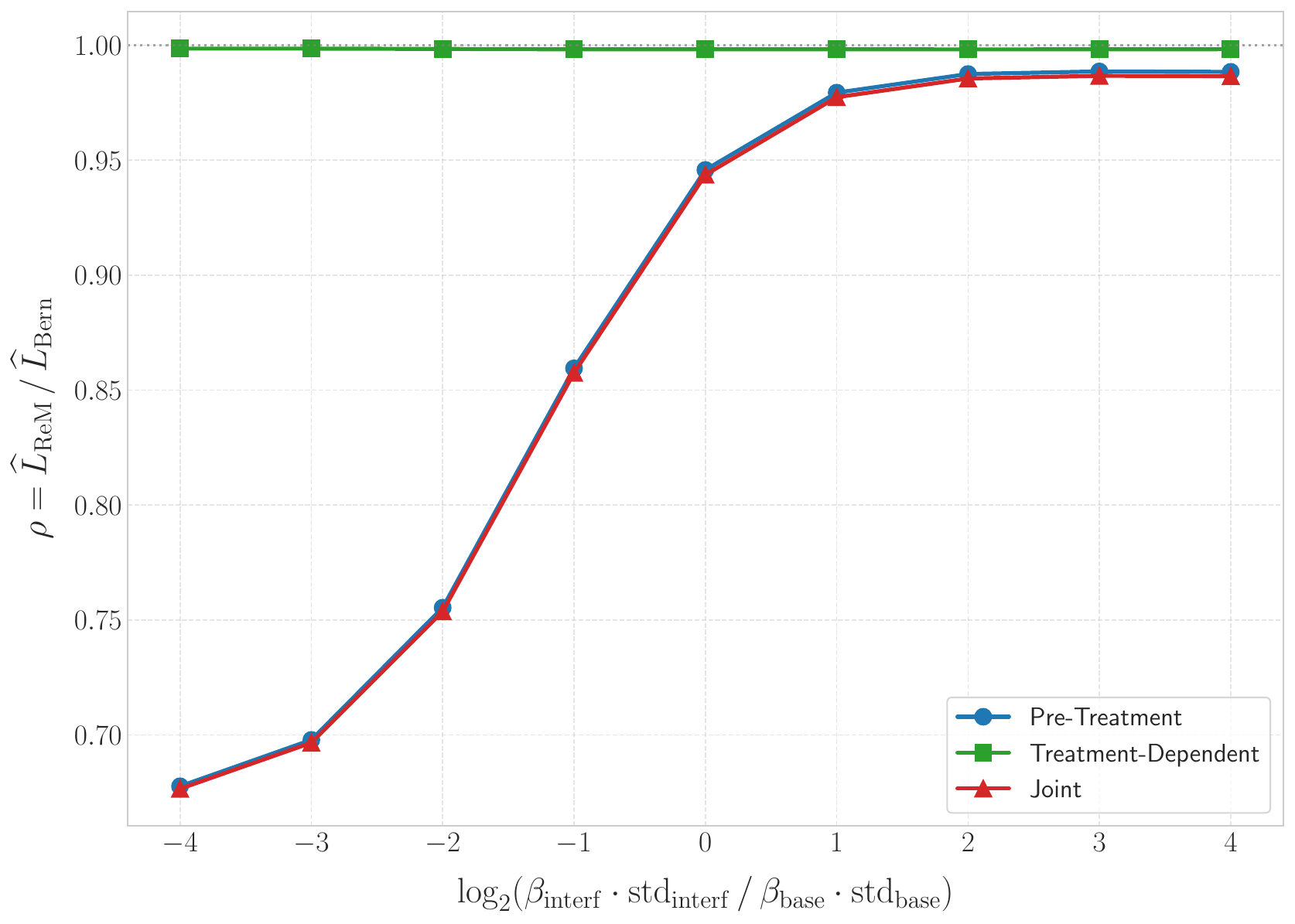"}\\[0pt]
\rule{0pt}{1ex}\texttt{exp-prop+nwx}\\[1pt]
\includegraphics[width=0.36\textwidth]{"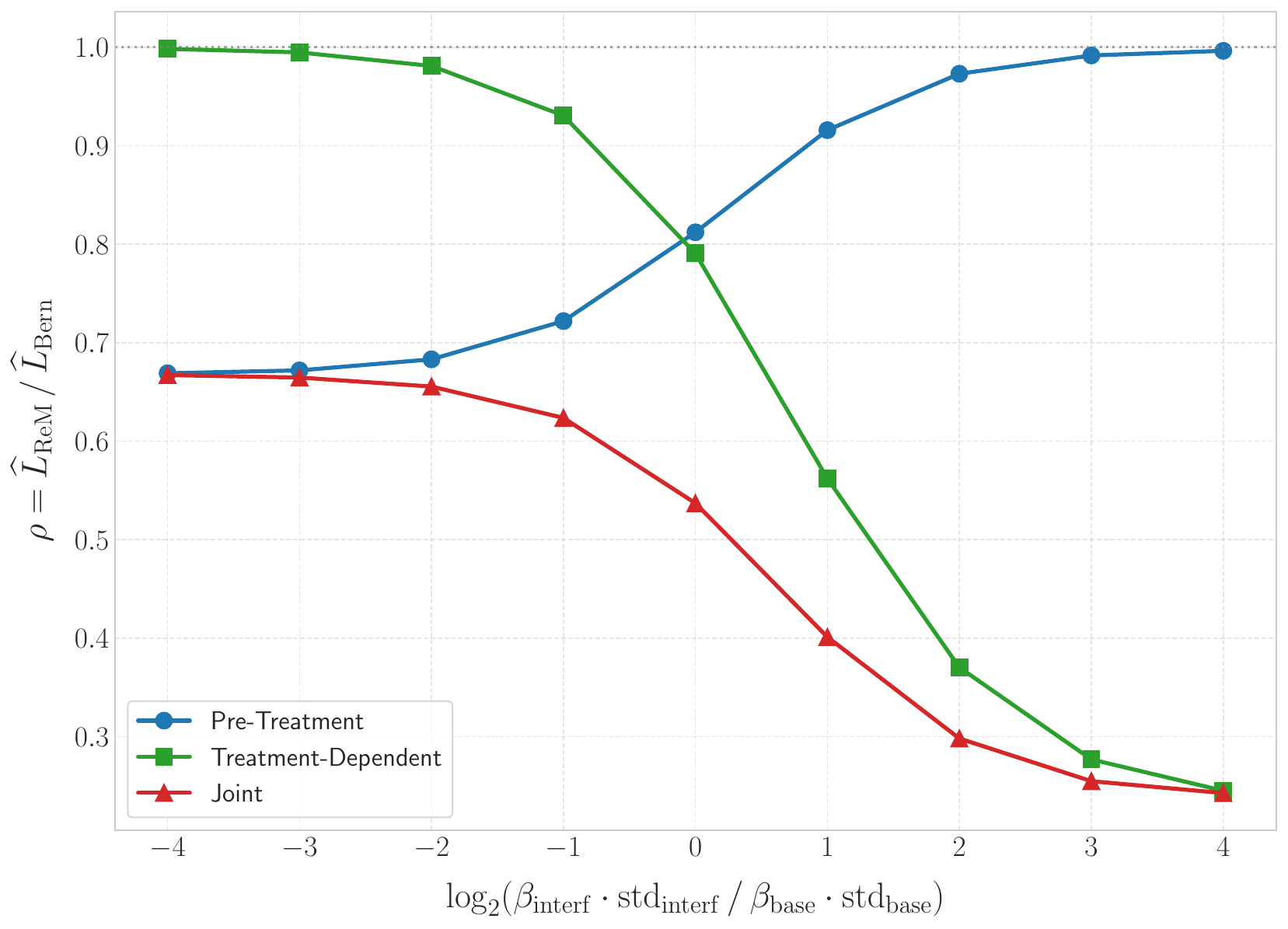"}\hfill
\includegraphics[width=0.36\textwidth]{"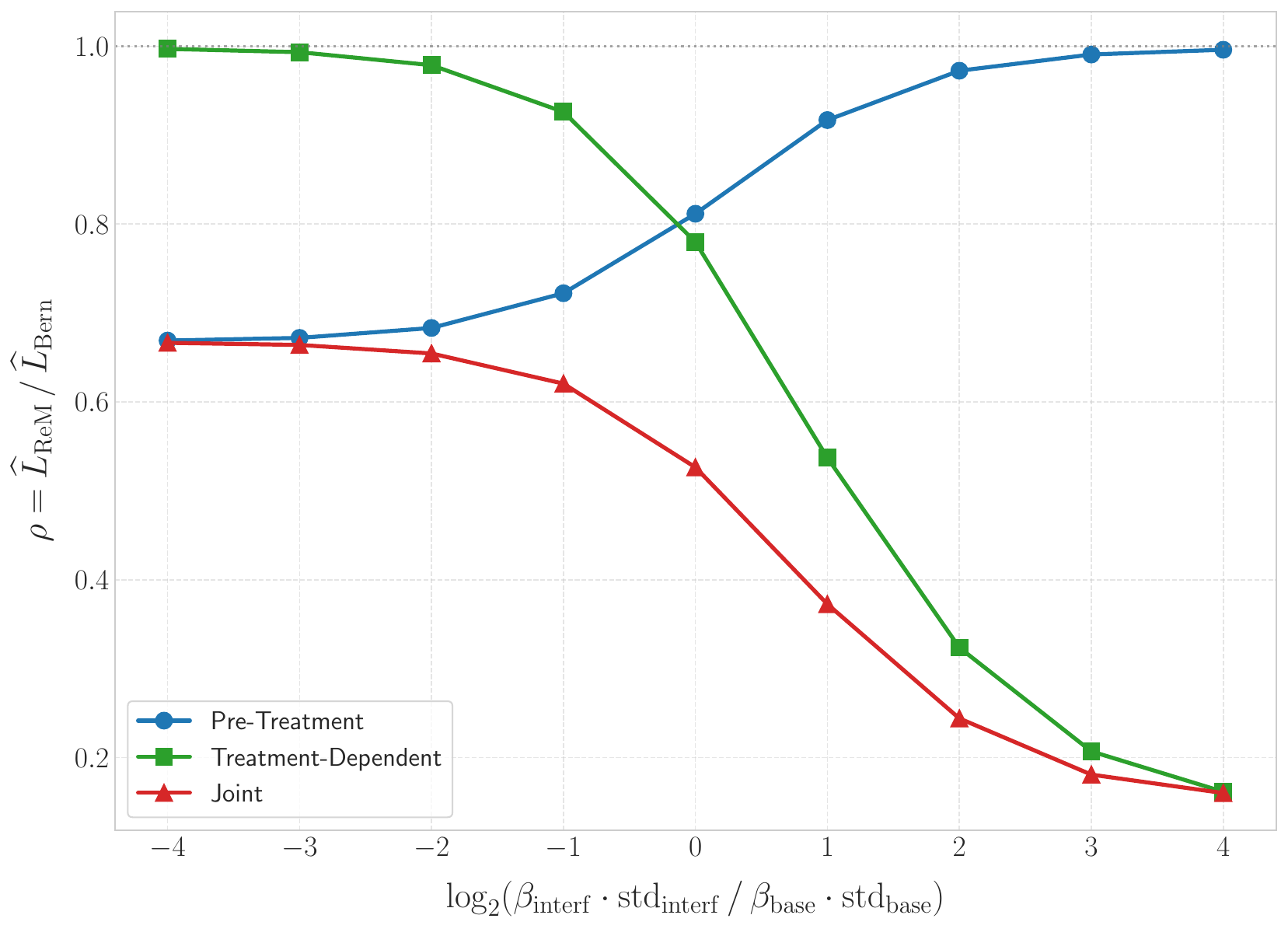"}\\[0pt]
\rule{0pt}{1ex}\texttt{exp-sumexp-full}\\[1pt]
\includegraphics[width=0.36\textwidth]{"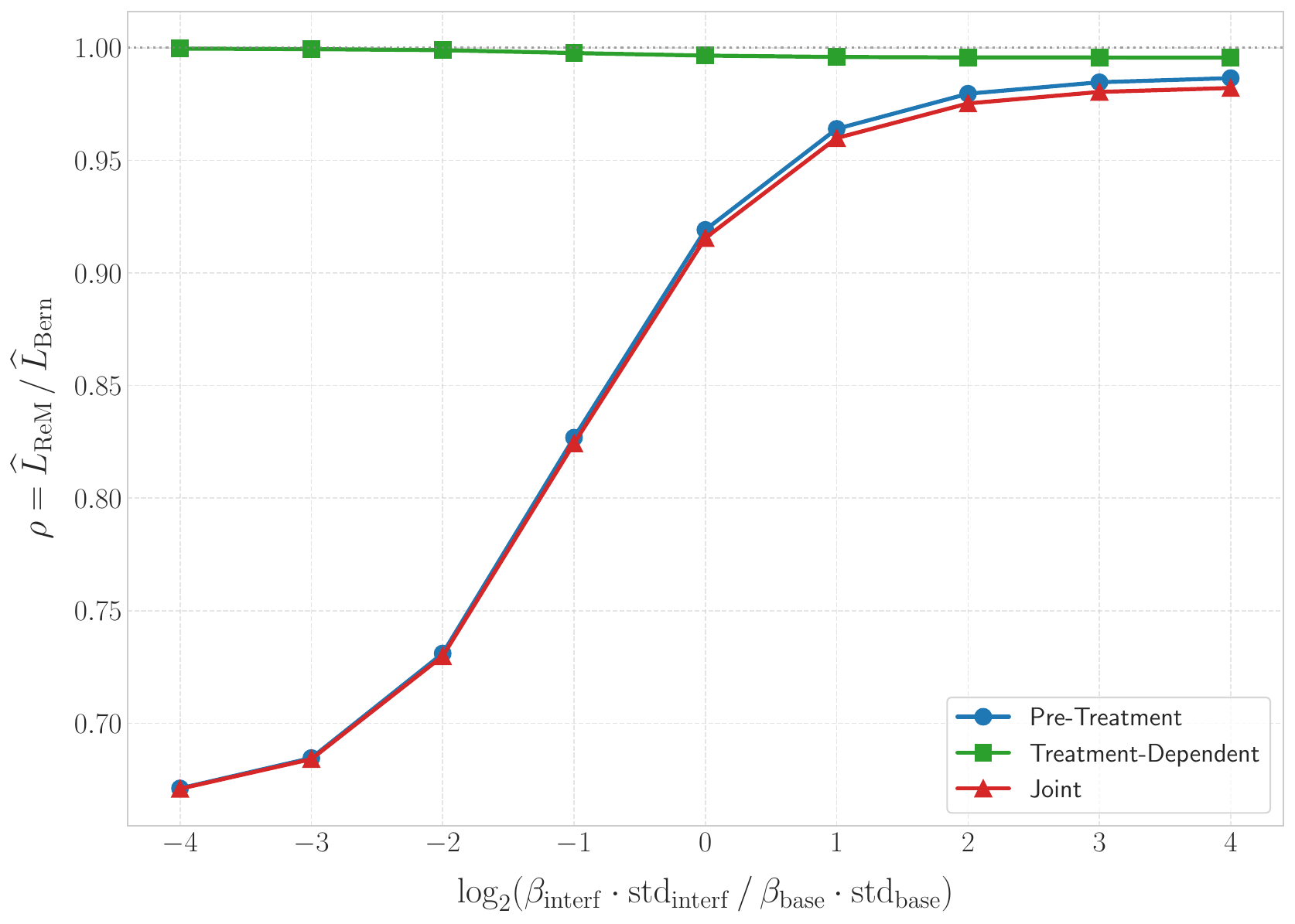"}\hfill
\includegraphics[width=0.36\textwidth]{"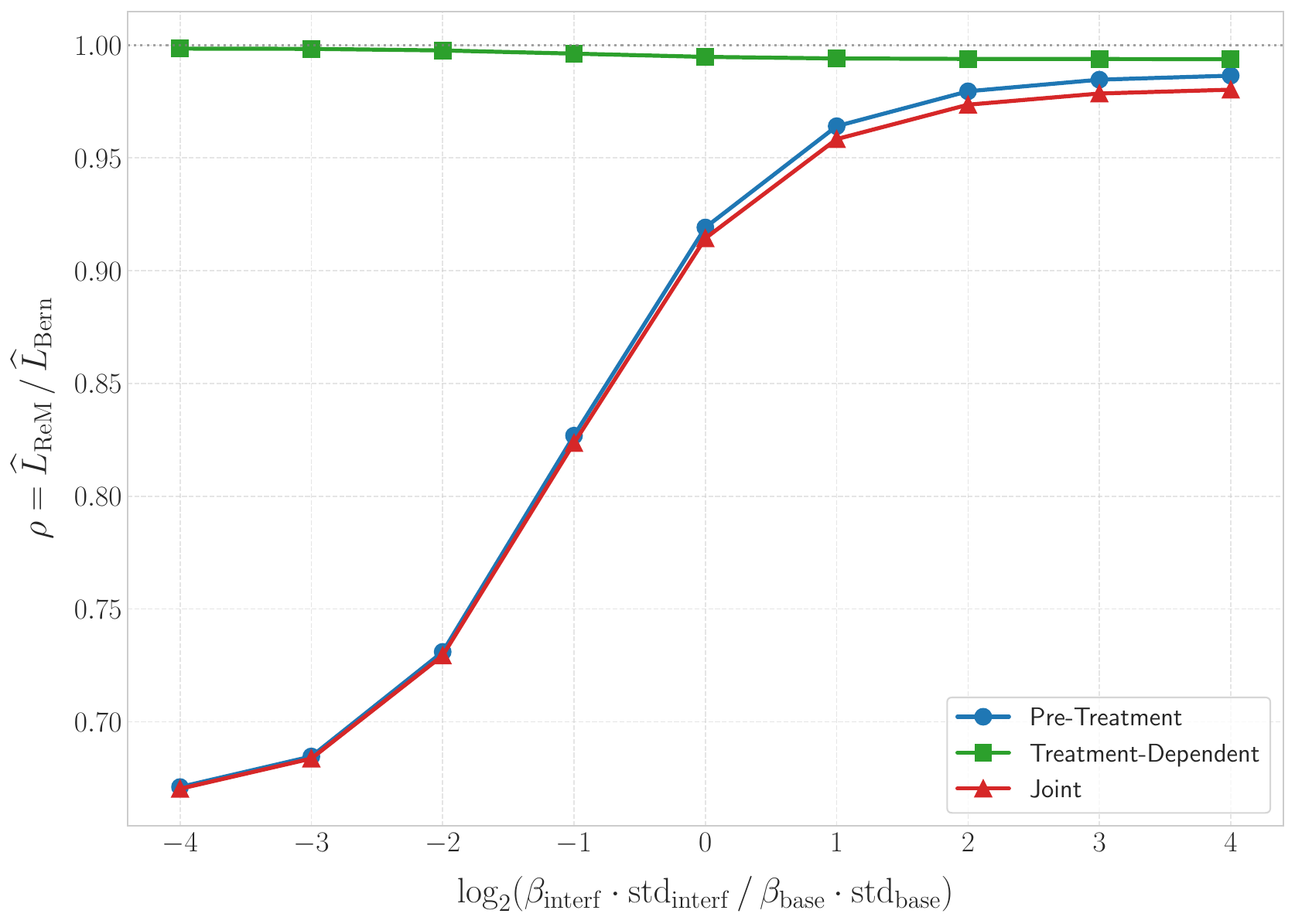"}\\[0pt]
\caption{Confidence interval length ratio $\rho$ from~\eqref{eq:ci_length_ratio} for the four exponential outcome models.}
\label{fig:ratio_exp_appendix}
\end{figure}

\begin{figure}[ht]
\centering
\begin{subfigure}[b]{0.85\textwidth}
\centering
\includegraphics[width=\linewidth]{"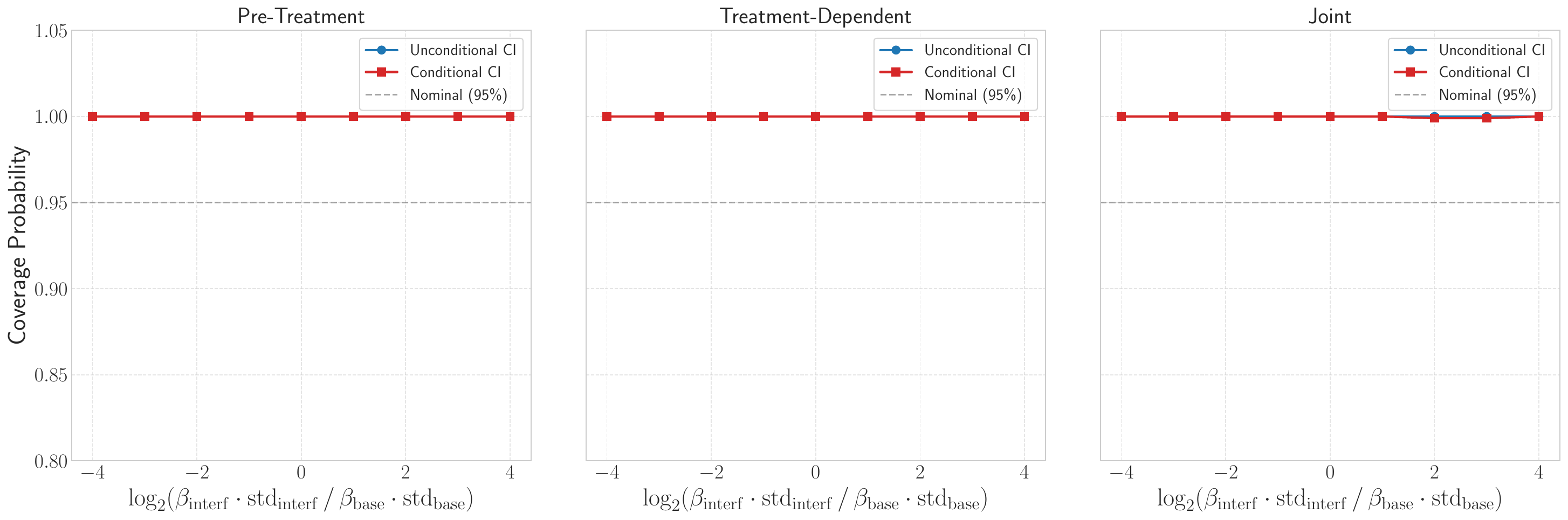"}
\caption{\texttt{lin-prop}, balancing \texttt{NP}}
\end{subfigure}

\vspace{0.6em}

\begin{subfigure}[b]{0.85\textwidth}
\centering
\includegraphics[width=\linewidth]{"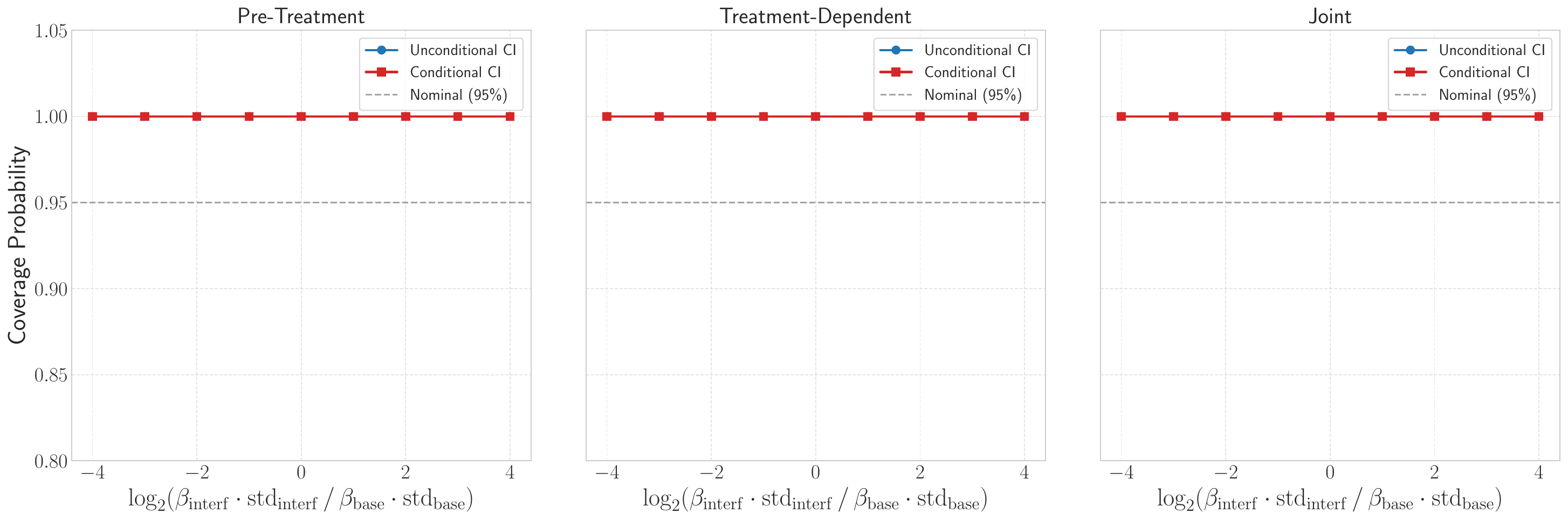"}
\caption{\texttt{exp-prop}, balancing \texttt{NP}}
\end{subfigure}
\caption{Empirical coverage probability of the 95\% confidence interval $\hat\tau\pm z_{1-\alpha/2}\sqrt{\hat v_{n,\delta}}$ as a function of $\log_2\kappa$, for two representative outcome models. The other five models exhibit qualitatively identical behavior and are omitted.}
\label{fig:cov_appendix}
\end{figure}

\subsubsection{Comparison with alternative constructions}
\label{sec:sim_comparison_full}

The body of the paper (Section~\ref{subsection:sim_var_comp}) reports the conservativeness-ratio comparison between the proposed estimator $\hat v_{n,\delta}^{\mathrm{ours}}$, the no-centering variant $\hat v_{n,\delta}^{\mathrm{uncen}}$, and the spectral-radius variant $\hat v_{n,\delta}^{\mathrm{spec}}$ for the two representative models \texttt{lin-prop} and \texttt{exp-prop}. We complete the picture here by reporting the same comparison across all seven outcome models. Each panel plots $\mathrm{CR}^{\mathrm{m}} =\log\bigl(\mathbb{E}[\hat v_{n,\delta}^{\mathrm{m}}\mid M_n\le a] / \mathrm{Var}(\hat\tau \mid M_n \le a)\bigr)$ for $\mathrm{m} \in \{\mathrm{ours}, \mathrm{uncen}, \mathrm{spec}\}$ as a function of $\log_2 \kappa$, with $\mathrm{CR}^{\mathrm{m}} \ge 0$ indicating asymptotic conservativeness; smaller values are tighter bounds.

The qualitative ordering observed in the body,
\[
\mathrm{CR}^{\mathrm{ours}}
\le\mathrm{CR}^{\mathrm{spec}}
\le\mathrm{CR}^{\mathrm{uncen}},
\]
holds across all linear models and the two exponential models with bounded interference blocks (\texttt{exp-prop} and \texttt{exp-prop+nwx}). The gap between the proposed estimator and the no-centering variant widens substantially under the exponential outcome models, consistent with the uncentered construction's sensitivity to large within-group outcome means before centering. The gap between the proposed estimator and the spectral-radius variant is most visible in the interference-dominated regime, where the heterogeneous BA degree distribution penalizes any approach that compresses the per-unit weights $d_i$ into a single global factor $\lambda_{\max}$.

The two remaining exponential models, \texttt{exp-sumexp} and \texttt{exp-sumexp-full}, show a different pattern: in the interference-dominated regime ($\log_2\kappa \ge -2$ for \texttt{exp-sumexp} and the entire sweep for \texttt{exp-sumexp-full}), $\mathrm{CR}^{\mathrm{ours}}$ overtakes $\mathrm{CR}^{\mathrm{spec}}$. Both estimators stay conservative ($\mathrm{CR}^{\mathrm{m}} \ge 0$ throughout), so the proposed bound retains its theoretical validity, but the comparison ceases to favor per-unit weighting.

The mechanism behind this reversal is confusing at first but actually straightforward. In the expansion of $g_i$, the degree term is $\beta_{\text{deg}}|\mathcal N_i|$, where
\[
\beta_{\text{deg}}=
\begin{cases}
\beta_{\text{prop}}, & \text{for \texttt{exp-sumexp}},\\
\beta_{\text{prop}}+\mathbf 1^\top\boldsymbol\beta_{\text{xinterf}},
& \text{for \texttt{exp-sumexp-full}}.
\end{cases}
\]
The second sum of exponentials in \texttt{exp-sumexp-full} contributes the additional constant $\mathbf 1^\top\boldsymbol\beta_{\text{xinterf}}$ per neighbor. Conditional on the fixed coefficients, this degree term depends only on the network, not on $\mathbf Z$ or $\mathbf X$. Write $\bar\nu_n=n^{-1}\sum_{i=1}^n|\mathcal N_i|$ for the average number of direct neighbors. When the degree component dominates the remaining outcome variation, substituting $Y_i \approx \alpha + \tau Z_i + \beta_{\text{interf}}\,\beta_{\text{deg}}\,|\mathcal{N}_i| + (\text{small fluctuation in }\mathbf Z)$ into the H\'ajek difference and linearizing as in Lemma~\ref{lem:t1-hajek-linearization} yields
\[
\hat\tau - \tau \;\approx\; \frac{\beta_{\text{interf}}\,\beta_{\text{deg}}}{n}\sum_{i=1}^n
\psi_i\bigl(|\mathcal{N}_i| - \bar\nu_n\bigr), \qquad \psi_i \;=\; \frac{Z_i}{\pi}
\;-\; \frac{1-Z_i}{1-\pi},
\]
which exhibits $\hat\tau - \tau$ as a linear-in-$\mathbf Z$ statistic with weights $|\mathcal{N}_i| - \bar\nu_n$. Since the $\psi_i$ are mean-zero and independent across $i$ with $\Var(\psi_i) = 1/[\pi(1-\pi)]$,
\[
\Var(\hat\tau) \;\approx\;
\frac{(\beta_{\text{interf}}\,\beta_{\text{deg}})^2}{n^2\,\pi(1-\pi)}\sum_{i=1}^n
\bigl(|\mathcal{N}_i| - \bar\nu_n\bigr)^2.
\]
On the Barab\'asi--Albert graph used in the simulation ($\bar\nu_n\approx 10$, $\max_i|\mathcal{N}_i|=242$), the squared deviations $(|\mathcal{N}_i| - \bar\nu_n)^2$ are sharply concentrated on the high-degree tail---a hub with $|\mathcal{N}_i|=100$ contributes $(90)^2 = 8{,}100$, while a peripheral unit with $|\mathcal{N}_i|=5$ contributes only $(5)^2=25$---so $\Var(\hat\tau)$ is dominated by the few dozen hub units. The balance set used in our rerandomization criterion---pre-treatment $\mathbf X$, $H_{i,\mathrm{prop}}$, and/or $\mathbf H_{i,\mathrm{wX}}$, defined in~\eqref{eq:sim_h_prop}--\eqref{eq:sim_h_wx}---does not include $|\mathcal{N}_i|$ or any quantity strongly correlated with it: $H_{i,\mathrm{prop}}$ takes values in $[0,1]$ and carries no $|\mathcal{N}_i|$-magnitude information, and $\mathbf H_{i,\mathrm{wX}}$ is similarly normalized by $|\mathcal{N}_i|$. Conditioning on $M_n \le a$ therefore barely reduces the $|\mathcal{N}_i|$-imbalance contribution to $\Var(\hat\tau)$, so the squared multiple correlation $R^2$ between $\hat\tau$ and $\tauxhat$ is small, the cross-term $\hat{\mathbf U}_{12,\delta}^{\mathrm m}$ contributes only a small correction in the closed form described in Appendix~\ref{app:feasibility-refinement}, and the optimization output effectively collapses to $\hat v_{n,\delta}^{\mathrm m}\approx\hat{\mathbf U}_{n,\delta}^{\mathrm m}(1,1)$ for both $\mathrm m\in\{\mathrm{ours},\mathrm{spec}\}$, where $\hat{\mathbf U}_{n,\delta}^{\mathrm m}(1,1)$ denotes the $(1,1)$ entry (the $Y$--$Y$ block) of the $d_\phi\times d_\phi$ matrix $\hat{\mathbf U}_{n,\delta}^{\mathrm m}$.

Because the same fixed marginal-scale ridge is applied to both constructions, their leading comparison remains one between the raw blocks $\hat{\mathbf U}_n^{\mathrm{ours}}(1,1)$ and $\hat{\mathbf U}_n^{\mathrm{spec}}(1,1)$. Writing $\hat\phi_{i,Y}:=Y_i-\bar Y_{Z_i}$ for the within-arm-centered $Y$ residual at unit $i$ (where $\bar Y_{Z_i}$ is the sample mean of $Y$ over units sharing $i$'s treatment status), the two scalar bounds take the explicit form
\[
\hat{\mathbf U}_n^{\mathrm{ours}}(1,1) \;=\; \frac{1}{n^2}\sum_{i=1}^n
d_{n,i}\,\psi_i^2\,\hat\phi_{i,Y}^2, \qquad \hat{\mathbf U}_n^{\mathrm{spec}}(1,1) \;=\;
\frac{\lambda_{\max}}{n^2}\sum_{i=1}^n \psi_i^2\,\hat\phi_{i,Y}^2,
\]
where $\psi_i^2 = Z_i/\pi^2 + (1-Z_i)/(1-\pi)^2$. Both bounds are positive linear combinations of the same residuals $\hat\phi_{i,Y}^2$; the only difference is the weight assigned to each summand. Two heavy-tail effects then compound to make ours the looser bound. First, by the linearization above, $\hat\phi_{i,Y}\approx \beta_{\text{interf}}\,\beta_{\text{deg}}(|\mathcal N_i| - \bar\nu_n)$, so $\hat\phi_{i,Y}^2$ inherits the heavy tail of the degree distribution and the few dozen hubs dominate $\sum_i \hat\phi_{i,Y}^2$ (this is the same calculation that gave the variance dominance above). Second, $d_{n,i}$ itself is heavy-tailed in $|\mathcal N_i|$, since a hub with many direct neighbors also reaches many units within two hops via those neighbors. Concretely, on our BA graph $d_{n,i}$ has mean $\bar d_n \approx 202.5$ and maximum $d_{n,\max}\approx 1{,}673$, against $\lambda_{\max}\approx 320.9$, so a hub residual receives a per-unit amplification roughly $d_{n,\max}/\lambda_{\max}\approx 5\times$ that of the spectral-radius bound. Because hubs simultaneously dominate $\sum_i \hat\phi_{i,Y}^2$ and receive this $\sim 5\times$ amplification under per-unit weighting, $\hat{\mathbf U}_n^{\mathrm{ours}}(1,1)$ ends up above $\hat{\mathbf U}_n^{\mathrm{spec}}(1,1)$, and the per-unit bound is therefore looser in this regime.

\begin{figure}[p]
\centering \rule{0pt}{1ex}\texttt{lin-prop}\\[1pt]
\includegraphics[width=0.36\textwidth]{"figures/conservative_variance/cv_covs-np_type1_comparison.pdf"}\hfill
\includegraphics[width=0.36\textwidth]{"figures/conservative_variance/cv_covs-np+nwx_type1_comparison.pdf"}\\[0pt]
\rule{0pt}{1ex}\texttt{lin-prop+nwx}\\[1pt]
\includegraphics[width=0.36\textwidth]{"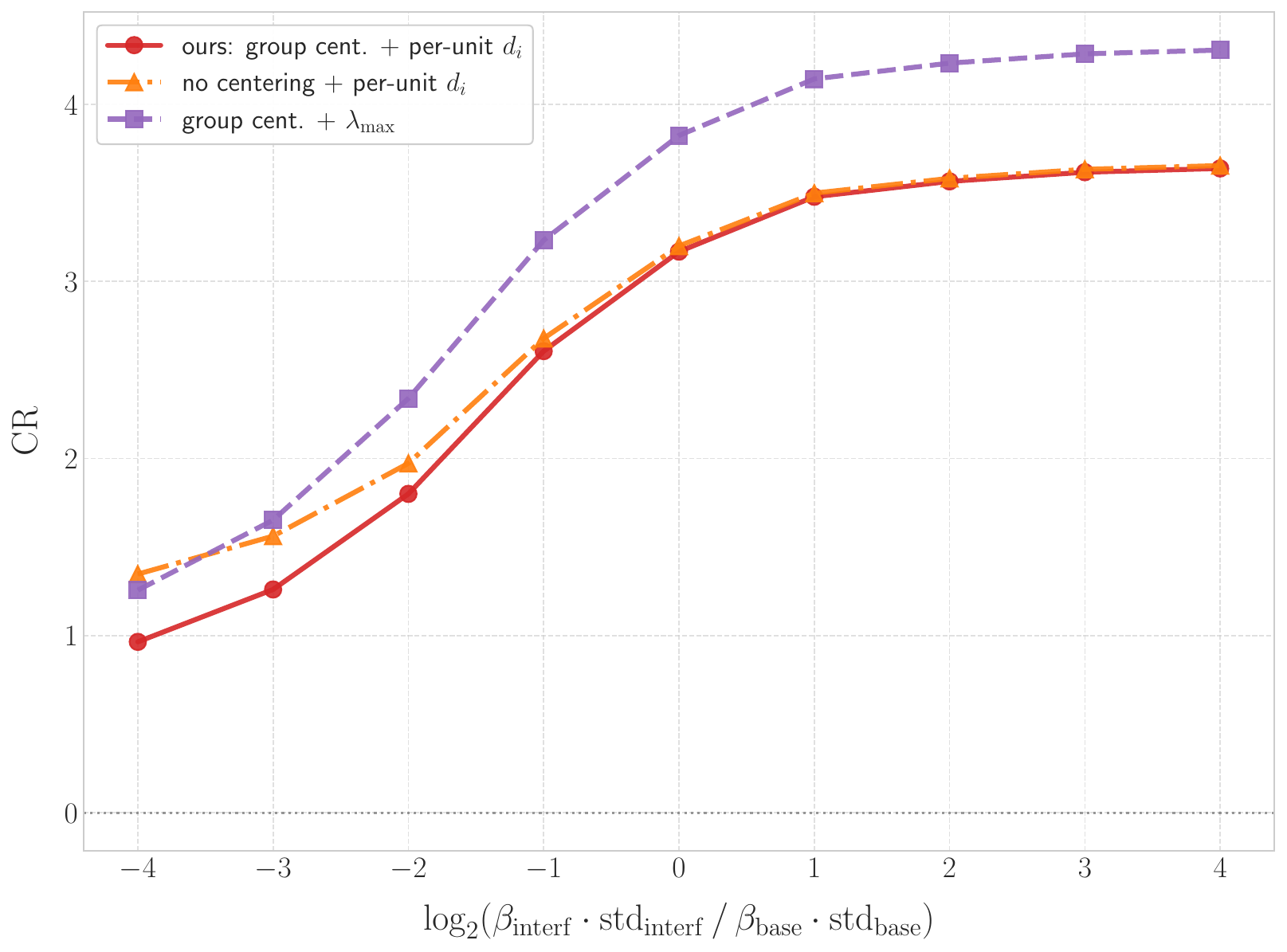"}\hfill
\includegraphics[width=0.36\textwidth]{"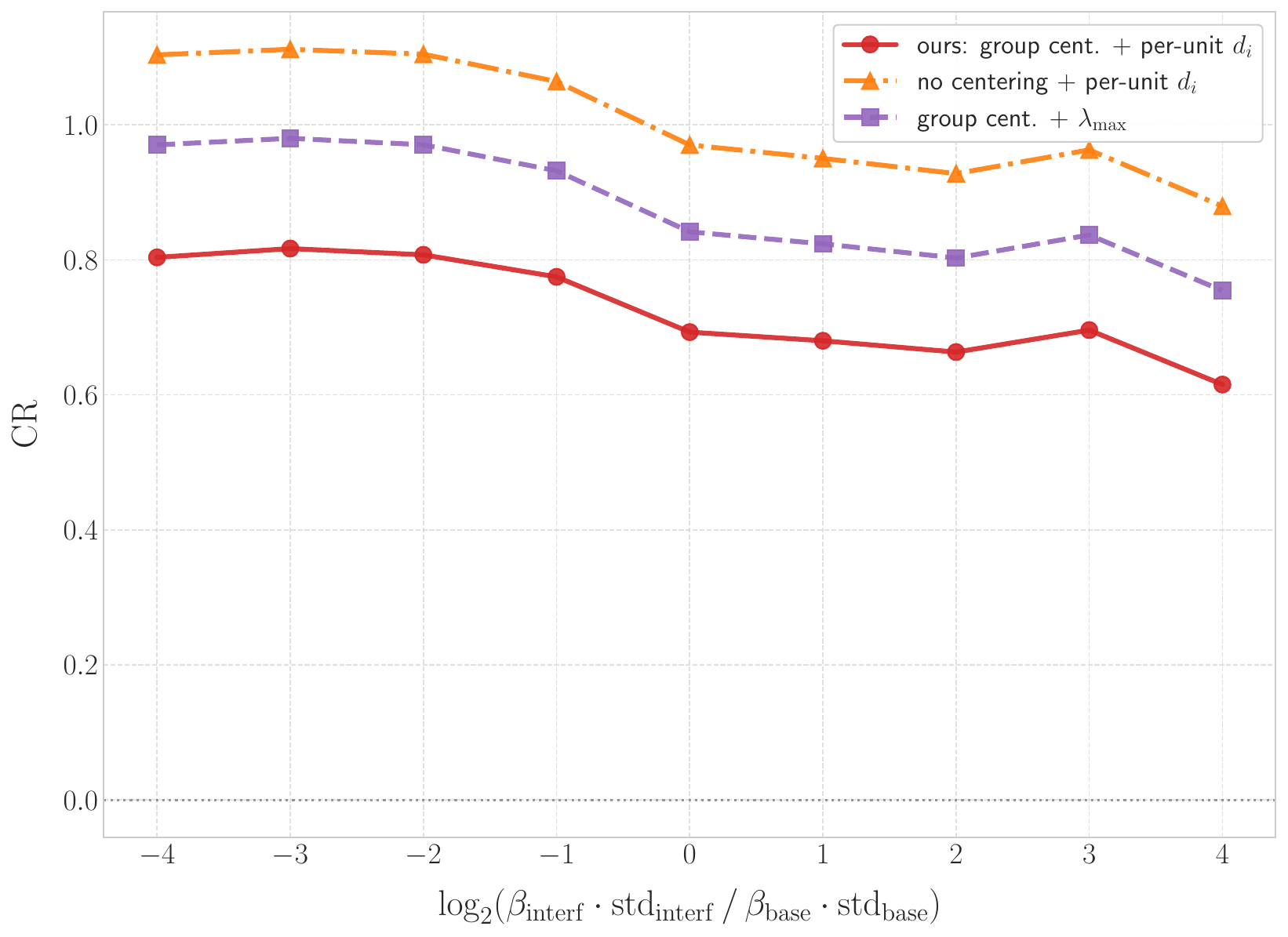"}\\[0pt]
\rule{0pt}{1ex}\texttt{lin-nwx}\\[1pt]
\includegraphics[width=0.36\textwidth]{"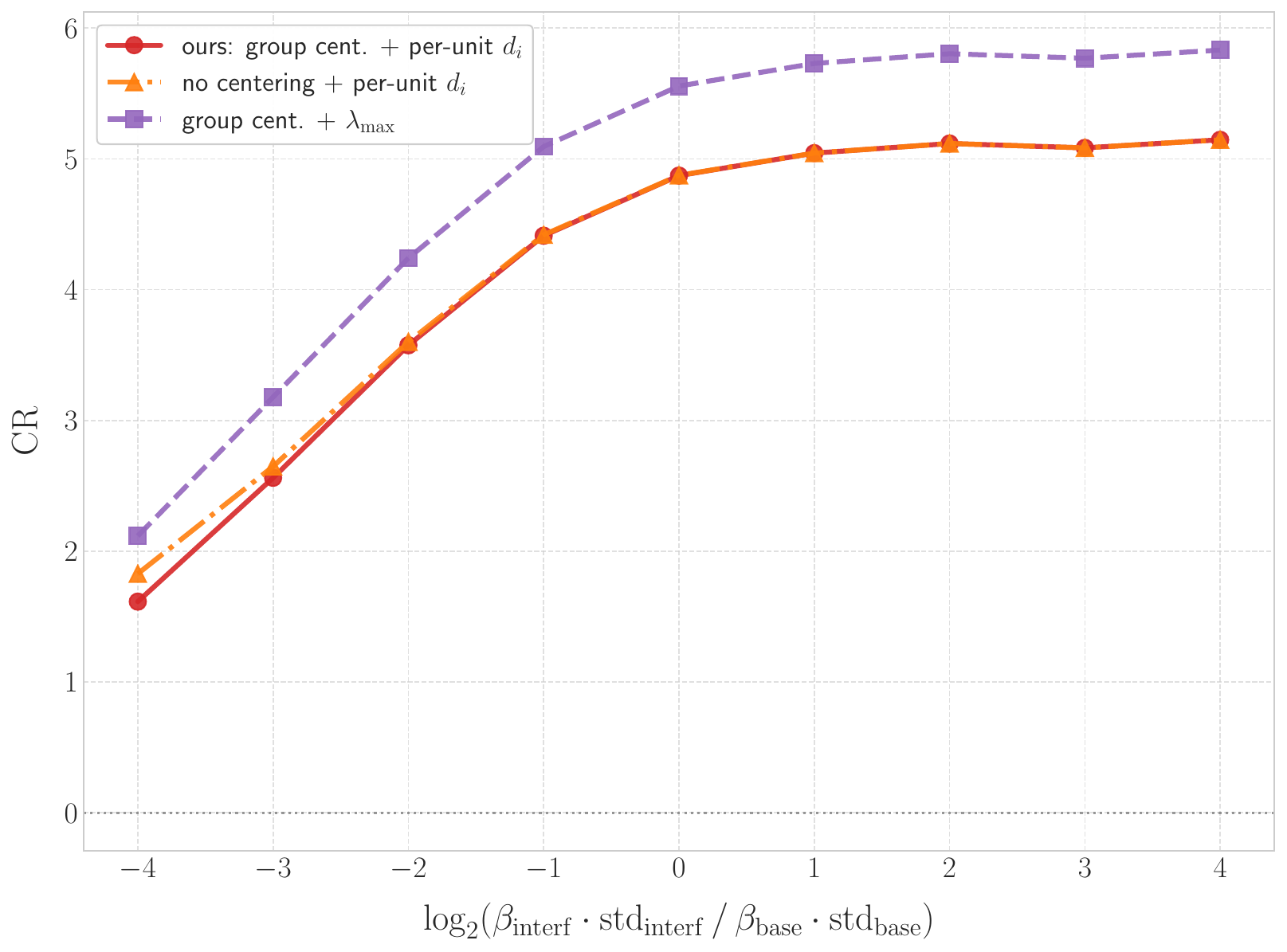"}\hfill
\includegraphics[width=0.36\textwidth]{"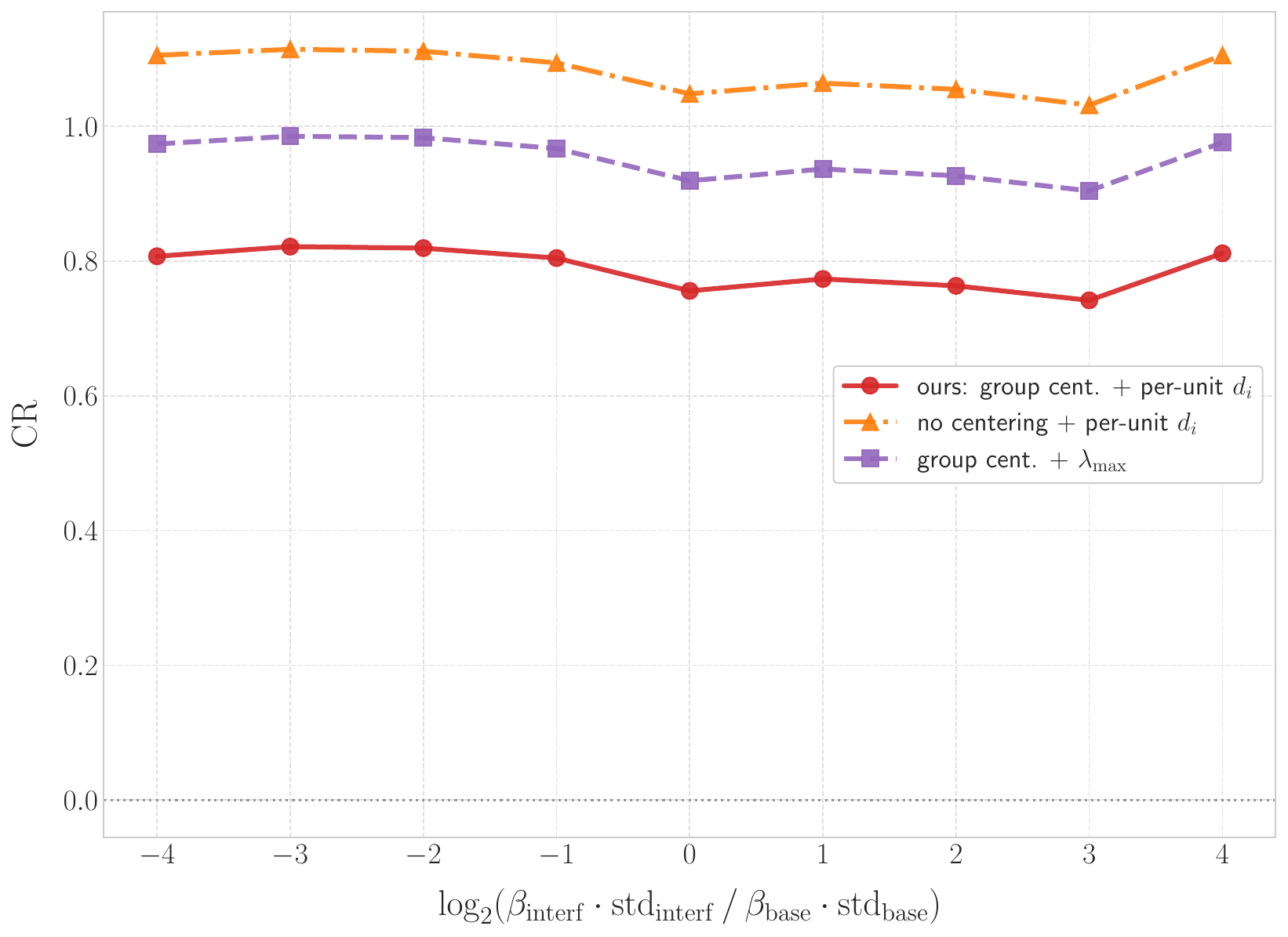"}\\[0pt]
\caption{Conservativeness-ratio comparison ($\mathrm{CR}^{\mathrm{m}}$ on the log scale) across the three linear outcome models. Within each row, left: balancing \texttt{NP}; right: balancing \texttt{NP+NWX}. Red: ours (group centering + per-unit $d_{n,i}$). Orange: no centering + per-unit $d_{n,i}$. Purple: group centering + spectral radius $\lambda_{\max}$.}
\label{fig:comparison_lin_appendix}
\end{figure}

\begin{figure}[p]
\centering \rule{0pt}{1ex}\texttt{exp-prop}\\[1pt]
\includegraphics[width=0.36\textwidth]{"figures/conservative_variance/cv_covs-np_type5_comparison.pdf"}\hfill
\includegraphics[width=0.36\textwidth]{"figures/conservative_variance/cv_covs-np+nwx_type5_comparison.pdf"}\\[0pt]
\rule{0pt}{1ex}\texttt{exp-sumexp}\\[1pt]
\includegraphics[width=0.36\textwidth]{"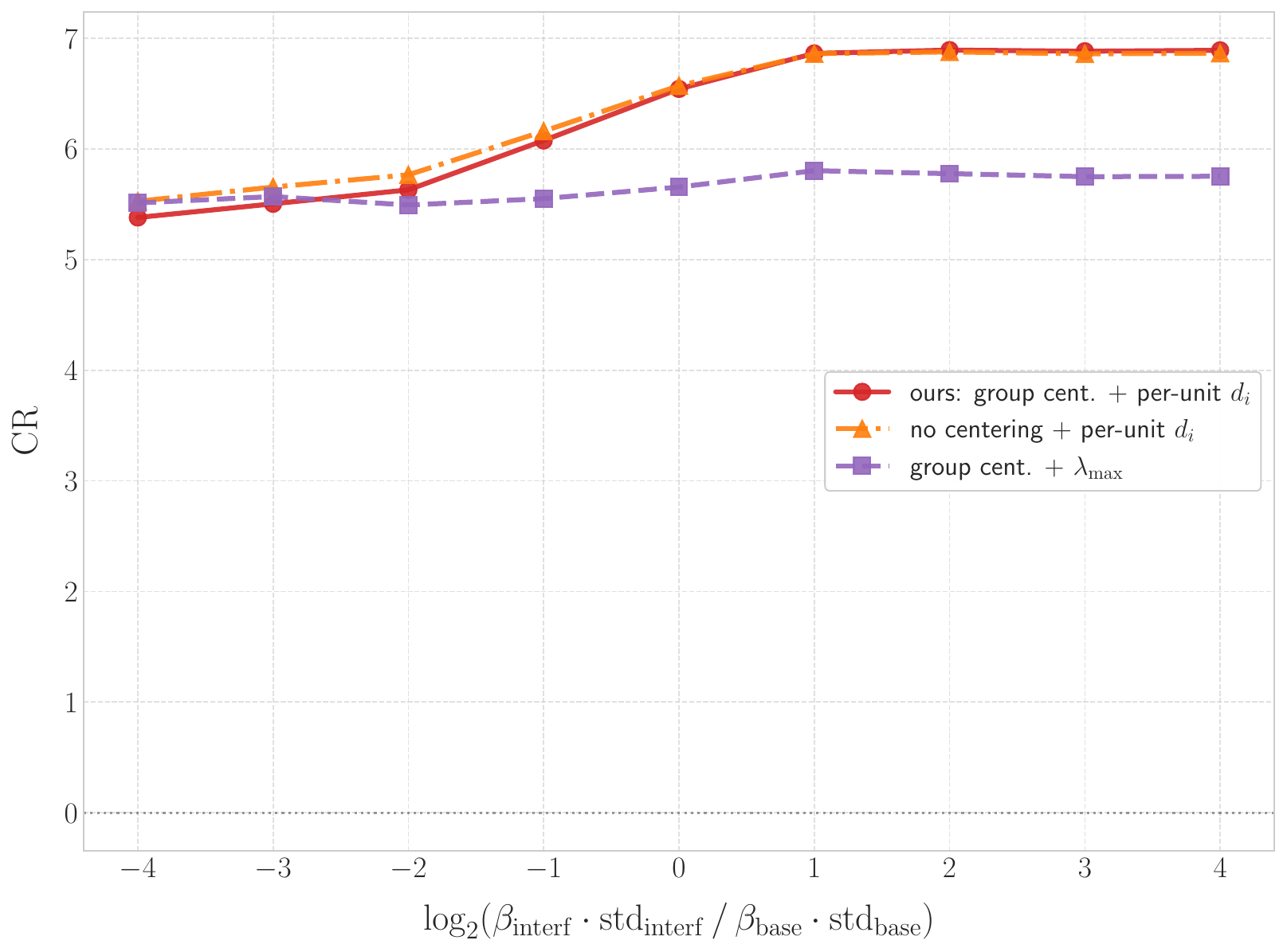"}\hfill
\includegraphics[width=0.36\textwidth]{"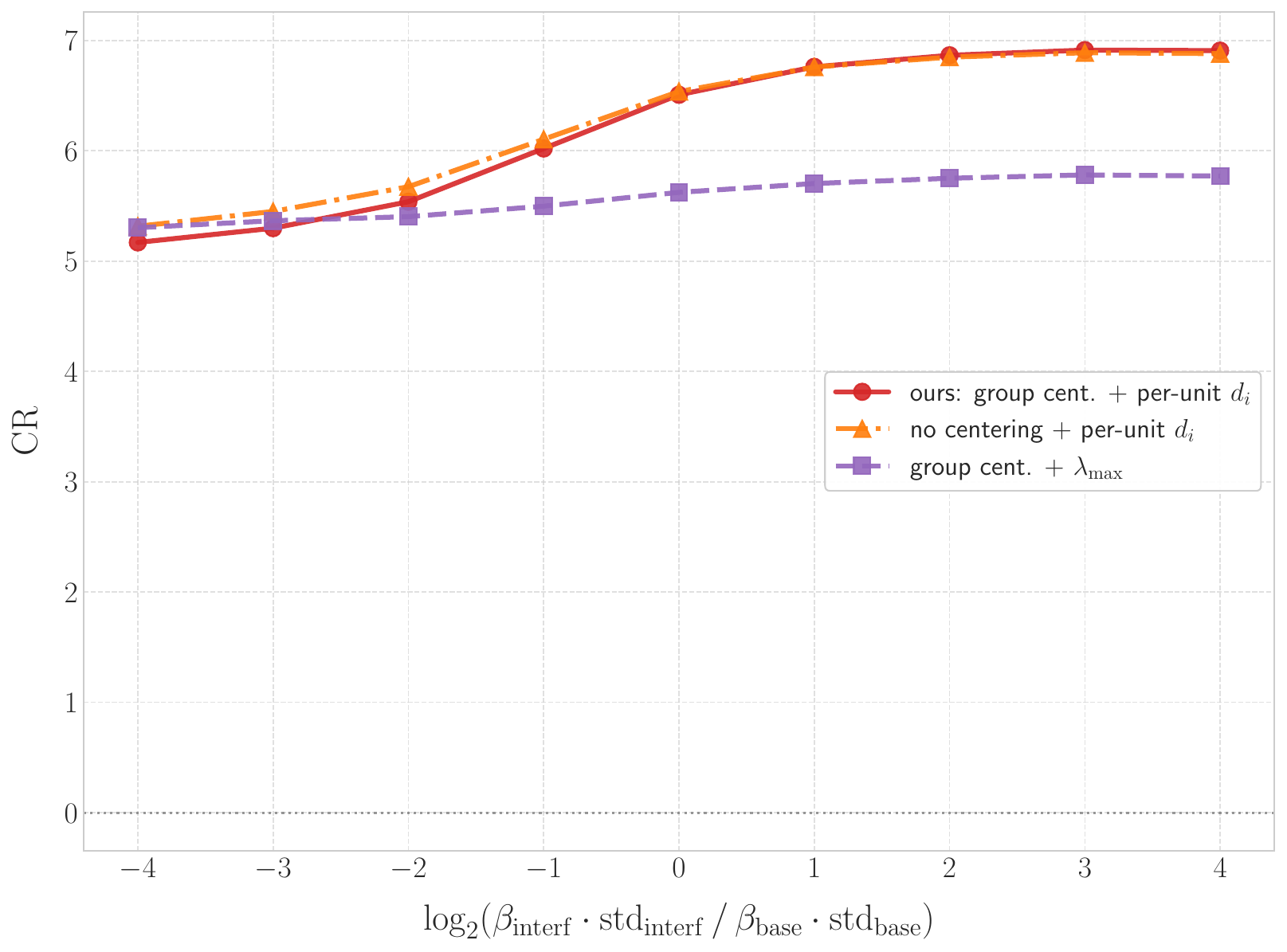"}\\[0pt]
\rule{0pt}{1ex}\texttt{exp-prop+nwx}\\[1pt]
\includegraphics[width=0.36\textwidth]{"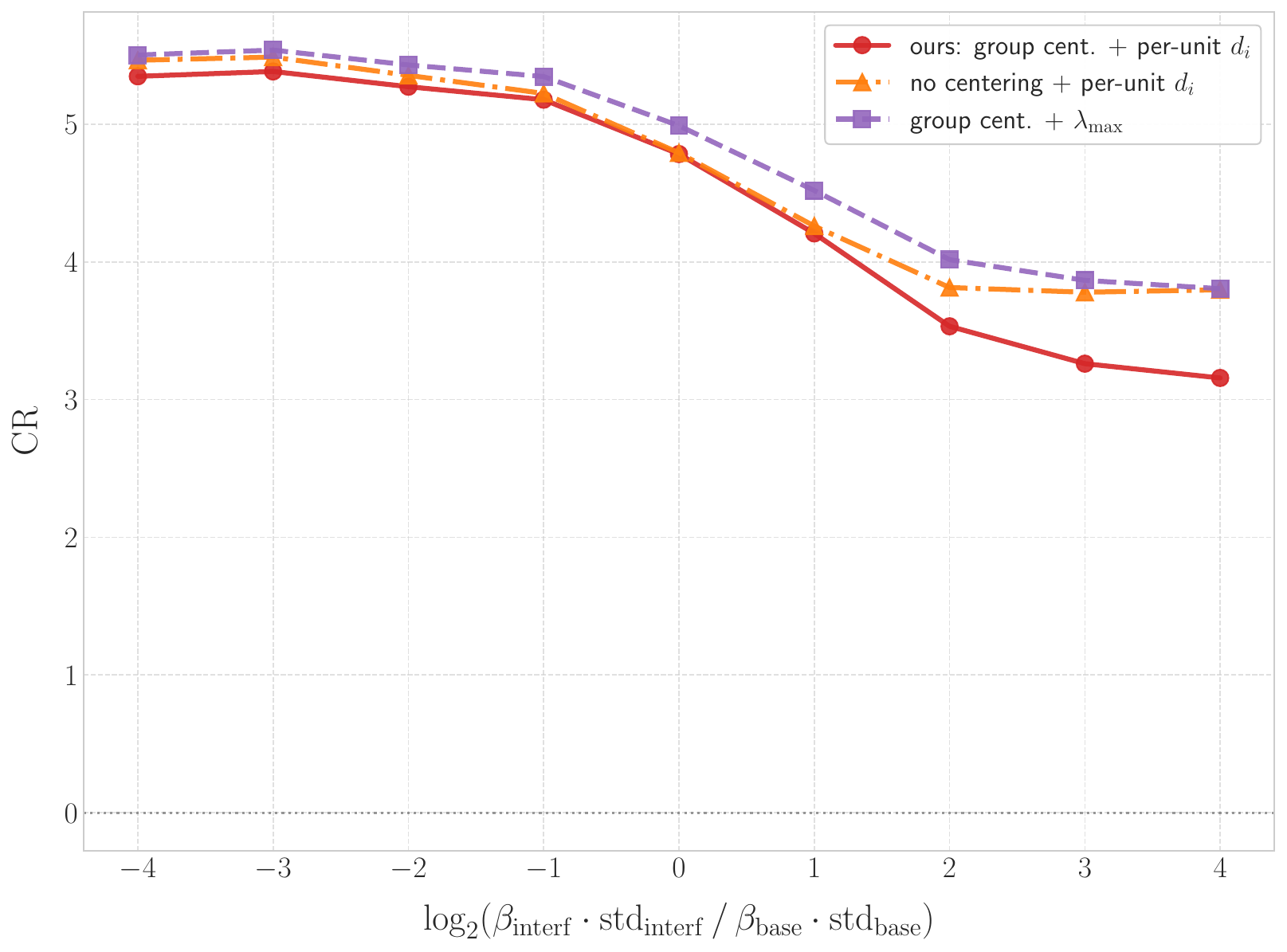"}\hfill
\includegraphics[width=0.36\textwidth]{"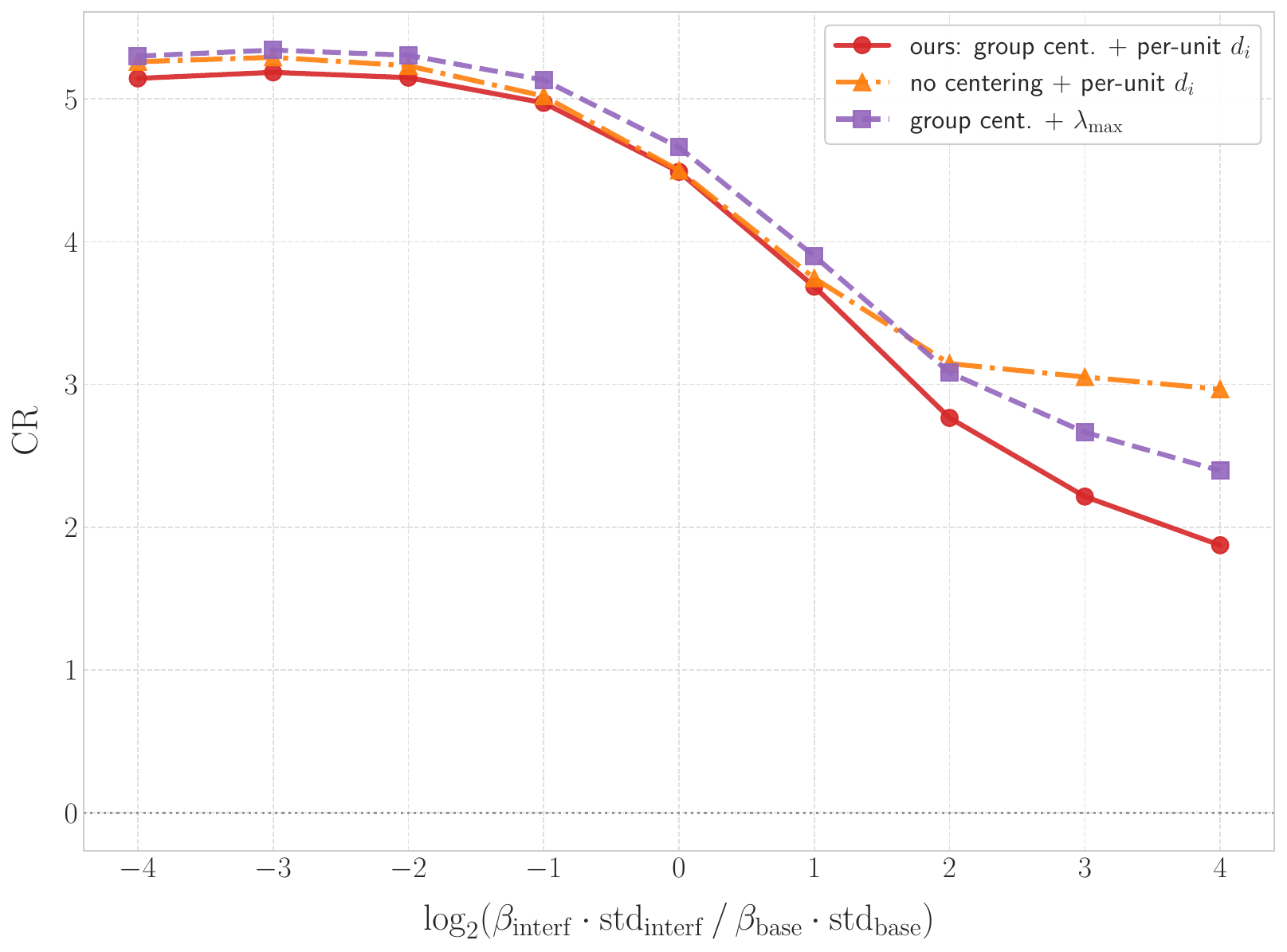"}\\[0pt]
\rule{0pt}{1ex}\texttt{exp-sumexp-full}\\[1pt]
\includegraphics[width=0.36\textwidth]{"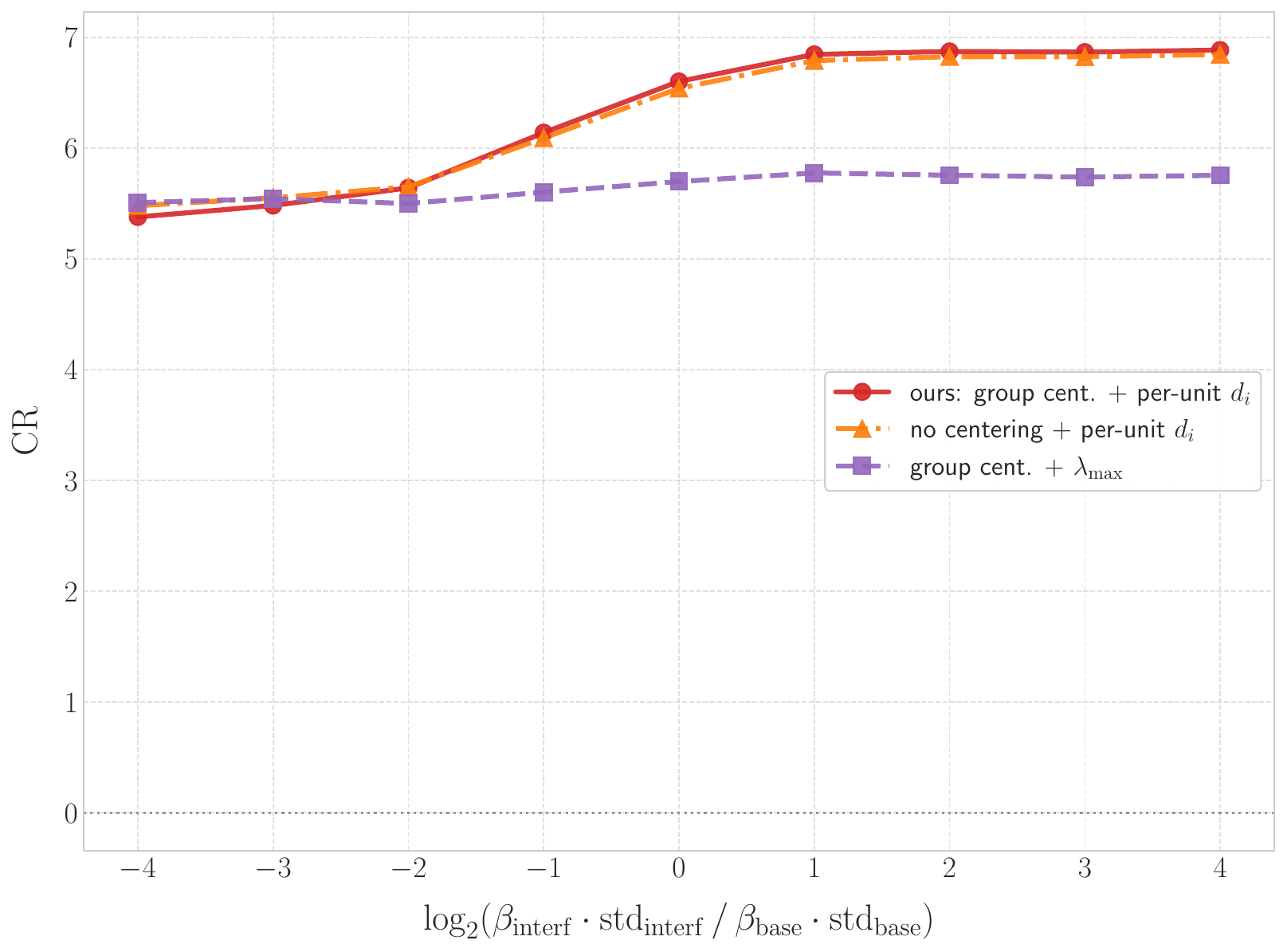"}\hfill
\includegraphics[width=0.36\textwidth]{"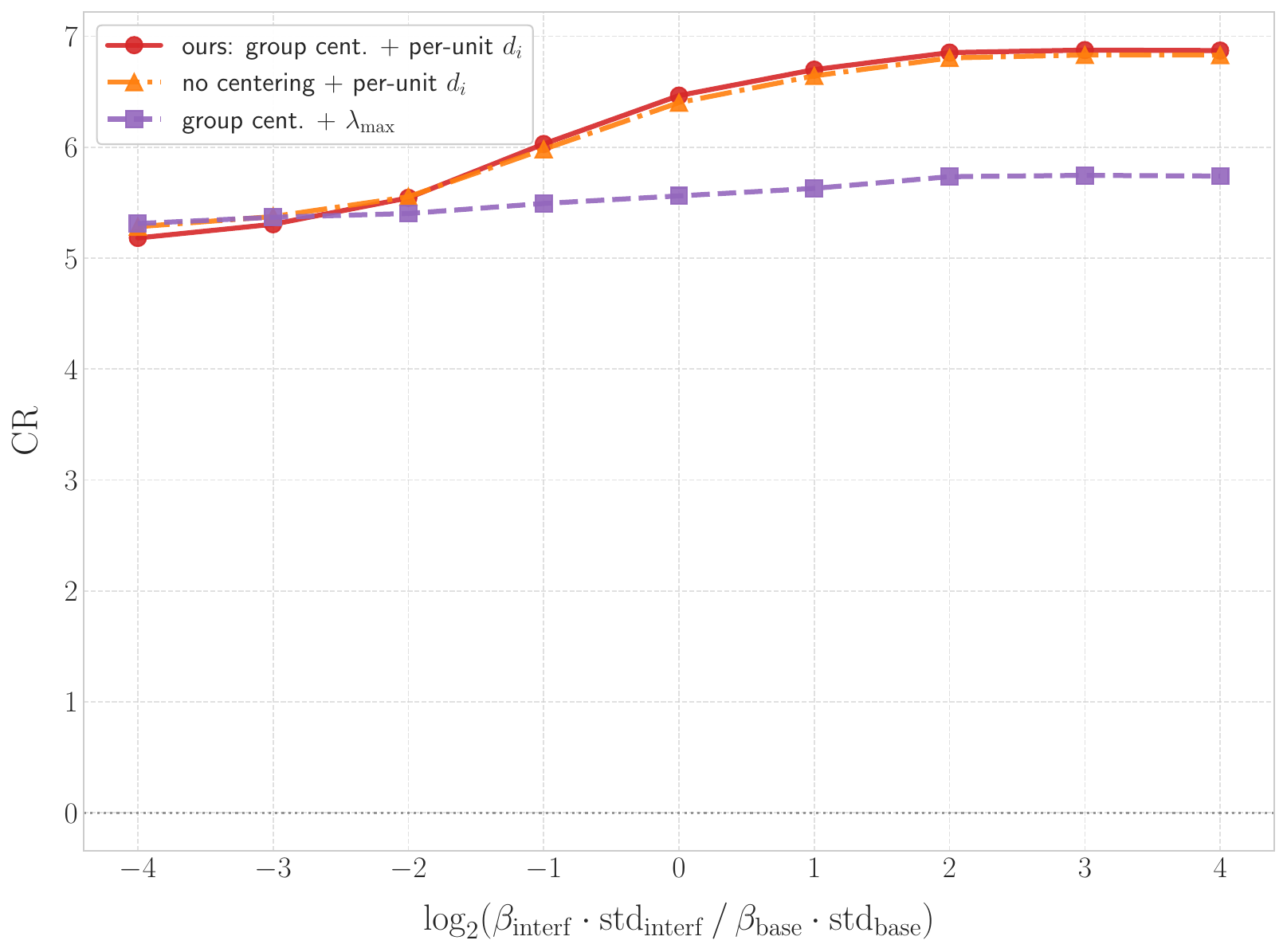"}\\[0pt]
\caption{Conservativeness-ratio comparison across the four exponential outcome models. Layout, line types, and colors follow Figure~\ref{fig:comparison_lin_appendix}. The gap between ours and the no-centering variant widens substantially under the exponential outcome map.}
\label{fig:comparison_exp_appendix}
\end{figure}

\subsection{Comparison of alternative inference procedures}
\label{sec:sim_alternative_inference}

In Appendix~\ref{app:alternative_inference} we introduced several simpler alternatives to our optimization-based inference procedure. Here we compare them empirically. On the variance side, we consider three estimators of the conditional variance $\Var_{\operatorname{asymp}}(\hat\tau \mid M_n \le a)$: our optimization-based estimator $\hat v_{n,\delta}$; the \emph{Bernoulli} estimator $\hat U_{11}$, which ignores the variance reduction from rerandomization; and the \emph{small-$a$ plug-in} estimator $\hat U_{11,\delta}(1+v_p) -\hat{\mathbf U}_{12,\delta}\hat{\mathbf U}_{22,\delta}^{-1} \hat{\mathbf U}_{21,\delta}$. On the inference side, we compare four confidence intervals: our interval $\hat\tau\pm\bar q_{1-\alpha/2}(p,a)\,(\hat v_{n,\delta})^{1/2}$; the \emph{Bernoulli Wald} interval $\hat\tau \pm z_{1-\alpha/2}\,\hat U_{11}^{1/2}$; the \emph{direct quantile optimization} interval; and the \emph{plug-in rerandomization} interval. The direct quantile optimization problem is computationally challenging in general (Appendix~\ref{app:alternative_inference}); we approximate it by searching the one-parameter family that defines our variance optimum.

We use the \texttt{lin-prop} and \texttt{exp-prop} outcome models with $n=500$ units, the joint balancing strategy, and the same signal-ratio sweep as in the previous subsections; all other settings follow Appendix~\ref{appendx:simulation}.

\begin{figure}[ht]
\centering
\begin{subfigure}[t]{0.48\textwidth}
\includegraphics[width=\linewidth]{"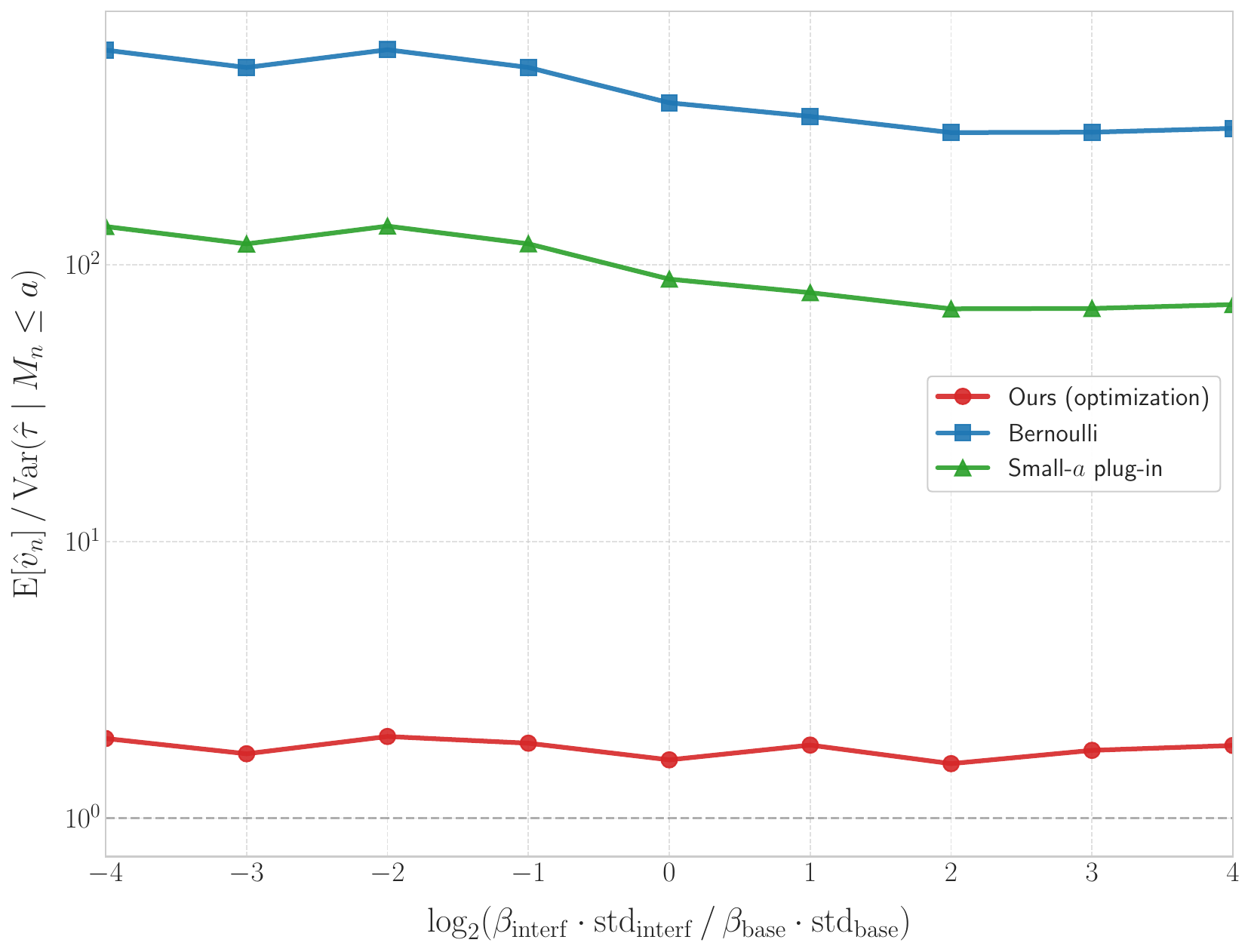"}
\caption{Linear (\texttt{lin-prop})}
\end{subfigure}
\hfill
\begin{subfigure}[t]{0.48\textwidth}
\includegraphics[width=\linewidth]{"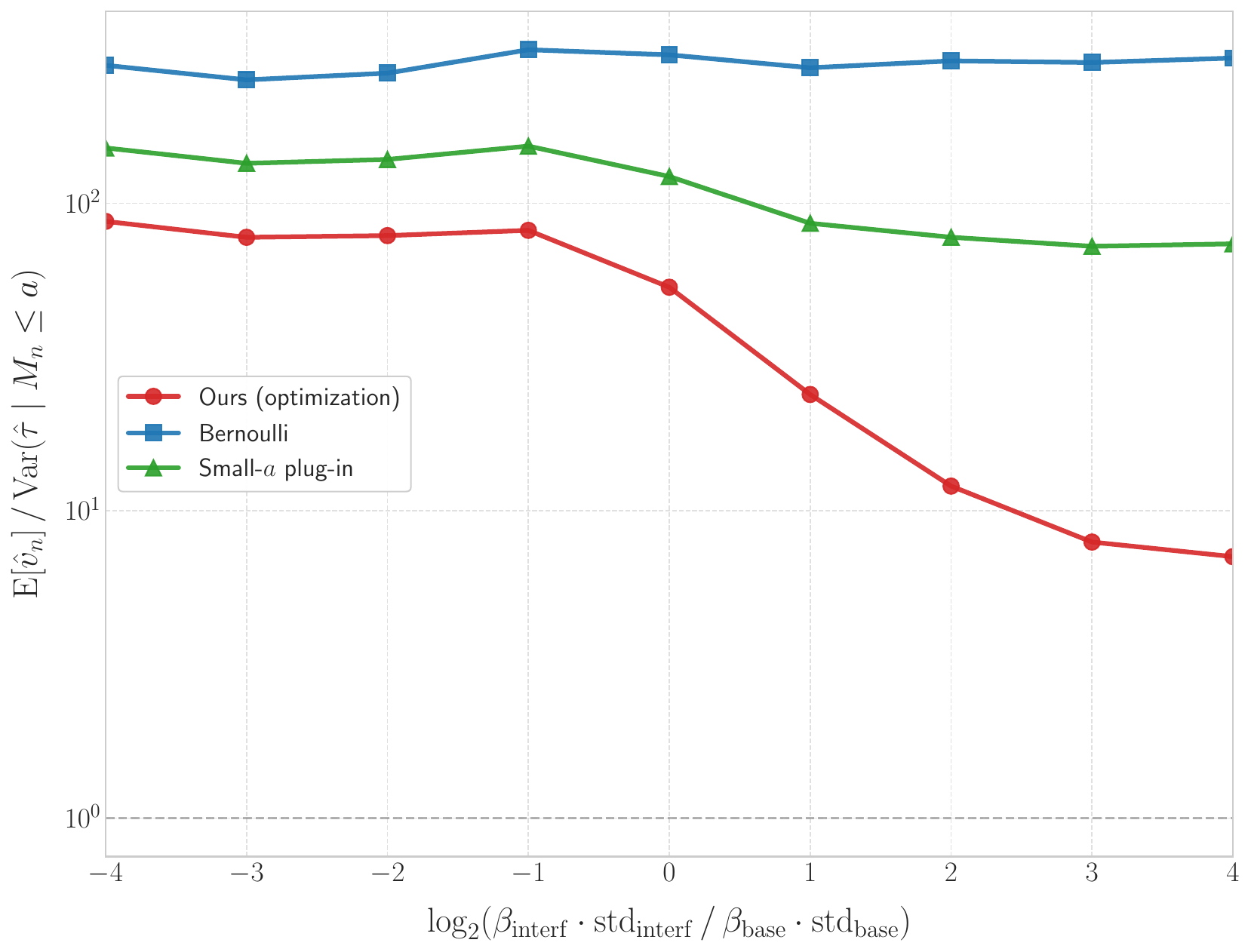"}
\caption{Nonlinear (\texttt{exp-prop})}
\end{subfigure}
\caption{Conservativeness of the three conditional-variance estimators, $\mathrm{E}[\hat v]/\Var(\hat\tau \mid M_n \le a)$ (log scale; values close to $1$ are tighter), as a function of the log signal ratio $\log_2\kappa$. Red: our optimization-based estimator; blue: Bernoulli $\hat U_{11}$; green: small-$a$ plug-in.}
\label{fig:altinf_variance}
\end{figure}

Figure~\ref{fig:altinf_variance} compares the three variance estimators. Our optimization-based estimator is by far the tightest, while the Bernoulli estimator is severely conservative; the small-$a$ plug-in lies in between.

\begin{figure}[ht]
\centering
\begin{subfigure}[t]{0.48\textwidth}
\includegraphics[width=\linewidth]{"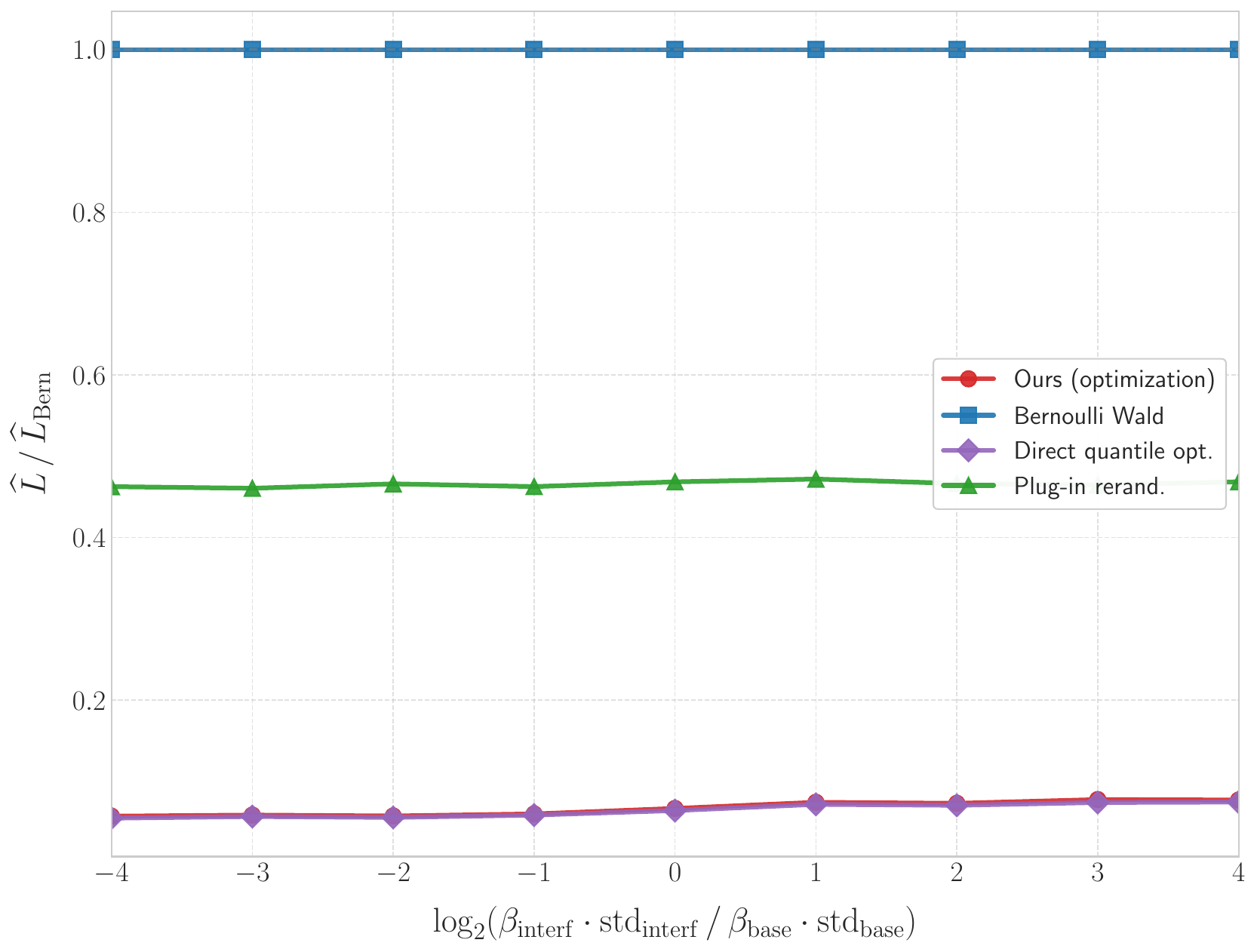"}
\caption{Linear (\texttt{lin-prop})}
\end{subfigure}
\hfill
\begin{subfigure}[t]{0.48\textwidth}
\includegraphics[width=\linewidth]{"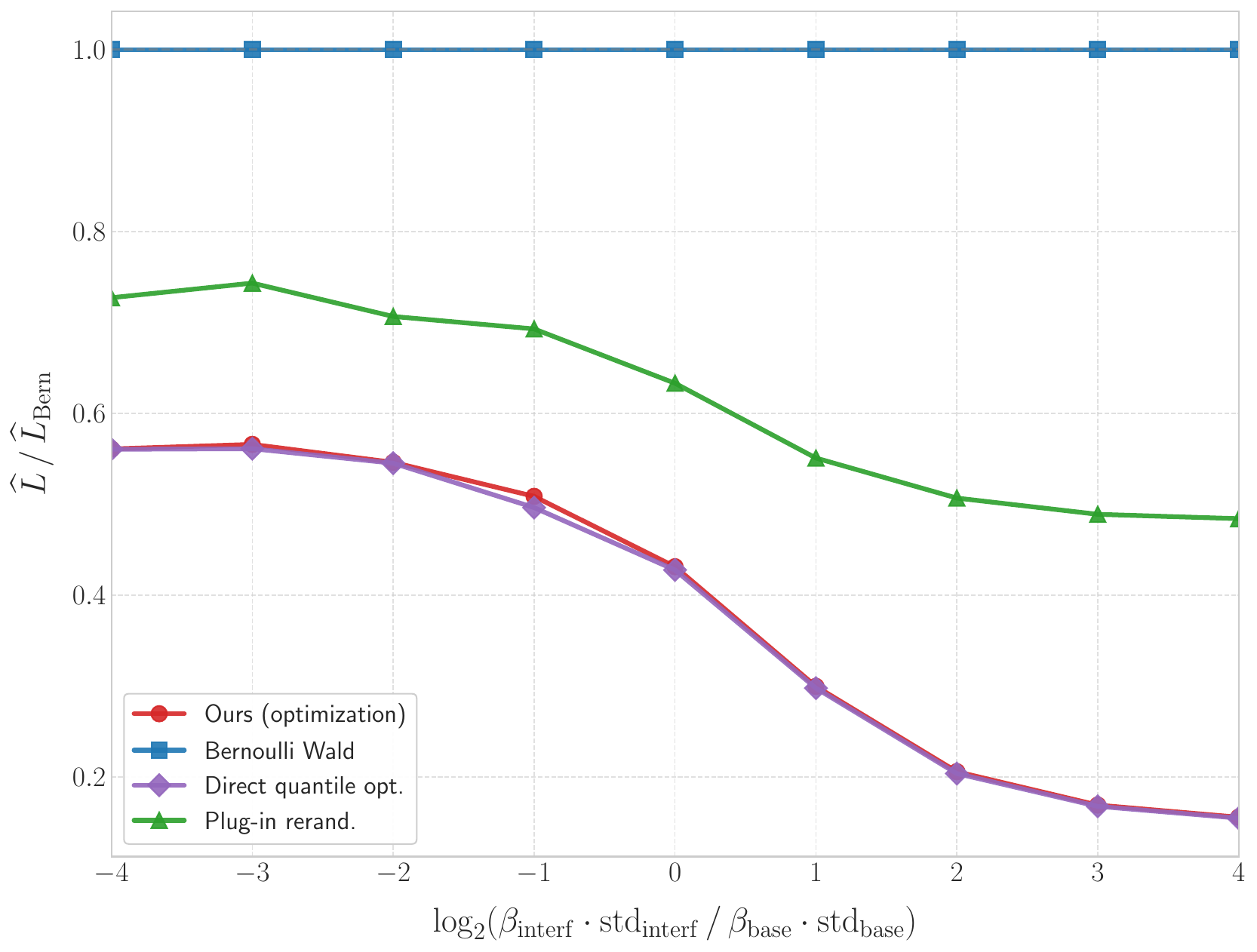"}
\caption{Nonlinear (\texttt{exp-prop})}
\end{subfigure}
\caption{Confidence interval half-length relative to the Bernoulli Wald interval, $\hat L / \hat L_{\mathrm{Bern}}$, as a function of the log signal ratio $\log_2\kappa$. Red: ours; blue: Bernoulli Wald; purple: direct quantile optimization; green: plug-in rerandomization interval.}
\label{fig:altinf_cilength}
\end{figure}

Figure~\ref{fig:altinf_cilength} compares the four confidence intervals. Our interval and the direct quantile optimization interval are nearly indistinguishable, indicating that our computationally simple procedure essentially attains the sharper—but much more expensive—direct optimization. Both are substantially shorter than the plug-in rerandomization interval, which is in turn shorter than the Bernoulli Wald interval. As expected, the gains over the Bernoulli benchmark are largest when rerandomization reduces the variance the most.

\end{document}